\documentclass[11pt]{article}
\usepackage[letterpaper,margin=1in]{geometry}
\usepackage[T1]{fontenc}
\usepackage{lmodern,microtype}
\usepackage{amsmath,amssymb,amsthm,mathtools}
\usepackage{enumitem,booktabs,array}
\usepackage[authoryear,round]{natbib}
\usepackage[colorlinks=true,urlcolor=blue,linkcolor=black,citecolor=black]{hyperref}
\usepackage{tcolorbox}
\usepackage{xcolor}
\usepackage{tikz}
\usetikzlibrary{decorations.pathreplacing}
\usetikzlibrary{arrows.meta,calc,fit,backgrounds}
\usepackage{pgfplots}

\tikzset{
  flowbox/.style={draw=black!70, rounded corners=2pt, align=flush center,
    inner sep=3pt, line width=0.5pt,
    execute at begin node={\hyphenpenalty=10000\exhyphenpenalty=10000}},
  flowarrow/.style={-{Stealth[length=2.2mm,width=1.6mm]}, line width=0.6pt,
    draw=black!75},
  flowlabel/.style={font=\scriptsize\itshape, text=black!65, align=flush center},
}

\hypersetup{pdftitle={Order-selection Prophet Inequality (Almost) Goes I.I.D.}}
\newtheorem{theorem}{Theorem}[section]
\newtheorem{proposition}[theorem]{Proposition}
\newtheorem{lemma}[theorem]{Lemma}
\theoremstyle{definition}
\newtheorem{definition}[theorem]{Definition}

\theoremstyle{remark}
\newtheorem*{remark}{Remark}
\newcommand{\Pp}{\mathbb P}
\newcommand{\E}{\mathbb E}
\newcommand{\R}{\mathbb R}
\newcommand{\dd}{\,\mathrm d}
\newcommand{\ind}{\mathbf 1}
\newcommand{\ALG}{\mathrm{ALG}}
\newcommand{\OPT}{\mathrm{OPT}}
\newcommand{\pos}[1]{\left(#1\right)_+}
\setlist[enumerate]{leftmargin=*,itemsep=3pt,topsep=5pt}
\allowdisplaybreaks[1]
\title{Order-selection Prophet Inequality (Almost) Goes I.I.D.}
\author{
  Bo Peng\thanks{Email: \texttt{ahqspbo@gmail.com}.}\\
  Tel Aviv University
  \and
  Zhihao Gavin Tang\thanks{Email:
    \texttt{tang.zhihao@mail.shufe.edu.cn}.}\\
  Shanghai University of Finance and Economics
}
\date{}
\begin{document}
\maketitle

\begin{abstract}
We study the order-selection prophet inequality. A gambler knows the distributions of
independent nonnegative random values, chooses the order in which to inspect them, and may
irrevocably accept at most one value, at the time it is revealed. Since the order is irrelevant when the
values are i.i.d., no algorithm can guarantee more than the optimal i.i.d.\ ratio
$\Gamma_\star\approx0.74544$ of the expected maximum. We give an algorithm with competitive
ratio $0.74500001$, less than $4.5\times10^{-4}$ below this bound. The algorithm samples
independent arrival times and uses item-dependent thresholds. It is designed so that,
conditional on its value, every item is selected with probability equal to a fixed fraction
of its probability of being the maximum. We show that the constructions of this form that
apply one local rule to every item are exactly those given by an allocation kernel, that is,
by a coupling of two uniform random variables. Requiring every rarely winning item to follow
the optimal i.i.d.\ threshold path singles out a baseline kernel, for which we prove
analytically a ratio $\Gamma_0\approx0.7389$, defined through a one-dimensional control
problem. Our final kernel perturbs this baseline, and its guarantee combines analytic
estimates with a rigorous computer-assisted certificate.
\end{abstract}

\section{Introduction}\label{sec:intro}

Prophet inequalities compare a gambler, who observes independent random values one at a
time and must accept or reject each value as soon as it is revealed, with a prophet, who
sees all values in advance and takes the maximum. For values revealed in an adversarial
order, \citet{KS77,KS78} showed that the gambler can obtain half of the expected maximum,
and \citet{SC84} showed that a single threshold suffices; the factor $1/2$ is best possible.
Through the correspondence between stopping rules and posted prices
\citep{HKS07,CHMS10,CFPV19}, prophet inequalities have become a standard tool for analyzing
posted-price mechanisms and online allocation; see the surveys of \citet{HK92},
\citet{Lucier17}, and \citet{CFHOV18}.

In the \emph{order-selection prophet inequality}, a gambler faces independent nonnegative
random values $V_1,\ldots,V_n$ with known distributions. Before observing any value, the
gambler chooses the order in which the values will be inspected. Each value must be accepted
or rejected irrevocably when it is revealed, and at most one value can be accepted. An
algorithm is \emph{$\Gamma$-competitive} if its expected reward is at least
$\Gamma\,\E[\max_iV_i]$ on every input. The freedom to choose the order can only help, but a
useful ordering rule must weigh the chance that an item is the maximum against the effect of
inspecting it on the opportunities that remain. The model goes back to \citet{Hill83}, who
showed that an order fixed in advance is as good as an adaptively chosen one and that the
freedom to choose the order can at most double the optimal expected reward.

When the values are i.i.d., the order is irrelevant. The optimal competitive ratio for
i.i.d.\ values, $\Gamma_\star\approx0.74544$ \citep{HK82,Kertz86,CFHOV21}, is therefore also
an upper bound for order selection, and no better upper bound is known. Lower bounds have
improved steadily. Taking each item's threshold from the ex-ante relaxation of the expected
maximum and inspecting the items in decreasing order of their thresholds gives the ratio
$1-1/e$ \citep{CHMS10,Yan11}; \citet{BGLPS21} improved it to about $0.654$, and the
$0.669$-competitive algorithm of \citet{CSZ21} for a uniformly random order applies as well. \citet{PT} proposed to sample an
independent arrival time for every item and to inspect the items in order of arrival; they
obtained a competitive ratio of about $0.7251$. \citet{BC} let the acceptance thresholds
depend on the item and improved the ratio to $0.7258$. This separated order selection from
random arrival, for which no algorithm exceeds $0.7254$ \citep{BC}. We give an algorithm with
competitive ratio $\Gamma_1=0.74500001$, less than $4.5\times10^{-4}$ below the upper bound
$\Gamma_\star$. Table~\ref{tab:known-bounds} summarizes the best known bounds.

\begin{table}[t]
\centering
\small
\renewcommand{\arraystretch}{1.15}
\begin{tabular}{@{}l >{\raggedright\arraybackslash}p{5.6cm} >{\raggedright\arraybackslash}p{5.0cm}@{}}
\toprule
Arrival order & Best competitive ratio & Best upper bound\\
\midrule
Adversarial & $1/2$ \citep{KS77,SC84} & $1/2$\\
Uniformly random & $0.688$ \citep{CHLT25} & $0.7235$ \citep{GMS23}\\
Order selection & $0.7258$ \citep{BC};\newline $0.74500001$ (Theorem~\ref{thm:certified-kernel}) &
  $\Gamma_\star\approx0.74544$ \citep{HK82,Kertz86}\\
I.i.d.\ values & $\Gamma_\star\approx0.74544$ \citep{CFHOV21} & $\Gamma_\star$ \citep{HK82,Kertz86}\\
\bottomrule
\end{tabular}
\caption{Best known bounds on the competitive ratio of single-item prophet inequalities for
independent values, by arrival order. The four settings are ordered by the power of the
gambler, and every upper bound for i.i.d.\ values applies to the other three settings.}
\label{tab:known-bounds}
\end{table}

\subsection{Our Approach and Results}\label{sec:approach}

\begin{figure}[t]
\centering
\begin{tikzpicture}[font=\footnotesize]
  \def\wa{34.9mm}\def\wb{27.5mm}
  \node[flowbox, fill=black!6, text width=\wa, minimum height=25mm] (a1) at (18.5mm,0)
    {\textbf{Order selection}\\[1pt] inspect independent values in a chosen order and accept
     at most one};
  \node[flowbox, fill=black!6, text width=\wa, minimum height=25mm] (a2) at (59.5mm,0)
    {\textbf{Arrival-time algorithm}\\[1pt] exact item-wise guarantee; the design reduces
     to allocations with loads ${}\le1$ (Sec.~\ref{sec:model})};
  \node[flowbox, fill=black!6, text width=\wa, minimum height=25mm] (a3) at (100.5mm,0)
    {\textbf{Local rules are kernels}\\[1pt] $B_i(t)=\int_0^1\lambda_i(u)K(t,u)\dd u$
     for one kernel $K$ (Thm.~\ref{thm:local})};
  \node[flowbox, fill=black!6, text width=\wa, minimum height=25mm] (a4) at (141.5mm,0)
    {\textbf{Single-profile test}\\[1pt] $K$ works for all inputs iff
     $\ell_\Gamma(K,\lambda)\le1$ for all profiles~$\lambda$
     (Prop.~\ref{prop:scalar-feasibility})};
  \foreach \i/\j in {1/2,2/3,3/4} { \draw[flowarrow] (a\i) -- (a\j); }
  \node[flowbox, fill=black!14, text width=157.9mm, minimum height=7mm] (bar) at (80mm,-22mm)
    {\textbf{Kernel design problem:} find one kernel $K$ with
     $\sup_{\lambda}\ell_\Gamma(K,\lambda)\le1$ for $\Gamma$ as large as possible};
  \draw[flowarrow] (a4.south) -- (a4.south |- bar.north);
  \foreach \i/\x in {1/14.8, 2/47.4, 3/80, 4/112.6, 5/145.2} { \coordinate (B\i) at (\x mm, -44mm); }
  \node[flowbox, text width=\wb, minimum height=28mm] (b1) at (B1)
    {\textbf{I.I.D.\ inputs}\\[1pt] all kernels coincide and are feasible iff
     $\Gamma\le\Gamma_\star\approx0.7454$ (Prop.~\ref{prop:iid-guarantee})};
  \node[flowbox, text width=\wb, minimum height=28mm] (b2) at (B2)
    {\textbf{Elementary kernels}\\[1pt] $K_{\mathrm{diag}}$ and $K_{\mathrm{ind}}$ are
     feasible iff $\Gamma\le1-1/e$ (Props.~\ref{prop:diagonal},~\ref{prop:independent})};
  \node[flowbox, text width=\wb, minimum height=28mm] (b3) at (B3)
    {\textbf{Small profiles}\\[1pt] first-order condition; matching the i.i.d.\ path
     selects $K_0$ (Props.~\ref{prop:iid-calibration},~\ref{prop:iid-path-kernel})};
  \node[flowbox, line width=0.9pt, text width=\wb, minimum height=28mm] (b4) at (B4)
    {\textbf{Kernel $K_0$}\\[1pt] feasible up to $\Gamma_0\approx0.7389$, infeasible at
     $\Gamma_\star$ (Thm.~\ref{thm:kzero-analytic},
     Prop.~\ref{prop:k0-counterexample})};
  \node[flowbox, line width=0.9pt, text width=\wb, minimum height=28mm] (b5) at (B5)
    {\textbf{Kernel $K_1$}\\[1pt] perturbs $K_0$; feasible at $\Gamma_1=0.74500001$ by a
     computer-assisted proof (Thm.~\ref{thm:certified-kernel})};
  \draw[flowarrow] (bar.south -| b1.north) -- (b1.north);
  \foreach \i/\j in {1/2,2/3,3/4,4/5} { \draw[flowarrow] (b\i) -- (b\j); }
\end{tikzpicture}
\caption{Overview of the paper. The top row reduces the design of a local arrival-time
algorithm to the choice of a single allocation kernel (Sections~\ref{sec:model}--\ref{sec:kernel-design}). The
bottom row shows how the kernel is designed and the ratio guaranteed at each stage
(Sections~\ref{sec:kernel-design}--\ref{sec:final-kernel}).}
\label{fig:roadmap}
\end{figure}

\begin{table}[t]
\centering
\small
\begin{tabular}{@{}l l l l@{}}
\toprule
Kernel & Coupling & Universally feasible at $\Gamma$ & Reference\\
\midrule
any kernel & --- & only if $\Gamma\le\Gamma_\star\approx0.74544$ &
  Prop.~\ref{prop:iid-guarantee}\\
$K_{\mathrm{diag}}$ & $\widetilde T=\widetilde U$ & if and only if $\Gamma\le1-1/e\approx0.632$ &
  Prop.~\ref{prop:diagonal}\\
$K_{\mathrm{ind}}$ & $\widetilde T$, $\widetilde U$ independent & if and only if
  $\Gamma\le1-1/e$ & Prop.~\ref{prop:independent}\\
$K_0$ & $\widetilde T=\phi(X)$, $X\sim\mathrm U[\widetilde U,1]$ & if $\Gamma\le\Gamma_0\approx0.7389$; not if
  $\Gamma\ge0.745$ & Thm.~\ref{thm:kzero-analytic}, Prop.~\ref{prop:k0-counterexample}\\
$K_1$ & perturbed $K_0$ coupling & if $\Gamma\le\Gamma_1=0.74500001$ &
  Thm.~\ref{thm:certified-kernel}\\
\bottomrule
\end{tabular}
\caption{The allocation kernels studied in this paper. A kernel is universally feasible
at $\Gamma$ if the arrival construction based on it is well defined, and hence
$\Gamma$-competitive, on every input (Proposition~\ref{prop:scalar-feasibility}). Here
$(\widetilde U,\widetilde T)$ denotes the coupling of two uniform random variables that
defines the kernel (Section~\ref{sec:characterization}); for $K_0$, the release level $X$ is
uniform on $[\widetilde U,1]$ given $\widetilde U$, and $\phi$ is defined in~\eqref{eq:iid-clock}.}
\label{tab:results}
\end{table}

\paragraph{The design problem.}
We work in the independent-arrival framework of \citet{PT}, in the item-dependent form of
\citet{BC}. Each item receives an independently sampled arrival time and a threshold that
depends on that time. We design the arrival laws and thresholds so that, conditional on its
value, each item is accepted with probability exactly $\Gamma$ times its probability of being
the maximum. This identity alone gives $\E[\ALG]=\Gamma\E[\OPT]$. The design problem can then
be expressed through \emph{area allocations}. The allocation of an item determines its
threshold path, which in turn determines the total arrival probability, or \emph{load}, that
the item needs to sustain the guarantee. The allocations of all items must add up to
normalized time, and every load must be at most one. The question is how to choose
allocations that satisfy both constraints for all independent inputs simultaneously.

\paragraph{Local constructions and allocation kernels.}
We study \emph{local constructions}: a single rule assigns to every item an allocation that
depends only on the item's own area curve. To describe such rules, we represent each item by
its \emph{winning-probability profile}, the probability that the item is the maximum given the
rank of the maximum. The profiles of all items sum to one at almost every rank.
Theorem~\ref{thm:local} shows that local constructions correspond exactly to \emph{allocation
kernels}, the conditional distribution functions of couplings of two uniform random variables:
the allocation of an item is the integral of its profile against the kernel. The
total-allocation constraint then holds automatically, and the same kernel serves every item
and every input. The remaining constraint, that every load is at most one, becomes a test on
single profiles (Proposition~\ref{prop:scalar-feasibility}).

\paragraph{From the i.i.d.\ benchmark to a baseline kernel.}
The i.i.d.\ case is a natural benchmark, but it does not help to choose a kernel: every item
then has the constant profile $1/n$, and all kernels induce the same allocation.
Proposition~\ref{prop:iid-guarantee} shows that this common construction attains
$\Gamma_\star$ on i.i.d.\ inputs. In contrast, the two elementary kernels of
Section~\ref{sec:kernel-design}, given by the perfectly correlated and the independent couplings, are
universally feasible only up to $1-1/e\approx0.632$, and their worst cases are nonconstant
profiles. To obtain a more informative test, we examine profiles with small winning
probability and arbitrary shape. Their limiting threshold paths yield a first-order necessary
condition for feasibility at $\Gamma_\star$, and requiring all of these paths to coincide with
the i.i.d.\ path selects a unique kernel $K_0$. Theorem~\ref{thm:kzero-analytic} proves that
$K_0$ is universally feasible at the exact root $\Gamma_0\approx0.7389$ of an equation given by
a one-dimensional control problem, and Proposition~\ref{prop:k0-counterexample} exhibits a
profile on which $K_0$ fails at $\Gamma_\star$. Matching the i.i.d.\ path for every small
profile therefore does not control profiles with substantial winning probability.

\paragraph{A certified kernel above $0.745$.}
Our final kernel $K_1$ perturbs the coupling behind $K_0$ on a coarse grid while preserving
both marginals; the perturbation is specified by $69$ integers.
Theorem~\ref{thm:certified-kernel} proves that $K_1$ is universally feasible at
$\Gamma_1=0.74500001$, which yields an order-selection algorithm with competitive ratio
$\Gamma_1$ for arbitrary independent nonnegative values. The proof follows the analysis of
$K_0$ step by step (Figure~\ref{fig:proof-architecture}). The perturbation changes the
allocation of a profile only through four block masses. Analytic estimates bound the threshold path and the
load of every profile whose block masses lie in a given region, without restricting the
profile within the blocks, and a finite interval-arithmetic certificate checks these bounds
over a cover of the four-dimensional parameter space. The certificate accounts for
quadrature, interpolation, tail, and rounding errors.

\paragraph{Contributions.}
In summary, we (i)~cast the arrival-time approach as the design of allocations under an exact
item-wise guarantee; (ii)~characterize all local constructions by allocation kernels;
(iii)~derive necessary conditions from small profiles, which lead to the baseline kernel
$K_0$ and its analytic guarantee; and (iv)~construct an explicit kernel $K_1$ with a certified
competitive ratio above $0.745$. Table~\ref{tab:results} lists the kernels studied in the
paper and their guarantees, and Figure~\ref{fig:roadmap} shows how the results fit together.

\subsection{Related Work}\label{sec:related}

\paragraph{Adversarial order and i.i.d.\ values.}
For independent values in a fixed order, the factor $1/2$ of \citet{KS77,KS78} and
\citet{SC84} is tight. For $n$ i.i.d.\ values, \citet{HK82} identified the smallest constant
$a_n$ such that the expected maximum is at most $a_n$ times the optimal stopping value, and
\citet{Kertz86} showed that $a_n$ converges to $1/\Gamma_\star\approx1.3415$. \citet{AEEHKL17}
gave an algorithm with ratio $0.738$ for i.i.d.\ values, and \citet{CFHOV21} obtained the
optimal ratio $\Gamma_\star$ with a sequence of quantile thresholds; \citet{LPPSS21} gave a
primal-dual proof of the bound of \citet{Kertz86}, and \citet{PT} recovered $\Gamma_\star$
within their arrival-time framework. \citet{Allaart07} showed that the same constant is the
sharp asymptotic bound when i.i.d.\ values arrive at the times of a Poisson process. Other
work assumes that the distributions are known only through samples
\citep{AKW14,CDFS22,RWW20}, or refines the worst-case guarantees under additional structure,
such as a bound on the relative variance of the maximum \citep{LMPPV26}.

\paragraph{Random arrival order.}
In the \emph{prophet secretary} problem, introduced by \citet{EHLM17}, the values arrive in a
uniformly random order. \citet{EHLM17} obtained the ratio $1-1/e$, which is also attained by a
single threshold with randomized tie-breaking \citep{EHKS18} and by time-invariant
personalized thresholds \citep{CFHOV21}. \citet{ACK18} gave the first improvement, to
$1-1/e+1/400$. \citet{CSZ21} obtained $0.669$ with blind strategies and showed that blind
strategies cannot exceed $0.675$; \citet{Harb25} improved the ratio to $0.6724$ using
Poissonization and sharding, and \citet{CHLT25} obtained the current best ratio, $0.688$, by
treating the largest item separately. When every distribution occurs sufficiently often,
\citet{AEEHKL17} obtained $0.738$ and \citet{LPPSS21} obtained ratios approaching
$\Gamma_\star$. Every guarantee for random order also
applies to order selection, because the gambler may choose a uniformly random order. On the
hardness side, \citet{CSZ21} showed that no algorithm exceeds $\sqrt3-1\approx0.732$, which
separated random order from the i.i.d.\ setting. \citet{BC} lowered this bound to $0.7254$,
even for algorithms that know the order in advance, and \citet{GMS23} proved the bound $0.7235$
without computer assistance for algorithms that learn the order only as it unfolds.

\paragraph{Order selection and optimal ordering.}
After \citet{Hill83}, \citet{HH85} and \citet{Gilat87} studied which orders are optimal for a
given set of distributions. In the competitive setting, the threshold rule of \citet{CHMS10}
and \citet{Yan11} attains $1-1/e$, which is the best possible against the ex-ante relaxation
used in its analysis, so later improvements compare directly with the expected maximum.
\citet{BGLPS21} obtained $1-1/e+0.022\approx0.654$, and \citet{AEEHKL17} obtained $0.738$
when every distribution occurs sufficiently often. \citet{LPPSS21} showed that, after a
constant number of distributions is removed from the instance, a uniformly random order
attains a ratio arbitrarily close to $\Gamma_\star$ with respect to the expected maximum of the
remaining values. \citet{PT} introduced arrival-time design and obtained $0.7251$, and
\citet{BC} extended it with item-dependent thresholds to obtain $0.7258$. Their constructions
are not local in the sense of Section~\ref{sec:local}; we instead restrict attention to local
constructions and optimize over the allocation kernel that describes them. \citet{ADK21}
study an intermediate model in which the gambler chooses the order from a prescribed set of
permutations.

\paragraph{Pricing and combinatorial extensions.}
Order selection corresponds to sequential posted pricing, in which the seller chooses the
order in which buyers are approached \citep{CHMS10,Yan11,BGLPS21}, and \citet{CFPV19} showed
that guarantees for threshold stopping rules and for posted prices are equivalent under each
of these arrival models. Beyond a single item, prophet inequalities are known for $k$ units
\citep{HKS07,Alaei14}, matroids \citep{KW19}, general downward-closed constraints
\citep{Rubinstein16,RS17}, and combinatorial auctions \citep{DFKL20}, and they are closely
related to online contention resolution schemes \citep{FSZ21,LS18}. Prophet inequalities
are also known for matchings \citep{EFGT22}.

\paragraph{Online benchmarks.}
Another line of work measures algorithms against the best online algorithm instead of the
prophet. For order selection, this benchmark is the optimal policy, which chooses both the
inspection order and the stopping rule. Computing it is NP-hard, even for distributions with
three-point supports \citep{ASZ20}, and approximation schemes for it and for closely related
stochastic optimization problems are given by \citet{CEGMM10}, \citet{FLX18}, \citet{LPPSS21},
and \citet{SS21}; these results concern computational efficiency rather than the competitive
ratio against the prophet. For single-item selection, the \emph{order-competitive ratio} of
\citet{EFGT23} compares the best algorithm that does not know the arrival order with the best
algorithm that does. Its optimal value for deterministic algorithms is
$(\sqrt5-1)/2\approx0.618$ \citep{EFGT23,CEFGST26}; randomized algorithms achieve $0.732$, and
no algorithm exceeds $0.829$ \citep{CST24}. \citet{EG23} study this ratio under combinatorial
constraints. For a uniformly random order, the optimal online algorithm can be approximated
within any factor $1-\varepsilon$ \citep{DGRTT23}, and for bipartite matching it can be
approximated better than the best possible guarantee against the prophet \citep{PPSW24}.

\subsection{Organization}\label{sec:organization}
Section~\ref{sec:model} introduces the model and the arrival-time framework.
Section~\ref{sec:local} characterizes local constructions by allocation kernels.
Section~\ref{sec:kernel-design} formulates the kernel design problem and analyzes the i.i.d.\
benchmark and the two elementary kernels. Section~\ref{sec:k0} derives and analyzes the
baseline kernel $K_0$, Section~\ref{sec:final-kernel} specifies the final kernel $K_1$
and states its guarantee, and Section~\ref{sec:conclusion} concludes.
Appendices~\ref{app:algorithm}--\ref{app:kzero-proofs} contain the
deferred proofs of Sections~\ref{sec:model}--\ref{sec:k0}.
Appendices~\ref{app:cert-proof} and~\ref{app:cert-details} prove
Theorem~\ref{thm:certified-kernel}, and Appendix~\ref{app:cert-atoms} extends the guarantee
to distributions with atoms.

\section{Model and Algorithmic Framework}\label{sec:model}
We describe the model and the independent-arrival framework, then formulate the allocation problem that underlies our algorithm design.

Let $V_1,\ldots,V_n$ be independent nonnegative random values with known cumulative distribution functions $F_1,\ldots,F_n$.
The gambler chooses an inspection order before observing any values and may accept at most one item.
Let $I$ denote the index of the accepted item, $I^*$ the index of the maximum-value item, and $\OPT=\max_i V_i$.
If no item is accepted, set $I=\dagger$ and $\ALG=0$; otherwise, set $\ALG=V_I$.
For ease of exposition, we initially assume that the value distributions are atomless, so the maximum is unique almost surely; Appendix~\ref{app:cert-atoms} extends the final guarantee to distributions with atoms.

Let $\Gamma\in(0,1)$ denote the target competitive ratio.
The case $n=1$ is immediate, so we assume $n\ge2$.

\subsection{Independent Arrival Times with Item-Dependent Thresholds}

We use the algorithmic framework of \citet{BC}, which extends the independent-arrival framework of \citet{PT} by allowing item-dependent threshold functions.

\begin{tcolorbox}
\phantomsection\label{alg:independent-arrivals}
\begin{center}
\small\textbf{Algorithm: Independent Arrivals with Item-Dependent Thresholds}
\end{center}
For each item $i$, specify an arrival-time distribution $G_i$ on $[0,1]\cup\{\dagger\}$ and a threshold function $\tau_i:[0,1]\to\R$.
\begin{enumerate}
\item Sample the arrival times $T_i$ from their respective distributions $G_i$, independently of one another and of the values.
\item Discard all items with $T_i=\dagger$ and inspect the remaining items in increasing order of arrival time, using a fixed rule to break ties.
\item Accept the first item satisfying $V_i\ge\tau_i(T_i)$.
\end{enumerate}
\end{tcolorbox}

\paragraph{Design Principle: An Exact Item-Wise Guarantee.}
We design the arrival-time distributions and threshold functions to satisfy the following exact item-wise guarantee:
\begin{equation}\label{eq:itemwise}
 \Pp(I=i\mid V_i=v)
 =\Gamma\Pp(I^*=i\mid V_i=v)
 =\Gamma\prod_{j\ne i}F_j(v),
\end{equation}
for every item $i$ and $F_i$-almost every $v$.
This identity implies
\[
 \Pp(\ALG>v)=\Gamma\Pp(\OPT>v)
 \qquad\text{for every $v\ge0$},
\]
because integrating~\eqref{eq:itemwise} over $\{V_i>v\}$ and summing over $i$ gives
$\Pp(\ALG>v)=\sum_i\Pp(I=i,\,V_i>v)=\Gamma\sum_i\Pp(I^*=i,\,V_i>v)=\Gamma\Pp(\OPT>v)$.
Integrating over $v\ge0$ yields $\E[\ALG]=\Gamma\E[\OPT]$ whenever $\E[\OPT]<\infty$.

This property is implicit in the analyses of \citet{PT} and \citet{BC}.
We make it explicit as a design principle.
Although the framework involves two families of functions, $\{G_i\}$ and $\{\tau_i\}$, the construction below specifies the arrival-time distributions explicitly once the threshold functions are fixed.
The remaining design freedom therefore lies in the choice of threshold functions, which we encode using \emph{area-allocation functions}.

For completeness, Appendix~\ref{app:algorithm} derives the construction of Bubna and Chiplunkar from the item-wise guarantee.
We now express the construction in a probability-scale parameterization that will be central to our local allocation rules.

\subsection{A Probability-Scale Parameterization}

\paragraph{Item-Dependent Probability Thresholds.}
Rather than specifying the threshold functions $\tau_i$ on the value scale, we work on an item-dependent probability scale.
This parameterization is motivated by the item-wise guarantee, which compares $V_i$ with $\max_{j\ne i}V_j$.
For each item $i$, define
\[
 U_i:=\Pp(I^*\ne i\mid V_i)
     =1-\prod_{j\ne i}F_j(V_i).
\]
Thus, $U_i$ is the conditional probability that item $i$ is not the maximum-value item, given its realized value.
We call $U_i$ the \emph{loss score} of item $i$.
Since $U_i$ is a nonincreasing function of $V_i$, an acceptance condition of the form $U_i\le y$ is a threshold condition on $V_i$; the modified algorithm below is therefore an instance of the original one.

\begin{tcolorbox}
For each item $i$, specify a probability-threshold function $y_i:[0,1]\to[0,1]$ in place of $\tau_i$.
Replace Step~3 of the algorithm with the following:
\begin{enumerate}
\item[3.] Accept the first item satisfying $U_i\le y_i(T_i)$.
\end{enumerate}
\end{tcolorbox}

\paragraph{Area Curves and Allocations.}
For each item $i$, define its area curve by
\begin{equation}\label{eq:area}
 A_i(x):=\E[(x-U_i)_+]
        =\int_0^x\Pp(U_i<z)\dd z,
 \qquad x\in[0,1].
\end{equation}
The function $A_i$ is continuous, convex, nondecreasing, and $1$-Lipschitz.
Its value at $x=1$ satisfies
\begin{equation}\label{eq:mu}
 \mu_i:=A_i(1)=\E[1-U_i]=\Pp(I^*=i),
 \qquad \sum_i\mu_i=1.
\end{equation}
Define the upper generalized inverse of $A_i$ by
\[
 A_i^{\leftarrow}(b)
 :=\sup\{x\in[0,1]:A_i(x)\le b\},
 \qquad b\in[0,\mu_i].
\]
To specify the probability thresholds, choose \emph{area-allocation functions} $B_i:[0,1]\to[0,\mu_i]$ satisfying
\begin{equation}\label{eq:allocation}
\begin{gathered}
 B_i(0)=0,\qquad B_i(1)=\mu_i,\qquad
 B_i\text{ is nondecreasing},\\
 \sum_i B_i(t)=t\qquad\text{for all $t\in[0,1]$}.
\end{gathered}
\end{equation}
The quantity $B_i(t)$ is the area allocated to item $i$ by time $t$.
The normalization $\sum_i B_i(t)=t$ means that the total allocated area grows at unit rate.
These conditions also imply that each $B_i$ is $1$-Lipschitz.

\paragraph{Probability Thresholds and Arrival-Time Distributions.} 
Given these allocation functions, define
\begin{equation}\label{eq:arrivals}
\begin{aligned}
 y_i(t)&:=A_i^{\leftarrow}(B_i(t)),\\
 S(t)&:=1-\Gamma t,\\
 S_i(t)&:=\exp\!\left(-\Gamma\int_0^t
                     \frac{\dd B_i(s)}{S(s)}\right),\\
 \dd G_i(t)&:=\Gamma\frac{S_i(t)}{S(t)}\dd y_i(t), \qquad t\in(0,1].
\end{aligned}
\end{equation}
By Lemma~\ref{lem:threshold-paths} in Appendix~\ref{app:algorithm}, each threshold path $y_i$ is continuous and nondecreasing, with $A_i(y_i(t))=B_i(t)$ and $y_i(1)=1$.
Its starting point $y_i(0)$ is the right endpoint of the initial interval on which $A_i$ vanishes, so it may be positive.
The normalization $\sum_iB_i(t)=t$ gives $\prod_iS_i(t)=S(t)$.
Integrals with respect to $B_i$ and $y_i$ are understood in the Lebesgue--Stieltjes sense.
We assign no arrival mass to $t=0$.

Define the arrival load of item $i$ by
\begin{equation}\label{eq:load}
 L_i:=\mathcal L_\Gamma(A_i,B_i) = \int_{(0,1]} \dd G_i(t)
     =\Gamma\int_{(0,1]}\frac{S_i(t)}{S(t)}\dd y_i(t).
\end{equation}
Thus, $L_i$ is the total mass assigned to arrival times in $(0,1]$.
The construction is feasible if and only if $L_i\le1$ for every item $i$.
In that case, complete each arrival-time distribution by setting
\[
 \Pp(T_i=\dagger)=1-L_i.
\]

\begin{theorem}\label{thm:bc}
Suppose the allocation functions satisfy~\eqref{eq:allocation} and $L_i\le1$ for every item $i$.
Then, for every item $i$ and every $t\in[0,1]$, the resulting algorithm has the following properties:
\begin{itemize}
\item $\Pp\bigl(T_i\in(0,t],\ U_i\le y_i(T_i)\bigr)=1-S_i(t)$.
\item $\Pp(\text{no item is accepted by time }t)=S(t)=1-\Gamma t$.
\item $\Pp(\text{item }i\text{ is accepted by time }t)=\Gamma B_i(t)$.
\item The exact item-wise guarantee~\eqref{eq:itemwise} holds.
\end{itemize}
\end{theorem}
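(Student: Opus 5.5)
We prove the four bullets in order; each follows from the previous ones. The only probabilistic inputs are the independence of $T_i$ from $V_i$ (hence from $U_i$) and across items. We also use the facts recorded after~\eqref{eq:arrivals}: $A_i(y_i(t))=B_i(t)$, $y_i$ is continuous and nondecreasing, and $\prod_iS_i=S$.

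\emph{First bullet.} By independence of $T_i$ and $U_i$,
\[
\Pp\bigl(T_i\in(0,t],\,U_i\le y_i(T_i)\bigr)=\int_{(0,t]}\Pp(U_i\le y_i(s))\dd G_i(s)=\Gamma\int_{(0,t]}\frac{S_i(s)}{S(s)}\,A_i'(y_i(s))\dd y_i(s).
\]
Here we use $A_i'(x)=\Pp(U_i<x)$, and atoms of $U_i$ are irrelevant since $y_i$ is continuous. By the chain rule for Lebesgue--Stieltjes integrals with a continuous monotone inner function, $A_i'(y_i(s))\dd y_i(s)=\dd A_i(y_i(s))=\dd B_i(s)$. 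On intervals where $y_i$ is constant both sides vanish, and where $A_i$ is flat, $B_i$ is flat too. The integral therefore equals $\Gamma\int_{(0,t]}\frac{S_i(s)}{S(s)}\dd B_i(s)$. Since $\dd S_i=-\Gamma S_i\dd B_i/S$ and $B_i$ is continuous ($1$-Lipschitz), this is exactly $1-S_i(t)$.

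\emph{Second bullet.} No item is accepted by time $t$ iff, for every $i$, the event of the first bullet fails. The acceptance test for $i$ depends only on $(T_i,U_i)$, and these pairs are independent across items because the $T_i$ are mutually independent and independent of the values. One subtlety: the $U_i$ are not independent of each other, since each $U_i$ is a function of $V_i$ alone, but $(T_i,V_i)$ are independent across $i$, which is what we need. Hence the probability is $\prod_iS_i(t)=S(t)=1-\Gamma t$.

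\emph{Third bullet.} Item $i$ is accepted at time $s$ iff it passes its test at $s$ and no other item has passed by then. Ties have probability zero because each $G_i$ is absolutely continuous with respect to $\dd y_i$ and continuous in $t$; this needs a short check. Conditioning on $T_i=s$ and using independence, the rate of acceptance is $\Gamma\frac{S_i(s)}{S(s)}\dd B_i(s)\cdot\prod_{j\ne i}S_j(s)=\Gamma\dd B_i(s)$. Integrating over $(0,t]$ gives $\Gamma B_i(t)$.

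\emph{Fourth bullet (main step).} We compute the same quantity conditionally on $V_i=v$, i.e.\ $U_i=u$. Item $i$ is accepted iff $T_i=s$ with $y_i(s)\ge u$ and no other item passes before $s$, and the latter has probability $S(s)/S_i(s)$, independently of $V_i$. Thus
\[
\Pp(I=i\mid U_i=u)=\Gamma\int_{\{s:\,y_i(s)\ge u\}}\dd y_i(s)=\Gamma\bigl(1-\max(u,y_i(0))\bigr),
\]
using $y_i(1)=1$ and the continuity and monotonicity of $y_i$. For $u\ge y_i(0)$ this equals $\Gamma(1-u)=\Gamma\prod_{j\ne i}F_j(v)$, as required.

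The main obstacle is the range $u<y_i(0)$. There $A_i$ vanishes on $[0,y_i(0)]$, so $\Pp(U_i<y_i(0))=0$ and this range is $F_i$-null; the case $u=y_i(0)$ is handled directly. Besides this, care is needed to justify the Stieltjes chain rule when $y_i$ has flat pieces or $A_i$ has kinks, and to treat tie-breaking rigorously. We would handle these via the continuity of $y_i$ and $B_i$ from Lemma~\ref{lem:threshold-paths}.
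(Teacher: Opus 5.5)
Your proposal is correct and follows essentially the same route as the paper. The first three bullets come from the identity $\dd B_i=\Pp(U_i\le y_i)\,\dd y_i$ (Lemma~\ref{lem:threshold-paths}(d), which is the Stieltjes chain rule you invoke), the independence of the pairs $(T_i,V_i)$, and $\prod_jS_j=S$. The fourth freezes $U_i$ at the loss score of $v$, obtains $\Gamma(1-\max\{u,y_i(0)\})$, and discards $u<y_i(0)$ as an $F_i$-null range via Lemma~\ref{lem:threshold-paths}(b); the only difference is that the paper makes your informal conditioning on $U_i=u$ rigorous by a Fubini argument exhibiting this expression as a version of $\Pp(I=i\mid V_i=v)$.
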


The probability of accepting an item by time $t$ is therefore $\Gamma t$, and the eventual acceptance probability is $\Gamma$.
Thus $t$ is a \emph{normalized algorithmic time}: it measures the fraction of the eventual acceptance probability accumulated by time $t$, not the fraction of items inspected.
The proof is given in Appendix~\ref{app:algorithm}.

\subsection{Design of Area-Allocation Functions}

Theorem~\ref{thm:bc} reduces the problem to finding area-allocation functions whose arrival loads are all at most one.
Our goal is to make $\Gamma$ as large as possible while ensuring that every input admits allocation functions $B_1,\ldots,B_n$ satisfying~\eqref{eq:allocation} and $\mathcal L_\Gamma(A_i,B_i)\le1$ for every item $i$.
The notation $\mathcal L_\Gamma(A_i,B_i)$ emphasizes that, for fixed $\Gamma$, the load depends only on item $i$'s area curve and allocation function.

Even after this reduction, considerable freedom remains in the choice of allocations.
\citet{PT} use a common value-threshold function and achieve a competitive ratio of approximately $0.7251$.
\citet{BC} improve this ratio to $0.7258$ using item-dependent thresholds and a careful case analysis.
We restrict attention to a structured family of allocations, the \emph{local constructions} introduced in the next section.

\section{Local Constructions of Area Allocations}\label{sec:local}

For fixed $\Gamma$, the arrival load $L_i=\mathcal L_\Gamma(A_i,B_i)$ depends only on item $i$'s area curve $A_i$ and its allocation $B_i$.
This suggests defining $B_i$ as a functional of $A_i$, with the same rule applied to every item in every input.
The challenge is to ensure that the resulting allocations automatically satisfy~\eqref{eq:allocation}.

Let $\mathcal A$ denote the class of item area curves arising from inputs in the model above.

\begin{definition}[Local construction]\label{def:local}
Fix $\Gamma\in(0,1)$.
A \emph{local construction} is a functional $\mathcal B$ that assigns to each $A\in\mathcal A$ a nondecreasing allocation
\[
 B_A:=\mathcal B[A]:[0,1]\to[0,A(1)]
\]
satisfying
\begin{equation}\label{eq:local-endpoints}
 B_A(0)=0,
 \qquad B_A(1)=A(1).
\end{equation}
For every input with area curves $A_1,\ldots,A_n$, the functional must also satisfy
\begin{equation}\label{eq:total-allocation}
 \sum_{i=1}^n\mathcal B[A_i](t)=t
 \qquad\text{for every }t\in[0,1].
\end{equation}
We call~\eqref{eq:total-allocation} the \emph{total-allocation identity}: at normalized time $t$, the total area allocated across all items is exactly $t$.
The functional $\mathcal B$ may depend on $\Gamma$, but not separately on the item label, the number of items, or the other area curves.
\end{definition}

A local construction satisfies the allocation constraints by definition.
It is \emph{feasible} on an input if, in addition, $\mathcal L_\Gamma(A_i,\mathcal B[A_i])\le1$ for every item $i$.
Locality is defined relative to the area curve, which already records how the item compares with its competitors, not to the item's own value distribution alone.

We show that local constructions are in one-to-one correspondence with couplings of two uniform distributions on $[0,1]$.
To express this characterization, we represent area curves by profiles that sum to one at almost every maximum rank. This pointwise identity will ensure the total-allocation identity~\eqref{eq:total-allocation}.

\subsection{Winning-Probability Profiles}\label{sec:profile}

Let $F_{\max}(v):=\prod_jF_j(v)$ and define the \emph{maximum rank}
\begin{equation}\label{eq:maximum-rank}
 U:=1-F_{\max}(\OPT)
   =1-\prod_jF_j(\OPT).
\end{equation}
Since $F_{\max}$ is continuous, $U$ is uniformly distributed on $[0,1]$.
In contrast, the loss scores $U_i$ need not be uniformly distributed.
Smaller ranks correspond to larger values of the maximum.

For each item $i$, define its \emph{winning-probability profile} $\lambda_i:[0,1]\to[0,1]$ by
\begin{equation}\label{eq:profile}
 \Pp(I^*=i,\ U\le u)
 =\int_0^u\lambda_i(a)\dd a,
 \qquad u\in[0,1].
\end{equation}
Such a function exists because the left side of~\eqref{eq:profile} grows by at most $\Pp(u<U\le u')=u'-u$ as $u$ increases to $u'$.
For almost every $u$, $\lambda_i(u)$ is the conditional probability that item $i$ attains the maximum, given that the maximum rank is $u$.
We refer to $\lambda_i$ simply as the item's \emph{profile}.

Since exactly one item wins, the profiles satisfy
\begin{equation}\label{eq:partition}
 0\le\lambda_i\le1,
 \qquad \sum_i\lambda_i=1\quad\text{a.e.},
 \qquad \int_0^1\lambda_i(u)\dd u=\mu_i.
\end{equation}

Let $\Lambda$ denote the set of measurable profiles $\lambda:[0,1]\to[0,1]$, identifying profiles that agree almost everywhere.
The following lemma formalizes the equivalence between profiles and area curves.

\begin{lemma}[Profile--area correspondence]\label{lem:profile-bijection}
There is a bijection
\[
 \Lambda\longrightarrow\mathcal A,
 \qquad \lambda\longmapsto A_\lambda,
\]
such that, for every input and every item $i$, the profile $\lambda_i$ corresponds to the area curve $A_i$.
In particular,
\[
 A_\lambda=A_{\widetilde\lambda}
 \quad\Longleftrightarrow\quad
 \lambda=\widetilde\lambda\ \text{a.e.},
 \qquad
 A_\lambda(1)=\int_0^1\lambda(u)\dd u.
\]

More explicitly, for $\lambda\in\Lambda$ and $0\le u<1$, define
\begin{equation}\label{eq:profile-coordinates-main}
\begin{aligned}
 h_\lambda(u)&:=\int_0^u\frac{\lambda(a)}{1-a}\dd a,
 & C_\lambda(u)&:=\int_0^u\lambda(a)\dd a,\\
 x_\lambda(u)&:=1-(1-u)e^{h_\lambda(u)},
 & \mu_\lambda&:=\int_0^1\lambda(a)\dd a.
\end{aligned}
\end{equation}
Then $x_\lambda$ is continuous and nondecreasing, with $0\le x_\lambda(u)\le u$, and
\begin{equation}\label{eq:profile-area-main}
\begin{aligned}
 A_\lambda(x_\lambda(u))
 &=C_\lambda(u)-(1-u)\bigl(e^{h_\lambda(u)}-1\bigr)\\
 &=C_\lambda(u)+x_\lambda(u)-u.
\end{aligned}
\end{equation}
Let $x_*:=\lim_{u\uparrow1}x_\lambda(u)$.
If $x_*<1$, the curve continues with slope one:
\begin{equation}\label{eq:profile-extension-main}
 A_\lambda(x)=\mu_\lambda+x-1,
 \qquad x\in[x_*,1].
\end{equation}
Together with continuity and $A_\lambda(1)=\mu_\lambda$, these formulas determine $A_\lambda$ on $[0,1]$.
\end{lemma}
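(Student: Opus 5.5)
The plan is to compute $A_i$ directly from the joint structure of an input, express it in terms of $\lambda_i$, and read off the formulas in~\eqref{eq:profile-coordinates-main}--\eqref{eq:profile-extension-main}. The map $\lambda\mapsto A_\lambda$ is then \emph{defined} by those formulas. It remains to show that every profile arises from some input (surjectivity onto the realized pairs) and that $A_\lambda$ determines $\lambda$ (injectivity).

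\textbf{Step 1: change of variables along the maximum rank.} Fix an input and an item $i$, and write $G(v):=\prod_{j\ne i}F_j(v)$, so $U_i=1-G(V_i)$ and $F_{\max}=F_iG$. The event $\{I^*=i,\ \OPT\in\dd v\}$ has probability $G(v)\dd F_i(v)$, while $\{\OPT\in\dd v\}$ has probability $\dd(F_iG)(v)=G\dd F_i+F_i\dd G$. Passing to the rank $u=1-F_i(v)G(v)$, for which $\dd u=-\dd(F_iG)$, gives
\[
 \lambda_i(u)\dd u=-G\dd F_i=-(1-u)\dd\log F_i .
\]
Since $F_i=1$ at the top ($u=0$), this yields $F_i=e^{-h_\lambda(u)}$ along the rank scale. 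Consequently $G=(1-u)e^{h_\lambda(u)}$, and so $x_\lambda(u)=1-G$ is exactly the loss score $U_i$ of a value at maximum rank $u$. Monotonicity of $x_\lambda$ follows from $\dd x_\lambda=e^{h_\lambda}(1-\lambda)\dd u\ge0$. The bound $x_\lambda\le u$ is equivalent to $h_\lambda(u)\le-\log(1-u)$, which holds because $\lambda\le1$. Flat stretches of $F_{\max}$ carry no mass and do not affect this computation, since the distributions are atomless.

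\textbf{Step 2: the area identity.} Using $A_i(x)=\int_0^x\Pp(U_i<z)\dd z$ and substituting $z=x_\lambda(a)$, note that $\Pp(U_i<x_\lambda(a))=\Pp(V_i>v(a))=1-e^{-h_\lambda(a)}$. This gives
\[
 A_i(x_\lambda(u))=\int_0^u\bigl(1-e^{-h_\lambda(a)}\bigr)e^{h_\lambda(a)}\bigl(1-\lambda(a)\bigr)\dd a=\int_0^u\bigl(e^{h_\lambda}-1\bigr)(1-\lambda)\dd a .
\]
Differentiating $C_\lambda+x_\lambda-u$ gives $\lambda+e^{h_\lambda}(1-\lambda)-1=(e^{h_\lambda}-1)(1-\lambda)$, and both sides vanish at $u=0$. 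This proves~\eqref{eq:profile-area-main}.

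On $[x_*,1]$, if $x_*<1$, then $U_i\le x_*$ almost surely, since every realized loss score is some $x_\lambda(u)$ or the limit $x_*$. Hence $\Pp(U_i<z)=1$ there, and the curve has slope one ending at $A(1)=\mu$. The intervals of $x$ skipped by $x_\lambda$ are gaps in the support of $U_i$; on them $A$ is affine and is determined by continuity. So $A_i=A_{\lambda_i}$.

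\textbf{Step 3: bijectivity.} For surjectivity onto $\mathcal A$ and realizability of an arbitrary $\lambda\in\Lambda$, I would build a two-item input on a value scale indexed by the rank. Take $V_1$ with $F_1=e^{-h_\lambda(u)}$ and $V_2$ with $F_2=(1-u)e^{h_\lambda(u)}$, where the value is a decreasing function of $u$. Then $F_2$ is a valid CDF because $\dd\bigl((1-u)e^{h_\lambda}\bigr)=-e^{h_\lambda}(1-\lambda)\dd u\le0$, and Step~1 shows that item $1$ has profile $\lambda$. Any residual mass as $u\uparrow1$ (when $h_\lambda(1)<\infty$) is placed at the bottom value; this mass is where the case $x_*<1$ comes from. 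For injectivity, the right derivative of $A$ is the CDF of $U_i$, so $A$ determines the law of $U_i$. From the law of $U_i$ one recovers the increasing reparametrization $u\mapsto x_\lambda(u)$ through the relation $\Pp(U_i<x_\lambda(u))=1-e^{-h_\lambda(u)}$ together with $x_\lambda=1-(1-u)e^{h_\lambda}$. This recovers $h_\lambda$ and hence $\lambda=(1-u)h_\lambda'$ almost everywhere.

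\textbf{Main obstacle.} I expect the main difficulty to be the bookkeeping at degenerate points: atoms of $U_i$ (flat pieces of $x_\lambda$, where $\lambda=1$), gaps in the support, and the endpoint $u\uparrow1$. These are exactly the places where inverting $A\mapsto\lambda$ needs care. I would handle them by working with generalized inverses and Lebesgue--Stieltjes measures throughout, rather than with pointwise derivatives.
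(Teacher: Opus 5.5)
Your skeleton is the paper's. You identify $F_i$ along the rank scale as $e^{-h_\lambda}$ and the loss score as $x_\lambda$ (Lemma~\ref{lem:rank-representation}), then derive the area formulas, realize every profile, and recover $\lambda$ from $A_\lambda$. To make Step~1 rigorous you also need that each $F_j$ is constant on every level set of $F_{\max}$ with $F_{\max}>0$, so that $F_i$ is a function of $u$. Your derivation of \eqref{eq:profile-area-main} from $\int_0^x\Pp(U_i<z)\dd z$ via $z=x_\lambda(a)$ is a valid alternative to the paper's direct evaluation of $\E[(x_\lambda(u)-x_\lambda(R_i))_+]$. The identity $\Pp(U_i<x_\lambda(a))=1-e^{-h_\lambda(a)}$ fails on flat pieces of $x_\lambda$, but there $x_\lambda'=0$, so the integral is unaffected. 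One slip: $x_\lambda\le u$ is equivalent to $h_\lambda\ge0$; it is $x_\lambda\ge0$ that is equivalent to $h_\lambda(u)\le-\log(1-u)$.

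The terminal extension has a gap. Values of rank $R_i=1$, i.e.\ with $F_{\max}(V_i)=0$, have loss score $1$, and they occur with probability $\lim_{u\uparrow1}e^{-h_\lambda(u)}$. So your claim that every realized loss score is some $x_\lambda(u)$ or $x_*$ needs proof. The missing observation is that $x_*<1$ means $(1-u)e^{h_\lambda(u)}\to1-x_*>0$. This forces $h_\lambda(u)\to\infty$ and hence $\Pp(R_i=1)=0$. Your remark in Step~3 gets this backwards. Residual mass of the item itself ($h_\lambda(1)<\infty$) forces $x_*=1$. The case $x_*<1$ comes from the \emph{competitor's} residual mass $\lim_{u\uparrow1}(1-u)e^{h_\lambda(u)}=1-x_*$.

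Step~3 has two further gaps. First, placing the residual mass at a single bottom value creates an atom. Your two-item input is then outside the atomless model that defines profiles and $\mathcal A$, and that Step~1 uses, so as written you have not shown $A_\lambda\in\mathcal A$. Spread that mass continuously below the other item's support, e.g.\ $F_i(v)=c_iv$ on $[0,1]$ and $F_i(v)=e^{-h_i(2-v)}$ on $(1,2]$, as in Lemma~\ref{lem:profile-realization}.

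Second, injectivity is the substantive part you defer as bookkeeping. Your relation $\Pp(U_i<x_\lambda(u))=1-e^{-h_\lambda(u)}$ is false for $u$ past the left end of a flat piece of $x_\lambda$ (where $\lambda=1$), which is exactly where $U_i$ has an atom. Work instead with $\bar s(x):=\sup\{s<1:x_\lambda(s)\le x\}$.
\begin{itemize}
\item If $\bar s(x)<1$, then $x_\lambda(\bar s(x))=x$ and $A'_+(x)=\Pp(U_i\le x)=1-e^{-h_\lambda(\bar s(x))}$. Eliminating $h_\lambda$ gives $\bar s(x)=1-(1-x)\bigl(1-A'_+(x)\bigr)$.
\item If $\bar s(x)=1$, the same formula holds, because then $x_*\le x<1$, so $\Pp(R_i=1)=0$ by the observation above and $A'_+(x)=1$.
\end{itemize}
Thus $\bar s$ is a function of $A$ alone, and $x_\lambda(u)=\inf\{x:\bar s(x)>u\}$. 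Then $h_\lambda(u)=\log\frac{1-x_\lambda(u)}{1-u}$ determines $\lambda$ almost everywhere. This is Lemma~\ref{lem:profile-recovery}.
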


Appendix~\ref{app:profile} proves the lemma and gives an explicit formula for recovering $\lambda$ from $A_\lambda$.
Consequently, we may equivalently express a local construction as a rule $\lambda\mapsto B_\lambda$, where $B_\lambda:=\mathcal B[A_\lambda]$.
This reparameterization preserves the information available to the rule.
The advantage of this representation is that an allocation kernel can transform the pointwise identity $\sum_i\lambda_i(u)=1$ almost everywhere into the allocation constraint $\sum_iB_i(t)=t$.
Throughout, $t\in[0,1]$ denotes normalized algorithmic time, and $u\in[0,1]$ denotes the maximum-rank coordinate associated with $U=1-F_{\max}(\OPT)$; the variable $x\in[0,1]$ usually denotes an argument of the area curves, that is, a level of the probability threshold.

\subsection{Characterization of Local Constructions}\label{sec:characterization}

\begin{definition}[Allocation kernel]
A jointly measurable function $K:[0,1]^2\to[0,1]$ is an \emph{allocation kernel} if the following conditions hold:
\begin{itemize}
\item for almost every $u$, $K(\cdot,u)$ is nondecreasing and right-continuous, with
\[
 K(0,u)=0,
 \qquad K(1,u)=1;
\]
\item for every $t\in[0,1]$,
\begin{equation}\label{eq:kernel-marginal}
 \int_0^1K(t,u)\dd u=t.
\end{equation}
\end{itemize}
\end{definition}
Equivalently, an allocation kernel is a conditional distribution function for an auxiliary pair $(\widetilde U,\widetilde T)$ with uniform marginals on $[0,1]$:
\[
 K(t,u)=\Pp(\widetilde T\le t\mid\widetilde U=u).
\]
The correspondence with couplings is one-to-one when kernels that agree for all $t$ outside a common null set of $u$ are identified.
For a fixed maximum rank $u$, $K(t,u)$ specifies the fraction of winning probability mass allocated by normalized algorithmic time $t$.
The same allocation kernel is used for every item and every input.
Given an allocation kernel $K$ and a profile $\lambda$, define
\begin{equation}\label{eq:kernel-B}
 B_{K,\lambda}(t):=\int_0^1\lambda(u)K(t,u)\dd u.
\end{equation}
The following theorem shows that this representation captures precisely the local constructions of Definition~\ref{def:local}.

\begin{theorem}[Characterization of local constructions]\label{thm:local}
A functional $\mathcal B$ is a local construction if and only if there exists an allocation kernel $K$ such that
\begin{equation}\label{eq:local-representation}
 \mathcal B[A_\lambda](t)
 =B_{K,\lambda}(t)
 =\int_0^1\lambda(u)K(t,u)\dd u
 \qquad(\lambda\in\Lambda,\ t\in[0,1]).
\end{equation}
The kernel is unique up to a null set of maximum ranks, simultaneously for all $t$.
\end{theorem}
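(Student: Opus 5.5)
The sufficiency direction is a direct check. Given an allocation kernel $K$, the functional $\lambda\mapsto B_{K,\lambda}$ defined in~\eqref{eq:kernel-B} depends only on $\lambda$. By Lemma~\ref{lem:profile-bijection} this is the same as depending only on $A_\lambda$, so it is a legitimate local rule. For each $t$, the allocation $B_{K,\lambda}(t)$ is nondecreasing in $t$ because $K(\cdot,u)$ is, and it lies in $[0,\mu_\lambda]$ because $0\le K\le1$. The endpoint conditions~\eqref{eq:local-endpoints} hold because $K(0,u)=0$, $K(1,u)=1$ and $A_\lambda(1)=\mu_\lambda$. For any input, summing over items and using $\sum_i\lambda_i=1$ a.e.\ from~\eqref{eq:partition} together with the marginal condition~\eqref{eq:kernel-marginal} gives~\eqref{eq:total-allocation}.

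For necessity, fix a local construction and, for each $t$, set $\Phi_t(\lambda):=\mathcal B[A_\lambda](t)$ on $\Lambda$. The key preliminary is a \emph{realizability lemma}: for every $n\ge2$ and every measurable $\lambda_1,\dots,\lambda_n\ge0$ with $\sum_i\lambda_i=1$ a.e., some independent atomless input has exactly these profiles. I would prove it by working through the area curves of Lemma~\ref{lem:profile-bijection}. One prescribes the law of the maximum through $F_{\max}$, with $U$ uniform. One then defines each $F_i$ by $\log F_i=\int\lambda_i\,\dd\log F_{\max}$ along the rank scale. This makes the product of the $F_i$ equal to $F_{\max}$, and item $i$'s share of $\dd\log F_{\max}$ equals $\lambda_i$, which is exactly its conditional winning probability. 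I expect this step to be the main obstacle. The delicate points are that each $F_i$ must be a genuine distribution function and that degenerate profiles must be handled, for instance $\lambda_i$ vanishing on a set of full measure, which should give a zero area curve.

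Granting realizability, additivity follows by comparing inputs. The inputs $(1,0)$ and $(1,0,0)$ give $\Phi_t(0)=0$. For $\lambda,\lambda'$ with $\lambda+\lambda'\le1$, compare $(\lambda,\lambda',1-\lambda-\lambda')$ with $(\lambda+\lambda',1-\lambda-\lambda')$. Both totals equal $t$ by~\eqref{eq:total-allocation}, so $\Phi_t(\lambda+\lambda')=\Phi_t(\lambda)+\Phi_t(\lambda')$. Combined with the bound $0\le\Phi_t(\lambda)\le\mu_\lambda=\int\lambda$, additivity gives monotonicity and the Lipschitz bound $|\Phi_t(\lambda)-\Phi_t(\lambda')|\le\int|\lambda-\lambda'|$; the latter is obtained by writing both profiles relative to $\min(\lambda,\lambda')$. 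Consequently $\nu_t(E):=\Phi_t(\ind_E)$ is finitely additive with $\nu_t(E)\le|E|$, hence countably additive and absolutely continuous. This yields a density $k_t\in[0,1]$, and $\Phi_t(\lambda)=\int\lambda k_t$ holds for every $\lambda$. Indeed, additivity gives rational homogeneity, so the identity holds on simple functions, and the Lipschitz bound extends it to all of $\Lambda$. Taking $\lambda\equiv1$ (realizable via $(1,0)$) gives $\int k_t=t$.

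The last step assembles a kernel. Monotonicity of $\mathcal B[A](\cdot)$ applied to indicator profiles gives $k_s\le k_{s'}$ a.e.\ for each pair of rationals $s<s'$. Likewise, $\Phi_0=0$ and $\Phi_1(\lambda)=\int\lambda$ force $k_0=0$ and $k_1=1$ a.e. Discard a single null set of $u$ and set $K(t,u):=\inf_{s>t,\ s\in\mathbb Q}k_s(u)$ for $t<1$, and $K(1,u)=1$. This $K$ is nondecreasing, right-continuous, and takes values in $[0,1]$ with the right endpoints. It reproduces $\Phi_t$ at every $t$ by monotone convergence. This uses right-continuity of $t\mapsto\mathcal B[A](t)$, which holds because each allocation is $1$-Lipschitz: it is nondecreasing and bounded by the total increment $t'-t$ in any realizing input. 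The marginal identity $\int K(t,u)\dd u=t$ then holds for all $t$. For uniqueness, two kernels representing the same $\mathcal B$ have equal integrals against every indicator at each rational $t$, so they agree a.e.\ there. Right-continuity then makes them agree for all $t$ off one common null set of $u$.
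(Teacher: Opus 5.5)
Your proposal is correct and follows the paper's proof essentially step for step. The realizability lemma you sketch is the paper's Lemma~\ref{lem:profile-realization}, built exactly as you propose from $F_i=e^{-h_{\lambda_i}}$ on the rank scale, with the leftover defective mass spread atomlessly where $F_{\max}=0$; your additivity, Radon--Nikodym density, regularization $K(t,u)=\inf_{s>t}k_s(u)$ over rationals, and uniqueness steps match Steps~1--4 of Appendix~\ref{app:converse}.

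One small point needs tightening: $k_0=0$ does not force $K(0,u)=\inf_{s\in\mathbb Q\cap(0,1]}k_s(u)$ to vanish. As in the paper, the representation at $t=0$ gives $\int_0^1K(0,u)\dd u=0$, so $K(0,\cdot)=0$ a.e., and you then redefine $K(t,u):=t$ on that null set. You should also note that $K$ is jointly measurable, being a countable infimum.
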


\begin{proof}[Proof of sufficiency]
By Lemma~\ref{lem:profile-bijection}, equation~\eqref{eq:local-representation} defines a functional of the area curve alone.
Monotonicity and the endpoint conditions follow directly from the corresponding properties of $K$ and the identity $A_\lambda(1)=\int_0^1\lambda(u)\dd u$.
For an input with profiles $\lambda_1,\ldots,\lambda_n$,
\[
 \sum_iB_{K,\lambda_i}(t)
 =\int_0^1\left(\sum_i\lambda_i(u)\right)K(t,u)\dd u
 =t.
\]
Thus the rule is a local construction.
\end{proof}

To prove necessity, we use the fact that every finite family of profiles satisfying $\sum_i\lambda_i=1$ almost everywhere is realizable by an input; see Lemma~\ref{lem:profile-realization} in Appendix~\ref{app:realization}.
The total-allocation identity in Definition~\ref{def:local} therefore holds for every such family.
Comparing families obtained by splitting and merging profiles forces the allocation rule to be linear in $\lambda$.
Monotonicity in $t$ then yields the conditional distribution functions that define $K$.
Appendix~\ref{app:converse} gives the full argument and proves uniqueness.

The characterization concerns a single universal rule in the fixed normalized time coordinate $t$, not arbitrary allocations for an individual input.
An input-dependent change of time need not preserve locality.
The auxiliary coupling $(\widetilde U,\widetilde T)$ is used only to define the allocations; the actual arrival times $T_i$ are still sampled independently according to~\eqref{eq:arrivals}.
Figure~\ref{fig:pipeline} summarizes how a kernel turns an input into an algorithm.

\begin{figure}[t]
\centering
\begin{tikzpicture}[font=\footnotesize]
  \def\w{25.5mm}
  \foreach \i/\x in {0/13.8, 1/46.4, 2/79, 3/111.6, 4/144.2} { \coordinate (P\i) at (\x mm, 0); }
  \node[flowbox, fill=black!6, text width=\w, minimum height=15mm] (p0) at (P0)
    {\textbf{Input}\\[1pt] distributions\\ $F_1,\dots,F_n$};
  \node[flowbox, text width=\w, minimum height=15mm] (p1) at (P1)
    {\textbf{Profile} $\lambda_i$~\eqref{eq:profile}\\[1pt] share of the maximum at each rank};
  \node[flowbox, text width=\w, minimum height=15mm] (p2) at (P2)
    {\textbf{Allocation}~\eqref{eq:kernel-B}\\[1pt] $B_i(t)=$\\ $\int_0^1\!\lambda_i(u)K(t,u)\dd u$};
  \node[flowbox, text width=\w, minimum height=15mm] (p3) at (P3)
    {\textbf{Threshold path} $y_i$,\\ arrival law $G_i$~\eqref{eq:arrivals}};
  \node[flowbox, text width=\w, minimum height=15mm] (p4) at (P4)
    {\textbf{Load}~\eqref{eq:scalar-load}\\[1pt] $L_i=\ell_\Gamma(K,\lambda_i)$};
  \foreach \i/\j in {0/1,1/2,2/3,3/4} { \draw[flowarrow] (p\i) -- (p\j); }
  \begin{scope}[on background layer]
    \node[draw=black!45, dashed, rounded corners=3pt, fit=(p1) (p4), inner sep=2mm,
      label={[flowlabel, anchor=south, inner sep=1pt]north:computed separately for each item $i$, by the same rule}] {};
  \end{scope}
  \node[flowbox, fill=black!6, text width=84mm, minimum height=10mm] (alg) at (111.6mm,-24mm)
    {\textbf{Algorithm} (if every $L_i\le1$): sample $T_i\sim G_i$ independently,
     inspect the items in order of arrival, and accept the first $i$ with $U_i\le y_i(T_i)$};
  \node[flowbox, fill=black!6, text width=55mm, minimum height=10mm] (guar) at (29.5mm,-24mm)
    {\textbf{Guarantee} (Thm.~\ref{thm:bc}): $\Pp(I=i\mid V_i)=\Gamma\,\Pp(I^*=i\mid V_i)$, so
     $\E[\ALG]=\Gamma\,\E[\OPT]$};
  \draw[flowarrow] (p4.south) -- (p4.south |- alg.north);
  \draw[flowarrow] (alg) -- (guar);
\end{tikzpicture}
\caption{From an input to the algorithm, for a fixed allocation kernel $K$. Every item is
processed by the same rule. Because the profiles add up to one at almost every rank, the allocations add
up to normalized time, as required by~\eqref{eq:allocation}.}
\label{fig:pipeline}
\end{figure}

\begin{remark}
The constructions of \citet{PT} and \citet{BC} are not local in the sense of Definition~\ref{def:local} in general.
In Peng and Tang's common-threshold construction, each item's accumulated-area path depends only on $A_i$ when parameterized by the maximum-rank coordinate $u$.
However, expressing these paths in our normalized time $t$ requires a common, instance-dependent change of time.
The resulting allocation $B_i$ therefore need not be determined by $A_i$ alone.
Bubna and Chiplunkar's two-scheme algorithm additionally uses the Peng--Tang arrival loads of the entire instance to decide whether, and for which item, to modify the threshold path.
Thus our definition requires locality \emph{after} time normalization, not merely before it.
This restriction concerns the particular constructions above, rather than the general Bubna--Chiplunkar framework.
\end{remark}

\section{Kernel Design}\label{sec:kernel-design}\label{sec:elementary}

Theorem~\ref{thm:local} reduces the design of local constructions to the choice of an allocation kernel.
The allocation constraints are now automatic; the remaining task is to control the largest arrival load over all profiles.
We formulate this optimization for a fixed target ratio $\Gamma$.

\subsection{The Kernel Optimization Problem}\label{sec:kernel-optimization}

Fix a target ratio $\Gamma\in(0,1)$, and let $\mathcal K$ denote the class of allocation kernels.
For $K\in\mathcal K$ and $\lambda\in\Lambda$, let $A_\lambda$ be the corresponding area curve and $B_{K,\lambda}$ the allocation induced by $K$.
Define
\begin{equation}\label{eq:scalar-paths}
\begin{aligned}
 y_{K,\lambda}(t)&:=A_\lambda^{\leftarrow}\bigl(B_{K,\lambda}(t)\bigr),\\
 S_{K,\lambda}(t)&:=\exp\!\left(-\Gamma\int_0^t
                       \frac{\dd B_{K,\lambda}(s)}{S(s)}\right),
 \qquad S(t)=1-\Gamma t.
\end{aligned}
\end{equation}
We suppress the dependence of $S_{K,\lambda}$ on the fixed parameter $\Gamma$.
For $0\le s\le t\le1$, the bound $\lambda\le1$ and the marginal identity~\eqref{eq:kernel-marginal} give
\[
 0\le B_{K,\lambda}(t)-B_{K,\lambda}(s)
 \le\int_0^1\bigl(K(t,u)-K(s,u)\bigr)\dd u
 =t-s,
\]
so $B_{K,\lambda}$ is nondecreasing and $1$-Lipschitz for every profile $\lambda$, not only for the items of an input.
The load analysis below relies on this bound: it makes the threshold path $y_{K,\lambda}$ continuous (Lemma~\ref{lem:threshold-paths}) and the survival factor $S_{K,\lambda}$ absolutely continuous (Lemma~\ref{lem:load-parts}), and it gives $0\le1-B_{K,\lambda}'\le1$ almost everywhere, which Lemma~\ref{lem:gamma-monotone} and Proposition~\ref{prop:diagonal} use.
The corresponding single-profile load is
\begin{equation}\label{eq:scalar-load}
 \ell_\Gamma(K,\lambda)
 :=\mathcal L_\Gamma\bigl(A_\lambda,B_{K,\lambda}\bigr)
 =\Gamma\int_{(0,1]}\frac{S_{K,\lambda}(t)}{S(t)}\dd y_{K,\lambda}(t).
\end{equation}
These formulas define the load for every profile, whether or not the resulting arrival measure is feasible.
For an input item $i$, this load is exactly $L_i=\ell_\Gamma(K,\lambda_i)$; a load exceeding one means that the prescribed arrival measure cannot be completed to a probability distribution.

At this fixed target ratio, the objective is to minimize the worst-case load:
\begin{equation}\label{eq:kernel-optimization}
 \inf_{K\in\mathcal K}\;\sup_{\lambda\in\Lambda}\ell_\Gamma(K,\lambda).
\end{equation}
To attain the target ratio, we seek a kernel for which the inner supremum is at most one.
The kernel may depend on $\Gamma$, but the same kernel must work for every profile and every input.
The following proposition makes this feasibility criterion precise.

\begin{proposition}[Universal feasibility]\label{prop:scalar-feasibility}
For the fixed pair $(\Gamma,K)$, the prescribed arrival construction
is feasible for every atomless input if and only if
\begin{equation}\label{eq:robust}
 \sup_{\lambda\in\Lambda}\ell_\Gamma(K,\lambda)\le1.
\end{equation}
Whenever~\eqref{eq:robust} holds, the resulting algorithm satisfies
the exact item-wise guarantee~\eqref{eq:itemwise} for every atomless
input. Moreover, the same competitive guarantee extends to arbitrary
independent nonnegative distributions, including those with atoms
(Appendix~\ref{app:cert-atoms}).
\end{proposition}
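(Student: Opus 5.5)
The argument has two directions, plus a final reference to the appendix for distributions with atoms.

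For sufficiency, assume~\eqref{eq:robust} and fix an atomless input with profiles $\lambda_1,\dots,\lambda_n$. By Lemma~\ref{lem:profile-bijection}, each $\lambda_i$ corresponds to the item's area curve $A_i$. By the sufficiency half of Theorem~\ref{thm:local}, the allocations $B_i:=B_{K,\lambda_i}$ satisfy~\eqref{eq:allocation}: they are nondecreasing, start at $0$, end at $\mu_i=A_i(1)$, and sum to $t$. The load of item $i$ is, by definition, $L_i=\mathcal L_\Gamma(A_i,B_i)=\ell_\Gamma(K,\lambda_i)$, and~\eqref{eq:robust} bounds it by one. So every arrival measure can be completed by putting mass $1-L_i$ on $\dagger$. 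Theorem~\ref{thm:bc} then gives the item-wise guarantee~\eqref{eq:itemwise}, and integrating as in Section~\ref{sec:model} gives $\E[\ALG]=\Gamma\E[\OPT]$.

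For necessity, suppose $\ell_\Gamma(K,\lambda)>1$ for some $\lambda\in\Lambda$; I will build an atomless input on which the construction is infeasible. The plan is to use Lemma~\ref{lem:profile-realization}: every finite family of profiles summing to one almost everywhere is realized by some input. I take $\lambda_1=\lambda$ and complete it, for instance with $\lambda_2=1-\lambda$ (which lies in $[0,1]$), or with several pieces $(1-\lambda)/m$ if the realization lemma needs $n\ge2$ items or some nondegeneracy. On the realized input, item $1$ has area curve $A_\lambda$ and allocation $B_{K,\lambda}$, so its load is $\ell_\Gamma(K,\lambda)>1$ and its arrival measure cannot be a sub-probability measure.

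The main obstacle is in this necessity step: checking that the input given by Lemma~\ref{lem:profile-realization} is atomless, as the statement requires, and that the item's load depends only on its profile. The second point is built into~\eqref{eq:scalar-load} together with Lemma~\ref{lem:profile-bijection}. A degenerate case to handle is $\lambda=0$ a.e. or $\lambda=1$ a.e., where the completing profile vanishes. For $\lambda=0$ a.e., the allocation is zero and the threshold path is constant, so the load is zero and this $\lambda$ cannot violate~\eqref{eq:robust}. For $\lambda=1$ a.e., I would instead pair it with a profile of negligible weight, or show by direct computation that its load is at most one.

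Finally, the extension to distributions with atoms is deferred to Appendix~\ref{app:cert-atoms}, as stated; I would invoke that appendix rather than reprove it here. That argument presumably replaces each atom by a continuous spread via a standard tie-breaking perturbation and passes to the limit.
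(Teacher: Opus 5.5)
Your proposal is correct and follows the paper's proof: sufficiency comes from Theorem~\ref{thm:local} and Theorem~\ref{thm:bc}, necessity comes from realizing $\lambda$ in the atomless two-item input $(\lambda,1-\lambda)$ of Lemma~\ref{lem:profile-realization}, and the case of distributions with atoms is deferred to Appendix~\ref{app:cert-atoms}. Your degenerate-case concerns are unnecessary, because that lemma already produces atomless inputs for the profile pairs $(1,0)$ and $(0,1)$ (values uniform on $[1,2]$ and on $[0,1]$); also, the appendix applies the construction directly to randomized tie-broken scores rather than using a limiting argument.
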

\begin{proof}
By Theorem~\ref{thm:local}, $K$ defines a local construction, so the allocation constraints hold for every input.
If~\eqref{eq:robust} holds, all item loads are at most one, and Theorem~\ref{thm:bc} gives feasibility and the item-wise guarantee.
Conversely, every profile $\lambda\in\Lambda$ is realizable in an atomless, nonnegative two-item input with profiles $(\lambda,1-\lambda)$; Appendix~\ref{app:realization} gives the construction.
Universal feasibility therefore requires $\ell_\Gamma(K,\lambda)\le1$ for every $\lambda$. The extension to arbitrary independent distributions
is established in Appendix~\ref{app:cert-atoms}.
\end{proof}

We call an allocation kernel $K$ \emph{universally feasible at $\Gamma$}
if it satisfies~\eqref{eq:robust}.
This equivalence concerns feasibility of the prescribed arrival laws; it does not characterize all algorithms with competitive ratio $\Gamma$.
The following equivalent expression for the load will be useful in the analysis below.

\begin{lemma}[Integration by parts]\label{lem:load-parts}
Abbreviate $B=B_{K,\lambda}$, $y=y_{K,\lambda}$, and $S_\lambda=S_{K,\lambda}$.
Then
\begin{equation}\label{eq:load-parts}
 \ell_\Gamma(K,\lambda)
 =\Gamma\bigl(1-y(0)\bigr)
  +\Gamma^2\int_0^1
   \frac{S_\lambda(t)(1-y(t))(1-B'(t))}{S(t)^2}\dd t.
\end{equation}
Here $B^\prime$ is defined almost everywhere.
\end{lemma}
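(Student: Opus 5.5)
We apply Lebesgue--Stieltjes integration by parts to the definition~\eqref{eq:scalar-load}, writing the integrand as $\Gamma\,\frac{S_\lambda(t)}{S(t)}\dd y(t)=-\Gamma\,\frac{S_\lambda(t)}{S(t)}\dd(1-y)(t)$. Set $g(t):=S_\lambda(t)/S(t)$. Because $B$ is $1$-Lipschitz, $S_\lambda$ is absolutely continuous with $S_\lambda'(t)=-\Gamma S_\lambda(t)B'(t)/S(t)$ almost everywhere. Since $S(t)=1-\Gamma t$ has $S'=-\Gamma$ and $S\ge1-\Gamma>0$, the quotient $g$ is absolutely continuous on $[0,1]$ and
\[
 g'(t)=\frac{S_\lambda'(t)S(t)-S_\lambda(t)S'(t)}{S(t)^2}
 =\frac{\Gamma S_\lambda(t)\bigl(1-B'(t)\bigr)}{S(t)^2}.
\]
By Lemma~\ref{lem:threshold-paths}, $y$ is continuous and nondecreasing with $y(1)=1$, so $1-y$ is a continuous function of bounded variation that vanishes at $t=1$.

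Integration by parts for a continuous BV function against an absolutely continuous one holds with no atom corrections. It gives
\[
 \int_{(0,1]}g\,\dd(1-y)
 =\bigl[g\,(1-y)\bigr]_0^1-\int_0^1(1-y(t))\,g'(t)\dd t .
\]
The right endpoint contributes $g(1)\cdot0=0$. At the left endpoint, $S_\lambda(0)=S(0)=1$, so $g(0)=1$ and the boundary term is $-(1-y(0))$. Since no arrival mass sits at $t=0$, the integral over $(0,1]$ matches the Stieltjes integral over $[0,1]$; continuity of $y$ ensures $\dd y$ has no atom at $0$.

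Multiplying by $-\Gamma$ then yields
\[
 \ell_\Gamma(K,\lambda)=\Gamma(1-y(0))+\Gamma\int_0^1(1-y)g'\dd t,
\]
and substituting $g'$ gives~\eqref{eq:load-parts}.

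The only delicate point is justifying the integration-by-parts formula. It needs continuity of $y$ (from Lemma~\ref{lem:threshold-paths}) and absolute continuity of $g$ (from the $1$-Lipschitz bound on $B$). The nonnegativity $0\le1-B'\le1$ plays no role here; it matters only for the later uses of the formula.
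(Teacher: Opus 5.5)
Your proposal is correct and follows the paper's proof: both set $W=S_\lambda/S$, use the $1$-Lipschitz property of $B$ to get absolute continuity of $W$ with $W'=\Gamma S_\lambda(1-B')/S^2$, and integrate by parts against the continuous path $y$ using $W(0)=1$ and $y(1)=1$. Your extra remarks on why $\dd y$ has no atom at $t=0$ and why no atom corrections arise are sound details that the paper leaves implicit.
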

\begin{proof}
Since $B$ is $1$-Lipschitz, the function $W=S_\lambda/S$ is absolutely continuous and
\[
 W'(t)=\Gamma\frac{S_\lambda(t)(1-B'(t))}{S(t)^2}
 \quad\text{a.e.}
\]
Using $W(0)=1$ and $y(1)=1$, integration by parts gives
\[
 \int_{(0,1]}W(t)\dd y(t)
 =1-y(0)+\int_0^1(1-y(t))W'(t)\dd t.
\]
Multiplying by $\Gamma$ proves the identity.
\end{proof}

Here $y(0)=A_\lambda^{\leftarrow}(0)$ is the right endpoint of the initial interval on which $A_\lambda$ vanishes; it need not be zero.
Since the loss score of an item with profile $\lambda$ is almost surely at least $y(0)$ (Lemma~\ref{lem:threshold-paths}(b)), threshold levels below $y(0)$ require no arrival mass.
We will also use the following monotonicity property.

\begin{lemma}[Monotonicity in the target ratio]\label{lem:gamma-monotone}
Fix an allocation kernel $K$ and a profile $\lambda\in\Lambda$.
Then $\Gamma\mapsto\ell_\Gamma(K,\lambda)$ is nondecreasing on $(0,1)$.
In particular, if $K$ is universally feasible at some ratio $\Gamma$, then it is universally feasible at every ratio in $(0,\Gamma]$.
\end{lemma}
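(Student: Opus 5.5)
The plan is to work directly from the definition~\eqref{eq:scalar-load}, $\ell_\Gamma(K,\lambda)=\Gamma\int_{(0,1]}W_\Gamma(t)\dd y_{K,\lambda}(t)$ with $W_\Gamma:=S_{K,\lambda}/S$. The key point is to isolate where $\Gamma$ enters. The allocation $B=B_{K,\lambda}$ is defined by~\eqref{eq:kernel-B} from $K$ and $\lambda$ alone, and the area curve $A_\lambda$ depends only on $\lambda$. Hence the threshold path $y=A_\lambda^{\leftarrow}\circ B$ does not depend on $\Gamma$, and neither does the nonnegative Lebesgue--Stieltjes measure $\dd y$ (recall $y$ is nondecreasing by Lemma~\ref{lem:threshold-paths}). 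So it suffices to show that, for each fixed $t\in[0,1]$, the integrand $\Gamma\,W_\Gamma(t)$ is nondecreasing in $\Gamma$. Integrating this pointwise monotonicity against the fixed measure $\dd y$ then gives the claim.

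To get the pointwise monotonicity, I will write $W_\Gamma$ as a single exponential. Since $B$ is $1$-Lipschitz, $\dd B(s)=B'(s)\dd s$. Also
\[
 -\log(1-\Gamma t)=\int_0^t\frac{\Gamma}{1-\Gamma s}\dd s .
\]
Combining these,
\[
 W_\Gamma(t)=\exp\!\left(\int_0^t\frac{\Gamma\,(1-B'(s))}{1-\Gamma s}\dd s\right).
\]
This is the same computation that underlies the derivative $W'$ in Lemma~\ref{lem:load-parts}. As noted before~\eqref{eq:scalar-load}, $0\le1-B'\le1$ almost everywhere. For fixed $s\in[0,1]$, the map $\Gamma\mapsto\Gamma/(1-\Gamma s)$ is increasing on $(0,1)$. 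So the exponent, and hence $W_\Gamma(t)$, is nondecreasing in $\Gamma$. Moreover $W_\Gamma(t)\ge1>0$. The product $\Gamma W_\Gamma(t)$ of two positive nondecreasing functions of $\Gamma$ is therefore nondecreasing.

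Integrating over $(0,1]$ against $\dd y$ then yields $\ell_{\Gamma}(K,\lambda)\le\ell_{\Gamma'}(K,\lambda)$ for $\Gamma\le\Gamma'$. Monotone comparison of nonnegative integrands is valid here even if a load were infinite. The second statement follows immediately: if $\sup_\lambda\ell_\Gamma(K,\lambda)\le1$, then for every $\Gamma'\in(0,\Gamma]$ and every $\lambda$ we have $\ell_{\Gamma'}(K,\lambda)\le\ell_\Gamma(K,\lambda)\le1$, which is universal feasibility in the sense of~\eqref{eq:robust}. Note that the kernel $K$ is held fixed throughout, so this uses no $\Gamma$-dependence of $K$.

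The only step requiring care is confirming that nothing besides $S$ and $S_{K,\lambda}$ depends on $\Gamma$. In particular, the starting point $y(0)=A_\lambda^{\leftarrow}(0)$ and the absolute continuity of $B$ are $\Gamma$-free. Once that is checked, the argument is a direct pointwise comparison.
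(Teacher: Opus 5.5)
Your proposal is correct and follows the paper's proof essentially verbatim. You observe that $B$ and $y$ are independent of $\Gamma$, write $S_{K,\lambda}/S$ as $\exp\bigl(\int_0^t\Gamma(1-B')/(1-\Gamma s)\,\dd s\bigr)$, and use $1-B'\ge0$ together with the monotonicity of $\Gamma/(1-\Gamma s)$. You also spell out two steps the paper leaves implicit: the prefactor $\Gamma$ is handled because the product of positive nondecreasing functions is nondecreasing, and the integral is taken against the fixed measure $\dd y$.
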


Its proof is given in Appendix~\ref{app:gamma-monotone}.

We first establish the common guarantee of all allocation kernels on
i.i.d.\ inputs.  We then analyze two elementary kernels on arbitrary
independent inputs.

\subsection{The I.I.D.\ Benchmark}\label{sec:iid-blind}

On i.i.d.\ inputs, all allocation kernels induce the same allocation.
We show that their common arrival construction attains the optimal
i.i.d.\ competitive ratio $\Gamma_\star$.

On an atomless i.i.d.\ input with $n\ge2$ items, exchangeability gives
$\Pp(I^*=i,\ U\le u)=u/n$ for every item $i$ and $u\in[0,1]$.
Hence $\lambda_i\equiv1/n$, and~\eqref{eq:kernel-marginal} yields
\begin{equation}\label{eq:iid-common-allocation}
 B_{K,1/n}(t)=\frac1n\int_0^1K(t,u)\dd u=\frac tn
\end{equation}
for every allocation kernel $K$.
Thus all kernels give the same probability thresholds $y_{K,1/n}$,
survival factors $S_{K,1/n}$, and arrival loads $\ell_\Gamma(K,1/n)$.

To establish the guarantee, we analyze a general constant profile
$\lambda\equiv c$, where $c\in(0,1)$, and then specialize to $c=1/n$.
The same marginal identity gives $B_{K,c}(t)=ct$.
The limiting threshold path is described by
\begin{equation}\label{eq:iid-clock}
 \phi(x):=x+(1-x)\log(1-x)
         =\int_0^x\frac{x-u}{1-u}\dd u,
 \qquad q:=\phi^{-1},
 \qquad x\in[0,1],
\end{equation}
where $\phi(1)=1$ by continuity.
Since $\phi(0)=0$ and $\phi'(x)=-\log(1-x)>0$ for $x\in(0,1)$,
$\phi$ is a continuous, strictly increasing bijection of $[0,1]$.
Moreover, $\phi(x)\le x$, so $q(t)\ge t$.
Define
\begin{equation}\label{eq:iid-load}
 J(\Gamma):=\Gamma\int_0^1\frac{\dd x}{1-\Gamma\phi(x)}.
\end{equation}

\begin{lemma}[Constant profiles]\label{lem:iid-kernel-blind}
Let $K$ be an allocation kernel, $\Gamma\in(0,1)$, and $c\in(0,1)$.
For the constant profile $\lambda\equiv c$, the area curve is
\begin{equation}\label{eq:constant-profile-area}
 A_c(x)=x-(1-c)+(1-c)(1-x)^{1/(1-c)},
 \qquad x\in[0,1].
\end{equation}
Writing $\bar A_c:=A_c/c$, we have
\[
 y_{K,c}=\bar A_c^{-1},
 \qquad S_{K,c}(t)=(1-\Gamma t)^c,
\]
and
\begin{equation}\label{eq:finite-iid-load}
 \ell_\Gamma(K,c)
 =\Gamma\int_0^1
   \frac{\dd x}{\bigl(1-\Gamma\bar A_c(x)\bigr)^{1-c}}
 \le J(\Gamma).
\end{equation}
As $c\downarrow0$, $\bar A_c\to\phi$ uniformly on $[0,1]$,
$y_{K,c}(t)\to q(t)$ for every $t\in[0,1]$, and
$\ell_\Gamma(K,c)\to J(\Gamma)$.
\end{lemma}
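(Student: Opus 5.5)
The plan is to compute everything explicitly from Lemma~\ref{lem:profile-bijection} with $\lambda\equiv c$. Then
\[
h(u)=-c\log(1-u),\qquad x(u)=1-(1-u)^{1-c},\qquad C(u)=cu .
\]
Inverting the second identity gives $1-u=(1-x)^{1/(1-c)}$. Substituting into $A(x(u))=C(u)+x(u)-u$ yields
\[
A_c(x)=c\bigl(1-(1-x)^{1/(1-c)}\bigr)+x-1+(1-x)^{1/(1-c)},
\]
which is \eqref{eq:constant-profile-area}. Since $x(u)\to1$ as $u\uparrow1$, no linear extension is needed. The function $A_c$ is strictly increasing on $[0,1]$ (its derivative is $1-(1-x)^{c/(1-c)}>0$ for $x>0$), so $A_c^{\leftarrow}$ is its genuine inverse and $y(0)=0$. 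Because $B_{K,c}(t)=ct$, we get $y_{K,c}(t)=A_c^{-1}(ct)=\bar A_c^{-1}(t)$. Also
\[
S_{K,c}(t)=\exp\!\left(-\Gamma c\int_0^t\frac{ds}{1-\Gamma s}\right)=(1-\Gamma t)^c .
\]

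For the load, use \eqref{eq:scalar-load} directly: $\ell=\Gamma\int_0^1(1-\Gamma t)^{c-1}\dd y(t)$. Substituting $t=\bar A_c(x)$, so that $y(t)=x$, gives the integral in \eqref{eq:finite-iid-load}. To obtain the bound by $J(\Gamma)$, I would show two facts.
\begin{enumerate}
\item $\bar A_c(x)\le\phi(x)$ for all $x$. Then $1-\Gamma\bar A_c\ge1-\Gamma\phi$.
\item Since $1-\Gamma\bar A_c\in(0,1]$ and $1-c<1$, we have $(1-\Gamma\bar A_c)^{1-c}\ge1-\Gamma\bar A_c\ge1-\Gamma\phi$.
\end{enumerate}
Together these give the bound.

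Step 1 is the main obstacle. The cleanest route is an integral representation parallel to $\phi(x)=\int_0^x\frac{x-u}{1-u}\dd u$. In the $u$-coordinate, $\bar A_c(x(u))=u-(1-u)\bigl((1-u)^{-c}-1\bigr)/c$. The target inequality becomes
\[
\frac{(1-u)^{-c}-1}{c}\;\ge\;-\frac{u-\phi(x(u))}{1-u},
\]
which can be compared using convexity of $c\mapsto(1-u)^{-c}$ in $c$. Equivalently, I would view $\bar A_c(x)$ as a function of $c$ and show that $\partial_c\bar A_c\le0$, using $A_c(1)=c$ and the limit $c\to0$.

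If monotonicity proves messy, I would fall back on comparing derivatives in $x$. Both functions vanish at $0$. We have $\bar A_c'(x)=\bigl(1-(1-x)^{c/(1-c)}\bigr)/c$ and $\phi'(x)=-\log(1-x)$. The inequality $1-e^{-s}\le s$ with $s=\frac{c}{1-c}(-\log(1-x))$ gives
\[
\bar A_c'(x)\le\frac{-\log(1-x)}{1-c}.
\]
This is slightly too weak, so a sharper comparison would be needed. The honest route is therefore the monotonicity in $c$, checked via the derivative of $\bigl(1-(1-x)^{a}\bigr)/(a(1-c))$-type expressions with $a=c/(1-c)$.

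For the limits, write $(1-x)^{1/(1-c)}=(1-x)\,e^{\frac{c}{1-c}\log(1-x)}$. Expanding to first order in $c$ gives $\bar A_c(x)\to\phi(x)$ pointwise. The convergence is uniform on $[0,1]$ because the functions are monotone in $x$ and the limit is continuous on a compact interval (Dini/Pólya). Since $\phi$ is strictly increasing and continuous, uniform convergence of the functions gives pointwise convergence of their inverses, so $y_{K,c}(t)\to q(t)$. Finally, the integrand in \eqref{eq:finite-iid-load} converges pointwise to $(1-\Gamma\phi)^{-1}$ and is bounded by $(1-\Gamma)^{-1}$. Dominated convergence then gives $\ell_\Gamma(K,c)\to J(\Gamma)$.
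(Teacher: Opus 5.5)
Your explicit formulas are correct and match the paper: the area curve, $y_{K,c}=\bar A_c^{-1}$, $S_{K,c}=(1-\Gamma t)^c$, and the load integral. Your treatment of the limits as $c\downarrow0$ is also correct. The gap is in the bound $\ell_\Gamma(K,c)\le J(\Gamma)$: your Step~1 is false, and in fact the reverse inequality holds strictly. Write $1-x=e^{-r}$ and $a:=c/(1-c)$. Then $1-\phi(x)=(1+r)e^{-r}$ and $1-\bar A_c(x)=\bigl(e^{-r}-(1-c)e^{-r/(1-c)}\bigr)/c$. Multiplying by $ce^r$ and dividing by $1-c$ gives
\[
 1-\bar A_c(x)\ge1-\phi(x)
 \iff 1-ar\ge e^{-ar},
\]
which fails for every $r>0$. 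Hence $\bar A_c>\phi$ on $(0,1)$. For example, $c=x=\tfrac12$ gives $\bar A_c(\tfrac12)=\tfrac14>\phi(\tfrac12)\approx0.153$.

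This is the self-competition effect described in Section~\ref{sec:kzero-definition}: the threshold of a constant profile lags behind the i.i.d.\ path, $y_{K,c}\le q$. So neither monotonicity $\partial_c\bar A_c\le0$ nor any sharper derivative comparison can rescue Step~1; the target inequality itself is false. With $\bar A_c\ge\phi$, your Step~2 inequality $(1-\Gamma\bar A_c)^{1-c}\ge1-\Gamma\bar A_c$ yields only a lower bound below $1-\Gamma\phi$, so the chain cannot close.

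The missing idea is that the self-survival exponent $1-c$, coming from $S_{K,c}=(1-\Gamma t)^c$, must compensate for the lag. Step~2 throws this exponent away. The paper instead proves the weaker comparison
\[
 \bigl(1-\bar A_c(x)\bigr)^{1-c}\ge1-\phi(x).
\]
In the same coordinates this reads $e^{ar}\ge(1-c)+c(1+r)^{1/(1-c)}$. Both sides equal one at $r=0$. Their derivatives are $ae^{ar}$ and $a(1+r)^{a}$, and $e^{ar}\ge(1+r)^a$ because $e^r\ge1+r$. The factor $\Gamma$ is then moved inside using concavity of $v\mapsto v^{1-c}$ on the convex combination of $1$ and $1-\bar A_c(x)$:
\[
 \bigl(1-\Gamma\bar A_c\bigr)^{1-c}
 =\bigl((1-\Gamma)+\Gamma(1-\bar A_c)\bigr)^{1-c}
 \ge1-\Gamma+\Gamma\bigl(1-\bar A_c\bigr)^{1-c}
 \ge1-\Gamma\phi.
\]
Inserting this into the load integral gives $\ell_\Gamma(K,c)\le J(\Gamma)$.
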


The proof is given in Appendix~\ref{app:iid-benchmark}.
In particular, $q$ is the limiting probability-threshold path as the
number of i.i.d.\ items grows; we call it the \emph{i.i.d.\ threshold path}.

The function $J$ is continuous and strictly increasing from zero to
infinity as $\Gamma$ ranges over $(0,1)$.
Let $\Gamma_\star$ be the unique solution of
\begin{equation}\label{eq:iid-constant}
 J(\Gamma_\star)=1,
 \qquad \Gamma_\star\approx0.7454403321.
\end{equation}
This is the classical optimal i.i.d.\ competitive ratio
\citep{HK82,Kertz86,CFHOV21}; Appendix~\ref{app:iid-benchmark}
records the equivalent defining equation.

\begin{proposition}[The i.i.d.\ guarantee]\label{prop:iid-guarantee}
For every allocation kernel $K$ and target ratio $\Gamma\in(0,1)$,
the prescribed arrival construction is feasible on all atomless
i.i.d.\ inputs, for every number of items,
if and only if $\Gamma\le\Gamma_\star$.
At $\Gamma_\star$, it satisfies the exact item-wise
guarantee~\eqref{eq:itemwise} on every such input and therefore
attains competitive ratio $\Gamma_\star$.
\end{proposition}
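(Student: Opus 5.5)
The plan is to reduce Proposition~\ref{prop:iid-guarantee} to Lemma~\ref{lem:iid-kernel-blind} together with the monotonicity of $J$. On an atomless i.i.d.\ input with $n\ge2$ items, every item has the constant profile $\lambda_i\equiv1/n$, and~\eqref{eq:iid-common-allocation} shows that every kernel induces the allocation $B_i(t)=t/n$. By~\eqref{eq:scalar-load}, each item load is therefore $L_i=\ell_\Gamma(K,1/n)$, independent of $K$. Feasibility on all such inputs thus means exactly $\ell_\Gamma(K,1/n)\le1$ for every $n\ge2$. When this holds, Theorem~\ref{thm:bc} gives the item-wise guarantee~\eqref{eq:itemwise}, and hence $\E[\ALG]=\Gamma\E[\OPT]$. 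It then remains to show that $\sup_{n\ge2}\ell_\Gamma(K,1/n)\le1$ if and only if $\Gamma\le\Gamma_\star$.

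For sufficiency, suppose $\Gamma\le\Gamma_\star$. The bound in~\eqref{eq:finite-iid-load} gives $\ell_\Gamma(K,1/n)\le J(\Gamma)$. Since $J$ is strictly increasing with $J(\Gamma_\star)=1$, we get $J(\Gamma)\le1$, so every load is at most one. At $\Gamma=\Gamma_\star$ this also yields the exact item-wise guarantee on every atomless i.i.d.\ input, and therefore competitive ratio $\Gamma_\star$.

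For necessity, suppose $\Gamma>\Gamma_\star$, so that $J(\Gamma)>1$. Lemma~\ref{lem:iid-kernel-blind} states that $\ell_\Gamma(K,c)\to J(\Gamma)$ as $c\downarrow0$. Taking $c=1/n$ with $n$ large gives $\ell_\Gamma(K,1/n)>1$. This single i.i.d.\ input with many items is then infeasible for every kernel.

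The argument is short because Lemma~\ref{lem:iid-kernel-blind} carries the analytic weight: the inequality $\ell_\Gamma(K,c)\le J(\Gamma)$ and the limit as $c\downarrow0$. Taking the lemma as given, the only step needing care is the claim that $J$ is continuous and strictly increasing, which makes $\Gamma_\star$ a strict threshold. This follows from monotone convergence, since the integrand $\Gamma/(1-\Gamma\phi(x))$ is increasing in $\Gamma$. It also uses the fact that $J(\Gamma)\to\infty$ as $\Gamma\uparrow1$, because $1-\phi(x)\sim(1-x)\cdot$ a logarithmic factor near $x=1$. Both facts are asserted just before~\eqref{eq:iid-constant}.

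If one also wanted to identify $\Gamma_\star$ with the classical constant, one would match $J(\Gamma)=1$ to the Kertz equation via the substitution recorded in Appendix~\ref{app:iid-benchmark}. That identification is not needed for the statement itself.
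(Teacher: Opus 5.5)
Your proposal is correct and follows the paper's proof essentially step for step. Sufficiency comes from the bound $\ell_\Gamma(K,1/n)\le J(\Gamma)\le J(\Gamma_\star)=1$ of Lemma~\ref{lem:iid-kernel-blind} combined with Theorem~\ref{thm:bc}, and necessity from the limit $\ell_\Gamma(K,1/n)\to J(\Gamma)>1$. The one omission is the trivial case $n=1$: there the profile is $\lambda\equiv1$, which lies outside the lemma's range $c\in(0,1)$, and the load is simply $\Gamma\le1$; the paper also dismisses this case as immediate.
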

\begin{proof}
For an i.i.d.\ input with $n\ge2$ items, Lemma~\ref{lem:iid-kernel-blind}
and the monotonicity of $J$ give
\[
 \ell_\Gamma(K,1/n)\le J(\Gamma)\le J(\Gamma_\star)=1
 \qquad(\Gamma\le\Gamma_\star).
\]
Theorem~\ref{thm:bc} gives feasibility and the item-wise guarantee.
If $\Gamma>\Gamma_\star$, then
$\ell_\Gamma(K,1/n)\to J(\Gamma)>1$, so the arrival construction is
infeasible for all sufficiently large $n$.
The case $n=1$ is immediate.
\end{proof}

Since the inspection order is irrelevant for i.i.d.\ values, $\Gamma_\star$ is also an
upper bound on the competitive ratio of every order-selection algorithm: for every
$\varepsilon>0$, some i.i.d.\ input admits no stopping rule whose expected reward exceeds
$(\Gamma_\star+\varepsilon)$ times the expected maximum \citep{HK82,Kertz86}.
Thus i.i.d.\ inputs fix the target ratio and rule out universal
feasibility above $\Gamma_\star$, but they do not distinguish kernels.
We next test the perfectly correlated and independent kernels on
general inputs.  For both, the largest universally feasible ratio
is only $1-1/e$.

\subsection{The Perfectly Correlated Kernel}\label{sec:diagonal}

The perfectly correlated coupling sets $\widetilde T=\widetilde U$, giving the diagonal kernel
\begin{equation}\label{eq:diagonal-kernel}
 K_{\mathrm{diag}}(t,u):=\ind\{u\le t\}.
\end{equation}
Its marginal identity is immediate, and the induced allocation is
\begin{equation}\label{eq:diagonal-allocation}
 B_{K_{\mathrm{diag}},\lambda}(t)=\int_0^t\lambda(u)\dd u.
\end{equation}
Thus winning probability mass at maximum rank $u$ is allocated at time $t=u$.

\begin{proposition}[Perfectly correlated kernel]\label{prop:diagonal}
For every $\Gamma\in(0,1)$,
\begin{equation}\label{eq:diagonal-worst-load}
 \sup_{\lambda\in\Lambda}\ell_\Gamma(K_{\mathrm{diag}},\lambda)
 =-\log(1-\Gamma).
\end{equation}
Consequently, the prescribed arrival construction is feasible for every input if and only if $\Gamma\le1-1/e$.
\end{proposition}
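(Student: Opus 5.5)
We use the integration-by-parts representation of Lemma~\ref{lem:load-parts}. By~\eqref{eq:diagonal-allocation}, $B=C_\lambda$ and $B'=\lambda$ a.e. We then compare the threshold path $y$ with the explicit coordinate $x_\lambda$ of Lemma~\ref{lem:profile-bijection}. The target value $-\log(1-\Gamma)=\int_0^1\Gamma/S(t)\dd t$ suggests the upper bound should reduce to $\int E(1-\lambda)/S\le\int 1/S$ for a suitable weight $E\ge1$. The second claim, feasibility if and only if $\Gamma\le1-1/e$, then follows from Proposition~\ref{prop:scalar-feasibility}, since $-\log(1-\Gamma)\le1$ exactly when $\Gamma\le1-1/e$.

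\emph{Upper bound.} We proceed in three steps.
\begin{enumerate}
\item \emph{Compare $y$ with $x_\lambda$.} By~\eqref{eq:profile-area-main} and $x_\lambda(t)\le t$, we have $A_\lambda(x_\lambda(t))=C_\lambda(t)+x_\lambda(t)-t\le B(t)$. Since $y(t)=\sup\{x:A_\lambda(x)\le B(t)\}$, this gives $y(t)\ge x_\lambda(t)$ for all $t$.
\item \emph{Replace $y$ by $x_\lambda$ in the load.} The factor $W=S_\lambda/S$ is nondecreasing, because $W'=\Gamma S_\lambda(1-\lambda)/S^2\ge0$. In~\eqref{eq:load-parts} the dependence on $y$ enters only through $1-y(0)$ and $1-y(t)$, with nonnegative weights. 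Hence we may replace $y$ by $x_\lambda$, and integrating back by parts gives
\[
\ell_\Gamma\le\Gamma\int_0^1 W\,\dd x_\lambda .
\]
The replacement is legitimate because $x_\lambda(0)=0$ and $1-x_\lambda(t)\to1-x_*\ge0$, which only adds a nonnegative boundary term on the favourable side; I will check this bookkeeping. Differentiating $x_\lambda=1-(1-t)e^{h_\lambda}$ gives $x_\lambda'=e^{h_\lambda}(1-\lambda)$. With $E:=S_\lambda e^{h_\lambda}$ this yields
\[
\ell_\Gamma\le\Gamma\int_0^1\frac{E(1-\lambda)}{S}\dd t .
\]
\item \emph{Bound the remaining integral.} The logarithmic derivative of $E$ is $\lambda\bigl(\frac1{1-t}-\frac{\Gamma}{S}\bigr)=\frac{(1-\Gamma)\lambda}{(1-t)S}\ge0$. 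Hence $E\ge1$ and $\lambda E=E'(1-t)S/(1-\Gamma)$. Substituting and integrating by parts, using that $(1-t)E=(1-x_\lambda)S_\lambda$ stays bounded, gives
\[
\int_0^1\frac{E(1-\lambda)}{S}
=\int_0^1E\Bigl(\frac1S-\frac1{1-\Gamma}\Bigr)\dd t+\frac{1-(1-x_*)S_\lambda(1)}{1-\Gamma}.
\]
The first integrand is $\le0$, since $S\ge1-\Gamma$, so $E\ge1$ lets us replace $E$ by $1$. The last term is at most $1/(1-\Gamma)$. Together these give $\int_0^1 E(1-\lambda)/S\le\int_0^1\dd t/S$, that is, $\ell_\Gamma\le-\log(1-\Gamma)$.
\end{enumerate}

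\emph{Lower bound (main obstacle).} Equality in the last display needs $\lambda\approx0$ where it matters, so that $E\approx1$, together with $(1-x_*)S_\lambda(1)\approx0$. A direct check at $\lambda\equiv0$ confirms that $\Gamma+\Gamma^2\int_0^1(1-t)/S^2\,\dd t=-\log(1-\Gamma)$.

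Step~1 is also lossy. For a small profile, $A_\lambda'$ is small, so $y$ sits well above $x_\lambda$. Consequently small constant profiles only give $J(\Gamma)$, and the indicator profiles $\ind_{[a,1]}$ give only $\Gamma(1-a)/(1-\Gamma a)$.

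To make $y\approx x_\lambda$ and $(1-x_*)S_\lambda(1)\approx0$ simultaneously, I would first try profiles of the form $\lambda=\ind_{[0,a]}$, and mixtures of an initial block with a thin tail. For these, $y$, $S_\lambda$ and the load are explicit, and I would optimize over the parameters, possibly letting them tend to a limit. If this family falls short, I would instead choose $\lambda$ to saturate Step~1 directly: require $A_\lambda(y)$ to track $C_\lambda$ with $y=x_\lambda+o(1)$, which forces the mass of $\lambda$ to be released just before each rank. This leads to a sequence of rapidly oscillating profiles whose loads converge to $-\log(1-\Gamma)$. Exhibiting such a sequence and verifying the limit is where I expect most of the work.
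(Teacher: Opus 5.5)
The genuine gap is the lower bound in \eqref{eq:diagonal-worst-load}. You give no sequence of profiles whose loads approach $-\log(1-\Gamma)$. Without it, neither the equality nor the ``only if'' half of the feasibility claim is proved; that half needs some profile with load above one once $\Gamma>1-1/e$.

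Your heuristics also point away from the extremizers. With correct bookkeeping (see below), the shortfall of your bound from $-\log(1-\Gamma)$ is $\Gamma\int_0^1(E-1)\bigl(1/(1-\Gamma)-1/S\bigr)\dd t$. Here $E$ is nondecreasing, $1/(1-\Gamma)-1/S>0$ on $[0,1)$, and $(\log E)'=(1-\Gamma)\lambda/((1-t)S)\ge(1-\Gamma)h_\lambda'$. Hence loads approaching the supremum force $E(t)\to1$, so $h_\lambda(t)\to0$ and $x_\lambda(t)\to t$, for every $t<1$. What you need are vanishing-mass profiles whose threshold path hugs the diagonal, $y(t)\to t$. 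Profiles with non-negligible self-competition cannot work. The condition $(1-x_*)S_\lambda(1)\approx0$ is an artifact of the slip discussed below.

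Your claim that small profiles keep $y$ well above $x_\lambda$ holds for constant shapes but fails for steeply increasing ones. The paper takes $\lambda_k(u)=u^k$:
\begin{itemize}
\item The allocation is $B_k(t)=t^{k+1}/(k+1)$.
\item The area satisfies $A_k(x_k(u))=\int_0^u(1-a^k)(e^{h_k(a)}-1)\dd a\ge(1-u^k)u^{k+2}/((k+1)(k+2))$, using $h_k(a)\ge a^{k+1}/(k+1)$.
\item For fixed $t<u$, the ratio $A_k(x_k(u))/B_k(t)$ is at least $u(1-u^k)(u/t)^{k+1}/(k+2)\to\infty$. So $y_k(t)\le x_k(u)\le u$ for large $k$, and with $y_k(t)\ge t$ (see below) this gives $y_k(t)\to t$; also $y_k(0)=0$.
\item Since $-\log S_k\le\Gamma/((1-\Gamma)(k+1))$ and $B_k'(t)=t^k\to0$ for $t<1$, dominated convergence in \eqref{eq:load-parts} gives $\ell_\Gamma(K_{\mathrm{diag}},\lambda_k)\to\Gamma+\Gamma^2\int_0^1(1-t)(1-\Gamma t)^{-2}\dd t=-\log(1-\Gamma)$.
\end{itemize}
The mechanism is steep growth: the allocation $C_\lambda(t)$ becomes negligible against the area at every level above $t$.

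Your upper bound is correct in substance but contains a slip. The display $\ell_\Gamma\le\Gamma\int_0^1E(1-\lambda)S^{-1}\dd t$ is false as written. For $\lambda=\ind_{[a,1]}$ the load is $\Gamma(1-a)/(1-\Gamma a)$, while the right side is $-\log(1-\Gamma a)$. Integrating $\Gamma+\Gamma^2\int_0^1S_\lambda(1-x_\lambda)(1-\lambda)S^{-2}\dd t$ back by parts leaves the boundary term $\Gamma S_\lambda(1)(1-x_*)/(1-\Gamma)$. This term enlarges the bound; it is not on the favourable side. It cancels exactly against the term $-(1-x_*)S_\lambda(1)/(1-\Gamma)$ that you discard in Step 3, so $\ell_\Gamma\le-\log(1-\Gamma)$ survives. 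Carry out Step 3 on $[0,T]$ and let $T\uparrow1$, since $\int_0^1E\,\dd t$ alone can diverge.

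The detour is unnecessary, though. Since $A_\lambda$ is $1$-Lipschitz, $A_\lambda(t)\le A_\lambda(x_\lambda(t))+t-x_\lambda(t)=C_\lambda(t)=B(t)$, so $y(t)\ge t$. Then \eqref{eq:load-parts} with $S_\lambda\le1$, $y(0)\ge0$ and $0\le1-B'\le1$ gives $\ell_\Gamma\le\Gamma+\Gamma^2\int_0^1(1-t)(1-\Gamma t)^{-2}\dd t=-\log(1-\Gamma)$ in one line. This is the paper's argument. It also makes the equality conditions $y\approx t$, $S_\lambda\approx1$, $B'\approx0$ visible, which point directly to the $u^k$ family.
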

\begin{proof}
Fix $\lambda\in\Lambda$ and abbreviate $A=A_\lambda$, $B=B_{K_{\mathrm{diag}},\lambda}$, $y=y_{K_{\mathrm{diag}},\lambda}$, and $S_\lambda=S_{K_{\mathrm{diag}},\lambda}$.
We first show that $y(t)\ge t$ for all $t\in[0,1]$.
By Lemma~\ref{lem:profile-bijection}, for $0\le u<1$,
\[
 0\le x_\lambda(u)\le u,
 \qquad A(x_\lambda(u))=C_\lambda(u)+x_\lambda(u)-u.
\]
Since $A$ is $1$-Lipschitz and $B=C_\lambda$,
\[
 A(u)\le A(x_\lambda(u))+u-x_\lambda(u)=C_\lambda(u)=B(u).
\]
Hence $A(t)\le B(t)$ and $y(t)=A^{\leftarrow}(B(t))\ge t$, including $t=1$ by the endpoint conditions.
Using this bound, $S_\lambda\le1$, $y(0)\ge0$, and $0\le1-B'(t)\le1$ in~\eqref{eq:load-parts}, we obtain
\begin{align}\label{eq:diagonal-upper-bound}
 \ell_\Gamma(K_{\mathrm{diag}},\lambda)
 &\le\Gamma+\Gamma^2\int_0^1
           \frac{1-t}{(1-\Gamma t)^2}\dd t\notag\\
 &=-\log(1-\Gamma).
\end{align}

For sharpness, take the profiles $\lambda_k(u)=u^k$, where $k\ge1$ is an integer.
Write $A_k,y_k,S_k,h_k,x_k$ for the corresponding functions.
Then
\[
 B_k(t)=\frac{t^{k+1}}{k+1},
 \qquad B_k'(t)=t^k.
\]
We claim that $y_k(t)\to t$ for every $0<t<1$.
Differentiating~\eqref{eq:profile-area-main} and using
$x_k'(u)=e^{h_k(u)}(1-u^k)$ gives, for $0<u<1$,
\begin{align}\label{eq:diagonal-sharp-area}
 A_k(x_k(u))
 &=\int_0^u(1-a^k)\bigl(e^{h_k(a)}-1\bigr)\dd a\notag\\
 &\ge\frac{(1-u^k)u^{k+2}}{(k+1)(k+2)}.
\end{align}
The inequality follows from $e^{h_k(a)}-1\ge h_k(a)\ge a^{k+1}/(k+1)$.
It also shows that $A_k$ has no initial flat interval, so $y_k(0)=0$.
For fixed $0<t<u<1$,
\[
 \frac{A_k(x_k(u))}{B_k(t)}
 \ge\frac{u(1-u^k)}{k+2}
       \left(\frac{u}{t}\right)^{k+1}
 \longrightarrow\infty.
\]
Thus $y_k(t)\le x_k(u)\le u$ for all sufficiently large $k$.
Together with $y_k(t)\ge t$, and then letting $u\downarrow t$, this proves the claim.
Moreover,
\[
 0\le-\log S_k(t)
 =\Gamma\int_0^t\frac{s^k}{1-\Gamma s}\dd s
 \le\frac{\Gamma}{(1-\Gamma)(k+1)},
\]
so $S_k\to1$ uniformly on $[0,1]$.
Since $B_k'(t)\to0$ for $t<1$, dominated convergence in~\eqref{eq:load-parts} yields
\[
 \lim_{k\to\infty}\ell_\Gamma(K_{\mathrm{diag}},\lambda_k)
 =\Gamma+\Gamma^2\int_0^1\frac{1-t}{(1-\Gamma t)^2}\dd t
 =-\log(1-\Gamma).
\]
This proves~\eqref{eq:diagonal-worst-load}.
The feasibility threshold follows from Proposition~\ref{prop:scalar-feasibility} and the equivalence
$-\log(1-\Gamma)\le1$ if and only if $\Gamma\le1-1/e$.
\end{proof}

\subsection{The Independent Kernel}\label{sec:independent}

When $\widetilde U$ and $\widetilde T$ are independent, the allocation kernel is
\begin{equation}\label{eq:independent-kernel}
 K_{\mathrm{ind}}(t,u):=t.
\end{equation}
Writing $\mu:=\int_0^1\lambda(u)\dd u$, we obtain
\begin{equation}\label{eq:independent-allocation}
 B_{K_{\mathrm{ind}},\lambda}(t)=\mu t.
\end{equation}
Thus each item's total winning probability is allocated at a constant rate, irrespective of its distribution across maximum ranks.

\begin{proposition}[Independent kernel]\label{prop:independent}
For every $\Gamma\in(0,1)$ and every profile $\lambda\in\Lambda$ of mass $\mu>0$,
\begin{equation}\label{eq:independent-load-bound}
 \ell_\Gamma(K_{\mathrm{ind}},\lambda)
 \le\frac{1-(1-\Gamma)^\mu}{\mu}
 \le-\log(1-\Gamma).
\end{equation}
If $\mu=0$, the load is zero.
Moreover,
\begin{equation}\label{eq:independent-worst-load}
 \sup_{\lambda\in\Lambda}\ell_\Gamma(K_{\mathrm{ind}},\lambda)
 =-\log(1-\Gamma).
\end{equation}
Consequently, the prescribed arrival construction is feasible for every input if and only if $\Gamma\le1-1/e$.
\end{proposition}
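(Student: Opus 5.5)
The plan is to compute the load directly. With $B(t)=\mu t$ we have
\[
 S_\lambda(t)=\exp\!\Bigl(-\Gamma\mu\int_0^t\frac{\dd s}{1-\Gamma s}\Bigr)=(1-\Gamma t)^{\mu},
\]
so the definition~\eqref{eq:scalar-load} gives
\[
 \ell_\Gamma(K_{\mathrm{ind}},\lambda)=\Gamma\int_{(0,1]}(1-\Gamma t)^{\mu-1}\dd y(t).
\]
Since $\mu-1\le0$, the integrand $(1-\Gamma t)^{\mu-1}$ is nondecreasing in $t$. The load is therefore largest when the measure $\dd y$ puts its mass as late as possible. The key step is a lower bound on the threshold path, $y(t)\ge t$, which should play the role of the same bound in Proposition~\ref{prop:diagonal}. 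Here $A(t)\le\mu t$ holds because $A$ is convex with $A(0)=0$ and $A(1)=\mu$. Hence $A(t)\le B(t)$, and so $y(t)=A^{\leftarrow}(B(t))\ge t$.

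With $y(t)\ge t$ and $y(1)=1$ in hand, I would use Lemma~\ref{lem:load-parts}. There $1-B'=1-\mu\ge0$, $1-y(t)\le1-t$ and $y(0)\ge0$, which give
\[
 \ell\le\Gamma+\Gamma^2(1-\mu)\int_0^1(1-t)(1-\Gamma t)^{\mu-2}\dd t .
\]
This right-hand side is exactly the load of the path $y(t)=t$, namely $\Gamma\int_0^1(1-\Gamma t)^{\mu-1}\dd t=\bigl(1-(1-\Gamma)^\mu\bigr)/\mu$. Equivalently, one can avoid the lemma and apply the stochastic-dominance argument directly to the Stieltjes integral. The second inequality in~\eqref{eq:independent-load-bound} follows because $\mu\mapsto(1-e^{-\mu s})/\mu$, with $s=-\log(1-\Gamma)$, is decreasing in $\mu$ and tends to $s$ as $\mu\downarrow0$. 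When $\mu=0$ we have $\lambda=0$ a.e., so $A\equiv0$ and $y\equiv1$, and the load vanishes.

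For the supremum~\eqref{eq:independent-worst-load} I need profiles with $\mu\to0$ and $y\to$ the identity pointwise. Dominated convergence in~\eqref{eq:load-parts} would then give a limit of $\Gamma+\Gamma^2\int_0^1(1-t)(1-\Gamma t)^{-2}\dd t=-\log(1-\Gamma)$. The natural choice is to make $A$ nearly linear, $A(x)\approx\mu x$, so that $A^{\leftarrow}(\mu t)\approx t$. I would take $\lambda=\varepsilon\,\ind_{[u_0,1]}$ with $u_0\to1$, or a concentrated profile such as $\lambda_k(u)=\varepsilon u^k$. One should then check via~\eqref{eq:profile-area-main} and~\eqref{eq:profile-extension-main} that the area curve is close to linear relative to its mass.

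For the first option the computation is explicit. Here $h\approx0$ up to $u_0$ and $x_\lambda(u)=u$ for $u\le u_0$, so $A$ vanishes on $[0,u_0]$. That is the wrong shape, so the profile must instead be concentrated near rank $0$. Taking $\lambda=\ind_{[0,\delta]}$ gives $x_\lambda(u)=1-(1-u)/(1-u)=0$ on $[0,\delta]$, then $A$ rising with slope $\approx1$ while $x_\lambda$ stays small, and then slope one after $x_*$. This looks like $A\approx(x-1+\mu)_+$, again not linear. The main obstacle is therefore finding the right extremal family, so that $A(x)/\mu\to x$ uniformly. My first attempt would be a mixture, $\lambda$ equal to $\varepsilon$ times a profile whose normalized area $\bar A$ is approximately linear. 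Lemma~\ref{lem:iid-kernel-blind} shows that constant small profiles give $\bar A\to\phi$, not the identity. So I would try profiles placing mass at ranks near $1$ with density chosen so that $\bar A(x)=x$; the $x_\lambda$ formula is used to invert for $\lambda$, and convexity permits the linear shape only in the limit. If that fails, I would take the limit $\Gamma$-load directly along the sequence and verify that it approaches $-\log(1-\Gamma)$. The threshold $\Gamma\le1-1/e$ then follows from Proposition~\ref{prop:scalar-feasibility}.
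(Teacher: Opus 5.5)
Your upper bound and the case $\mu=0$ follow the paper's argument exactly. Convexity gives $A\le\mu x$, hence $y(t)\ge t$ and $S_\lambda(t)=(1-\Gamma t)^\mu$, and Lemma~\ref{lem:load-parts} with $1-B'=1-\mu$ yields $(1-(1-\Gamma)^\mu)/\mu$. Your monotonicity argument for the second inequality in \eqref{eq:independent-load-bound} is equivalent to the paper's use of $1-e^{-z}\le z$.

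\textbf{The gap.} The sharpness claim~\eqref{eq:independent-worst-load} is not proved, and with it the ``only if'' half of the threshold $1-1/e$. You never exhibit profiles whose loads approach $-\log(1-\Gamma)$, and you discard the right family because of a miscalculation.

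For $\lambda=\ind_{[0,\delta]}$ you correctly find $x_\lambda\equiv0$ on $[0,\delta]$. Beyond $\delta$, however, the hazard vanishes, so $h_\lambda\equiv-\log(1-\delta)$ and $x_\lambda(u)=(u-\delta)/(1-\delta)$ increases all the way to $1$. Thus $x_*=1$ and there is no slope-one piece. Formula \eqref{eq:profile-area-main} gives $A_\lambda(x_\lambda(u))=\delta-\delta(1-u)/(1-\delta)=\delta\,x_\lambda(u)$, so $A_\lambda(x)=\delta x$ exactly.

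Equivalently, the rank is uniform on $[0,\delta]$ with an atom of mass $1-\delta$ at $1$. The loss score is then $0$ with probability $\delta$ and $1$ otherwise, so $A_\lambda(x)=\E[(x-X_\lambda)_+]=\delta x$; the slope of $A_\lambda$ is $\delta$, not close to one.

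Hence $y(t)=t$ and your upper bound is attained: $\ell_\Gamma(K_{\mathrm{ind}},\ind_{[0,\delta]})=(1-(1-\Gamma)^\delta)/\delta$. Letting $\delta\downarrow0$ gives~\eqref{eq:independent-worst-load}. This is precisely the paper's example $\lambda_\mu=\ind\{u\le\mu\}$.

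\textbf{Your fallback ideas point the wrong way.} A linear area curve is convex, so convexity does not restrict the linear shape to a limit; it is attained, as just shown. Placing mass near rank $1$ is counterproductive. For small profiles, $a_\lambda/\mu_\lambda$ is a mixture of the distribution functions $\kappa(\cdot,u)$ of $\mathrm U[u,1]$, so to approach the identity the weight must sit at $u=0$. Mass near $u=1$ instead makes $A_\lambda/\mu_\lambda\to0$ on $[0,1)$. For example, with $\lambda=\varepsilon u^k$ (fixed $\varepsilon<1$) and $k\to\infty$, one gets $y(t)\to1$ for $t>0$ and loads tending to $\Gamma$, not to $-\log(1-\Gamma)$.
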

\begin{proof}
Suppose $\mu>0$ and abbreviate $A=A_\lambda$ and $y=y_{K_{\mathrm{ind}},\lambda}$.
Convexity, together with $A(0)=0$ and $A(1)=\mu$, gives $A(x)\le\mu x$.
Since $B(t)=\mu t$, it follows that $y(t)\ge t$.
The single-item survival factor is
\[
 S_{K_{\mathrm{ind}},\lambda}(t)=(1-\Gamma t)^\mu.
\]
Substituting into~\eqref{eq:load-parts} gives
\begin{align*}
 \ell_\Gamma(K_{\mathrm{ind}},\lambda)
 &\le\Gamma+\Gamma^2(1-\mu)
       \int_0^1(1-t)(1-\Gamma t)^{\mu-2}\dd t\\
 &=\Gamma\int_0^1(1-\Gamma t)^{\mu-1}\dd t
 =\frac{1-(1-\Gamma)^\mu}{\mu}.
\end{align*}
The second line follows by integrating the derivative of
$(1-t)(1-\Gamma t)^{\mu-1}$.
The remaining inequality in~\eqref{eq:independent-load-bound} follows from $1-e^{-z}\le z$ with $z=-\mu\log(1-\Gamma)$.
If $\mu=0$, then $A\equiv0$, $y\equiv1$, and the arrival load is zero.

For sharpness, take $\lambda_\mu(u)=\ind\{u\le\mu\}$ for $0<\mu<1$.
Lemma~\ref{lem:profile-bijection} gives $A_{\lambda_\mu}(x)=\mu x$, so $y(t)=t$ and
\[
 \ell_\Gamma(K_{\mathrm{ind}},\lambda_\mu)
 =\Gamma\int_0^1(1-\Gamma t)^{\mu-1}\dd t
 =\frac{1-(1-\Gamma)^\mu}{\mu}.
\]
Letting $\mu\downarrow0$ proves~\eqref{eq:independent-worst-load}.
The feasibility threshold again follows from Proposition~\ref{prop:scalar-feasibility}.
\end{proof}

Both elementary kernels attain $\Gamma_\star$ on i.i.d.\ inputs,
yet their universal feasibility threshold is only $1-1/e$.
Their worst-case loads arise from nonconstant profiles.
The next section uses the common i.i.d.\ threshold path to design a
kernel that gives a stronger guarantee on arbitrary independent inputs.

\section{From the I.I.D.\ Benchmark to the Kernel \texorpdfstring{$K_0$}{K0}}\label{sec:k0}

Section~\ref{sec:iid-blind} shows that every allocation kernel attains
$\Gamma_\star$ on i.i.d.\ inputs.  Our goal is to extend this guarantee
to arbitrary independent inputs.  Since i.i.d.\ inputs do not distinguish
kernels, we examine small nonconstant profiles and use their limiting
threshold paths to guide the design of a baseline kernel $K_0$.
The resulting kernel is universally feasible up to
$\Gamma_0\approx0.7389$, but not at $\Gamma_\star$.

\paragraph{Overview.}
The construction proceeds in two steps.
\begin{enumerate}
\item \emph{Linearizing small profiles}
(Section~\ref{sec:small-profiles}).
When a profile is scaled toward zero, its area curve is approximated
to first order by a linear function of the profile.
This approximation determines the limiting threshold path and arrival
load for each profile shape.
For the constant shape, the limiting path is $q$ and the load is
$J(\Gamma_\star)=1$.
\item \emph{Matching the i.i.d.\ threshold path}
(Section~\ref{sec:design-principle}).
Feasibility at $\Gamma_\star$ imposes a first-order condition on the
kernel.  A natural strengthening requires the limiting threshold path
to equal $q$ for every small-profile shape.
This requirement determines a unique kernel,
\[
 K_0(t,u)=\frac{(q(t)-u)_+}{1-u}.
\]
\end{enumerate}
The kernel $K_0$ allocates to each item the linearized area of its
profile evaluated at the i.i.d.\ threshold $q(t)$.
Section~\ref{sec:kzero-definition} gives the formal definition and a
probabilistic description.
For profiles of positive size, the true area curve differs from this
linear approximation, while the item's survival factor also affects
the load.  Section~\ref{sec:kzero-analysis} bounds these effects
uniformly, and Section~\ref{sec:kzero-limit} gives a profile whose load
exceeds one at $\Gamma_\star$.

\subsection{Small Profiles and the Linearized Area Curve}\label{sec:small-profiles}

To distinguish kernels near the i.i.d.\ limit, consider
\emph{small profiles} $\lambda=\varepsilon f$, obtained by scaling a
fixed \emph{shape} $f$ by $\varepsilon\downarrow0$.
The constant shape $f\equiv1$ recovers the i.i.d.\ limit.
Nonconstant shapes describe rarely winning items whose winning
probability is spread nonuniformly over maximum ranks.
Let $\mathcal F$ denote the set of measurable functions $f:[0,1]\to(0,\infty)$ that are bounded and bounded away from zero.
Throughout, $f\in\mathcal F$ and $0<\varepsilon\le1/(2\|f\|_\infty)$, so that $\varepsilon f\le1/2$.

\paragraph{The rank representation.}
The analysis of small profiles is based on a representation of area curves proved in Appendix~\ref{app:profile}.
For an item $i$, let
\[
 R_i:=1-F_{\max}(V_i)
\]
be the \emph{rank} of its value, that is, the maximum rank that would result if $V_i$ were the maximum; in particular, $U=\min_jR_j$.
Recall that $U_i=1-\prod_{j\ne i}F_j(V_i)$ is the loss score of item $i$, and recall the functions $h_\lambda$ and $x_\lambda$ from
\eqref{eq:profile-coordinates-main}.
Lemma~\ref{lem:rank-representation} shows that $\Pp(R_i>s)=e^{-h_{\lambda_i}(s)}$ for $s\in[0,1)$ and that $U_i=x_{\lambda_i}(R_i)$.
Combining these identities with the definition
$A_i(x)=\E[(x-U_i)_+]$, we obtain, for every profile
$\lambda\in\Lambda$,
\begin{equation}\label{eq:rank-area-main}
 A_\lambda(x)
 =\int_0^1\bigl(x-x_\lambda(s)\bigr)_+\,\frac{\lambda(s)}{1-s}\,e^{-h_\lambda(s)}\dd s,
 \qquad x\in[0,1].
\end{equation}
Here $\lambda(s)(1-s)^{-1}e^{-h_\lambda(s)}$ is the density of the rank $R_i$ on $[0,1)$, and $x_\lambda(s)=1-(1-s)e^{h_\lambda(s)}$ is the loss score of a value of rank $s$.
The loss score satisfies $x_\lambda(s)\le s$ because it compares
item $i$ only with the other items, whereas rank $s$ is defined
using the maximum of all items.

\paragraph{Linearization.}
For a small profile, $h_\lambda\approx0$, so $e^{-h_\lambda}\approx1$ and $x_\lambda(s)\approx s$.
The area curve then becomes linear in the profile:
\[
 A_\lambda(x)\approx a_\lambda(x),
 \qquad\text{where}\qquad
 a_g(x):=\int_0^1g(u)\,\kappa(x,u)\dd u,
 \qquad
 \kappa(x,u):=\frac{(x-u)_+}{1-u}\quad(u<1),
\]
for every bounded measurable function $g$ on $[0,1]$.
The elementary area curve $\kappa(\cdot,u)$ has a transparent meaning.
If item $i$ is negligible, a value of rank $u$ beats all other values with probability about $1-u$, and its loss score is about $u$.
One unit of winning probability at maximum rank $u$ therefore corresponds to $1/(1-u)$ units of probability that $V_i$ has rank $u$, each of which contributes $(x-u)_+$ to $A_i(x)=\E[(x-U_i)_+]$.
Thus $\kappa(x,u)$ is the area at threshold level $x$ generated by one unit of winning probability at maximum rank $u$.
As a function of $x$, it is the distribution function of the uniform distribution on $[u,1]$.
Integrating over $u$ recovers the i.i.d.\ curve: $a_1=\phi$ by~\eqref{eq:iid-clock}.
Evaluating the integral, for $x<1$,
\begin{equation}\label{eq:linearized-area}
 a_g(x)=\int_0^xg(u)\dd u-(1-x)\int_0^x\frac{g(u)}{1-u}\dd u,
 \qquad
 a_g'(x)=\int_0^x\frac{g(u)}{1-u}\dd u.
\end{equation}
In particular, $a_\lambda(x)=C_\lambda(x)-(1-x)h_\lambda(x)$ for every profile $\lambda$.

For $f\in\mathcal F$, the function $a_f$ is continuous and strictly increasing from $a_f(0)=0$ to $a_f(1)=\int_0^1 f(u)\dd u$.
Given an allocation kernel $K$, define
\begin{equation}\label{eq:rare-area-allocation}
 m_{K,f}(t):=\int_0^1f(u)K(t,u)\dd u,
 \qquad
 z_{K,f}(t):=a_f^{-1}\bigl(m_{K,f}(t)\bigr),
 \qquad t\in[0,1].
\end{equation}
For $s<t$, the monotonicity of $K(\cdot,u)$ and
\eqref{eq:kernel-marginal} give
$0\le m_{K,f}(t)-m_{K,f}(s)\le\|f\|_\infty(t-s)$; moreover, $m_{K,f}(0)=0$ and $m_{K,f}(1)=\int_0^1 f(u)\dd u$.
Hence $z_{K,f}$ is continuous and nondecreasing, with
$z_{K,f}(0)=0$ and $z_{K,f}(1)=1$.
The next lemma shows that it is the limit of $y_{K,\varepsilon f}$
as $\varepsilon\downarrow0$ and computes the corresponding limiting
arrival load.

\begin{lemma}[Small-profile limit]\label{lem:rare-limit}
Let $K$ be an allocation kernel, $\Gamma\in(0,1)$, and $f\in\mathcal F$.
As $\varepsilon\downarrow0$, the following hold.
\begin{enumerate}
\item[(a)] $A_{\varepsilon f}/\varepsilon\to a_f$ uniformly on $[0,1]$, and $y_{K,\varepsilon f}(t)\to z_{K,f}(t)$ for every $t\in[0,1]$.
\item[(b)] The loads converge to a limit, which we call the \emph{zero-scale load} of the shape $f$ and which is determined by
$z_{K,f}$:
\begin{equation}\label{eq:rare-load}
 \lim_{\varepsilon\downarrow0}\ell_\Gamma(K,\varepsilon f)
 =\ell^{0}_\Gamma(K,f)
 :=\Gamma\int_{(0,1]}
       \frac{\dd z_{K,f}(t)}{1-\Gamma t}
 =\Gamma+\Gamma^2\int_0^1
       \frac{1-z_{K,f}(t)}{(1-\Gamma t)^2}\dd t.
\end{equation}
\end{enumerate}
Consequently, if $K$ is universally feasible at $\Gamma$, then $\ell^{0}_\Gamma(K,f)\le1$ for every $f\in\mathcal F$.
\end{lemma}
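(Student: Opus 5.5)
For part~(a), I will use the rank representation~\eqref{eq:rank-area-main} with $\lambda=\varepsilon f$. Write $h_\varepsilon:=h_{\varepsilon f}=\varepsilon h_f$ and $x_\varepsilon(s):=1-(1-s)e^{\varepsilon h_f(s)}$. Then
\[
 \frac{A_{\varepsilon f}(x)}{\varepsilon}
 =\int_0^1\bigl(x-x_\varepsilon(s)\bigr)_+\frac{f(s)}{1-s}e^{-\varepsilon h_f(s)}\dd s .
\]
Since $f\le\|f\|_\infty$, we have $h_f(s)\le\|f\|_\infty\log\frac{1}{1-s}$, so $e^{\varepsilon h_f(s)}\le(1-s)^{-\varepsilon\|f\|_\infty}$. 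With $\varepsilon\|f\|_\infty\le1/2$, this gives $0\le s-x_\varepsilon(s)\le(1-s)\bigl((1-s)^{-\varepsilon\|f\|_\infty}-1\bigr)$. This bound tends to $0$ uniformly in $s$ as $\varepsilon\downarrow0$: split $s$ near $1$, where the bound is at most $(1-s)^{1/2}$, from $s$ bounded away from $1$. Because $(\cdot)_+$ is $1$-Lipschitz, the integrand differs from $(x-s)_+f(s)/(1-s)$ by a quantity of the form $\delta_\varepsilon(s)f(s)/(1-s)$. The factor $1/(1-s)$ is the danger. To handle it, I will note that $(x-s)_+\le 1-s$ and $(x-x_\varepsilon(s))_+\le(1-s)e^{\varepsilon h_f(s)}$. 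The integrand is then dominated by $\|f\|_\infty(1-s)^{-1/2}$, which is integrable and independent of $x$. Dominated convergence together with the pointwise uniform-in-$x$ error gives $\sup_x|A_{\varepsilon f}/\varepsilon-a_f|\to0$.

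For the threshold paths, note that $B_{K,\varepsilon f}=\varepsilon m_{K,f}$. Hence $y_{K,\varepsilon f}(t)=\sup\{x:A_{\varepsilon f}(x)/\varepsilon\le m_{K,f}(t)\}$. Since $a_f$ is continuous and strictly increasing and $A_{\varepsilon f}/\varepsilon$ is nondecreasing, uniform convergence yields $y_{K,\varepsilon f}(t)\to a_f^{-1}(m_{K,f}(t))=z_{K,f}(t)$. Concretely, for any $\eta>0$, the value $a_f(z+\eta)-a_f(z)$ is positive, so eventually $A_{\varepsilon f}(z+\eta)/\varepsilon>m$, giving $y\le z+\eta$. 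Similarly $a_f(z-\eta)<m$ gives $y\ge z-\eta$. The endpoints $t=0$ and $t=1$ are handled using $y(1)=1$ and $y\ge0$.

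For part~(b), I will use Lemma~\ref{lem:load-parts}:
\[
 \ell_\Gamma(K,\varepsilon f)=\Gamma\bigl(1-y_\varepsilon(0)\bigr)+\Gamma^2\int_0^1\frac{S_\varepsilon(t)(1-y_\varepsilon(t))(1-\varepsilon m'_{K,f}(t))}{S(t)^2}\dd t .
\]
Here $y_\varepsilon(0)\to z_{K,f}(0)=0$. Since $0\le\varepsilon m'\le\varepsilon\|f\|_\infty$, we get $S_\varepsilon(t)\ge(1-\Gamma)^{\varepsilon\|f\|_\infty}$, so $S_\varepsilon\to1$ uniformly and $\varepsilon m'\to0$ uniformly. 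The integrand is bounded by $(1-\Gamma)^{-2}$ and converges pointwise by (a). Bounded convergence therefore gives the second expression in~\eqref{eq:rare-load}. Integrating by parts, exactly as in Lemma~\ref{lem:load-parts} with $W=1/S$ and $z(0)=0$, $z(1)=1$, gives the Stieltjes form.

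The final consequence follows directly. Each $\varepsilon f$ lies in $\Lambda$, so universal feasibility gives $\ell_\Gamma(K,\varepsilon f)\le1$, and passing to the limit yields $\ell^0_\Gamma(K,f)\le1$.

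I expect the main obstacle to be the uniform convergence in (a) near $s=1$, where the rank density blows up like $1/(1-s)$. The domination by $(1-s)^{-1/2}$ using $\varepsilon\|f\|_\infty\le1/2$ is the point to get right.
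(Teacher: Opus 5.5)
Your proposal is correct and follows essentially the same route as the paper. Part (a) uses the rank representation~\eqref{eq:rank-area-main} with dominated convergence, the threshold paths use the same $\eta$-squeeze around $z_{K,f}(t)$, and part (b) uses bounded convergence in the integration-by-parts form~\eqref{eq:load-parts}.

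The differences are minor. You obtain uniform convergence directly by dominating the $x$-uniform error of the integrand; in fact the integrand is bounded by $f(s)$ itself, since $(x-x_\varepsilon(s))_+\le(1-s)e^{\varepsilon h_f(s)}$. The paper instead proves pointwise convergence and upgrades it using monotonicity and continuity of $a_f$. You also get $y_{K,\varepsilon f}(0)\to0$ from (a), whereas the paper notes that $y_{K,\varepsilon f}(0)=0$ exactly because $\varepsilon f>0$.
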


\begin{proof}
(a)
Write $h_f(s):=\int_0^sf(u)(1-u)^{-1}\dd u$, so that $h_{\varepsilon f}=\varepsilon h_f$.
By~\eqref{eq:rank-area-main},
\[
 \frac{A_{\varepsilon f}(x)}{\varepsilon}
 =\int_0^1\bigl(x-x_{\varepsilon f}(s)\bigr)_+\,\frac{f(s)}{1-s}\,e^{-\varepsilon h_f(s)}\dd s.
\]
For every $s<1$, $x_{\varepsilon f}(s)=1-(1-s)e^{\varepsilon h_f(s)}\to s$ and $e^{-\varepsilon h_f(s)}\to1$, so the integrand converges to $f(s)\kappa(x,s)$.
Fix $x<1$.
Since $h_f(s)\le\|f\|_\infty\log\frac1{1-s}$ and $\varepsilon\|f\|_\infty\le\frac12$, we have $1-x_{\varepsilon f}(s)\le(1-s)^{1/2}$.
Hence the integrand vanishes unless $1-s>(1-x)^2$, and it is bounded by $\|f\|_\infty(1-x)^{-2}$.
Dominated convergence gives $A_{\varepsilon f}(x)/\varepsilon\to a_f(x)$.
At $x=1$, both sides equal $\int_0^1 f(u)\dd u$ by Lemma~\ref{lem:profile-bijection}.
Pointwise convergence of nondecreasing functions to a continuous limit on a compact interval is uniform, which proves the first claim.

For the second claim, note that $B_{K,\varepsilon f}=\varepsilon m_{K,f}$, so $y_{K,\varepsilon f}(t)$ is the upper generalized inverse of $A_{\varepsilon f}/\varepsilon$ evaluated at $m_{K,f}(t)$.
Let $z:=z_{K,f}(t)$ and $\eta>0$.
If $z+\eta\le1$, then $a_f(z+\eta)>m_{K,f}(t)$, and uniform convergence gives $A_{\varepsilon f}(z+\eta)/\varepsilon>m_{K,f}(t)$ for small $\varepsilon$; hence $y_{K,\varepsilon f}(t)\le z+\eta$.
Symmetrically, if $z-\eta\ge0$, then $y_{K,\varepsilon f}(t)\ge z-\eta$ for small $\varepsilon$.
Thus $y_{K,\varepsilon f}(t)\to z$.

(b)
Since $\varepsilon f>0$, the integrand in~\eqref{eq:rank-area-main} is positive near $s=0$ for every $x>0$; hence $A_{\varepsilon f}>0$ on $(0,1]$ and $y_{K,\varepsilon f}(0)=0$.
Moreover, $0\le B'_{K,\varepsilon f}\le\varepsilon\|f\|_\infty$ almost everywhere, so
\[
 1\ge S_{K,\varepsilon f}(t)
 \ge\exp\!\left(-\Gamma\varepsilon\|f\|_\infty\int_0^1\frac{\dd s}{1-\Gamma s}\right)
 =(1-\Gamma)^{\varepsilon\|f\|_\infty}
 \longrightarrow1.
\]
In the representation~\eqref{eq:load-parts} of Lemma~\ref{lem:load-parts}, the integrand is bounded by $(1-\Gamma)^{-2}$ and converges to $(1-z_{K,f}(t))(1-\Gamma t)^{-2}$ for almost every $t$.
Dominated convergence proves~\eqref{eq:rare-load}.
The final assertion follows from Proposition~\ref{prop:scalar-feasibility}.
\end{proof}

In the small-profile limit, the item's own survival factor tends to
one and its allocation rate tends to zero.
The second expression in~\eqref{eq:rare-load} shows that a higher
limiting threshold path gives a lower load.
Thus a kernel serves small profiles well if it keeps their threshold
paths high, uniformly over shapes.
Two benchmarks frame what is possible.
\begin{itemize}
\item For the constant shape $f\equiv1$, we have $a_1=\phi$ and $m_{K,1}(t)=t$, so $z_{K,1}=q$ and $\ell^{0}_\Gamma(K,1)=J(\Gamma)$ for \emph{every} kernel.
Hence $\sup_{f\in\mathcal F}\ell^{0}_\Gamma(K,f)\ge J(\Gamma)$: in the worst case, no kernel does better than the i.i.d.\ path.
\item The two elementary kernels considered in
Section~\ref{sec:kernel-design} behave very differently.
The sharpness examples in Propositions~\ref{prop:diagonal} and~\ref{prop:independent} are items whose profiles have vanishing mass and whose threshold paths converge to the identity. Their loads therefore converge to
$\Gamma\int_0^1(1-\Gamma t)^{-1}\dd t=-\log(1-\Gamma)$.
\end{itemize}
The question is therefore whether a single kernel can make the i.i.d.\ path $q$
the limiting threshold path for every shape $f\in\mathcal F$.

\subsection{A First-Order Condition and the Design Principle}\label{sec:design-principle}
At $\Gamma_\star$, Lemma~\ref{lem:rare-limit} implies that any
universally feasible kernel $K$ satisfies
$\ell^{0}_{\Gamma_\star}(K,f)\le1$ for every $f\in\mathcal F$.
For the constant shape $f\equiv1$, Lemma~\ref{lem:iid-kernel-blind}
gives $\ell^{0}_{\Gamma_\star}(K,1)=J(\Gamma_\star)=1$.
Hence $f\equiv1$ maximizes $f\mapsto\ell^{0}_{\Gamma_\star}(K,f)$,
so the derivative of this map at $f\equiv1$ must vanish in every
direction. This yields a linear condition on the kernel.

\begin{proposition}[First-order necessary condition]\label{prop:iid-calibration}
If an allocation kernel $K$ is universally feasible at $\Gamma_\star$, then, for almost every $u\in[0,1]$,
\begin{equation}\label{eq:iid-calibration}
 \int_0^1
 \frac{K(t,u)-\kappa\bigl(q(t),u\bigr)}
 {[-\log(1-q(t))]\,(1-\Gamma_\star t)^2}\dd t=0.
\end{equation}
\end{proposition}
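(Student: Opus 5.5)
We argue by perturbing the constant shape. Let $g$ be any bounded measurable function on $[0,1]$, and put $f_\delta:=1+\delta g$. For $|\delta|\,\|g\|_\infty<1/2$ we have $f_\delta\in\mathcal F$. By Lemma~\ref{lem:rare-limit}, universal feasibility at $\Gamma_\star$ gives $\ell^{0}_{\Gamma_\star}(K,f_\delta)\le1$ for every such $\delta$. At $\delta=0$ the load equals $J(\Gamma_\star)=1$. So $\delta\mapsto\ell^0_{\Gamma_\star}(K,f_\delta)$ has an interior maximum at $0$. If we show it is differentiable there, its derivative must vanish. Using the second expression in~\eqref{eq:rare-load}, that derivative is
\[
 -\Gamma_\star^2\int_0^1\frac{\dot z(t)}{(1-\Gamma_\star t)^2}\dd t,
 \qquad \dot z(t):=\frac{\partial}{\partial\delta}z_{K,f_\delta}(t)\Big|_{\delta=0}.
\]

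Next we compute $\dot z$. Both $m_{K,f}$ and $a_f$ are linear in $f$, so $m_{K,f_\delta}=t+\delta m_{K,g}$ and $a_{f_\delta}=\phi+\delta a_g$. The path $z_\delta:=z_{K,f_\delta}(t)$ solves $\phi(z_\delta)+\delta a_g(z_\delta)=t+\delta m_{K,g}(t)$. Take $t\in(0,1)$, so that $z_0=q(t)\in(0,1)$ and $\phi'(q(t))=-\log(1-q(t))>0$. The implicit function argument then gives
\[
 \dot z(t)=\frac{m_{K,g}(t)-a_g(q(t))}{-\log(1-q(t))}
 =\int_0^1 g(u)\,\frac{K(t,u)-\kappa(q(t),u)}{-\log(1-q(t))}\dd u .
\]
To make this rigorous without smoothness in $t$, we would use the difference quotient $(z_\delta(t)-q(t))/\delta$ directly. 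The mean value theorem applied to the equation, with $a_g'$ bounded on compacts of $[0,1)$ by~\eqref{eq:linearized-area}, gives pointwise convergence for each $t\in(0,1)$.

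The main obstacle is exchanging the limit $\delta\to0$ with the $t$-integral, since $\phi'(q(t))\to0$ as $t\to0$. Near $t=0$ we have $q(t)\approx\sqrt{2t}$ and $\phi'(q)\approx q$. The numerator is small there too: $|m_{K,g}(t)|\le\|g\|_\infty t$ and $|a_g(x)|\le\|g\|_\infty\phi(x)$. So we aim for a uniform bound $|z_\delta(t)-q(t)|\le C|\delta|\,t/q(t)\lesssim|\delta|\sqrt t$, which is integrable. We would obtain it from the pointwise bound $|\phi(z_\delta)-t|\le 2\|g\|_\infty|\delta|\,t$, since $\phi$ is comparable to $\phi(q)$ on the relevant interval and $\phi$ is convex. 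Near $t=1$ the weight $(1-\Gamma_\star t)^{-2}$ is bounded. The quotient stays bounded because $\phi'\to\infty$, so $1/\phi'(q(t))\to0$, and $|z_\delta-q|\le1$ in any case. Dominated convergence then applies.

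Finally, substitute $\dot z$ and apply Fubini; the integrand is absolutely integrable by the same bounds. Vanishing of the derivative gives $\int_0^1 g(u)\,\Psi(u)\dd u=0$ for every bounded measurable $g$, where $\Psi(u)$ is the left side of~\eqref{eq:iid-calibration}. Taking $g=\operatorname{sign}\Psi$, truncated if necessary, yields $\Psi=0$ almost everywhere. This step requires $\Psi\in L^1$, which follows from $|K-\kappa|\le1$ together with the integrability of $1/[\phi'(q(t))(1-\Gamma_\star t)^2]\sim(2t)^{-1/2}$ near $0$.
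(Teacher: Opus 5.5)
Your proposal is correct and follows essentially the same route as the paper: you perturb to $f_\delta=1+\delta g$, differentiate $z_\delta(t)$ implicitly, dominate the difference quotients near $t=0$ using $|m_{K,g}(t)|\le\|g\|_\infty t$ and $|a_g|\le\|g\|_\infty\phi$, and conclude by Fubini and the arbitrariness of $g$. The only difference is that the paper applies the implicit function theorem for all small $\delta$ to get the uniform bound $|\partial_\delta z_\delta(t)|\le 8\|g\|_\infty$ and then uses the mean value theorem; since the weight $(1-\Gamma_\star t)^{-2}$ is bounded, such a bounded dominating function is enough, and the extra $\sqrt t$ decay you aim for is not needed.
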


\begin{proof}
Fix any bounded measurable function $g$ with $\|g\|_\infty>0$, let $\delta_0:=1/(2\|g\|_\infty)$, and for $|\delta|<\delta_0$ put $f_\delta:=1+\delta g$.
Then $\frac12\le f_\delta\le\frac32$, so $f_\delta\in\mathcal F$.
Write $z_\delta(t):=z_{K,f_\delta}(t)$.
Since $a_{f_\delta}=\phi+\delta a_g$ and $m_{K,f_\delta}(t)=t+\delta m_{K,g}(t)$, the number $z_\delta(t)$ is the unique solution $z\in[0,1]$ of
\[
 \phi(z)+\delta a_g(z)=t+\delta m_{K,g}(t).
\]
Fix $t\in(0,1)$.
Then $0<m_{K,f_\delta}(t)<\int_0^1f_\delta$, so $z_\delta(t)\in(0,1)$.
The left side is continuously differentiable in $(\delta,z)\in(-\delta_0,\delta_0)\times(0,1)$, and by~\eqref{eq:linearized-area} its derivative in $z$ is $a_{f_\delta}'(z)\ge\frac12\log\frac1{1-z}>0$.
By the implicit function theorem, $\delta\mapsto z_\delta(t)$ is continuously differentiable, with
\begin{equation}\label{eq:first-order-derivative}
 \partial_\delta z_\delta(t)
 =\frac{m_{K,g}(t)-a_g(z_\delta(t))}{a'_{f_\delta}(z_\delta(t))},
 \qquad
 \partial_\delta z_\delta(t)\big|_{\delta=0}
 =\frac{\int_0^1g(u)\bigl(K(t,u)-\kappa(q(t),u)\bigr)\dd u}{-\log(1-q(t))}.
\end{equation}
This derivative is bounded uniformly in $t$ and $\delta$.
Indeed, $\frac12t\le m_{K,f_\delta}(t)=a_{f_\delta}(z_\delta)\le\frac32\phi(z_\delta)$ gives $t\le3\phi(z_\delta)$.
Hence the numerator in~\eqref{eq:first-order-derivative} is at most $\|g\|_\infty(t+\phi(z_\delta))\le4\|g\|_\infty\phi(z_\delta)$.
The denominator is at least $\frac12\log\frac1{1-z_\delta}\ge\frac12z_\delta\ge\frac12\phi(z_\delta)$.
Therefore $|\partial_\delta z_\delta(t)|\le8\|g\|_\infty$.

By Lemma~\ref{lem:rare-limit}, $\ell^{0}_{\Gamma_\star}(K,f_\delta)\le1$ for every $|\delta|<\delta_0$, with equality at $\delta=0$.
The uniform bound on $\partial_\delta z_\delta$ allows differentiation under the integral in~\eqref{eq:rare-load}, by the mean value theorem and dominated convergence.
Since $\delta=0$ is a maximizer,
\[
 0=\frac{\dd}{\dd\delta}\ell^{0}_{\Gamma_\star}(K,f_\delta)\Big|_{\delta=0}
 =-\Gamma_\star^2\int_0^1
 \frac{\int_0^1g(u)\bigl(K(t,u)-\kappa(q(t),u)\bigr)\dd u}
 {[-\log(1-q(t))]\,(1-\Gamma_\star t)^2}\dd t.
\]
Let $\Theta(u)$ denote the left side of~\eqref{eq:iid-calibration}.
Since $|K-\kappa|\le1$ and the substitution $t=\phi(x)$ gives $\int_0^1[-\log(1-q(t))]^{-1}\dd t=1$, Fubini's theorem applies, $\Theta$ is bounded, and the identity above gives $\int_0^1g(u)\Theta(u)\dd u=0$.
Since $g$ is arbitrary, $\Theta=0$ almost everywhere.
\end{proof}

Condition~\eqref{eq:iid-calibration} constrains only a weighted time average of the deviation $K(t,u)-\kappa(q(t),u)$, so it does not determine $K$.
It holds trivially, however, when the deviation vanishes identically.
By~\eqref{eq:first-order-derivative}, the first-order perturbation of every small-profile threshold path then vanishes at every time, not merely on average.
Because $a_f$ and $m_{K,f}$ are linear in $f$, even more is true: the next proposition shows that a vanishing deviation is equivalent to $z_{K,f}=q$ exactly, for every shape, and that it singles out one kernel.
Recall that allocation kernels are identified when they agree
for almost every maximum rank at all times $t$.

\begin{proposition}[Matching the i.i.d.\ path]\label{prop:iid-path-kernel}
For an allocation kernel $K$, the following are equivalent:
\begin{enumerate}
\item[(a)] $z_{K,f}=q$ for every shape $f\in\mathcal F$;
\item[(b)] there is a null set $N\subseteq[0,1]$ such that $K(t,u)=\kappa(q(t),u)$ for all $t\in[0,1]$ and $u\notin N$.
\end{enumerate}
In this case, $\ell^{0}_\Gamma(K,f)=J(\Gamma)$ for every $f\in\mathcal F$ and every $\Gamma\in(0,1)$.
\end{proposition}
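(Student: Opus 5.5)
The plan is to use the linearity of $f\mapsto a_f$ and $f\mapsto m_{K,f}$, and to reduce (a) to a pointwise identity in $u$ for each fixed $t$.

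\emph{(b) $\Rightarrow$ (a).} Assume (b) and fix $f\in\mathcal F$ and $t\in[0,1]$. Integrating over $u$ against $f$, and discarding the null set $N$, gives
\[
 m_{K,f}(t)=\int_0^1 f(u)\,\kappa\bigl(q(t),u\bigr)\dd u=a_f\bigl(q(t)\bigr).
\]
Since $a_f$ is strictly increasing and continuous, $z_{K,f}(t)=a_f^{-1}(m_{K,f}(t))=q(t)$. At $t=1$ we have $q(1)=1$ and $\kappa(1,u)=1$ for $u<1$, so the endpoint is consistent.

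\emph{(a) $\Rightarrow$ (b).} Assume (a). Then $m_{K,f}(t)=a_f(q(t))$ for every $f\in\mathcal F$ and every $t$, that is,
\[
 \int_0^1 f(u)\bigl(K(t,u)-\kappa(q(t),u)\bigr)\dd u=0 .
\]
For an arbitrary bounded measurable $g$, apply this identity to $f=1+\delta g$ with $\delta$ small and to $f=1$, and subtract. This gives $\int g(u)D_t(u)\dd u=0$, where $D_t(u):=K(t,u)-\kappa(q(t),u)$ is bounded. Taking $g=\operatorname{sign}D_t$ shows $D_t=0$ for almost every $u$.

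The main obstacle is that this null set depends on $t$, while (b) requires a single null set $N$ valid for all $t$. I would first take $N$ to be the union over rational $t$ of these null sets, together with the null set on which $K(\cdot,u)$ fails to be nondecreasing and right-continuous. For $u\notin N$, both $K(\cdot,u)$ and $t\mapsto\kappa(q(t),u)$ are right-continuous; the latter is in fact continuous, because $q$ and $\kappa(\cdot,u)$ are continuous. Two right-continuous functions that agree on the rationals agree everywhere, so $K(t,u)=\kappa(q(t),u)$ for all $t$ and all $u\notin N$. The endpoints $t=0$ and $t=1$ are covered as well, because $K(0,u)=0=\kappa(0,u)$ and $K(1,u)=1=\kappa(1,u)$.

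\emph{Final claim.} When (a) holds, $z_{K,f}=q$ for every $f\in\mathcal F$. Formula~\eqref{eq:rare-load} then gives $\ell^0_\Gamma(K,f)=\Gamma\int_{(0,1]}\dd q(t)/(1-\Gamma t)$. Substituting $t=\phi(x)$, which is valid since $q=\phi^{-1}$ is a continuous increasing bijection, turns this into $\Gamma\int_0^1\dd x/(1-\Gamma\phi(x))=J(\Gamma)$.
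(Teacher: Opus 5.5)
Your proof is correct and follows essentially the same route as the paper. Both use injectivity of $a_f$ and linearity in $f$ to reduce (a) to $\int_0^1 f(u)\bigl(K(t,u)-\kappa(q(t),u)\bigr)\dd u=0$ for all $f$, then take a union of null sets over rational $t$ and use right-continuity. The only cosmetic difference is that the paper tests with $f=1+\ind_E$ for measurable sets $E$, whereas you test with $f=1+\delta\,\operatorname{sign}D_t$.
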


\begin{proof}
Recall from~\eqref{eq:rare-area-allocation} that
$z_{K,f}=a_f^{-1}\circ m_{K,f}$. Since $a_f$ is injective, $z_{K,f}(t)=q(t)$ holds if and only if $m_{K,f}(t)=a_f(q(t))$, that is,
\[
 \int_0^1f(u)\bigl(K(t,u)-\kappa(q(t),u)\bigr)\dd u=0.
\]
Hence (b) implies (a).
Conversely, assume (a) and fix $t$.
Applying the identity to $f\equiv1$ and to $f=1+\ind_E$ for a measurable set $E$ gives $\int_E(K(t,u)-\kappa(q(t),u))\dd u=0$ for every $E$, so $K(t,\cdot)=\kappa(q(t),\cdot)$ almost everywhere.
Taking the union of the exceptional null sets over rational $t$, and using that both functions are right-continuous in $t$ for almost every $u$, we obtain agreement for all $t$ outside a common null set.
The last assertion follows from~\eqref{eq:rare-load} and~\eqref{eq:iid-load}.
\end{proof}

Proposition~\ref{prop:iid-path-kernel} identifies the only kernel under which \emph{every small item, whatever the shape of its profile, follows the i.i.d.\ threshold path}.
We denote it by $K_0$; Proposition~\ref{prop:kzero-kernel} below verifies that it is indeed an allocation kernel.
In terms of allocations, $K_0$ assigns to a profile $\lambda$ the allocation $B_{K_0,\lambda}(t)=a_\lambda(q(t))$.
That is, \emph{each item receives, at time $t$, the linearized area of its profile at the i.i.d.\ threshold level $q(t)$.}
For a small profile, the linearized area is asymptotically the true area, so the item's probability threshold is asymptotically $q(t)$, exactly as in the i.i.d.\ benchmark.

\subsection{The Kernel \texorpdfstring{$K_0$}{K0}}\label{sec:kzero-definition}

Formally, define
\begin{equation}\label{eq:kzero-uniform}
 K_0(t,u):=\kappa\bigl(q(t),u\bigr)=\frac{(q(t)-u)_+}{1-u},
 \qquad t\in[0,1],\ u\in[0,1),
\end{equation}
and, for definiteness, $K_0(t,1):=t$.
Although an allocation kernel may in general depend on the target ratio, $K_0$ does not.

\begin{proposition}\label{prop:kzero-kernel}
The function $K_0$ is an allocation kernel.
For every profile $\lambda\in\Lambda$ and $t\in[0,1)$, its allocation is
\begin{equation}\label{eq:kzero-allocation}
 B_{K_0,\lambda}(t)
 =a_\lambda\bigl(q(t)\bigr)
 =C_\lambda\bigl(q(t)\bigr)-\bigl(1-q(t)\bigr)h_\lambda\bigl(q(t)\bigr),
\end{equation}
and $B_{K_0,\lambda}(1)=\mu_\lambda$.
Moreover, for every $f\in\mathcal F$ and $\Gamma\in(0,1)$,
\[
 z_{K_0,f}=q
 \qquad\text{and}\qquad
 \lim_{\varepsilon\downarrow0}\ell_\Gamma(K_0,\varepsilon f)=J(\Gamma).
\]
In particular, $K_0$ passes every small-profile test of Lemma~\ref{lem:rare-limit} at $\Gamma_\star$, with equality.
\end{proposition}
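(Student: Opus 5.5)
We verify the kernel axioms directly from the formula~\eqref{eq:kzero-uniform}, and then read off the allocation and the small-profile claims from earlier results.

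\textbf{Monotonicity and endpoints.} Fix $u\in[0,1)$. Since $q=\phi^{-1}$ is a continuous increasing bijection of $[0,1]$, the map $t\mapsto(q(t)-u)_+/(1-u)$ is continuous and nondecreasing, hence right-continuous. At the endpoints, $q(0)=0$ gives $K_0(0,u)=0$, and $q(1)=1$ gives $K_0(1,u)=1$. The value at $u=1$ is a single null point, so the choice $K_0(1,1)=1$ there is harmless. Joint measurability is clear, because $K_0$ is continuous on $[0,1]\times[0,1)$.

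\textbf{Marginal identity~\eqref{eq:kernel-marginal}.} Here we use~\eqref{eq:iid-clock}:
\[
 \int_0^1K_0(t,u)\dd u=\int_0^{q(t)}\frac{q(t)-u}{1-u}\dd u=\phi(q(t))=t.
\]

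\textbf{Allocation formula.} For $t<1$ we have $q(t)<1$, so by definition of $a_\lambda$,
\[
 B_{K_0,\lambda}(t)=\int_0^1\lambda(u)\kappa(q(t),u)\dd u=a_\lambda(q(t)).
\]
The explicit form then follows from~\eqref{eq:linearized-area} with $g=\lambda$, using the definitions of $C_\lambda$ and $h_\lambda$. At $t=1$ we have $K_0(1,\cdot)\equiv1$, which gives $B_{K_0,\lambda}(1)=\mu_\lambda$.

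\textbf{Small-profile claims.} For $f\in\mathcal F$, the same computation gives $m_{K_0,f}(t)=a_f(q(t))$ for every $t$, including $t=1$ since $a_f(1)=\int f$. Because $a_f$ is strictly increasing, $z_{K_0,f}=a_f^{-1}(a_f(q(t)))=q(t)$. Equivalently, condition (b) of Proposition~\ref{prop:iid-path-kernel} holds with $N=\{1\}$. That proposition then yields $\ell^0_\Gamma(K_0,f)=J(\Gamma)$, and Lemma~\ref{lem:rare-limit}(b) identifies this value as $\lim_{\varepsilon\downarrow0}\ell_\Gamma(K_0,\varepsilon f)$. At $\Gamma_\star$ the limit equals $J(\Gamma_\star)=1$, which is the asserted equality in every small-profile test.

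No step is a real obstacle; the only points needing care are the behavior at $u=1$ and at $t=1$, where $\kappa$ is undefined, and both are handled by the conventions above.
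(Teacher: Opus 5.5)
Your proof is correct and takes the same route as the paper. You verify the kernel axioms from the continuity and monotonicity of $q$, get the marginal identity from $\phi(q(t))=t$, read the allocation off the definition of $a_\lambda$ and~\eqref{eq:linearized-area}, and obtain the small-profile claims from Proposition~\ref{prop:iid-path-kernel} and Lemma~\ref{lem:rare-limit}; your direct check that $z_{K_0,f}=a_f^{-1}(a_f(q(t)))=q(t)$ is just a slightly more explicit version of the same step.
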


\begin{proof}
For each $u<1$, the function $t\mapsto K_0(t,u)$ is continuous and nondecreasing, because $q$ is; it vanishes at $t=0$ and equals one at $t=1$, because $q(0)=0$ and $q(1)=1$.
The function $K_0$ is continuous on $[0,1]\times[0,1)$ and hence jointly measurable.
By~\eqref{eq:iid-clock}, $\int_0^1K_0(t,u)\dd u=\phi(q(t))=t$.
Thus $K_0$ is an allocation kernel.
Formula~\eqref{eq:kzero-allocation} is the definition of $a_\lambda$ combined with~\eqref{eq:linearized-area}, and $B_{K_0,\lambda}(1)=\int_0^1\lambda(u)\dd u=\mu_\lambda$.
The remaining assertions follow from Proposition~\ref{prop:iid-path-kernel} and Lemma~\ref{lem:rare-limit}.
\end{proof}

\paragraph{A probabilistic description.}
Recall from Section~\ref{sec:characterization} that an allocation kernel is the conditional distribution function $K(t,u)=\Pp(\widetilde T\le t\mid\widetilde U=u)$ of a coupling of two uniform random variables.
The coupling behind $K_0$ is simple.
Let $\widetilde U$ be uniform on $[0,1]$; given $\widetilde U=u$, let the \emph{release level} $X$ be uniform on $[u,1]$; and set $\widetilde T:=\phi(X)$.
Then
\[
 \Pp(\widetilde T\le t\mid\widetilde U=u)
 =\Pp\bigl(X\le q(t)\mid\widetilde U=u\bigr)
 =\frac{(q(t)-u)_+}{1-u}
 =K_0(t,u).
\]
The release level $X$ has density $\int_0^x(1-u)^{-1}\dd u=\phi'(x)$, so $\phi$ is its distribution function and $\widetilde T=\phi(X)$ is uniform.
Thus \emph{$K_0$ spreads the winning probability at maximum rank $u$ uniformly over the threshold levels $x\in[u,1]$, and releases each part when the i.i.d.\ threshold reaches its level}, that is, at time $\phi(x)$.
The marginal identity~\eqref{eq:kernel-marginal} for $K_0$ is precisely the statement that $\phi$ is the distribution function of the release level.
Figure~\ref{fig:kzero}(a) illustrates the construction.

The same coupling has a simple description on the exponential scale
$r=-\log(1-u)$.
Let $\xi$ and $\xi'$ be independent standard exponential random variables.
Then $\widetilde U:=1-e^{-\xi}$ is uniform and, given $\widetilde U$, the level $X:=1-e^{-(\xi+\xi')}$ is uniform on $[\widetilde U,1]$, because $e^{-\xi'}$ is uniform on $[0,1]$.
Moreover, $\phi(1-e^{-r})=1-(1+r)e^{-r}$ is the distribution function of the $\mathrm{Gamma}(2,1)$ distribution, so
\[
 \widetilde T=1-(1+\xi+\xi')\,e^{-(\xi+\xi')}.
\]
Thus, \emph{on the exponential scale, $K_0$ releases the winning probability located at level $\xi$ after an independent standard exponential delay $\xi'$}, and normalized time is the $\mathrm{Gamma}(2,1)$ distribution function of the release level $\xi+\xi'$.
For comparison, the perfectly correlated kernel of Section~\ref{sec:diagonal} releases the winning probability at rank $u$ at time $u$, and the independent kernel of Section~\ref{sec:independent} ignores the rank altogether.
The kernel $K_0$ releases it only after the i.i.d.\ threshold has passed the rank, and then at a memoryless pace; see Figure~\ref{fig:kzero}(b).
On the same scale, profiles acquire a direct meaning: by Lemma~\ref{lem:rank-representation}, $\Pp(R_i>1-e^{-r})=\exp(-\int_0^r\lambda_i(1-e^{-s})\dd s)$, so $\lambda_i(1-e^{-r})$ is the hazard rate at level $r$ of item $i$'s rank.
Section~\ref{sec:kzero-analysis} carries out the analysis in these coordinates.

\begin{figure}[t]
\centering
\begin{minipage}[b]{0.47\linewidth}
\centering
\begin{tikzpicture}[scale=4.4,>=stealth]
 \def\uu{0.5}\def\qq{0.76}
 \fill[black!5] (0,0) -- (0,1) -- (1,1) -- cycle;
 \fill[black!17] (0,0) -- (0,\qq) -- (\qq,\qq) -- cycle;
 \draw[black!55] (0,1) -- (1,1) -- (1,0);
 \draw[black!55] (0,0) -- (1,1);
 \draw[dashed] (0,\qq) -- (1,\qq);
 \draw[dotted] (\uu,0) -- (\uu,\uu);
 \draw[line width=1.8pt] (\uu,\uu) -- (\uu,\qq);
 \draw[line width=1.8pt,black!35] (\uu,\qq) -- (\uu,1);
 \draw[->] (0,0) -- (1.1,0);
 \draw[->] (0,0) -- (0,1.1) node[above right] {\small threshold level};
 \node[anchor=north] at (0.5,-0.085) {\small maximum rank};
 \node[left] at (0,\qq) {\small $q(t)$};
 \node[left] at (0,1) {\small $1$};
 \node[below] at (1,0) {\small $1$};
 \node[below] at (\uu,0) {\small $u$};
 \node[below left] at (0,0) {\small $0$};
 \draw[decorate,decoration={brace,amplitude=2.5pt}] (\uu-0.025,\uu+0.005) -- (\uu-0.025,\qq-0.005)
   node[midway,left=3pt,align=right] {\scriptsize released\\[-2pt]\scriptsize by time $t$};
 \draw[decorate,decoration={brace,amplitude=2.5pt}] (\uu-0.025,\qq+0.005) -- (\uu-0.025,0.995)
   node[midway,left=3pt,align=right] {\scriptsize released\\[-2pt]\scriptsize later};
 \node at (0.15,0.40) {\scriptsize mass $t$};
 \node[rotate=45] at (0.72,0.635) {\scriptsize $x=u$};
\end{tikzpicture}

\small (a) The coupling behind $K_0$.
\end{minipage}
\hfill
\begin{minipage}[b]{0.49\linewidth}
\centering
\begin{tikzpicture}
\begin{axis}[
 width=6.4cm,height=5.4cm,
 xmin=0,xmax=1,ymin=-0.04,ymax=1.04,
 xlabel={\small time $t$},ylabel={\small $K(t,\tfrac12)$},
 xtick={0,0.1534,0.5,1},xticklabels={$0$,$\phi(\frac12)$,$\frac12$,$1$},
 ytick={0,0.5,1},
 tick label style={font=\scriptsize},
 legend style={font=\scriptsize,at={(0.97,0.05)},anchor=south east,draw=none,
   fill=white,fill opacity=0.9,text opacity=1},
 legend cell align=left]
 \addplot[line width=1.4pt,black,domain=0.5:0.99999,samples=160,variable=\s]
   ({\s+(1-\s)*ln(1-\s)},{(\s-0.5)/0.5});
 \addlegendentry{$K_0$}
 \addplot[thick,dashed,black!70] coordinates {(0,0) (0.5,0) (0.5,1) (1,1)};
 \addlegendentry{$K_{\mathrm{diag}}$}
 \addplot[thick,dotted,black] coordinates {(0,0) (1,1)};
 \addlegendentry{$K_{\mathrm{ind}}$}
 \addplot[line width=1.4pt,black,forget plot] coordinates {(0,0) (0.1534,0)};
\end{axis}
\end{tikzpicture}

\small (b) Release schedules at maximum rank $u=\frac12$.
\end{minipage}
\caption{The kernel $K_0$.
(a)~The pair (maximum rank, release level) has density $1/(1-u)$ on the triangle above the
diagonal: winning probability at maximum rank $u$ is spread uniformly over the levels $[u,1]$
(vertical segment) and is released when the i.i.d.\ threshold $q(t)$ reaches each level. By
time $t$, the released fraction is the part of the segment below $q(t)$, namely
$K_0(t,u)=(q(t)-u)/(1-u)$, and the darker triangle carries total mass $\phi(q(t))=t$.
(b)~Conditional distribution functions $t\mapsto K(t,u)$ for $u=\frac12$: the perfectly
correlated kernel releases all winning probability at time $u$, the independent kernel
releases it uniformly over time, and $K_0$ releases it gradually from time
$\phi(u)\approx0.153$ on.}
\label{fig:kzero}
\end{figure}

\paragraph{Limitations of small-profile tests.}
By Proposition~\ref{prop:kzero-kernel}, $K_0$ passes all small-profile tests at $\Gamma_\star$.
Profiles of positive size, however, are subject to two effects that the linearization ignores, and these effects push the load in opposite directions.
\begin{itemize}
\item \emph{Self-competition.}
An item does not compete with itself: a value of rank $s$ has loss score $x_\lambda(s)=1-(1-s)e^{h_\lambda(s)}$, which is strictly smaller than $s$ when $h_\lambda(s)>0$.
Hence the true area curve $A_\lambda$ differs from its linearization $a_\lambda$.
Since $K_0$ allocates exactly $a_\lambda(q(t))$, the probability threshold $y_{K_0,\lambda}(t)=A_\lambda^\leftarrow\bigl(a_\lambda(q(t))\bigr)$ departs from the i.i.d.\ path.
Where $A_\lambda>a_\lambda$, it lags behind the i.i.d.\ path, and this raises the load.
\item \emph{Self-survival.}
The survival factor $S_{K_0,\lambda}<1$ accounts for the possibility that the item itself has already arrived and met its threshold, and this lowers the load.
\end{itemize}
For constant profiles, self-survival dominates (Lemma~\ref{lem:iid-kernel-blind}).
Section~\ref{sec:kzero-analysis} bounds the net effect for all profiles, and Section~\ref{sec:kzero-limit} exhibits a profile for which self-competition dominates at $\Gamma_\star$.

\subsection{A Universal Guarantee for \texorpdfstring{$K_0$}{K0}}
\label{sec:kzero-analysis}

The following theorem bounds the net effect of self-competition and self-survival uniformly over all profiles.
The guarantee is described by a scalar differential equation.

\begin{theorem}[Guarantee for $K_0$]\label{thm:kzero-analytic}
For $\Gamma\in(0,1)$ and $r\ge0$, put
\begin{equation}\label{eq:kzero-D}
 c_\Gamma:=\frac{1-\Gamma}{\Gamma},
 \qquad
 D_\Gamma(r):=c_\Gamma e^r+1+r,
\end{equation}
and let $V_\Gamma$ be the unique nonnegative $C^1$ solution on $[0,\infty)$ of
\begin{equation}\label{eq:kzero-bellman}
 -V_\Gamma'(r)
 =\frac{r+\pos{r-1-D_\Gamma(r)V_\Gamma(r)}^2/4}{D_\Gamma(r)^2},
 \qquad
 \lim_{r\to\infty}V_\Gamma(r)=0.
\end{equation}
There is a unique $\Gamma_0\in(0,1)$ such that
\begin{equation}\label{eq:kzero-root}
 \Gamma_0+V_{\Gamma_0}(0)=1,
 \qquad
 \Gamma_0\approx0.7388957625.
\end{equation}
For every $\Gamma\in(0,\Gamma_0]$ and every profile $\lambda\in\Lambda$, $\ell_\Gamma(K_0,\lambda)\le1$.
Consequently, the prescribed arrival construction with $K_0$ is universally feasible at every ratio $\Gamma\le\Gamma_0$, and it satisfies the exact item-wise guarantee~\eqref{eq:itemwise} on every input.
\end{theorem}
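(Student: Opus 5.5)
The plan is to rewrite the load $\ell_\Gamma(K_0,\lambda)$ as the cost of a one-dimensional control problem on the exponential scale, and then bound that cost by a verification argument with the value function $V_\Gamma$. First I reparameterize time by the i.i.d.\ threshold level, $t=\phi(x)$, and pass to $r=-\log(1-x)$. Then $1-\Gamma t=1-\Gamma\bigl(1-(1+r)e^{-r}\bigr)=\Gamma e^{-r}D_\Gamma(r)$, which is where $D_\Gamma$ comes from. By the discussion after Proposition~\ref{prop:kzero-kernel}, the profile becomes a hazard rate $\rho(r):=\lambda(1-e^{-r})\in[0,1]$. By~\eqref{eq:kzero-allocation}, $B_{K_0,\lambda}(\phi(x))=a_\lambda(x)$, so $B$ and $S_{K_0,\lambda}$ are explicit functionals of $\rho$. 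The true area curve is given by~\eqref{eq:rank-area-main}. In these coordinates the lag $1-y$ and the factor $1-B'$ in Lemma~\ref{lem:load-parts} are expressed through $h_\lambda$, i.e.\ through $H(r)=\int_0^r\rho$, and through $\rho$ itself.

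The second step is the key pointwise estimate. Lemma~\ref{lem:profile-bijection} gives $A_\lambda(x_\lambda(u))=C_\lambda(u)+x_\lambda(u)-u$. Since $A_\lambda$ is $1$-Lipschitz and convex, comparing with $a_\lambda(q(t))$ should give a lower bound on $y_{K_0,\lambda}(t)$. This bound measures the self-competition lag by how much $h_\lambda$ has accumulated. Combined with $S_{K_0,\lambda}\le1$, or with the exact self-survival factor, it bounds the integrand of~\eqref{eq:load-parts} by a running cost of the form
\[
 \frac{r+\rho\,(r-1-D_\Gamma(r)W)-\rho^2\,(\cdots)}{D_\Gamma(r)^2}.
\]
Here $W$ is a state variable that records the self-survival gained so far. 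The precise bookkeeping is what I expect to be the \textbf{main obstacle}. The running cost has to be arranged so that, at each level $r$, it is a concave quadratic in the control $\rho\in[0,1]$. Maximizing that quadratic over $\rho$ must produce exactly $\pos{r-1-D_\Gamma V}^2/4$, while the term $r/D_\Gamma^2$ is the i.i.d.\ baseline contribution, which is the case $\rho\equiv0$ in this normalization. I would first try the self-survival state $W=-\log S_{K_0,\lambda}$ (suitably normalized), with the self-competition lag bounded by a quantity linear in $\rho$ at the current level. If that does not give a local bound, I would try absorbing the lag into the state through $H$.

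The third step is the verification argument. Given the pointwise bound, consider the potential $\Psi(r)=\int_0^r(\text{running cost})+\text{(survival weight)}\cdot V_\Gamma(r)$. Using~\eqref{eq:kzero-bellman}, its derivative is at most the running cost minus its maximum over $\rho$, hence $\le0$. Letting $r\to\infty$ and using $V_\Gamma(\infty)=0$ gives $\ell_\Gamma(K_0,\lambda)\le\Gamma+V_\Gamma(0)$; the term $\Gamma$ comes from $\Gamma(1-y(0))\le\Gamma$. Separately, existence and uniqueness of $V_\Gamma$ follow from a fixed-point argument on $[R,\infty)$, since the right-hand side is $O(re^{-2r})$ and Lipschitz in $V$, followed by backward continuation. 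Comparison for the ODE shows that $V_\Gamma(0)$ increases continuously in $\Gamma$. As $\Gamma\to0$ the quantity $\Gamma+V_\Gamma(0)$ tends to $0$, and it exceeds $1$ near $\Gamma=1$, so $\Gamma_0$ exists and is unique; its value is found numerically. Finally, Lemma~\ref{lem:gamma-monotone} extends the bound to all $\Gamma\le\Gamma_0$, and Proposition~\ref{prop:scalar-feasibility} together with Theorem~\ref{thm:bc} gives universal feasibility and the item-wise guarantee.
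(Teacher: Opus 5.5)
Your overall architecture (exponential coordinates, a threshold-lag bound, a verification argument with $\sigma V_\Gamma$, then locating the root $\Gamma_0$) matches the paper's. However, the control problem in your Step~2 is posed on the wrong variable, and as posed it cannot produce~\eqref{eq:kzero-bellman}.

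\textbf{Why the control cannot be the hazard rate.} In exponential coordinates nothing in the load of $K_0$ depends on the hazard rate $\rho(r)=h'(r)$. Every ingredient depends only on the cumulative hazard $h(r)\in[0,r]$:
\begin{itemize}
\item the allocation rate is $B'(t_r)=h(r)/r$, so $1-B'(t_r)=(r-h(r))/r$;
\item the lag bound you need is $e^r(1-Y(r))\le1+h(r)$, from the identity $a_\lambda(u_r)-A_\lambda(x_\lambda(u_r))=e^{-r}(e^{h(r)}-1-h(r))$ plus the $1$-Lipschitz property (Lemma~\ref{lem:kzero-inverse});
\item the self-survival factor is $\sigma(r)=\exp(-\int_0^r h(s)/D_\Gamma(s)\,\dd s)$.
\end{itemize}
The integrand is therefore at most $\sigma(r)(r-h(r))(1+h(r))/D_\Gamma(r)^2$. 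Moreover, $D_\Gamma^2\{(r-h)(1+h)/D_\Gamma^2-hv/D_\Gamma\}=r+h(r-1-D_\Gamma v)-h^2$. This is your quadratic, but in the value $h(r)$, not in $\rho$.

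With $\rho\in[0,1]$ as the control, the maximum of $r+\rho(r-1-D_\Gamma v)-\rho^2$ over $[0,1]$ equals $r+\pos{r-1-D_\Gamma v}^2/4$ only when $\pos{r-1-D_\Gamma v}\le2$. Since $D_\Gamma V_\Gamma\to0$, this fails for all large $r$. There the optimal adversary value $\tfrac12\pos{r-1-D_\Gamma v}$ grows like $(r-1)/2$ and cannot be a hazard rate. Your fallback, keeping $H=\int\rho$ as a genuine state with dynamics $H'=\rho$, leads to a value function of two variables $(r,h)$. That is essentially what the paper does for $K_1$ with the state potential $P(r,h)$; it is not the scalar ODE in the statement.

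\textbf{The missing idea: relax the path constraints.} This is the step of Section~\ref{sec:kzero-control}. Discard the constraints that $h$ be nondecreasing and $1$-Lipschitz, and let an adversary choose the value $k(r)=h(r)\in[0,r]$ freely at each $r$. Then $\sigma$ becomes a discount at rate $k/D_\Gamma$. The value of the remaining payoff per unit of survival satisfies $-V'=\max_{0\le k\le r}\{p_\Gamma(r,k)-kV/D_\Gamma\}$, whose maximizer $k_*=\tfrac12\pos{r-1-D_\Gamma v}$ lies in $[0,r]$ automatically. This is exactly~\eqref{eq:kzero-bellman}.

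Correspondingly, the $-D_\Gamma V$ inside the bracket is not a state recording self-survival gained so far. It comes from differentiating $\sigma V$: $(\sigma V)'=\sigma(V'-kV/D_\Gamma)$, with $V$ the continuation value. This also means you must use the exact survival factor. The option $S_{K_0,\lambda}\le1$ loses the $-kv/D_\Gamma$ term and gives the larger Hamiltonian $M(r)/D_\Gamma(r)^2$, not the one in the theorem.

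\textbf{The rest of your plan.} With this correction, your Step~3 verification works as written: $\Gamma+V_\Gamma(0)$ bounds every load. Your treatment of existence and uniqueness of $V_\Gamma$ and of the root $\Gamma_0$ is also sound. Your ODE-comparison route to monotonicity in $\Gamma$ is valid, because for $v\ge0$ the Hamiltonian is nondecreasing in $\Gamma$ and nonincreasing in $v$. You would still need a separate continuous-dependence argument for continuity; the paper gets both at once by differentiating the payoff of each fixed control in $\Gamma$.
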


Lemma~\ref{lem:kzero-bellman} below shows that~\eqref{eq:kzero-bellman}
is well posed and identifies $V_\Gamma(a)$ as the value of a control problem.
The theorem establishes $\Gamma_0$ as a lower bound on the feasibility
threshold of $K_0$.

We prove the theorem by bounding $\ell_\Gamma(K_0,\lambda)$ for an
arbitrary profile $\lambda$ in three steps.
First, we pass to exponential coordinates, in which the allocation, the survival factor, and the i.i.d.\ path become explicit in terms of a single function $h$, the cumulative hazard of the item's rank (Lemma~\ref{lem:kzero-coordinates}).
Second, we bound how far the item's probability threshold can lag behind the i.i.d.\ path (Lemma~\ref{lem:kzero-inverse}); the resulting bound depends on $\lambda$ only through $h$.
Third, we maximize this bound over all admissible paths $h$, which leads to a deterministic control problem whose value function is $V_\Gamma$ (Lemma~\ref{lem:kzero-bellman}).
Lemma~\ref{lem:kzero-root} then locates $\Gamma_0$.
For the i.i.d.\ path, the three steps reproduce $J(\Gamma)$ exactly; see~\eqref{eq:iid-load-r} below.

\subsubsection{Exponential coordinates}\label{sec:kzero-exp}

\begin{table}[t]
\centering
\small
\begin{tabular}{@{}c l l@{}}
\toprule
Coordinate & Meaning & Exponential form\\
\midrule
$t\in[0,1]$ & normalized time: $\Pp(\text{acceptance by time }t)=\Gamma t$ & $t_r=1-(1+r)e^{-r}$, $r\ge0$\\
$u\in[0,1]$ & maximum rank $U=1-F_{\max}(\OPT)$, uniform on $[0,1]$ & $u_a=1-e^{-a}$, $a\ge0$\\
$x\in[0,1]$ & level of a probability threshold; argument of $A_i$ & $1-x=e^{-r}R$ at time $t_r$\\
\bottomrule
\end{tabular}
\caption{The three coordinates used in the paper and their exponential forms. The i.i.d.\
threshold path satisfies $q(t_r)=u_r$, so the exponential time $r$ and the exponential rank $a$
are aligned along it, and the rescaled threshold complement $R$ equals one along this path
(Section~\ref{sec:kzero-lag}).}
\label{tab:coordinates}
\end{table}

Table~\ref{tab:coordinates} summarizes the coordinates used from now on.

Using the exponential coordinate introduced in
Section~\ref{sec:kzero-definition}, we parametrize the i.i.d.\
threshold path by
\begin{equation}\label{eq:kzero-r-coordinates}
 u_r:=1-e^{-r},
 \qquad
 t_r:=\phi(u_r)=1-(1+r)e^{-r},
 \qquad\text{so that}\quad
 q(t_r)=u_r.
\end{equation}
The map $r\mapsto t_r$ is an increasing $C^1$ bijection from $[0,\infty)$ onto $[0,1)$ with $\frac{\dd}{\dd r}t_r=re^{-r}$, and a direct computation gives
\begin{equation}\label{eq:kzero-S}
 1-\Gamma t_r=\Gamma e^{-r}D_\Gamma(r).
\end{equation}
Fix $\lambda\in\Lambda$.
The substitution $a=u_s$ in~\eqref{eq:profile-coordinates-main} gives
\begin{equation}\label{eq:kzero-h}
 h(r):=h_\lambda(u_r)=\int_0^r\lambda(u_s)\dd s,
 \qquad
 C(r):=C_\lambda(u_r)=\int_0^r\lambda(u_s)e^{-s}\dd s.
\end{equation}
By Lemma~\ref{lem:rank-representation}, $e^{-h(r)}$ is the probability that the item's value has rank greater than $u_r$.
Thus $h$ is the cumulative hazard of the item's rank on the exponential scale, and $h'(r)=\lambda(u_r)$ is its hazard rate.
In particular, $h$ is nondecreasing and $1$-Lipschitz with $h(0)=0$, so $0\le h(r)\le r$.
A rarely winning item has $h\approx0$, whereas $h'=1$ on an interval means that the item is the maximum whenever the maximum rank lies in the corresponding range.

\begin{lemma}[The load in exponential coordinates]\label{lem:kzero-coordinates}
Let $B:=B_{K_0,\lambda}$, $S_\lambda:=S_{K_0,\lambda}$, $y:=y_{K_0,\lambda}$, and $Y(r):=y(t_r)$.
For every $r\ge0$,
\begin{equation}\label{eq:kzero-r-allocation}
 B(t_r)=C(r)-e^{-r}h(r)=\int_0^re^{-s}h(s)\dd s.
\end{equation}
Consequently, $B$ is continuously differentiable on $(0,1)$, with $B'(t_r)=h(r)/r$ for $r>0$.
Moreover,
\begin{equation}\label{eq:kzero-survival}
 S_\lambda(t_r)=\sigma(r):=\exp\!\left(-\int_0^r\frac{h(s)}{D_\Gamma(s)}\dd s\right)
\end{equation}
and
\begin{equation}\label{eq:kzero-r-load}
 \ell_\Gamma(K_0,\lambda)
 =\Gamma\bigl(1-y(0)\bigr)
 +\int_0^\infty\sigma(r)\,\frac{\bigl(r-h(r)\bigr)\,e^r\,\bigl(1-Y(r)\bigr)}{D_\Gamma(r)^2}\dd r.
\end{equation}
\end{lemma}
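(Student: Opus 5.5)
The plan is to substitute $t=t_r$ throughout, using the explicit form of $K_0$ from Proposition~\ref{prop:kzero-kernel}, and then to rewrite the integration-by-parts formula of Lemma~\ref{lem:load-parts} in the variable $r$.

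\textbf{Step 1: the allocation.} By~\eqref{eq:kzero-allocation} and $q(t_r)=u_r$,
\[
 B(t_r)=C_\lambda(u_r)-(1-u_r)h_\lambda(u_r)=C(r)-e^{-r}h(r).
\]
The substitution $a=u_s$, with $\dd a=e^{-s}\dd s$ and $1-a=e^{-s}$, turns the definitions in~\eqref{eq:profile-coordinates-main} into~\eqref{eq:kzero-h}. For the second expression in~\eqref{eq:kzero-r-allocation}, integrate by parts:
\[
 \int_0^r e^{-s}h(s)\dd s=-e^{-r}h(r)+\int_0^r e^{-s}h'(s)\dd s=C(r)-e^{-r}h(r),
\]
since $h$ is Lipschitz with $h'(s)=\lambda(u_s)$ almost everywhere. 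The right side has continuous derivative $e^{-r}h(r)$ in $r$. Because $r\mapsto t_r$ is a $C^1$ bijection with derivative $re^{-r}>0$ for $r>0$, the chain rule gives $B'(t_r)=h(r)/r$, and $B$ is $C^1$ on $(0,1)$.

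\textbf{Step 2: the survival factor.} In~\eqref{eq:scalar-paths}, change variables $s=t_\rho$. Then $\dd B(t_\rho)=e^{-\rho}h(\rho)\dd\rho$ and, by~\eqref{eq:kzero-S}, $S(t_\rho)=\Gamma e^{-\rho}D_\Gamma(\rho)$. Hence
\[
 \Gamma\,\frac{\dd B(t_\rho)}{S(t_\rho)}=\frac{h(\rho)}{D_\Gamma(\rho)}\dd\rho,
\]
which is~\eqref{eq:kzero-survival}. Identity~\eqref{eq:kzero-S} itself is a one-line check: $1-\Gamma+\Gamma(1+r)e^{-r}=\Gamma e^{-r}(c_\Gamma e^r+1+r)$.

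\textbf{Step 3: the load.} Start from~\eqref{eq:load-parts} and substitute $t=t_r$ in the integral over $[0,1)$; the endpoint $t=1$ has measure zero. The integrand becomes
\[
 \Gamma^2\,\frac{\sigma(r)\,(1-Y(r))\,(1-h(r)/r)}{\Gamma^2e^{-2r}D_\Gamma(r)^2}\cdot re^{-r}\dd r
 =\frac{\sigma(r)\,(r-h(r))\,e^{r}\,(1-Y(r))}{D_\Gamma(r)^2}\dd r,
\]
which is~\eqref{eq:kzero-r-load}. Note that $1-B'(t_r)=(r-h(r))/r$ multiplied by $\dd t_r=re^{-r}\dd r$ cancels the factor $r$.

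The only delicate point is justifying these changes of variables for Lebesgue and Stieltjes integrals with a possibly merely measurable $\lambda$. This is handled by monotonicity and absolute continuity: $t_r$ is a $C^1$ increasing bijection, $B$ is Lipschitz, and all integrands are nonnegative, so monotone convergence also covers the improper integral up to $r=\infty$.
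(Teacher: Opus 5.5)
Your proposal is correct and follows the paper's proof essentially line by line. You evaluate \eqref{eq:kzero-allocation} at $t=t_r$ and integrate by parts, substitute $s=t_a$ in the survival exponent using \eqref{eq:kzero-S}, and substitute $t=t_r$ in \eqref{eq:load-parts}; your remarks on justifying the changes of variables are a harmless addition to the same argument.
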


\begin{proof}
The first identity in~\eqref{eq:kzero-r-allocation} is~\eqref{eq:kzero-allocation} at $t=t_r$.
The second follows by integrating by parts with $h'(s)=\lambda(u_s)$:
$\int_0^re^{-s}h(s)\dd s=-e^{-r}h(r)+\int_0^re^{-s}\lambda(u_s)\dd s$.
Since $s\mapsto e^{-s}h(s)$ is continuous, $r\mapsto B(t_r)$ is continuously differentiable with derivative $e^{-r}h(r)$; dividing by $\frac{\dd}{\dd r}t_r=re^{-r}$ gives the formula for $B'$.
Substituting $s=t_a$ in the definition~\eqref{eq:scalar-paths} of $S_\lambda$ and using~\eqref{eq:kzero-S},
\[
 \Gamma\int_0^{t_r}\frac{B'(s)}{1-\Gamma s}\dd s
 =\Gamma\int_0^r\frac{e^{-a}h(a)}{\Gamma e^{-a}D_\Gamma(a)}\dd a
 =\int_0^r\frac{h(a)}{D_\Gamma(a)}\dd a,
\]
which proves~\eqref{eq:kzero-survival}.
Finally, substitute $t=t_r$ in~\eqref{eq:load-parts}, using $\dd t=re^{-r}\dd r$, $1-B'(t_r)=(r-h(r))/r$, and $(1-\Gamma t_r)^2=\Gamma^2e^{-2r}D_\Gamma(r)^2$.
\end{proof}
For the i.i.d.\ benchmark, substitute $z_{K,1}=q$
into~\eqref{eq:rare-load} and set $t=t_r$.
Since $1-q(t_r)=e^{-r}$, this gives
\begin{equation}\label{eq:iid-load-r}
 J(\Gamma)=\Gamma+\int_0^\infty
                  \frac{r}{D_\Gamma(r)^2}\dd r.
\end{equation}
This is~\eqref{eq:kzero-r-load} with $h\equiv0$, $\sigma\equiv1$, $y(0)=0$, and $Y(r)$ replaced by the i.i.d.\ level $u_r$.

\subsubsection{The threshold lag}\label{sec:kzero-lag}

In exponential coordinates, the i.i.d.\ path satisfies $1-q(t_r)=e^{-r}$.
The next lemma bounds how far the threshold of a profile of positive size can fall behind it.

\begin{lemma}[Threshold lag]\label{lem:kzero-inverse}
For every $r\ge0$,
\begin{equation}\label{eq:kzero-lag-identity}
 a_\lambda(u_r)-A_\lambda\bigl(x_\lambda(u_r)\bigr)
 =e^{-r}\bigl(e^{h(r)}-1-h(r)\bigr)\ge0,
\end{equation}
and consequently
\begin{equation}\label{eq:kzero-inverse}
 1-Y(r)\le\bigl(1+h(r)\bigr)e^{-r}.
\end{equation}
\end{lemma}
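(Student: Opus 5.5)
The plan is to verify the identity~\eqref{eq:kzero-lag-identity} by direct computation in exponential coordinates, and then convert it into the bound on $Y$ using the $1$-Lipschitz property of $A_\lambda$.

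\emph{The identity.} By~\eqref{eq:linearized-area}, $a_\lambda(u_r)=C_\lambda(u_r)-(1-u_r)h_\lambda(u_r)=C(r)-e^{-r}h(r)$. By~\eqref{eq:profile-area-main},
\[
 A_\lambda(x_\lambda(u_r))=C(r)-e^{-r}\bigl(e^{h(r)}-1\bigr).
\]
Subtracting the two expressions gives $e^{-r}(e^{h(r)}-1-h(r))$. This is nonnegative because $e^h\ge1+h$.

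\emph{The inverse bound.} By Lemma~\ref{lem:kzero-coordinates}, $B(t_r)=a_\lambda(u_r)$, so $Y(r)=A_\lambda^{\leftarrow}(a_\lambda(u_r))$. Write $x_r:=x_\lambda(u_r)=1-e^{-r}e^{h(r)}$, and set
\[
 w:=x_r+e^{-r}\bigl(e^{h(r)}-1-h(r)\bigr)=1-e^{-r}(1+h(r)).
\]
The target inequality is exactly $Y(r)\ge w$. Since $A_\lambda$ is nondecreasing and $1$-Lipschitz, we get $A_\lambda(w)\le A_\lambda(x_r)+(w-x_r)=a_\lambda(u_r)$. The upper generalized inverse is $A^{\leftarrow}(b)=\sup\{x:A(x)\le b\}$, so this gives $Y(r)\ge w$ directly, provided $w\in[0,1]$.

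Checking the range of $w$:
\begin{itemize}
\item $w\le1$ holds trivially.
\item $w\ge0$ follows from $h(r)\le r$, since $(1+r)e^{-r}\le1$.
\item $w\ge x_r$ follows from the identity already proved.
\end{itemize}
Hence $1-Y(r)\le(1+h(r))e^{-r}$.

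\emph{Main obstacle.} The delicate point is the endpoint/domain bookkeeping. One must confirm that $x_r\in[0,1]$, so that $A_\lambda(x_r)$ is given by~\eqref{eq:profile-area-main}; this follows from Lemma~\ref{lem:profile-bijection}, which gives $0\le x_\lambda(u)\le u$. One must also confirm that the Lipschitz step is applied only on $[0,1]$. Both are routine, so I expect no real difficulty beyond careful verification of signs.
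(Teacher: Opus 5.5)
Your proof is correct and follows the paper's argument: the identity is obtained by subtracting the two explicit formulas for $a_\lambda(u_r)$ and $A_\lambda(x_\lambda(u_r))$, and the bound on $Y$ comes from the $1$-Lipschitz property of $A_\lambda$ combined with the upper generalized inverse. The only difference is minor. You exhibit the point $w=1-(1+h(r))e^{-r}$ with $A_\lambda(w)\le B(t_r)$ and read off $Y(r)\ge w$ directly from the definition of $A_\lambda^{\leftarrow}$. The paper instead first shows $Y(r)\ge x_r$ and then uses $A_\lambda(Y(r))=B(t_r)$ from Lemma~\ref{lem:threshold-paths}(a), so your version does not need that lemma.
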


Identity~\eqref{eq:kzero-lag-identity} compares the allocation $B(t_r)=a_\lambda(u_r)$ with the true area at the loss score $x_\lambda(u_r)$ of a value of rank $u_r$.
Inequality~\eqref{eq:kzero-inverse} says that the distance of the threshold to the top exceeds that of the i.i.d.\ path by at most the factor $1+h(r)$; for a rarely winning item, this factor is close to one.
Equivalently, the \emph{rescaled threshold complement} $R(r):=e^r(1-Y(r))$, which equals one along the i.i.d.\ path, satisfies $R(r)\le1+h(r)$.

\begin{proof}
Write $x_r:=x_\lambda(u_r)=1-e^{h(r)-r}$.
By~\eqref{eq:profile-area-main}, $A_\lambda(x_r)=C(r)+x_r-u_r=C(r)+e^{-r}-e^{h(r)-r}$.
Subtracting this from~\eqref{eq:kzero-r-allocation} gives~\eqref{eq:kzero-lag-identity}, and the right side is nonnegative because $e^h\ge1+h$.
Since $A_\lambda(x_r)\le B(t_r)$, the definition of the upper generalized inverse gives $Y(r)\ge x_r$.
Moreover, $A_\lambda(Y(r))=B(t_r)$ by Lemma~\ref{lem:threshold-paths}(a), applied to the area curve $A_\lambda$ and the allocation $B$, and $A_\lambda$ is $1$-Lipschitz.
Hence
\[
 Y(r)-x_r\ge A_\lambda\bigl(Y(r)\bigr)-A_\lambda(x_r)=e^{-r}\bigl(e^{h(r)}-1-h(r)\bigr),
\]
and therefore $1-Y(r)\le e^{h(r)-r}-e^{-r}\bigl(e^{h(r)}-1-h(r)\bigr)=(1+h(r))e^{-r}$.
\end{proof}

Inserting~\eqref{eq:kzero-inverse} into~\eqref{eq:kzero-r-load} and using $y(0)\ge0$ yields
\begin{equation}\label{eq:kzero-envelope}
 \ell_\Gamma(K_0,\lambda)
 \le\Gamma+\int_0^\infty\sigma(r)\,p_\Gamma\bigl(r,h(r)\bigr)\dd r,
 \qquad
 p_\Gamma(r,k):=\frac{(r-k)(1+k)}{D_\Gamma(r)^2}.
\end{equation}
Compared with the i.i.d.\ integrand $r/D_\Gamma(r)^2$ in~\eqref{eq:iid-load-r}, the integrand in~\eqref{eq:kzero-envelope} carries three factors that reflect the effects discussed in Section~\ref{sec:kzero-definition}: the factor $1+h(r)$ is the threshold lag caused by self-competition; the factor $\sigma(r)$ is the self-survival discount; and $r$ is replaced by $r-h(r)$ because the item's own allocation rate $B'(t_r)=h(r)/r$ slows the growth of $S_\lambda/S$ in~\eqref{eq:load-parts}.

\subsubsection{The control problem}\label{sec:kzero-control}

The right side of~\eqref{eq:kzero-envelope} depends on $\lambda$ only through $h$.
We bound it by letting an adversary choose $h(r)$ freely in $[0,r]$, ignoring that $h$ must also be nondecreasing and $1$-Lipschitz.
For $a\ge0$ and a measurable control $k:[a,\infty)\to\R$ with $0\le k(r)\le r$, define
\[
 \sigma_{a,k}(r):=\exp\!\left(-\int_a^r\frac{k(s)}{D_\Gamma(s)}\dd s\right),
 \qquad
 \mathcal P_a(k):=\int_a^\infty\sigma_{a,k}(r)\,p_\Gamma\bigl(r,k(r)\bigr)\dd r.
\]
Thus $\sigma=\sigma_{0,h}$, and the integral in~\eqref{eq:kzero-envelope} equals $\mathcal P_0(h)$.
The payoff $\mathcal P_a(k)$ is the contribution from $r\ge a$,
with the survival weight at $a$ normalized to one.
At $r$, choosing $k$ yields the payoff rate $p_\Gamma(r,k)$
and reduces the weight of future payoffs at rate $k/D_\Gamma(r)$.
If the remaining payoff has value $v\ge0$, maximizing the resulting
net payoff rate gives the Hamiltonian
\begin{equation}\label{eq:kzero-hamiltonian}
 H_\Gamma(r,v)
 :=\max_{0\le k\le r}
   \left\{p_\Gamma(r,k)-\frac{kv}{D_\Gamma(r)}\right\}
 =\frac{r+\pos{r-1-D_\Gamma(r)v}^2/4}{D_\Gamma(r)^2}.
\end{equation}
Indeed, $D_\Gamma(r)^2$ times the expression in braces equals $r+k\bigl(r-1-D_\Gamma(r)v\bigr)-k^2$, a concave quadratic in $k$ whose maximizer over $[0,r]$ is
\[
 k_*(r,v):=\tfrac12\pos{r-1-D_\Gamma(r)v}\in[0,r].
\]
Equation~\eqref{eq:kzero-bellman} is the Bellman equation $-V'=H_\Gamma(r,V)$ of this problem.
Since $k_*(r,v)=0$ whenever $r\le1$, the adversary keeps the hazard at zero among the highest values, which is the limit of a rarely winning item, and lets it grow only further down.
Let
\[
 M(r):=\max_{0\le k\le r}(r-k)(1+k)
 =\begin{cases}
 r,&0\le r\le1,\\
 (r+1)^2/4,&r\ge1,
 \end{cases}
\]
so that $0\le p_\Gamma(r,k)\le M(r)/D_\Gamma(r)^2$ for all admissible $k$.

\begin{lemma}[Value of the control problem]\label{lem:kzero-bellman}
For every $\Gamma\in(0,1)$, equation~\eqref{eq:kzero-bellman} has a unique nonnegative $C^1$ solution $V_\Gamma$ on $[0,\infty)$.
For every $a\ge0$,
\begin{equation}\label{eq:kzero-control-value}
 V_\Gamma(a)=\sup_k\mathcal P_a(k),
\end{equation}
where the supremum is over measurable controls $k$ with $0\le k(r)\le r$, and
\begin{equation}\label{eq:kzero-tail-majorant}
 0\le V_\Gamma(a)
 \le\int_a^\infty\frac{M(r)}{D_\Gamma(r)^2}\dd r
 \le c_\Gamma^{-2}\int_a^\infty M(r)e^{-2r}\dd r<\infty.
\end{equation}
\end{lemma}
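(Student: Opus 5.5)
The plan is to construct $V_\Gamma$ directly as the value function $W(a):=\sup_k\mathcal P_a(k)$, to verify that $W$ is a nonnegative $C^1$ solution of~\eqref{eq:kzero-bellman} with $W(r)\to0$, and then to prove uniqueness by a comparison (verification) argument. Identity~\eqref{eq:kzero-control-value} is then immediate.

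\emph{Step 1: bounds and the tail majorant.} Every admissible control satisfies $0\le\sigma_{a,k}\le1$ and $0\le p_\Gamma(r,k)\le M(r)/D_\Gamma(r)^2$. Hence
\[
0\le W(a)\le\int_a^\infty \frac{M(r)}{D_\Gamma(r)^2}\dd r.
\]
Since $D_\Gamma(r)\ge c_\Gamma e^r$, this gives~\eqref{eq:kzero-tail-majorant}, which is finite because $M$ grows only polynomially. In particular $W(a)\to0$ as $a\to\infty$.

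\emph{Step 2: the ODE on a truncated horizon.} The Hamiltonian $H_\Gamma(r,v)$ is continuous, nonnegative and nonincreasing in $v\ge0$. It is also Lipschitz in $v$ on compact sets, with $|\partial_vH_\Gamma|=k_*/D_\Gamma\le r/D_\Gamma(r)$. For $T>0$, solve $-V_T'=H_\Gamma(r,V_T)$ backward on $[0,T]$ with $V_T(T)=0$. Because $H_\Gamma\ge0$, the solution stays nonnegative as $r$ decreases, and the Lipschitz bound gives global existence on $[0,T]$. By the standard dynamic programming verification, $V_T(a)$ equals the supremum of the truncated payoff $\int_a^T\sigma_{a,k}p_\Gamma\dd r$.

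For any control $k$, differentiate $\sigma_{a,k}(r)V_T(r)$:
\[
\frac{\dd}{\dd r}\bigl(\sigma_{a,k}(r)V_T(r)\bigr)=-\sigma_{a,k}(r)\Bigl(\frac{kV_T}{D_\Gamma}+H_\Gamma(r,V_T)\Bigr)\le-\sigma_{a,k}(r)\,p_\Gamma(r,k).
\]
Integrating from $a$ to $T$ shows that $V_T(a)$ dominates the truncated payoff. Choosing the feedback $k=k_*(r,V_T(r))$, which is continuous and therefore measurable, gives equality.

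\emph{Step 3: passing to the infinite horizon.} $V_T$ is nondecreasing in $T$, and it is bounded by the majorant of Step 1 uniformly in $T$. Let $V:=\lim_TV_T$. Passing to the limit in the integral form
\[
V_T(a)=\int_a^TH_\Gamma(r,V_T(r))\dd r,
\]
by monotone and dominated convergence with $H_\Gamma(r,v)\le M(r)/D_\Gamma^2$, gives $V(a)=\int_a^\infty H_\Gamma(r,V)\dd r$. Hence $V$ is $C^1$, solves~\eqref{eq:kzero-bellman}, and tends to $0$.

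Next, the truncated payoff of any control increases to $\mathcal P_a(k)$ as $T\to\infty$, so $V\ge W$. For the reverse inequality, use the verification identity of Step 2 on $[a,T]$ with $V$ in place of $V_T$. For any control it gives
\[
\int_a^T\sigma_{a,k}p_\Gamma\dd r\le V(a)-\sigma_{a,k}(T)V(T).
\]
Letting $T\to\infty$ gives $\mathcal P_a(k)\le V(a)$, using $V(T)\to0$ and $\sigma_{a,k}\le1$. The optimal feedback $k_*(r,V(r))$ attains equality. Thus $V=W$.

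\emph{Step 4: uniqueness.} Let $\widetilde V$ be any nonnegative $C^1$ solution with $\widetilde V(\infty)=0$. The same verification computation, applied with $\widetilde V$, shows $\widetilde V(a)\ge\mathcal P_a(k)$ for every control, with equality along its own feedback. Hence $\widetilde V=W$.

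I expect the main obstacle to be the limiting arguments as $T\to\infty$. The boundary term $\sigma_{a,k}(T)\widetilde V(T)$ must vanish, and the interchange of limit and integral requires the integrable dominating bound $M(r)/D_\Gamma(r)^2$. Both are supplied by Step 1.
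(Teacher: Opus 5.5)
Your proposal is correct. The existence part is the same as the paper's: backward solution of the truncated terminal value problem, the finite-horizon verification identity with equality along $k_*(r,V_T(r))$, monotonicity of $V_T$ in $T$, and passage to the limit in the integral equation using the majorant $M/D_\Gamma^2$.

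You differ from the paper in two places: how the limit is identified with the value, and how uniqueness is proved.

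\textbf{The paper's route.} It obtains \eqref{eq:kzero-control-value} by exchanging suprema, $\sup_{\bar r}\sup_k=\sup_k\sup_{\bar r}$. It proves uniqueness by a purely ODE argument. The difference $\Delta$ of two solutions satisfies $\Delta'=\beta\Delta$ with $\beta\ge0$, because $\partial_vH_\Gamma\le0$. Hence $|\Delta|$ is nondecreasing, and it must vanish because it tends to zero.

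\textbf{Your route.} You run the verification computation on the infinite horizon for an arbitrary nonnegative $C^1$ solution $\widetilde V$ vanishing at infinity. This shows directly that $\widetilde V(a)=\sup_k\mathcal P_a(k)$. It gives uniqueness and the value identity at once. It also shows that the supremum is attained by the feedback $k_*(r,V_\Gamma(r))$.

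The cost is that you need $\widetilde V\ge0$, so that $k_*\in[0,r]$ and the closed-form $H_\Gamma$ really is the maximum over $[0,r]$. This is harmless here: any solution is nonincreasing with limit zero, hence nonnegative. The paper's comparison needs only the sign of $\partial_vH_\Gamma$.

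\textbf{A small slip in Step 3.} The inequality you derive ``for the reverse inequality'' is again $\mathcal P_a(k)\le V(a)$, which is the direction $V\ge W$ you already had. The actual reverse direction $V\le W$ comes from the equality along the feedback. More simply, it follows from $V_T(a)\le W(a)$, since truncated payoffs never exceed full ones.
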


The proof is a standard verification argument and is given in Appendix~\ref{app:kzero-proofs}.
Since $h$ is an admissible control, \eqref{eq:kzero-envelope} and Lemma~\ref{lem:kzero-bellman} give
\begin{equation}\label{eq:kzero-universal-bound}
 \ell_\Gamma(K_0,\lambda)\le\Gamma+V_\Gamma(0)
 \qquad\text{for every }\lambda\in\Lambda\text{ and }\Gamma\in(0,1).
\end{equation}
On the other hand, the zero control gives $V_\Gamma(0)\ge\mathcal P_0(0)=J(\Gamma)-\Gamma$ by~\eqref{eq:iid-load-r}.

\subsubsection{The constant \texorpdfstring{$\Gamma_0$}{Gamma0}}\label{sec:kzero-constant}

\begin{lemma}\label{lem:kzero-root}
The function $\Psi(\Gamma):=\Gamma+V_\Gamma(0)$ is continuous and strictly increasing on $(0,1)$, with $\Psi(\Gamma)\to0$ as $\Gamma\downarrow0$ and $\Psi(\Gamma)\to\infty$ as $\Gamma\uparrow1$.
\end{lemma}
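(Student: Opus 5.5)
We will work throughout with the control representation $V_\Gamma(a)=\sup_k\mathcal P_a(k)$ of Lemma~\ref{lem:kzero-bellman}. This turns every property of $\Psi$ into a property of a supremum of payoffs whose integrands depend explicitly on $\Gamma$ through $D_\Gamma(r)=c_\Gamma e^r+1+r$.

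\emph{Monotonicity.} The first step is to note that $c_\Gamma=(1-\Gamma)/\Gamma$ is strictly decreasing in $\Gamma$, so $D_\Gamma(r)$ is strictly decreasing in $\Gamma$ for every $r$. For each fixed admissible control $k$, the weight $\sigma_{0,k}(r)=\exp(-\int_0^r k/D_\Gamma)$ is then nonincreasing in $\Gamma$ in the wrong direction, since a smaller $D_\Gamma$ strengthens the discount. The payoff rate $p_\Gamma(r,k)=(r-k)(1+k)/D_\Gamma^2$, on the other hand, increases. Because of this tension, pointwise monotonicity of $\mathcal P_0(k)$ is not immediate. I would first try a rescaling argument. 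Write $D_\Gamma=c_\Gamma e^r+1+r$ and compare $\Gamma<\Gamma'$ directly through the Bellman equation. Set $W=V_{\Gamma'}$ and check that $W$ is a supersolution for the $\Gamma$ problem, that is, $-W'\ge H_\Gamma(r,W)$; a comparison argument for the ODE backward from $r=\infty$ would then give $V_\Gamma\le V_{\Gamma'}$. This requires $H_\Gamma(r,v)\le H_{\Gamma'}(r,v)$ for $v\ge0$. Since $H_\Gamma(r,v)=\max_k\{(r+k(r-1)-k^2)/D^2-kv/D\}$, I expect this to hold for every $k$ when $v$ is small relative to $1/D$, but not in general. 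The comparison is therefore the delicate point.

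If pointwise comparison of Hamiltonians fails, my fallback is the following. Strict monotonicity of $\Psi$ only needs $V_\Gamma(0)$ to be nondecreasing, because the summand $\Gamma$ is already strictly increasing. Nondecrease of $V_\Gamma(0)$ can be obtained from Lemma~\ref{lem:gamma-monotone}-type reasoning. Specifically, the envelope bound $\Gamma+\mathcal P_0(k)$ should be recast as the load-type expression $\Gamma\int \sigma\,W\,\dd y$ in the original time variable, where monotonicity in $\Gamma$ follows as in Appendix~\ref{app:gamma-monotone}. This recasting would be done by defining, for a control $k$, a fictitious allocation with $B'(t_r)=k(r)/r$ and threshold $1-Y=(1+k)e^{-r}$, and redoing the integration by parts of Lemma~\ref{lem:load-parts} in reverse. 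I expect this monotonicity step to be the main obstacle.

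\emph{Continuity.} For fixed $k$ the integrand of $\mathcal P_0(k)$ is continuous in $\Gamma$. On compact subintervals $[\Gamma_1,\Gamma_2]\subset(0,1)$ it is dominated by $M(r)/D_{\Gamma_2}(r)^2\le c_{\Gamma_2}^{-2}M(r)e^{-2r}$, which is integrable by \eqref{eq:kzero-tail-majorant}. The discount factor is also uniformly Lipschitz in $\Gamma$. Together these give equicontinuity of $\Gamma\mapsto\mathcal P_0(k)$ over all admissible $k$. Hence the supremum $V_\Gamma(0)$ is continuous, and so is $\Psi$.

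\emph{Limits.} As $\Gamma\downarrow0$, $c_\Gamma\to\infty$, and the bound \eqref{eq:kzero-tail-majorant} gives $V_\Gamma(0)\le c_\Gamma^{-2}\int_0^\infty M(r)e^{-2r}\dd r\to0$, so $\Psi\to0$. As $\Gamma\uparrow1$, the zero control gives $V_\Gamma(0)\ge J(\Gamma)-\Gamma$, so $\Psi(\Gamma)\ge J(\Gamma)\to\infty$, as stated after \eqref{eq:iid-constant}. Alternatively, one can see this directly: $\int_0^\infty r/(c_\Gamma e^r+1+r)^2\dd r$ diverges logarithmically as $c_\Gamma\to0$ by monotone convergence.
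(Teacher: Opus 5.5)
Your continuity and limit arguments are fine and agree with the paper. The monotonicity of $\Gamma\mapsto V_\Gamma(0)$, however, is the substance of the lemma, and you never establish it: you doubt your first route, leave the fallback unexecuted, and call this step the main obstacle. The tension you identify between the discount $\sigma_{0,k}$ and the rate $1/D_\Gamma^2$ actually resolves in favour of monotonicity for every fixed control, because $k\le r$. Since $\partial_\Gamma D_\Gamma(r)=-e^r/\Gamma^2$,
\[
 \partial_\Gamma\log\frac{\sigma_{0,k}(r)}{D_\Gamma(r)^2}
 =\frac1{\Gamma^2}\left[\frac{2e^r}{D_\Gamma(r)}-\int_0^r\frac{k(s)e^s}{D_\Gamma(s)^2}\dd s\right].
\]
Using $k(s)\le s$, $(e^r/D_\Gamma)'=re^r/D_\Gamma^2$ and $1/D_\Gamma(0)=\Gamma$, the integral is at most $e^r/D_\Gamma(r)-\Gamma$. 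Hence the bracket is at least $e^r/D_\Gamma(r)+\Gamma>0$, so every payoff $\mathcal P_0(k)$ is nondecreasing in $\Gamma$, and so is their supremum. The same bracket is at most $2/c_\Gamma$. That bound is what justifies the uniform Lipschitz property of the discount, which your continuity step asserts without proof.

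Your first route can also be completed, and it would give a genuinely different proof. You worried that $p_\Gamma(r,k)-kv/D_\Gamma$ is not monotone in $\Gamma$ for each fixed $k$. That is true but irrelevant, since only the maximum enters. In the closed form $H_\Gamma(r,v)=\bigl(r+\pos{r-1-D_\Gamma(r)v}^2/4\bigr)/D_\Gamma(r)^2$ of \eqref{eq:kzero-hamiltonian}, every ingredient increases as $D_\Gamma$ decreases when $v\ge0$. So $H_\Gamma\le H_{\Gamma'}$ on $v\ge0$ whenever $\Gamma<\Gamma'$. Then $W=V_{\Gamma'}\ge0$ satisfies $-W'\ge H_\Gamma(r,W)$. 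Apply the verification argument from the proof of Lemma~\ref{lem:kzero-bellman} to $\sigma_{a,k}W$, with the $\Gamma$-discount. This gives $\mathcal P_a(k)\le W(a)$ for every control, hence $V_\Gamma\le V_{\Gamma'}$. This route compares whole value functions at once; the paper's per-control derivative additionally delivers the common Lipschitz constant used for continuity.

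Your fallback, by contrast, fails as stated. The fictitious threshold with $1-y(t_r)=(1+k(r))e^{-r}$ is not monotone for a general measurable control, so you cannot revert to $\Gamma\int W\dd y$. What does work is pointwise monotonicity in $\Gamma$ of the integrand in \eqref{eq:load-parts}, written in the $t$-variable as $\Gamma\,W(t)\,\Gamma(1-B'(t))(1-\Gamma t)^{-1}(1-y(t))$. This is the paper's computation in other coordinates.
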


The proof, which differentiates the payoff of each fixed control with respect to $\Gamma$, is given in Appendix~\ref{app:kzero-proofs}.

\begin{proof}[Proof of Theorem~\ref{thm:kzero-analytic}]
Lemma~\ref{lem:kzero-bellman} shows that $V_\Gamma$ is well defined.
By Lemma~\ref{lem:kzero-root} and the intermediate value theorem, there is a unique $\Gamma_0\in(0,1)$ with $\Psi(\Gamma_0)=1$.
For $\Gamma\le\Gamma_0$ and $\lambda\in\Lambda$, \eqref{eq:kzero-universal-bound} gives $\ell_\Gamma(K_0,\lambda)\le\Psi(\Gamma)\le\Psi(\Gamma_0)=1$.
Proposition~\ref{prop:scalar-feasibility} completes the proof.
\end{proof}

Since $\Psi(\Gamma)\ge J(\Gamma)$ for every $\Gamma$, the root satisfies $\Gamma_0\le\Gamma_\star$, in agreement with Proposition~\ref{prop:iid-guarantee}. To estimate $\Gamma_0$ numerically, we truncate the control problem
at a horizon $\bar r\ge1$. By~\eqref{eq:kzero-tail-majorant},
this changes its value by at most
\begin{equation}\label{eq:kzero-truncation}
 c_\Gamma^{-2}\int_{\bar r}^\infty M(r)e^{-2r}\dd r
 =c_\Gamma^{-2}e^{-2\bar r}\,\frac{2\bar r^2+6\bar r+5}{16}.
\end{equation}
The truncated value is the solution at $r=0$ of~\eqref{eq:kzero-bellman} on $[0,\bar r]$ with terminal condition $V(\bar r)=0$ (see Appendix~\ref{app:kzero-proofs}).
Integrating this terminal value problem numerically with $\bar r=40$ and solving $\Psi(\Gamma)=1$ by a bracketing root finder gives the decimal in~\eqref{eq:kzero-root}; the digits are unchanged for $\bar r\in\{20,30,60\}$.
This decimal approximates the exact constant defined by~\eqref{eq:kzero-root}.

\subsection{Infeasibility of \texorpdfstring{$K_0$}{K0} at the I.I.D.\ Constant \texorpdfstring{$\Gamma_\star$}{Gamma-star}}
\label{sec:kzero-limit}

The small-profile tests of Section~\ref{sec:small-profiles} are only necessary conditions.
The following explicit profile shows that $K_0$ is not universally feasible at $\Gamma_\star$:
\begin{equation}\label{eq:k0-explicit-counterexample}
 \lambda(u):=
 \begin{cases}
  1/100,&0\le u\le9/10,\\
  1,&9/10<u\le1.
 \end{cases}
\end{equation}

\begin{proposition}\label{prop:k0-counterexample}
For the profile~\eqref{eq:k0-explicit-counterexample}, $\ell_{\Gamma_\star}(K_0,\lambda)>1$.
Consequently, $K_0$ is universally feasible at every $\Gamma\le\Gamma_0$ but at no $\Gamma\ge149/200$; in particular, it is not universally feasible at $\Gamma_\star$.
\end{proposition}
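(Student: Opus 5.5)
The plan is to compute $\ell_{\Gamma}(K_0,\lambda)$ for the two-level profile in the exponential coordinates of Lemma~\ref{lem:kzero-coordinates}, obtain a rigorous numerical lower bound that exceeds one at $\Gamma=149/200$, and then use Lemma~\ref{lem:gamma-monotone}. Since $149/200=0.745<\Gamma_\star$, monotonicity gives $\ell_{\Gamma_\star}(K_0,\lambda)\ge\ell_{149/200}(K_0,\lambda)>1$, and likewise for every $\Gamma\ge149/200$. Proposition~\ref{prop:scalar-feasibility} then rules out universal feasibility at these ratios. Feasibility for $\Gamma\le\Gamma_0$ is Theorem~\ref{thm:kzero-analytic}.

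\textbf{Step 1: closed forms.} With $r_0:=\log 10$, so that $u_{r_0}=9/10$, the cumulative hazard is
\[
h(r)=\min(r,r_0)/100+(r-r_0)_+ .
\]
From this I obtain closed forms for:
\begin{itemize}
\item the allocation $B(t_r)=\int_0^re^{-s}h(s)\dd s$;
\item the survival factor $\sigma(r)$ in~\eqref{eq:kzero-survival}, an explicit integral of $h/D_\Gamma$ that is elementary up to exponential-integral terms, which I will bound rather than evaluate;
\item the area curve $A_\lambda$, piecewise from Lemma~\ref{lem:profile-bijection}.
\end{itemize}
On $u\le 9/10$ the curve is the constant-profile curve~\eqref{eq:constant-profile-area} with $c=1/100$. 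For $u>9/10$, the profile equals one, so $x_\lambda(u)=1-(1-u)e^{h_\lambda(u)}$ stays constant at $x_*=1-\tfrac1{10}\,10^{1/100}e^{\,\cdot}$. Concretely, $e^{h_\lambda(u)}=(1-u)^{-1}\cdot$const, so $x_\lambda$ freezes at $x_*=1-10^{1/100}/10$, and $A_\lambda$ continues with slope one on $[x_*,1]$ by~\eqref{eq:profile-extension-main}.

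\textbf{Step 2: the threshold path and the exact load.} With these formulas, $Y(r)=A_\lambda^{\leftarrow}(B(t_r))$ is explicit on the linear piece: $Y=B-\mu+1$ wherever $B(t_r)\ge A_\lambda(x_*)$. On the earlier piece it is defined implicitly through the constant-profile curve. Since $y(0)=0$, the load~\eqref{eq:kzero-r-load} is a one-dimensional integral of explicit or monotone implicit quantities. The mechanism I expect is self-competition. For $r>r_0$ the lag factor $R(r)$ approaches $1+h(r)$ while $\sigma$ decays only through $h/D_\Gamma$, and this pushes the integral above the i.i.d.\ value $J(\Gamma)-\Gamma$.

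\textbf{Step 3: certified lower bound.} To avoid solving the implicit equation exactly, I will drop parts of the integral (all terms are nonnegative) and lower-bound the integrand on the rest. I will use a lower bound on $1-Y$ obtained from an upper bound on $A_\lambda$, which is valid because $A_\lambda$ is convex and $1$-Lipschitz. I will also use a lower bound on $\sigma$ from $h\le$ its explicit form and $D_\Gamma\ge c_\Gamma e^r$. The remaining integrals are then evaluated with interval arithmetic or explicit quadrature error bounds, and I will check that the total exceeds one with visible margin at $\Gamma=0.745$.

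\textbf{Main obstacle.} The hard part is keeping the bounds tight enough. The excess over one is presumably small, because $J(\Gamma_\star)=1$ and $149/200$ is only slightly below $\Gamma_\star$, so crude bounds on $Y$ in the region $r<r_0$ could erase it. There I will use the exact constant-profile formula, and I will only linearize on a small interval near the transition at $r_0$. If the margin is still too thin, I will subdivide the $r$-range and certify each piece numerically.
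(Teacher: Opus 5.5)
Your route is the paper's: evaluate~\eqref{eq:kzero-r-load} for this profile at $\widehat\Gamma=149/200$, show the value exceeds one (the paper reports $1.00787\ldots$ and certifies $>1.0028$), pass to every $\Gamma\ge\widehat\Gamma$ by Lemma~\ref{lem:gamma-monotone}, and cite Theorem~\ref{thm:kzero-analytic} for $\Gamma\le\Gamma_0$. However, Step~3 bounds the threshold in the wrong direction. The upper inverse reverses order in the curve. So an upper bound $\bar A\ge A_\lambda$ gives $\bar A^{\leftarrow}(B(t_r))\le Y(r)$, which is an \emph{upper} bound on $1-Y$ and hence on the load. That cannot show the load exceeds one. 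You need a lower bound on $A_\lambda$, for instance supporting lines, which convexity provides.

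In fact no approximation of $A_\lambda$ is needed. The curve $A_\lambda=\max\{A_{1/100},\,\mu+x-1\}$ with $\mu=109/1000$ is explicit. The threshold already reaches the slope-one piece at $r_1\approx2.291<r_0$. Hence $1-Y(r)=\mu-t_r/100$ on $[r_1,r_0]$, and $1-Y(r)=(1+h(r))e^{-r}$ \emph{exactly}, not asymptotically, for $r\ge r_0$. On $[0,r_1]$ one has $\bar A_{1/100}(Y(r))=t_r$. The inequality $e^{-z}\ge1-z$ gives $\bar A_{1/100}\ge\phi$, hence $Y(r)\le u_r$ and $R(r)\ge1$. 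This is a valid lower bound that costs little.

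Two smaller points. First, certify $149/200<\Gamma_\star$ rather than reading it off the decimal in~\eqref{eq:iid-constant}, because $\Gamma_\star$ is defined only implicitly by $J(\Gamma_\star)=1$. The paper does this by also bounding $J(149/200)<0.999643$ and using that $J$ is increasing; your interval computation can do the same at no extra cost. Second, your survival bound via $D_\Gamma\ge c_\Gamma e^r$ goes in the right direction but is wasteful. It amounts to $\sigma(r)\ge e^{-B(t_r)/c_\Gamma}$ and gives away a noticeable fraction of the margin of roughly $8\times10^{-3}$. Since $D_\Gamma$ is explicit, enclose $\int_0^r h/D_\Gamma$ directly.
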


\begin{proof}
For this profile, every ingredient of~\eqref{eq:kzero-r-load} is explicit.
Put $r_0:=\log10$ and $\mu:=\mu_\lambda=109/1000$.
By~\eqref{eq:kzero-h} and~\eqref{eq:kzero-r-allocation},
\[
 h(r)=\frac{r}{100},\quad B(t_r)=\frac{t_r}{100}\quad(r\le r_0);
 \qquad
 h(r)=r-\frac{99}{100}\,r_0,\quad B(t_r)=\mu-\bigl(1+h(r)\bigr)e^{-r}\quad(r\ge r_0).
\]
By Lemma~\ref{lem:profile-bijection}, $A_\lambda$ coincides on $[0,x_*]$, where $x_*:=1-10^{-99/100}$, with the constant-profile area curve $A_{1/100}$ given by~\eqref{eq:constant-profile-area}, and it continues with slope one on $[x_*,1]$.
Since $A_{1/100}$ is $1$-Lipschitz with slope less than one, this means
\[
 A_\lambda(x)=\max\Bigl\{x-\tfrac{99}{100}+\tfrac{99}{100}(1-x)^{100/99},\ \mu+x-1\Bigr\},
 \qquad x\in[0,1).
\]
Moreover, $y(0)=0$ because $\lambda>0$.
Thus $\sigma$ is explicit up to a one-dimensional integral, and $Y(r)=A_\lambda^{-1}(B(t_r))$ is explicit up to the inversion of $A_\lambda$.
Set $\widehat\Gamma:=149/200$.
Direct evaluation of~\eqref{eq:kzero-r-load} and~\eqref{eq:iid-load} gives
\begin{equation}\label{eq:k0-explicit-certificate}
 \ell_{\widehat\Gamma}(K_0,\lambda)>1.0028,
 \qquad
 J(\widehat\Gamma)<0.999643<1.
\end{equation}
Numerically, the two quantities equal $1.00787\ldots$ and $0.99909\ldots$, so both inequalities hold with margins larger than $5\times10^{-4}$.
Since $J$ is increasing and $J(\Gamma_\star)=1$ by~\eqref{eq:iid-constant},
the second inequality implies $\widehat\Gamma<\Gamma_\star$.
The first inequality and Lemma~\ref{lem:gamma-monotone} give
$\ell_\Gamma(K_0,\lambda)>1$ for every $\Gamma\ge\widehat\Gamma$,
in particular for $\Gamma=\Gamma_\star$.
Proposition~\ref{prop:scalar-feasibility} therefore rules out universal
feasibility at every such $\Gamma$, while
Theorem~\ref{thm:kzero-analytic} gives universal feasibility for
every $\Gamma\le\Gamma_0$.
\end{proof}

\paragraph{How self-competition defeats $K_0$.}
On the exponential scale, the profile~\eqref{eq:k0-explicit-counterexample} has hazard rate $1/100$ up to level $r_0:=\log10$ and hazard rate one afterwards.
The item is thus rarely the maximum when the maximum is large (maximum rank at most $9/10$), and it is always the maximum when the maximum is small.
By Lemma~\ref{lem:rank-representation}, its value has rank above $9/10$ with probability $e^{-h(r_0)}=10^{-1/100}\approx0.977$.
All these values have the same loss score
\[
 x_*:=1-10^{-99/100}\approx0.8977,
\]
because $x_\lambda$ is constant on $(9/10,1)$.
Indeed, the other items place no probability on this range of values, so such a value beats them exactly when all of them lie below the range, regardless of its own rank.
This is self-competition in its most extreme form.
Consequently, the area curve has slope one on $[x_*,1]$ by~\eqref{eq:profile-extension-main}.
The threshold reaches $x_*$ at a level $r_1\approx2.291$, and Lemma~\ref{lem:kzero-inverse} holds with equality for $r\ge r_0$:
\[
 1-Y(r)=\bigl(1+h(r)\bigr)e^{-r},
 \qquad
 h(r)=r-\tfrac{99}{100}\,r_0
 \qquad(r\ge r_0).
\]
While the i.i.d.\ path satisfies $1-q(t_r)=e^{-r}$, the threshold of this item thus stays behind by a factor that grows linearly in $r$.
The reason is a mismatch of schedules.
The kernel $K_0$ releases the winning probability at ranks $u\in(9/10,1]$ as the i.i.d.\ threshold sweeps the levels $[u,1]$, which is the right schedule when values of rank $u$ have loss score close to $u$, as they do for rarely winning items.
Here, these values already qualify at the lower level $x_*$, so the same allocation buys a lower threshold.
The late increments of the threshold are the expensive ones in the
load formula~\eqref{eq:scalar-load}, and the self-survival discount
does not compensate for them.

\begin{remark}[Numerical observations]
The mechanism above appears to determine the feasibility threshold of $K_0$, and Theorem~\ref{thm:kzero-analytic} appears to be close to tight.
The observations in this remark are numerical and are not used in the proofs.
Consider the two-phase profiles $\lambda=\delta\ind_{[0,u_0]}+\ind_{(u_0,1]}$ with small $\delta>0$.
For them, Lemma~\ref{lem:kzero-inverse} holds with equality in the second phase, and as $\delta\downarrow0$ their loads converge to $\Gamma+\mathcal P_0(k)$ for the control $k(r)=\pos{r+\log(1-u_0)}$.
Numerically, $\delta=10^{-4}$ and $u_0=0.86$ give $\ell_{0.74}(K_0,\lambda)\approx1.00018>1$, so the exact feasibility threshold of $K_0$ appears to lie in $[\Gamma_0,0.74)$, within about $10^{-3}$ of $\Gamma_0$.
The remaining slack in Theorem~\ref{thm:kzero-analytic} comes from the threshold-lag step rather than from the relaxation of Section~\ref{sec:kzero-control}.
Indeed, at $\Gamma_0$ the maximizing feedback control $k_*(r,V_{\Gamma_0}(r))$ is numerically nondecreasing with slope at most $0.57$, and it vanishes for $r\le1.46$.
With a small regularization $\varepsilon r$ added, it is therefore the cumulative hazard $h$ of a genuine profile whose bound~\eqref{eq:kzero-envelope} is within $10^{-4}$ of $\Psi(\Gamma_0)=1$.
For that profile, however, Lemma~\ref{lem:kzero-inverse} is not tight, and its actual load at $\Gamma_0$ is about $0.989$.
Conversely, the two-phase profiles make Lemma~\ref{lem:kzero-inverse} tight, but their loads at $\Gamma_0$ reach only about $0.998$.
In any case, Proposition~\ref{prop:k0-counterexample} shows that the shortfall relative to $\Gamma_\star$ is a property of $K_0$ itself, not of its analysis: reaching $\Gamma_\star$ requires changing how the kernel treats profiles of positive size.
\end{remark}

\section{A Kernel with Competitive Ratio Above \texorpdfstring{$0.745$}{0.745}}
\label{sec:final-kernel}

We now perturb the baseline kernel $K_0$ to obtain a kernel $K_1$ with
competitive ratio strictly greater than $0.745$.
We first describe the numerical observation that motivates the construction,
then specify the kernel exactly, and finally state its guarantee
and outline its proof.

\subsection{Motivation}\label{sec:final-motivation}

Section~\ref{sec:kzero-limit} shows that $K_0$ fails at $\Gamma_\star$ and traces the failure to a
mismatch of schedules: $K_0$ releases the winning probability at rank $u$ only after the i.i.d.\
threshold has passed $u$, which is too late for an item whose values already qualify at a
lower level. A simple remedy is to mix $K_0$ with the independent kernel
$K_{\mathrm{ind}}(t,u)=t$ of Section~\ref{sec:independent}. In the mixture, a small fraction
of the winning probability at each maximum rank is allocated uniformly over time, so an item
receives some allocation before the i.i.d.\ threshold reaches its ranks, which can reduce the
threshold lag. Numerically, the mixture $0.972K_0+0.028K_{\mathrm{ind}}$ appears to be
universally feasible at ratio $0.74$, above the numerically observed threshold of $K_0$
reported at the end of Section~\ref{sec:kzero-limit}.

The mixture uses the same weight at every rank and time. A finer numerical search transfers
mass between selected cells of a grid in the rank and time coordinates while preserving both
marginals of the coupling. It leads to the kernel $K_1$ below, for which
Theorem~\ref{thm:certified-kernel} gives a rigorous guarantee above $0.745$. Every kernel
gives a constant profile the same allocation (Section~\ref{sec:iid-blind}), so the transfers
act only on nonconstant profiles; unlike $K_0$, the kernel $K_1$ does not make every
small-profile threshold path coincide with $q$.

\subsection{\texorpdfstring{The Kernel $K_1$}{The Kernel K1}}\label{sec:final-definition}

We specify $K_1$ as a perturbation of the coupling induced by $K_0$.
Use the coordinates
\[
 u_a=1-e^{-a},\qquad t_r=1-(1+r)e^{-r},\qquad a,r\ge0,
\]
from~\eqref{eq:kzero-r-coordinates}.
Here $a$ parametrizes maximum rank and $r$ parametrizes normalized
algorithmic time.
The probabilistic description in Section~\ref{sec:kzero-definition}
gives the baseline joint density $p_0(a,r)=e^{-r}\ind\{a<r\}$: the pair $(a,r)=(\xi,\xi+\xi')$ has density $e^{-a}e^{-(r-a)}$ on $\{a<r\}$.
Its marginals are $e^{-a}$ and $re^{-r}$, which become uniform under
$a\mapsto u_a$ and $r\mapsto t_r$.

Let $\Delta=2/5$ and set
\[
 I_i=[i\Delta,(i+1)\Delta),\quad 0\le i<16,
 \qquad J_j=[j\Delta,(j+1)\Delta),\quad 0\le j<25.
\]
Write $E_{ij}$ for the $16\times25$ matrix unit and put
\[
 \vartheta=\frac{999999999}{10^9},\qquad
 \mathcal I_1=\{1,2,3,4\},\quad
 \mathcal I_2=\{5,6,7\},\quad
 \mathcal I_3=\{8,\ldots,15\}.
\]
The exact perturbation of the cell masses is
\begin{equation}\label{eq:cert-delta}
 \delta=\frac{\vartheta}{10^{21}}
 \sum_{g=1}^3\sum_{i\in\mathcal I_g}\sum_{j=1}^{23}N_{g,j}
 (E_{ij}+E_{0,24}-E_{i,24}-E_{0,j}),
\end{equation}
where the $69$ integers $N_{g,j}$ are listed in
Table~\ref{tab:cert-kernel} in Appendix~\ref{app:cert-coefficients}.
Each summand transfers mass between the four cells in rows $(0,i)$
and columns $(j,24)$ according to
\[
 \begin{pmatrix}-1&1\\1&-1\end{pmatrix},
\]
preserving every row and column sum.

Let $\delta_{ij}$ be the entries of $\delta$, and define
\[
 \omega_j=e^{-j\Delta}-e^{-(j+1)\Delta},\qquad
 w_{ij}=\frac{\delta_{ij}}{\Delta\omega_j}.
\]
Set
\begin{equation}\label{eq:cert-p}
 p_1(a,r)=e^{-r}\left[
 \ind\{a<r\}+\sum_{i=0}^{15}\sum_{j=0}^{24}
 w_{ij}\ind\{a\in I_i\}\ind\{r\in J_j\}\right].
\end{equation}
The kernel is
\begin{equation}\label{eq:cert-K}
 K_1(t_r,u_a):=e^a\int_0^r p_1(a,v)\dd v,
 \qquad a,r\ge0.
\end{equation}
Set $K_1(1,u)=1$ for $u<1$ and $K_1(t,1)=t$.
These formulas and the integer table specify a single kernel for every
item and every input.
The perturbation is supported on the rectangle $[0,32/5)\times[0,10)$ of the $(a,r)$-plane, and for each fixed $a$ it integrates to zero over $r$.
Hence $K_1(t,u)=K_0(t,u)$ whenever $u\ge1-e^{-32/5}$ or $t\ge t_{10}$ (Lemma~\ref{lem:cert-kernel-valid}).

\subsection{\texorpdfstring{The Guarantee}{The Guarantee}}\label{sec:final-guarantee}

\begin{figure}[tb]
\centering
\begin{tikzpicture}[font=\footnotesize]
  \def\w{68mm}
  \node[align=flush center] (hl) at (37mm,0) {\small\textbf{Baseline kernel $K_0$}\\[-1pt]
    analytic proof (Thm.~\ref{thm:kzero-analytic})};
  \node[align=flush center] (hr) at (123mm,0) {\small\textbf{Final kernel $K_1$}\\[-1pt]
    computer-assisted proof (Thm.~\ref{thm:certified-kernel})};
  \foreach \row/\y in {1/-13.5, 2/-30.5, 3/-47.5, 4/-64.5} {
    \coordinate (L\row) at (37mm,\y mm); \coordinate (R\row) at (123mm,\y mm); }
  \node[flowbox, text width=\w, minimum height=12.5mm] (l1) at (L1)
    {\textbf{Load along the hazard path.} In exponential coordinates, the load is an integral
     driven by the cumulative hazard $h$ (Lemma~\ref{lem:kzero-coordinates}).};
  \node[flowbox, text width=\w, minimum height=12.5mm] (r1) at (R1)
    {\textbf{Load along the hazard path.} The same representation
     (Lemma~\ref{lem:cert-load-identity}); the perturbation changes the allocation only through
     four block masses (App.~\ref{sec:cert-finite}).};
  \node[flowbox, text width=\w, minimum height=12.5mm] (l2) at (L2)
    {\textbf{Threshold lag.} The rescaled threshold complement satisfies $R\le1+h$
     (Lemma~\ref{lem:kzero-inverse}).};
  \node[flowbox, text width=\w, minimum height=12.5mm] (r2) at (R2)
    {\textbf{Regional threshold bounds.} Several estimates for $R$, plus a correction term
     that recovers slack (App.~\ref{app:cert-inverse-bounds}--\ref{app:cert-credit}).};
  \node[flowbox, text width=\w, minimum height=12.5mm] (l3) at (L3)
    {\textbf{Worst path.} An adversary chooses $h(r)\in[0,r]$; the Bellman equation gives the
     value $V_\Gamma$ (Lemma~\ref{lem:kzero-bellman}).};
  \node[flowbox, text width=\w, minimum height=12.5mm] (r3) at (R3)
    {\textbf{Worst path in a region.} Backward comparison over all admissible paths, with
     state bands or a potential (App.~\ref{app:cert-regional-comparison},
     \ref{app:cert-direct-updates}--\ref{app:cert-potential-construction}).};
  \node[flowbox, line width=0.9pt, text width=\w, minimum height=12.5mm] (l4) at (L4)
    {\textbf{Constant.} $\Gamma_0\approx0.7389$ is the root of $\Gamma+V_\Gamma(0)=1$
     (Lemma~\ref{lem:kzero-root}).};
  \node[flowbox, line width=0.9pt, text width=\w, minimum height=12.5mm] (r4) at (R4)
    {\textbf{Finite cover.} Every nonzero mass vector lies in a region with a certified bound
     below one (Lemma~\ref{lem:cert-finite-data}); the zero vector is treated
     by~\eqref{eq:cert-zero-scale}.};
  \foreach \i/\j in {1/2,2/3,3/4} { \draw[flowarrow] (l\i) -- (l\j); \draw[flowarrow] (r\i) -- (r\j); }
  \foreach \i in {1,2,3,4} { \draw[-{Stealth[length=1.8mm]}, dashed, draw=black!50] (l\i) -- (r\i); }
\end{tikzpicture}
\caption{Proof architecture. Each step of the proof of Theorem~\ref{thm:certified-kernel}
(right) refines the corresponding step of the proof of Theorem~\ref{thm:kzero-analytic}
(left).}
\label{fig:proof-architecture}
\end{figure}

Set
\begin{equation}\label{eq:cert-gamma}
 \Gamma_1:=\frac{74500001}{100000000}=0.74500001>0.745.
\end{equation}
\begin{theorem}[Universal guarantee for $K_1$]\label{thm:certified-kernel}
The function $K_1$ defined in~\eqref{eq:cert-K} is an allocation kernel
and satisfies
\begin{equation}\label{eq:cert-load-target}
 \sup_{\lambda\in\Lambda}\ell_{\Gamma_1}(K_1,\lambda)<1.
\end{equation}
Consequently, by Proposition~\ref{prop:scalar-feasibility},
the prescribed arrival construction is universally feasible at
$\Gamma_1$ and satisfies the exact item-wise
guarantee~\eqref{eq:itemwise} on every atomless input.
Moreover, it yields an order-selection algorithm for arbitrary
independent nonnegative values with
\[
 \E[\ALG]\ge\Gamma_1\E[\OPT].
\]
\end{theorem}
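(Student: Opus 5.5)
The proof has three parts: check that $K_1$ is a valid allocation kernel, reduce the load of an arbitrary profile to a finite-dimensional family, and certify that family by interval arithmetic, following the proof architecture in Figure~\ref{fig:proof-architecture}.

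\textbf{Kernel validity.} In the $(a,r)$ coordinates, $K_1(t_r,u_a)$ is the conditional distribution function of a pair with joint density $p_1$. Three properties are needed. First, $p_1\ge0$. The perturbation lives on cells $I_i\times J_j$; the column-$24$ and row-$0$ corrections sit where $\ind\{a<r\}=1$ (column $24$ has $r\ge 9.6>a$ for $i\le15$ only partly, so this is checked cell by cell). On cells with $a\ge r$ the baseline vanishes, and there the integers $N_{g,j}$ must give nonnegative $\delta_{ij}$; I will verify this for the finitely many cells, using $|w_{ij}|$ against the minimum of $\ind\{a<r\}$ on each cell. Second, each row sum of $\delta$ vanishes, so $\int_0^\infty p_1(a,v)\dd v=e^{-a}$ for every $a$, which gives $K_1(1,u)=1$ and the right-continuity and monotonicity in $t$. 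Third, each column sum vanishes, and $\omega_j$ is the $r$-weight of the cell, so the $r$-marginal stays $re^{-r}$; this is exactly \eqref{eq:kernel-marginal}. This is Lemma~\ref{lem:cert-kernel-valid}.

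\textbf{Reduction to four block masses.} I will redo Lemma~\ref{lem:kzero-coordinates} for $K_1$. The allocation is
\[
 B_{K_1,\lambda}(t_r)=\int_0^r e^{-s}h(s)\dd s+\sum_{i,j}w_{ij}\Bigl(\int_{I_i}\lambda(u_a)e^{-a}\dd a\Bigr)\int_{J_j\cap[0,r]}e^{-v}\dd v.
\]
Because $\delta$ is constant in $i$ within each group $\mathcal I_g$ except for the row-$0$ corrections, the perturbation depends on $\lambda$ only through the masses of the blocks $I_0$, $\bigcup_{\mathcal I_1}I_i$, $\bigcup_{\mathcal I_2}I_i$, and $\bigcup_{\mathcal I_3}I_i$. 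This gives the load identity (Lemma~\ref{lem:cert-load-identity}): a formula like \eqref{eq:kzero-r-load}, with an additive piecewise-explicit term $\beta_m(r)$ in $B$, where $m\in\R^4$ is the mass vector. The survival factor $\sigma$ changes accordingly.

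\textbf{Bounding the load for fixed $m$.} For each fixed $m$, the analogue of Lemma~\ref{lem:kzero-inverse} becomes a bound on $R(r)=e^r(1-Y(r))$ in terms of $h(r)$ and $\beta_m(r)$. The identity \eqref{eq:kzero-lag-identity} acquires an extra $\beta_m$ term, and the Lipschitz step then gives $R\le 1+h-e^r\beta_m$ up to corrections, together with the trivial bounds. The load is then bounded by a control problem as in Lemma~\ref{lem:kzero-bellman}, where the adversary's state now includes the accumulated block masses. This is where the state bands or the potential enter: the constraint that $h$ reproduce the masses $m$ must be retained, since otherwise the adversary recovers $K_0$'s worst case, which exceeds one. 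I will discretize time on $[0,10]$, use the tail majorant \eqref{eq:kzero-tail-majorant} beyond $r=10$ (where $K_1=K_0$), and run a backward comparison in interval arithmetic over boxes covering the admissible $m$ (Lemma~\ref{lem:cert-finite-data}). The limit $m\to0$ is handled separately via the zero-scale load of Lemma~\ref{lem:rare-limit}.

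\textbf{Main obstacle.} The hardest step is the margin. At $\Gamma_1$, only about $4\times10^{-4}$ of slack remains relative to $\Gamma_\star$, and near small profiles $K_1$ is essentially at the edge, with $J$ close to one. The crude lag bound already cost $K_0$ about $6\times10^{-3}$. I will therefore first try to recover slack through a correction term that uses the monotonicity and $1$-Lipschitz property of $h$ within each box. If that is not enough, I will refine the boxes adaptively near $m=0$.

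\textbf{Conclusion.} Once \eqref{eq:cert-load-target} holds, Proposition~\ref{prop:scalar-feasibility} gives feasibility and the exact item-wise guarantee on atomless inputs. The atom extension in Appendix~\ref{app:cert-atoms} then yields $\E[\ALG]\ge\Gamma_1\E[\OPT]$ for arbitrary independent nonnegative values.
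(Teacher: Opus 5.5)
Your outline follows the paper's architecture closely: finite rational checks for $p_1\ge0$ and the zero row and column sums, the exponential-coordinate load identity, dependence on four block masses, regional threshold bounds, a backward comparison that keeps the path constraints on $h$, the $K_0$ tail bound beyond $r=10$, and an interval-arithmetic cover. The step that fails is ``the limit $m\to0$ is handled separately via the zero-scale load of Lemma~\ref{lem:rare-limit}'', and this is exactly where the paper needs an extra idea.

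\textbf{Why the zero-scale lemma does not cover small masses.} Lemma~\ref{lem:rare-limit} is a limit along a single ray $\varepsilon f$, with $f$ bounded away from zero on all of $[0,1]$. It gives no bound at any positive scale and no uniformity over shapes. It therefore cannot certify a strict inequality on a neighbourhood of $\mathbf x=\mathbf 0$. Since $K_1$ does not satisfy $z_{K_1,f}=q$, even the limits $\ell^0_{\Gamma_1}(K_1,f)$ would need their own verification. Worse, a small mass vector is not a small profile. The vector $\mathbf x(\lambda)$ sees only block averages of $\lambda(u_a)$ for $a<b_4$. So $\lambda$ may equal one on a short subinterval of a block, or on all of $(u_{b_4},1]$, and in the latter case $\mathbf x=\mathbf 0$ exactly. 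The paper treats $\mathbf x=\mathbf 0$ by a direct estimate: $A_\lambda$ vanishes on $[0,1-e^{-b_4}]$, so $Y(0)\ge1-e^{-b_4}$, which gives $\overline\ell_\lambda<0.749$ by \eqref{eq:cert-zero-scale}.

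\textbf{Why refining boxes near $\mathbf 0$ does not close the cover.} Adaptive refinement in the original coordinates does not resolve the shape near $\mathbf 0$. Every box containing $\mathbf 0$ contains all directions $\hat{\mathbf x}=\mathbf x/x_{\max}$. At small scale $e^{h}-1-h=O(x_{\max}^2)$ while $d=O(x_{\max})$. Hence the sign of the area gap $\Delta_\lambda$ in \eqref{eq:cert-area-gap} is decided by the direction, not the scale. On the negative-gap branch the lag is an $O(1)$ excess over one that also depends on the direction; compare the quotient bound \eqref{eq:cert-negative-inverse}, which is roughly $1-\underline d/\underline h_\rho$.

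This ties to a second gap. Your ``$R\le1+h-e^r\beta_m$ up to corrections'' holds only on the branch $\Delta_\lambda\ge0$, see \eqref{eq:cert-positive-basic}. When the perturbation pushes the allocation below the area at the loss score $x_\lambda(u_r)$, the Lipschitz step of Lemma~\ref{lem:kzero-inverse} runs the wrong way. You then need genuinely different inverse estimates: the shift bound of Lemma~\ref{lem:cert-shift} and the negative-gap bounds.

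The paper resolves this as follows:
\begin{itemize}
\item It separates scale from shape via \eqref{eq:cert-normalized-masses}.
\item It proves inverse estimates that are uniform in $x_{\max}$, written through $\hat d=d/x_{\max}$, $\bar\pi_g$ and $\Theta$.
\item It applies them in a boundary certificate for $0<x_{\max}\le2^{-14}$ whose pieces may have lower scale endpoint zero.
\item It adds the sixteen dyadic scale charts \eqref{eq:cert-scale-charts} up to $x_{\max}=2^{-10}$.
\end{itemize}

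\textbf{Correction to your reduction formula.} In \eqref{eq:cert-K} the factor $e^{-a}$ from $\dd u_a=e^{-a}\dd a$ cancels. The perturbation therefore depends on the hazard increments $M_g=\int_{b_g}^{b_{g+1}}\lambda(u_a)\dd a=h(b_{g+1})-h(b_g)$, not on $\int\lambda(u_a)e^{-a}\dd a$. This matters: it makes $M_{<p}=h(b_p)$ a value of the running state, which gives the path-cone coupling \eqref{eq:cert-path-cone-exact} that your comparison relies on.

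\textbf{Minor point.} Column $24$ lies entirely in $\{a<r\}$, since $a<32/5<48/5\le r$.
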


The proof, including the finite numerical certificate, is given in
Appendix~\ref{app:cert-proof}.
Appendix~\ref{app:cert-details} provides the supporting estimates and
exact arithmetic.
A complete verification package is
\href{https://github.com/ahqspxy/order-selection-certificate}
{publicly available}.

\paragraph{Proof architecture.}
Figure~\ref{fig:proof-architecture} compares the proof with that of
Theorem~\ref{thm:kzero-analytic}. In both proofs, the load is an integral along the
cumulative hazard $h$ of the profile in exponential coordinates. For $K_1$, the perturbation
changes the allocation by a term that depends on the profile only through four block masses,
the integrals of the hazard rate over the blocks of ranks on which the perturbation does not
depend on the rank (Appendix~\ref{sec:cert-finite}). Where the analysis of $K_0$ uses
the single inequality $R\le1+h$ for the rescaled threshold complement, the certificate uses
several regional estimates, together with a correction term that recovers part of the slack
in this step (Appendices~\ref{app:cert-inverse-bounds} and~\ref{app:cert-credit}). Where the
analysis of $K_0$ lets an adversary choose $h(r)\in[0,r]$ freely, the certificate bounds the
load along all admissible paths by a backward comparison, with state bands or a state-dependent
potential, which keeps the information that $h$ is nondecreasing and $1$-Lipschitz (Appendix~\ref{app:cert-regional-comparison}). Finally,
instead of a scalar equation for the constant, the certificate covers the set $[0,1]^4$ of
mass vectors (block masses divided by block lengths) by finitely many regions, each with a
certified bound below one
(Appendix~\ref{sec:cert-results}).

\section{Concluding Remarks}\label{sec:conclusion}

Local constructions in the arrival-time framework are exactly the allocation kernels, and this
description reduces the design of a local arrival-time algorithm to the choice of a single
coupling of two uniform random variables. Using it, we obtained a kernel whose competitive
ratio exceeds $0.745$, within $4.5\times10^{-4}$ of the upper bound $\Gamma_\star$.

Two questions remain open. First, is some allocation kernel universally feasible at
$\Gamma_\star$? Proposition~\ref{prop:iid-calibration} gives a necessary condition, and
Proposition~\ref{prop:k0-counterexample} shows that $K_0$, the kernel singled out by the
strongest form of this condition, falls short; the kernel $K_1$ shows that giving up the
exact matching of Proposition~\ref{prop:iid-path-kernel} improves on $K_0$. Second, the guarantee for $K_1$ rests on a
computer-assisted certificate. An analytic proof of a ratio above $\Gamma_0$, for instance
for a structured family of kernels whose analysis reduces to a tractable control problem as
in Section~\ref{sec:kzero-analysis}, would clarify which features of $K_1$ are essential.

\section*{AI Disclosure}

AI systems were used in both the development of the results presented in this paper and the preparation of the manuscript.

The characterization of local constructions via allocation kernels (Section~\ref{sec:local}) and the construction of the baseline kernel $K_0$ (Section~\ref{sec:k0}) were developed by the authors. The improved construction achieving the ratio $\Gamma_1>0.745$, namely the kernel $K_1$ in Section~\ref{sec:final-kernel}, was developed by AI systems, along with its analysis and the computer-assisted certificate in Appendices~\ref{app:cert-proof} and~\ref{app:cert-details}. Beyond the motivation provided in Section~\ref{sec:final-motivation}, we do not yet have a conceptual understanding of this construction. Although the performance guarantee for $K_1$ is rigorous, we view the construction primarily as compelling numerical evidence that the order-selection prophet inequality admits the optimal i.i.d.\ ratio $\Gamma_\star$, rather than as a definitive explanation of why this ratio is attainable. We regard it as an intermediate step toward a proof of this optimal ratio.

\newpage

\appendix

\section{The Arrival Construction}\label{app:algorithm}

This appendix explains where the arrival laws~\eqref{eq:arrivals} come from and proves Theorem~\ref{thm:bc}.
Throughout, the allocation functions satisfy~\eqref{eq:allocation}; in particular, each $B_i$ is $1$-Lipschitz.
Recall from Section~\ref{sec:model} that $U_i=1-\prod_{j\ne i}F_j(V_i)$ is the \emph{loss score} of item $i$: it is the conditional probability that item $i$ is not the maximum, given its value.
Integrals with respect to $B_i$, $y_i$, and $G_i$ are Lebesgue--Stieltjes integrals, and no arrival mass is assigned to $t=0$.

\subsection{Threshold Paths}

The probability threshold $y_i=A_i^\leftarrow\circ B_i$ converts allocated area into threshold levels.
The following lemma collects the properties of this conversion that we use.
We state it for a general pair of an area curve and an allocation, because Section~\ref{sec:kzero-analysis} and Appendix~\ref{app:cert-proof} also apply it to single profiles.

\begin{lemma}[Threshold paths]\label{lem:threshold-paths}
Let $U$ be a random variable with values in $[0,1]$, let $A(x):=\E[(x-U)_+]=\int_0^x\Pp(U<z)\dd z$ for $x\in[0,1]$, and let $\mu:=A(1)$.
Let $B:[0,1]\to[0,\mu]$ be continuous and nondecreasing with $B(0)=0$ and $B(1)=\mu$, and put $y:=A^\leftarrow\circ B$.
\begin{enumerate}
\item[(a)] The path $y$ is continuous and nondecreasing, $y(0)=\max\{x\in[0,1]:A(x)=0\}$, $y(1)=1$, and $A(y(t))=B(t)$ for every $t\in[0,1]$.
\item[(b)] $U\ge y(0)$ almost surely.
\item[(c)] For every Borel function $g\ge0$ and all $0\le s\le t\le1$,
\begin{equation}\label{eq:stieltjes-substitution}
 \int_{(s,t]}g\bigl(y(\tau)\bigr)\dd y(\tau)=\int_{y(s)}^{y(t)}g(x)\dd x.
\end{equation}
\item[(d)] As measures on $(0,1]$,
\begin{equation}\label{eq:chain}
 \dd B(\tau)=\Pp\bigl(U\le y(\tau)\bigr)\dd y(\tau).
\end{equation}
\end{enumerate}
\end{lemma}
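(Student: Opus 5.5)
The plan is to work directly from convexity and the explicit form $A(x)=\int_0^x\Pp(U<z)\dd z$. The right derivative of $A$ at $x$ is $\Pp(U\le x)$ and the left derivative is $\Pp(U<x)$. Hence $A$ is convex, nondecreasing and $1$-Lipschitz, and it is strictly increasing on $[x_0,1]$, where $x_0:=\max\{x:A(x)=0\}$. Indeed, if $A$ were constant on some $[x_1,x_2]$ with $x_1\ge x_0$, then its slope there would be zero, so by convexity $A$ would be constant on $[0,x_2]$, forcing $A(x_2)=A(0)=0$ and contradicting the maximality of $x_0$. So $A$ restricted to $[x_0,1]$ is a continuous strictly increasing bijection onto $[0,\mu]$. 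For $b\in[0,\mu]$, the upper generalized inverse $A^\leftarrow(b)$ is then the genuine inverse of this restriction; the only exception would be $b=\mu$ with $A$ constant near $1$, which strict increase rules out. Consequently $A^\leftarrow$ is continuous and increasing on $[0,\mu]$, with $A(A^\leftarrow(b))=b$, $A^\leftarrow(0)=x_0$, and $A^\leftarrow(\mu)=1$. Composing with the continuous nondecreasing $B$ gives (a), using $B(0)=0$ and $B(1)=\mu$.

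For (b), $A(x_0)=0$ means $\int_0^{x_0}\Pp(U<z)\dd z=0$. Monotonicity of $z\mapsto\Pp(U<z)$ then gives $\Pp(U<z)=0$ for all $z<x_0$, and letting $z\uparrow x_0$ yields $\Pp(U<x_0)=0$.

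For (c), I use the standard change of variables for continuous nondecreasing functions. The pushforward of the Stieltjes measure $\dd y$ on $(s,t]$ under $y$ is Lebesgue measure on $(y(s),y(t)]$: intervals $(s,\tau]$ map to $(y(s),y(\tau)]$ with matching measure, continuity ensures no atoms, and a $\pi$-$\lambda$ argument extends this to all Borel sets. Then $\int g\circ y\,\dd y=\int g\,\dd x$ holds first for indicators and then by monotone convergence for all Borel $g\ge0$.

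For (d), since $B=A\circ y$ on $(0,1]$, I will compare measures on intervals: $B(t)-B(s)=A(y(t))-A(y(s))=\int_{y(s)}^{y(t)}\Pp(U\le x)\dd x$, because the integrand differs from $\Pp(U<x)$ only at the countably many atoms of $U$. By (c) with $g(x)=\Pp(U\le x)$, this equals $\int_{(s,t]}\Pp(U\le y(\tau))\dd y(\tau)$. Both sides are finite measures agreeing on all intervals $(s,t]$, so they coincide as measures. The only delicate point in the whole argument is the continuity and exact-inverse claim in (a), specifically ruling out flat pieces of $A$ above $x_0$; the convexity argument above handles this.
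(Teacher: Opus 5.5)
Your proof is correct and follows the paper's argument essentially step by step. The shared steps are the convexity argument ruling out flat pieces of $A$ above $x_0$, which also covers $\mu=0$ implicitly since then $x_0=1$; the same argument for (b); a $\pi$--$\lambda$ reduction to intervals $(y(s),c]$ in (c), where the computation you actually need is on preimages, $\{\tau\in(s,t]:y(\tau)\le c\}=(s,\tau_c]$ with $y(\tau_c)=c$; and an interval comparison in (d).

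The one difference is in (d). You replace $\Pp(U<x)$ by $\Pp(U\le x)$ on the level side, where the atoms of $U$ form a Lebesgue-null set, before invoking (c). The paper instead makes this replacement after the change of variables, using (c) to show that the times $\tau$ with $y(\tau)$ an atom are $\dd y$-null. The two are equivalent.
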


For an item $i$ of an input, the lemma applies with $U=U_i$, $A=A_i$, $B=B_i$, and $y=y_i$.

\begin{proof}
If $\mu=0$, then $U=1$ almost surely, because $\mu=\E[1-U]$; hence $A\equiv0$, $B\equiv0$, and $y\equiv A^\leftarrow(0)=1$, and all four claims are trivial.
Assume $\mu>0$, and let $x_0:=\max\{x\in[0,1]:A(x)=0\}<1$.
The function $A$ is strictly increasing on $[x_0,1]$.
Indeed, if $x_0\le a<b$ and $A(a)=A(b)$, then the slope of the convex function $A$ on $[a,b]$ is zero, so its slopes on $[0,a]$ are nonpositive; since $A$ is nondecreasing, it is constant on $[0,b]$, and $A(b)=A(0)=0$ contradicts $b>x_0$.
Hence $A$ maps $[x_0,1]$ homeomorphically onto $[0,\mu]$, and for $b\in[0,\mu]$ the set $\{x:A(x)\le b\}$ is the interval $[0,(A|_{[x_0,1]})^{-1}(b)]$.
Thus $A^\leftarrow=(A|_{[x_0,1]})^{-1}$ on $[0,\mu]$, and (a) follows because $B$ is continuous and nondecreasing with $B(0)=0$ and $B(1)=\mu$; in particular, $y(0)=x_0$.

For (b), $0=A(x_0)=\int_0^{x_0}\Pp(U<z)\dd z$ forces $\Pp(U<z)=0$ for every $z<x_0$, and hence $\Pp(U<x_0)=0$.

For (c), both sides of~\eqref{eq:stieltjes-substitution} define finite measures in $g$ that are concentrated on $(y(s),y(t)]$; for the left side, note that $\{\tau\in(s,t]:y(\tau)=y(s)\}$ is an interval on which $y$ is constant and therefore has $\dd y$-measure zero.
By the $\pi$--$\lambda$ theorem, it suffices to take $g=\ind_{(y(s),c]}$ with $y(s)\le c\le y(t)$.
By continuity and monotonicity of $y$, the set $\{\tau\in(s,t]:y(\tau)\le c\}$ is an interval $(s,\tau_c]$ with $y(\tau_c)=c$, so both sides of~\eqref{eq:stieltjes-substitution} equal $c-y(s)$.

For (d), recall that $A(x)=\int_0^x\Pp(U<z)\dd z$.
By (a), (c), and $A(y(0))=0$,
\[
 B(t)=A\bigl(y(t)\bigr)
 =\int_{y(0)}^{y(t)}\Pp(U<x)\dd x
 =\int_{(0,t]}\Pp\bigl(U<y(\tau)\bigr)\dd y(\tau).
\]
Replacing $<$ by $\le$ changes the integrand only at times $\tau$ with $y(\tau)$ in the countable set of atoms of $U$.
By~\eqref{eq:stieltjes-substitution}, the set of such times has $\dd y$-measure zero.
\end{proof}

\subsection{Deriving the Arrival Laws from the Item-Wise Guarantee}

We now explain how the item-wise guarantee~\eqref{eq:itemwise} leads to the formulas in~\eqref{eq:arrivals}.
This derivation is heuristic; the next subsection verifies the resulting construction rigorously.

Call an arrival of item $i$ at time $\tau$ \emph{qualifying} if $U_i\le y_i(\tau)$, whether or not the algorithm has already stopped.
Let $S_j(\tau)$ denote the probability that item $j$ has no qualifying arrival in $(0,\tau]$; Section~\ref{app:bc-proof} verifies that these functions coincide with those defined in~\eqref{eq:arrivals}.
Since items arrive independently, an arrival of item $i$ at time $\tau$ is inspected with probability $\prod_{j\ne i}S_j(\tau)$, and the measure
\[
 \prod_{j\ne i}S_j(\tau)\,\dd G_i(\tau)
\]
describes when item $i$ is inspected; we call it the \emph{effective inspection measure} of item $i$.
Given $U_i=x$, item $i$ is accepted if and only if it is inspected at a time $\tau$ with $y_i(\tau)\ge x$.
Since $y_i$ is continuous and nondecreasing with $y_i(1)=1$, these times form an interval ending at $1$, and the item-wise guarantee requires
\[
 \int_{\{\tau:\,y_i(\tau)\ge x\}}\prod_{j\ne i}S_j(\tau)\,\dd G_i(\tau)
 =\Gamma(1-x)
 =\Gamma\int_{\{\tau:\,y_i(\tau)\ge x\}}\dd y_i(\tau)
\]
for all $x\in[y_i(0),1]$, which by Lemma~\ref{lem:threshold-paths}(b) are the only values that $U_i$ takes.
This holds for all such $x$ if the effective inspection measure equals $\Gamma\dd y_i$:
\begin{equation}\label{eq:effective-inspection}
 \prod_{j\ne i}S_j(\tau)\,\dd G_i(\tau)=\Gamma\dd y_i(\tau).
\end{equation}
In words, item $i$ is inspected at rate $\Gamma$ per unit increase of its threshold level.

It remains to determine the survival functions.
A qualifying arrival of item $i$ occurs in $\dd\tau$ with probability $\Pp(U_i\le y_i(\tau))\dd G_i(\tau)$.
By~\eqref{eq:effective-inspection} and~\eqref{eq:chain}, this equals $\Gamma\dd B_i(\tau)/\prod_{j\ne i}S_j(\tau)$.
Writing $S^{\mathrm{tot}}:=\prod_jS_j$, we obtain
\[
 -\frac{\dd S_i(\tau)}{S_i(\tau)}=\Gamma\frac{\dd B_i(\tau)}{S^{\mathrm{tot}}(\tau)},
 \qquad\text{and, summing over $i$,}\qquad
 -\dd S^{\mathrm{tot}}(\tau)=\Gamma\,\dd\Bigl(\sum_iB_i\Bigr)(\tau).
\]
Under the normalization $\sum_iB_i(\tau)=\tau$, the probability that no item has qualified by time $\tau$ is therefore $S^{\mathrm{tot}}(\tau)=1-\Gamma\tau=S(\tau)$.
This is the sense in which $t$ is a normalized time: it measures accepted probability, divided by $\Gamma$.
Substituting $S^{\mathrm{tot}}=S$ back gives the formulas for $S_i$ and $G_i$ in~\eqref{eq:arrivals}.

\subsection{Proof of Theorem~\ref{thm:bc}}\label{app:bc-proof}

\begin{proof}[Proof of Theorem~\ref{thm:bc}]
Assume~\eqref{eq:allocation} and $L_i\le1$ for every item $i$, and define $S$, $S_i$, and $G_i$ by~\eqref{eq:arrivals}.
Each $S_i$ is absolutely continuous, nonincreasing, and positive, with $S_i(0)=1$, and
\begin{equation}\label{eq:survival-product}
 \prod_iS_i(t)
 =\exp\!\left(-\Gamma\int_0^t\frac{\dd\bigl(\sum_iB_i\bigr)(s)}{1-\Gamma s}\right)
 =\exp\bigl(\log(1-\Gamma t)\bigr)
 =S(t).
\end{equation}
Since each $S_j\le1$, this gives $S_i\ge S\ge1-\Gamma>0$, so $S/S_i=\prod_{j\ne i}S_j$ is well defined.
The measure $\dd G_i=\Gamma(S_i/S)\dd y_i$ on $(0,1]$ is nonnegative with total mass $L_i\le1$; assigning mass $1-L_i$ to $\dagger$ makes $G_i$ a probability distribution.
Because $y_i$ is continuous, the restriction of $G_i$ to $(0,1]$ has no atoms.
Since arrival times are independent, two items arrive at the same time in $(0,1]$ with probability zero, so tie-breaking is irrelevant.

For each item $i$ and $t\in[0,1]$, let $Q_i(t)$ be the event that item $i$ has a qualifying arrival in $(0,t]$, that is, $T_i\in(0,t]$ and $U_i\le y_i(T_i)$.
These events depend on the pairs $(T_i,V_i)$ and are therefore independent across items.

\emph{First claim.}
Since $T_i$ is independent of $V_i$, Lemma~\ref{lem:threshold-paths}(d) gives
\[
 \Pp\bigl(Q_i(t)\bigr)
 =\int_{(0,t]}\Pp\bigl(U_i\le y_i(s)\bigr)\dd G_i(s)
 =\Gamma\int_{(0,t]}\frac{S_i(s)}{S(s)}\dd B_i(s)
 =1-S_i(t),
\]
where the last equality holds because $S_i$ is absolutely continuous with $S_i'=-\Gamma(S_i/S)B_i'$ almost everywhere.

\emph{Second claim.}
The algorithm accepts the first qualifying arrival.
Hence no item is accepted by time $t$ if and only if none of the events $Q_j(t)$ occurs, and by independence and~\eqref{eq:survival-product} this has probability $\prod_jS_j(t)=S(t)$.

\emph{Third claim.}
Outside the null event of ties, item $i$ is accepted at a time in $(0,t]$ if and only if $Q_i(t)$ occurs and no other item has a qualifying arrival before $T_i$.
Conditioning on $(T_i,V_i)$ and using the continuity of the $S_j$,
\[
 \Pp(I=i,\ T_i\le t)
 =\int_{(0,t]}\Pp\bigl(U_i\le y_i(s)\bigr)\frac{S(s)}{S_i(s)}\dd G_i(s)
 =\Gamma\int_{(0,t]}\Pp\bigl(U_i\le y_i(s)\bigr)\dd y_i(s)
 =\Gamma B_i(t),
\]
by~\eqref{eq:arrivals} and Lemma~\ref{lem:threshold-paths}(d).

\emph{Fourth claim.}
For a value $v$, let $x(v):=1-\prod_{j\ne i}F_j(v)$ be the loss score of item $i$ when $V_i=v$.
Since $V_i$ is independent of $T_i$ and of $(T_j,V_j)_{j\ne i}$, Fubini's theorem shows that $\Pp(I=i,\ V_i\in E)=\int_E\psi(v)\dd F_i(v)$ for every Borel set $E$, where $\psi(v)$ is the acceptance probability computed as in the third claim with $U_i$ replaced by the constant $x(v)$.
Thus $\psi$ is a version of $\Pp(I=i\mid V_i=v)$, and
\[
\begin{aligned}
 \psi(v)
 &=\int_{(0,1]}\ind\bigl\{x(v)\le y_i(s)\bigr\}\frac{S(s)}{S_i(s)}\dd G_i(s)\\
 &=\Gamma\int_{(0,1]}\ind\bigl\{x(v)\le y_i(s)\bigr\}\dd y_i(s)
 =\Gamma\bigl(1-\max\{x(v),y_i(0)\}\bigr),
\end{aligned}
\]
where the last equality uses~\eqref{eq:stieltjes-substitution} with $g=\ind_{[x(v),1]}$ and $y_i(1)=1$.
By Lemma~\ref{lem:threshold-paths}(b), $x(V_i)=U_i\ge y_i(0)$ almost surely, so $x(v)\ge y_i(0)$ for $F_i$-almost every $v$.
For such $v$, $\psi(v)=\Gamma\prod_{j\ne i}F_j(v)=\Gamma\Pp(I^*=i\mid V_i=v)$, where the last identity holds because the values are independent and atomless.
This is~\eqref{eq:itemwise}.
\end{proof}

\section{Profiles and Area Curves}\label{app:profile}

This appendix proves Lemma~\ref{lem:profile-bijection}.
The main tool is a representation of each item's loss score through the rank of its own value (Lemma~\ref{lem:rank-representation}).
From it we derive the area formulas, recover the profile from the area curve, and show that every finite family of profiles summing to one is realized by some input.

Throughout, values are independent and atomless, $F_{\max}=\prod_jF_j$, and $F_{-i}:=\prod_{j\ne i}F_j$.
For $\lambda\in\Lambda$ we use $h_\lambda$, $C_\lambda$, $x_\lambda$, and $\mu_\lambda$ from~\eqref{eq:profile-coordinates-main}, and we set $x_\lambda(1):=1$.
Since $0\le\lambda\le1$, we have $0\le h_\lambda(u)\le\log\frac1{1-u}$ and hence $0\le x_\lambda(u)\le u$ for $u<1$.
Moreover, $x_\lambda$ is absolutely continuous on each interval $[0,u_1]$ with $u_1<1$, because $h_\lambda$ is, and
\[
 x_\lambda'(u)=e^{h_\lambda(u)}\bigl(1-\lambda(u)\bigr)\ge0
 \qquad\text{for almost every }u\in[0,1).
\]
Hence $x_\lambda$ is continuous on $[0,1)$ and nondecreasing on $[0,1]$.
For every $\lambda\in\Lambda$, define $A_\lambda:[0,1]\to\R$ by formula~\eqref{eq:rank-area-main} of Section~\ref{sec:small-profiles}:
\begin{equation*}
 A_\lambda(x):=\int_0^1\bigl(x-x_\lambda(s)\bigr)_+\,\frac{\lambda(s)}{1-s}\,e^{-h_\lambda(s)}\dd s.
 \tag{\ref*{eq:rank-area-main}}
\end{equation*}
We show below that $A_\lambda$ is the area curve of every item with profile $\lambda$ and that it satisfies all assertions of Lemma~\ref{lem:profile-bijection}.

It is convenient to read~\eqref{eq:rank-area-main} probabilistically.
Let $R$ be a random variable on $[0,1]$ with $\Pp(R>s)=e^{-h_\lambda(s)}$ for $s\in[0,1)$.
Since $h_\lambda$ is absolutely continuous on each interval $[0,s_1]$ with $s_1<1$, the law of $R$ has density $h_\lambda'(s)e^{-h_\lambda(s)}=\lambda(s)(1-s)^{-1}e^{-h_\lambda(s)}$ on $[0,1)$, together with an atom of mass $\lim_{s\uparrow1}e^{-h_\lambda(s)}$ at $s=1$.
Since $(x-x_\lambda(1))_+=0$ for $x\le1$, the atom does not contribute, and
\begin{equation}\label{eq:rank-area-expectation}
 A_\lambda(x)=\E\bigl[(x-X_\lambda)_+\bigr],
 \qquad
 X_\lambda:=x_\lambda(R).
\end{equation}
In particular, $A_\lambda$ is continuous, convex, nondecreasing, and $1$-Lipschitz, with $A_\lambda(0)=0$, and its right derivative is $\Pp(X_\lambda\le x)$ for $x\in[0,1)$.

\subsection{The Rank Representation}

For an item $i$ of an input, define the \emph{rank} of its value by
\[
 R_i:=1-F_{\max}(V_i)\in[0,1].
\]
This is the maximum rank that would result if $V_i$ were the maximum.
Since $F_{\max}$ is nondecreasing, $U=1-F_{\max}(\OPT)=\min_jR_j=R_{I^*}$.

\begin{lemma}[Rank representation]\label{lem:rank-representation}
Fix an input and an item $i$ with profile $\lambda=\lambda_i$.
Then:
\begin{enumerate}
\item[(a)] $\Pp(R_i>u)=e^{-h_\lambda(u)}$ for every $u\in[0,1)$;
\item[(b)] $U_i=x_\lambda(R_i)$ almost surely.
\end{enumerate}
Consequently, the area curve of item $i$ is $A_i=A_\lambda$, where $A_\lambda$ is given by~\eqref{eq:rank-area-main}.
In particular, $A_i$ depends on the input only through $\lambda_i$.
\end{lemma}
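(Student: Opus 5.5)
Write $F_{-i}:=\prod_{j\ne i}F_j$, so that $F_{\max}=F_iF_{-i}$, and let $G(v):=1-F_{\max}(v)$ be the common map with $R_i=G(V_i)$ and $U=\min_jR_j=G(\OPT)$. The plan is to derive (a) from a hazard-rate identity and then obtain (b) by expressing $F_{-i}$ as a function of the rank.

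\emph{Step 1: the hazard identity.} By definition of the profile, $\lambda_i(u)\dd u=\Pp(I^*=i,\,U\in\dd u)$. On the event $I^*=i$ we have $U=R_i$. Given $V_i=v$, item $i$ is the maximum with probability $F_{-i}(v)$, so
\[
 \Pp(I^*=i,\,R_i\in\dd u)=F_{-i}\,\dd\Pp(R_i\le\cdot).
\]
On the event $\{R_i=u\}$ with $u<1$, write $F_{\max}(V_i)=1-u$. Then
\[
 F_{-i}(V_i)=\frac{1-u}{F_i(V_i)},
 \qquad
 F_i(V_i)=\Pp(V_i'\le V_i)=\Pp(R_i'\ge R_i),
\]
where $V_i'$ is an independent copy, up to ties on flat pieces of $F_{\max}$; these are harmless because the values are atomless. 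This gives $F_i(V_i)=\Pp(R_i\ge u)$ at $R_i=u$, provided $G$ is suitably monotone on the support. Hence
\[
 \lambda_i(u)\dd u=\frac{1-u}{\Pp(R_i\ge u)}\,\Pp(R_i\in\dd u),
\]
that is, $R_i$ has hazard $\lambda_i(u)/(1-u)$. The distribution of $R_i$ is continuous on $[0,1)$ because $V_i$ is atomless; atoms of $R_i$ could arise only where $F_{\max}$ has flat pieces, which I will rule out on the support of $V_i$. Integrating the hazard gives $\Pp(R_i>u)=\exp(-h_\lambda(u))$, which is (a).

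\emph{Step 2: the loss score.} From the same computation,
\[
 U_i=1-F_{-i}(V_i)=1-\frac{1-R_i}{\Pp(R_i\ge R_i)\text{-evaluated}}=1-(1-R_i)\,e^{h_\lambda(R_i)}=x_\lambda(R_i)
\]
almost surely, using (a) and continuity. This is (b).

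\emph{Step 3: the area curve.} Substituting (b) into $A_i(x)=\E[(x-U_i)_+]$ and using the law of $R_i$ from (a) yields exactly~\eqref{eq:rank-area-main}, in the form~\eqref{eq:rank-area-expectation}. The right side of that formula depends only on $\lambda_i$.

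\emph{Main obstacle.} The delicate point is the identity $F_i(V_i)=\Pp(R_i\ge R_i\mid R_i)$. It requires care on intervals where $F_{\max}$ is constant but $F_i$ increases, or where $F_i$ is constant. On such intervals $R_i$ is determined only up to null sets, and the relevant probabilities have to be matched through a quantile argument. I will handle this by working with the pushforward measure under $G$, and by noting that the set where $F_{\max}$ is flat while $F_i$ increases has $F_i$-measure zero: there $F_i$ increasing forces $F_{\max}$ increasing unless $F_{-i}=0$, and in the latter case $R_i=1$ and the point lies in the atom at $1$.
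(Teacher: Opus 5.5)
Your proposal is correct and follows essentially the paper's proof. The fact you isolate at the end is that every $F_j$ is constant on each level set of $F_{\max}$ at a positive level. It is exactly what gives $F_i(V_i)=S(R_i)$ on $\{R_i<1\}$ with $S(u):=\Pp(R_i>u)$ (the paper's $p_i(R_i)$, where $p_i(u)=F_i(v_u)$), and inserting this into $\Pp(I^*=i\mid V_i)=F_{-i}(V_i)=(1-R_i)/F_i(V_i)$ yields the hazard identity and then (b), as in the paper.

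When writing it up, make three points explicit:
\begin{enumerate}
\item The ties are null because of that constancy, not because of atomlessness alone.
\item Your measure identity forces the law of $R_i$ on $[0,1)$ to be absolutely continuous, with no atom at $0$. This holds because the weight $(1-u)/\Pp(R_i\ge u)$ is strictly positive and $\Pp(R_i\ge u)\ge 1-u$. This is what licenses integrating the hazard; the paper instead proves $\log p_j$ Lipschitz beforehand.
\item Treat the event $\{R_i=1\}$ separately in (b), since it can have positive probability. There $F_{-i}(V_i)=0$ almost surely, because $F_i(V_i)>0$ almost surely, which matches the convention $x_\lambda(1)=1$.
\end{enumerate}
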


\begin{proof}
\emph{Common levels.}
For $u\in(0,1)$, the level set $\{v:F_{\max}(v)=1-u\}$ is a nonempty compact interval, because $F_{\max}$ is continuous and nondecreasing, vanishes on $(-\infty,0)$, and tends to one at infinity.
Let $v_u$ denote its left endpoint, and put $p_j(u):=F_j(v_u)$ for every item $j$.
Each $F_j$ is constant on the level set: the product $F_{\max}$ is constant and positive there, and each factor is nondecreasing.
Hence
\begin{equation}\label{eq:common-levels}
 F_{\max}(v)=1-u
 \quad\Longrightarrow\quad
 F_j(v)=p_j(u)\ \text{for every }j,
 \qquad\text{and}\qquad
 \prod_jp_j(u)=1-u.
\end{equation}
As $u$ increases, $v_u$ decreases, so each $\log p_j$ is nonincreasing.
For $0<a<b<1$, the nonnegative increments $\log p_j(a)-\log p_j(b)$ sum over $j$ to $\log\frac{1-a}{1-b}$.
Hence each $\log p_j$ is Lipschitz on every interval $(0,u_1]$ with $u_1<1$.
Moreover, $p_j(u)\to1$ as $u\downarrow0$, because $\prod_jp_j(u)=1-u\to1$ and each $p_j\le1$.
Set $p_j(0):=1$; then each $\log p_j$ is absolutely continuous on $[0,u_1]$ for every $u_1<1$.

\emph{The law of $R_i$.}
For $u\in(0,1)$, continuity and monotonicity of $F_{\max}$ give $\{R_i>u\}=\{F_{\max}(V_i)<1-u\}=\{V_i<v_u\}$.
Since $V_i$ is atomless, $\Pp(R_i>u)=F_i(v_u)=p_i(u)$.
For $u=0$, $\Pp(R_i=0)\le\Pp(F_i(V_i)=1)=0$, because $F_i(V_i)$ is uniform on $[0,1]$.
Thus $\Pp(R_i>u)=p_i(u)$ for all $u\in[0,1)$.

\emph{Winning probability through ranks.}
Given $V_i=v$, item $i$ is the maximum with probability $F_{-i}(v)$, because the values are independent and atomless.
On $\{0<R_i<1\}$, \eqref{eq:common-levels} gives $F_i(V_i)=p_i(R_i)>0$, so $F_{-i}(V_i)=F_{\max}(V_i)/F_i(V_i)=(1-R_i)/p_i(R_i)$.
On $\{R_i=1\}$, we have $F_{\max}(V_i)=0$ while $F_i(V_i)>0$ almost surely, so $F_{-i}(V_i)=0$ almost surely.
Since $U=R_i$ on $\{I^*=i\}$, and since $R_i$ has density $-p_i'$ on $[0,1)$, we obtain for every $u\in[0,1)$
\[
 \int_u^1\lambda(s)\dd s
 =\Pp(I^*=i,\ U>u)
 =\E\bigl[F_{-i}(V_i)\ind\{R_i>u\}\bigr]
 =\int_u^1\frac{1-s}{p_i(s)}\bigl(-p_i'(s)\bigr)\dd s.
\]
Hence $\lambda(s)=(1-s)\bigl(-\log p_i\bigr)'(s)$ for almost every $s$, and integrating from zero gives $-\log p_i(u)=h_\lambda(u)$.
This proves~(a).

\emph{Loss scores.}
On $\{0<R_i<1\}$, part (a) gives $U_i=1-F_{-i}(V_i)=1-(1-R_i)e^{h_\lambda(R_i)}=x_\lambda(R_i)$.
On $\{R_i=1\}$, $U_i=1=x_\lambda(1)$ almost surely.
This proves~(b).
Finally, (a) shows that $R_i$ has the law of $R$ in~\eqref{eq:rank-area-expectation}, so (b) gives $A_i(x)=\E[(x-U_i)_+]=\E[(x-x_\lambda(R_i))_+]=A_\lambda(x)$.
\end{proof}

\begin{remark}[Competing risks]
On the exponential scale, Lemma~\ref{lem:rank-representation}(a) reads
\[
 \Pp\bigl(-\log(1-R_i)>r\bigr)=\exp\!\left(-\int_0^r\lambda_i\bigl(1-e^{-s}\bigr)\dd s\right),
 \qquad r\ge0.
\]
Thus $\lambda_i(1-e^{-r})$ is the hazard rate at level $r$ of the exponential rank $-\log(1-R_i)=-\log F_{\max}(V_i)$.
These exponential ranks are independent and possibly defective (they equal $+\infty$ on $\{R_i=1\}$), their minimum $-\log(1-U)$ is standard exponential, and the identity $\sum_i\lambda_i=1$ expresses the familiar fact that the hazard rates of independent competing risks add up.
This viewpoint underlies the exponential coordinates of Section~\ref{sec:kzero-analysis} and the construction in Lemma~\ref{lem:profile-realization}.
\end{remark}

\subsection{From a Profile to Its Area Curve}

\begin{lemma}[Area formulas]\label{lem:profile-area}
Let $\lambda\in\Lambda$ and let $A_\lambda$ be given by~\eqref{eq:rank-area-main}.
\begin{enumerate}
\item[(a)] $A_\lambda(1)=\mu_\lambda$.
\item[(b)] For every $u\in[0,1)$,
\begin{equation*}
 A_\lambda\bigl(x_\lambda(u)\bigr)
 =C_\lambda(u)-(1-u)\bigl(e^{h_\lambda(u)}-1\bigr)
 =C_\lambda(u)+x_\lambda(u)-u.
 \tag{\ref*{eq:profile-area-main}}
\end{equation*}
\item[(c)] Let $x_*:=\lim_{u\uparrow1}x_\lambda(u)$.
If $x_*<1$, then
\begin{equation*}
 A_\lambda(x)=\mu_\lambda+x-1,
 \qquad x\in[x_*,1].
 \tag{\ref*{eq:profile-extension-main}}
\end{equation*}
\end{enumerate}
Since $x_\lambda$ is continuous and nondecreasing on $[0,1)$ with $x_\lambda(0)=0$, its range contains $[0,x_*)$; hence these identities and the continuity of $A_\lambda$ determine $A_\lambda$ on $[0,1]$.
\end{lemma}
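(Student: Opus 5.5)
We work from the probabilistic form \eqref{eq:rank-area-expectation}, $A_\lambda(x)=\E[(x-X_\lambda)_+]$ with $X_\lambda=x_\lambda(R)$ and $\Pp(R>s)=e^{-h_\lambda(s)}$. The three parts then reduce to computing $\E[(x-x_\lambda(R))_+]$ at two kinds of arguments: $x=1$, and $x=x_\lambda(u)$.

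\emph{Part (a).} At $x=1$ we have $(1-x_\lambda(s))_+=(1-s)e^{h_\lambda(s)}$ for $s<1$, and the atom at $s=1$ contributes nothing. Substituting into \eqref{eq:rank-area-main}, the density factor $\frac{\lambda(s)}{1-s}e^{-h_\lambda(s)}$ cancels exactly against it. This leaves $\int_0^1\lambda(s)\dd s=\mu_\lambda$.

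\emph{Part (b).} Fix $u<1$ and set $x=x_\lambda(u)$. Since $x_\lambda$ is nondecreasing, $x-x_\lambda(s)\le0$ for $s\ge u$. We may therefore restrict the integral to $s<u$, noting that on any interval where $x_\lambda$ is constant the integrand vanishes anyway. This gives
\[
A_\lambda(x_\lambda(u))=\int_0^u\bigl(x_\lambda(u)-x_\lambda(s)\bigr)\frac{\lambda(s)}{1-s}e^{-h_\lambda(s)}\dd s .
\]
Write $x_\lambda(u)-x_\lambda(s)=(1-s)e^{h_\lambda(s)}-(1-u)e^{h_\lambda(u)}$. The first term integrates to $\int_0^u\lambda(s)\dd s=C_\lambda(u)$. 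The second term is $(1-u)e^{h_\lambda(u)}$ times $\int_0^u h_\lambda'(s)e^{-h_\lambda(s)}\dd s=1-e^{-h_\lambda(u)}$, which we justify by absolute continuity of $h_\lambda$ on $[0,u]$. Subtracting yields $C_\lambda(u)-(1-u)(e^{h_\lambda(u)}-1)$. The second form of \eqref{eq:profile-area-main} is then just the identity $x_\lambda(u)-u=-(1-u)(e^{h_\lambda(u)}-1)$.

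\emph{Part (c).} This is the step needing the most care. The aim is to show that $X_\lambda\le x_*$ almost surely. Then for $x\ge x_*$ the positive part is inactive, so $A_\lambda(x)=x-\E[X_\lambda]$. Since $A_\lambda(1)=\mu_\lambda$ by (a), this gives $A_\lambda(x)=\mu_\lambda+x-1$.

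For the claim $X_\lambda\le x_*$, on $\{R<1\}$ we have $x_\lambda(R)\le x_*$ by monotonicity. The atom at $R=1$ must be handled separately, because $x_\lambda(1)=1>x_*$. I would argue that this atom is null when $x_*<1$. Indeed, $x_*<1$ means $(1-u)e^{h_\lambda(u)}\to1-x_*>0$, so $e^{-h_\lambda(u)}\sim(1-u)/(1-x_*)\to0$. Hence $\Pp(R=1)=\lim e^{-h_\lambda}=0$.

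Alternatively, one can avoid the atom discussion entirely. Since $A_\lambda$ is convex with right derivative at most $1$, the identity follows from showing that the right derivative equals $\Pp(X_\lambda\le x)=1$ for $x\ge x_*$, $x<1$. Combined with $A_\lambda(1)=\mu_\lambda$ and continuity, this gives the linear form.

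\emph{Determination of $A_\lambda$.} Finally, the closing sentence follows because $x_\lambda$ is continuous from $x_\lambda(0)=0$ up toward $x_*$. Every $x<x_*$ is therefore some $x_\lambda(u)$, covered by (b); the interval $[x_*,1]$ is covered by (c), and continuity of $A_\lambda$ fills in the point $x_*$ itself.
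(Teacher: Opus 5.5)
Your proposal is correct and follows the paper's proof essentially step for step: the same cancellation for (a), the same restriction to $s<u$ and integration of $h_\lambda'e^{-h_\lambda}$ for (b), and for (c) the same argument that $x_*<1$ forces $h_\lambda(u)\to\infty$, hence $\Pp(R=1)=0$, $X_\lambda\le x_*$ almost surely, and $A_\lambda(x)=x-1+A_\lambda(1)$. One small correction: your ``alternative'' via the right derivative does not avoid the atom. On $\{R=1\}$ one has $X_\lambda=1$, so $\Pp(X_\lambda\le x)=1$ for $x<1$ holds only because $\Pp(R=1)=0$.
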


\begin{proof}
Abbreviate $h=h_\lambda$, and recall that $h'(s)=\lambda(s)/(1-s)$ almost everywhere.

(a)
By~\eqref{eq:rank-area-main}, $A_\lambda(1)=\int_0^1(1-s)e^{h(s)}h'(s)e^{-h(s)}\dd s=\int_0^1\lambda(s)\dd s=\mu_\lambda$.

(b)
Fix $u<1$.
Since $x_\lambda$ is nondecreasing on $[0,1]$, the integrand in~\eqref{eq:rank-area-main} at $x=x_\lambda(u)$ vanishes for $s\ge u$.
For $s<u$, $x_\lambda(u)-x_\lambda(s)=(1-s)e^{h(s)}-(1-u)e^{h(u)}$.
Therefore
\[
 A_\lambda\bigl(x_\lambda(u)\bigr)
 =\int_0^u\lambda(s)\dd s-(1-u)e^{h(u)}\int_0^uh'(s)e^{-h(s)}\dd s
 =C_\lambda(u)-(1-u)e^{h(u)}\bigl(1-e^{-h(u)}\bigr),
\]
which is the first expression in~\eqref{eq:profile-area-main}.
The second follows from $x_\lambda(u)-u=-(1-u)(e^{h(u)}-1)$.

(c)
If $x_*<1$, then $(1-u)e^{h(u)}=1-x_\lambda(u)\to1-x_*>0$ as $u\uparrow1$, so $h(u)\to\infty$.
Hence $\Pp(R=1)=0$ in~\eqref{eq:rank-area-expectation}, and $X_\lambda\le x_*$ almost surely.
For $x\ge x_*$, it follows that $A_\lambda(x)=\E[x-X_\lambda]=x-1+A_\lambda(1)=x-1+\mu_\lambda$, by (a).
\end{proof}

\subsection{Recovering the Profile from the Area Curve}

For an area curve $A=A_\lambda$, let $A'_+$ denote its right derivative, and define
\[
\begin{aligned}
 \rho_A(x)&:=1-(1-x)\bigl(1-A'_+(x)\bigr)\quad(x\in[0,1)),
 &\qquad \rho_A(1)&:=1,\\
 x_A(u)&:=\inf\{x\in[0,1]:\rho_A(x)>u\}\quad(u\in[0,1)).
\end{aligned}
\]
The last set is nonempty because $\rho_A(1)=1$.
The function $\rho_A$ converts the loss-score coordinate into the rank coordinate: as the proof shows, $\rho_A(x)$ is the supremum of the ranks $s<1$ whose loss score $x_\lambda(s)$ does not exceed $x$.

\begin{lemma}[Recovery of the profile]\label{lem:profile-recovery}
For every $\lambda\in\Lambda$ and $A=A_\lambda$, we have $x_A(u)=x_\lambda(u)$ for $u\in[0,1)$, and
\begin{equation}\label{eq:profile-recovery}
 \lambda(u)=\frac{\dd}{\dd u}\Bigl[u-x_A(u)+A\bigl(x_A(u)\bigr)\Bigr]
 \qquad\text{for almost every }u\in[0,1].
\end{equation}
In particular, $A_\lambda$ determines $\lambda$ up to a null set.
\end{lemma}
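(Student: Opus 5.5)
The plan is to use the probabilistic representation~\eqref{eq:rank-area-expectation}, $A(x)=\E[(x-X_\lambda)_+]$ with $X_\lambda=x_\lambda(R)$. This gives $A'_+(x)=\Pp(X_\lambda\le x)$ for $x\in[0,1)$, which turns $\rho_A$ into an explicit function of the law of $R$.

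\emph{Step 1: compute $\rho_A$.} Fix $x<1$ and set $s^*(x):=\sup\{s<1:x_\lambda(s)\le x\}$. Because $x_\lambda$ is continuous and nondecreasing on $[0,1)$, the event $\{X_\lambda\le x\}$ coincides, up to a null set, with $\{R\le s^*\}$ when $s^*<1$. The exceptional set is harmless because $x_\lambda$ can only be constant where $\lambda=1$, and there $R$ has density. Hence $1-A'_+(x)=e^{-h_\lambda(s^*)}$. If $s^*<1$, then $x_\lambda(s^*)=x$ by continuity, so $(1-x)=(1-s^*)e^{h_\lambda(s^*)}$ and therefore $\rho_A(x)=1-(1-s^*)=s^*$. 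If $s^*=1$, then $x\ge x_*$, so $R<1$ almost surely by the proof of Lemma~\ref{lem:profile-area}(c); thus $A'_+(x)=1$ and $\rho_A(x)=1=s^*$. This confirms the stated interpretation of $\rho_A$.

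\emph{Step 2: invert.} Since $\rho_A(x)=\sup\{s:x_\lambda(s)\le x\}$, we have $\rho_A(x)>u$ exactly when some $s>u$ satisfies $x_\lambda(s)\le x$. By monotonicity and continuity of $x_\lambda$, this is equivalent to $x\ge x_\lambda(s)$ for some $s>u$. Taking the infimum over such $x$ gives $x_A(u)=\inf_{s>u}x_\lambda(s)=x_\lambda(u)$, using right-continuity of $x_\lambda$ at $u<1$. The edge cases $x=1$, where $\rho_A(1)=1$, and $u$ close to $1$ are consistent with this.

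\emph{Step 3: recovery formula.} By Lemma~\ref{lem:profile-area}(b), $u-x_A(u)+A(x_A(u))=u-x_\lambda(u)+A_\lambda(x_\lambda(u))=C_\lambda(u)$. This is absolutely continuous with derivative $\lambda(u)$ almost everywhere, which gives~\eqref{eq:profile-recovery}. Injectivity follows because $A$ determines $\rho_A$, hence $x_A$, hence $C_\lambda$, hence $\lambda$ almost everywhere.

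The main obstacle is Step 1: the atom of $R$ at $1$, flat stretches of $x_\lambda$ where $\lambda=1$, and the right-derivative convention all have to be handled carefully so that $\Pp(X_\lambda\le x)=1-e^{-h_\lambda(s^*)}$ holds exactly rather than only up to the boundary.
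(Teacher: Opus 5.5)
Your proposal is correct and follows the paper's proof essentially step for step. You identify $\rho_A$ with $s^*(x)=\sup\{s<1:x_\lambda(s)\le x\}$ via $A'_+(x)=\Pp(X_\lambda\le x)$, invert this to get $x_A=x_\lambda$, and conclude from Lemma~\ref{lem:profile-area}(b).

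One small simplification: the identity $\{X_\lambda\le x\}=\{R\le s^*\}$ holds exactly, not just up to a null set. The sublevel set $\{s\in[0,1):x_\lambda(s)\le x\}$ is the closed interval $[0,s^*]$, and $x_\lambda(1)=1>x$. So no argument about flat stretches of $x_\lambda$ is needed, which is just as well, because such stretches can carry positive $R$-mass.
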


\begin{proof}
For $x\in[0,1)$, let $\bar s(x):=\sup\{s\in[0,1):x_\lambda(s)\le x\}$; the set contains $s=0$.
By~\eqref{eq:rank-area-expectation}, $A'_+(x)=\Pp(X_\lambda\le x)$.

Suppose first that $\bar s(x)<1$.
Since $x_\lambda$ is continuous and nondecreasing, $\{s\in[0,1):x_\lambda(s)\le x\}=[0,\bar s(x)]$ and $x_\lambda(\bar s(x))=x$.
Moreover, $X_\lambda=x_\lambda(R)\le x$ if and only if $R\le\bar s(x)$, because $x_\lambda(1)=1>x$.
Hence $A'_+(x)=\Pp(R\le\bar s(x))=1-e^{-h_\lambda(\bar s(x))}$, and
\[
 \rho_A(x)=1-\bigl(1-x_\lambda(\bar s(x))\bigr)e^{-h_\lambda(\bar s(x))}=\bar s(x).
\]
If instead $\bar s(x)=1$, then $x_\lambda(s)\le x<1$ for all $s<1$, so $x_*\le x<1$.
As in the proof of Lemma~\ref{lem:profile-area}(c), $\Pp(R=1)=0$, so $A'_+(x)=1$ and $\rho_A(x)=1=\bar s(x)$.
Thus $\rho_A=\bar s$ on $[0,1)$.

Now fix $u\in[0,1)$.
By the definition of $\bar s$, the inequality $\rho_A(x)>u$ holds if and only if $x_\lambda(s)\le x$ for some $s\in(u,1)$, or $x=1$.
Since $x_\lambda$ is continuous and nondecreasing, $\inf_{s\in(u,1)}x_\lambda(s)=x_\lambda(u)$.
Hence $\{x:\rho_A(x)>u\}$ contains $(x_\lambda(u),1]$ and is contained in $[x_\lambda(u),1]$, so $x_A(u)=x_\lambda(u)$.
By Lemma~\ref{lem:profile-area}(b), $u-x_A(u)+A(x_A(u))=C_\lambda(u)$, whose derivative is $\lambda(u)$ for almost every $u$.
\end{proof}

\subsection{Realizing Families of Profiles}\label{app:realization}

\begin{lemma}[Realizability]\label{lem:profile-realization}
Let $\lambda_1,\ldots,\lambda_n:[0,1]\to[0,1]$ be measurable with $\sum_i\lambda_i=1$ almost everywhere.
Then there are independent, atomless, nonnegative random values $V_1,\ldots,V_n$ whose profiles are $\lambda_1,\ldots,\lambda_n$, up to null sets.
In particular, every profile $\lambda\in\Lambda$ is realized in a two-item input with profiles $(\lambda,1-\lambda)$.
\end{lemma}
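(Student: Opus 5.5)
The plan is to build the input on the exponential scale suggested by the competing-risks remark after Lemma~\ref{lem:rank-representation}. For each item $i$ define the cumulative hazard $H_i(r):=\int_0^r\lambda_i(1-e^{-s})\dd s$ for $r\ge0$. Let $E_i$ be independent random variables in $(0,\infty]$ with $\Pp(E_i>r)=e^{-H_i(r)}$; these are possibly defective, with $E_i=\infty$ with probability $e^{-H_i(\infty)}$. To get nonnegative atomless values with large values matching small ranks, set $V_i:=\psi(E_i)$ for a fixed strictly decreasing continuous map $\psi:(0,\infty]\to[0,\infty)$ with $\psi(\infty)=0$. A convenient choice is $\psi(r)=e^{-r}$, so $V_i\in[0,1)$. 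The atom of $E_i$ at $\infty$ would create an atom of $V_i$ at $0$. To remove it, on $\{E_i=\infty\}$ I will redraw $V_i$ uniformly from a small interval $(0,\eta_i)$ with pairwise disjoint $\eta$-ranges below all other support; this requires a little care, described below.

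First I will compute $F_{\max}$. Since $\sum_iH_i(r)=r$ (because $\sum_i\lambda_i=1$ a.e.), we get $\Pp(\min_iE_i>r)=e^{-r}$. Hence for $v=e^{-r}$,
\[
 F_{\max}(v)=\prod_i\Pp(E_i\ge r)=e^{-r},
\]
using that each $E_i$ is atomless on $(0,\infty)$ because $H_i$ is continuous. So $F_{\max}(V_i)=e^{-E_i}$ and the rank is $R_i=1-e^{-E_i}$. Item $i$ is the maximum exactly when $E_i=\min_jE_j$, ties having probability zero. The standard competing-risks formula then gives
\[
 \Pp\Bigl(I^*=i,\ \min_jE_j\in\dd r\Bigr)=\lambda_i(1-e^{-r})\,e^{-r}\dd r.
\]
Substituting $u=1-e^{-r}$ gives $\Pp(I^*=i,U\le u)=\int_0^u\lambda_i$, which is~\eqref{eq:profile}.

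The remaining issue, and the main obstacle, is the defective part. With the redraw, values on $\{E_i=\infty\}$ sit below every finite-$E$ value. This is still a valid input, but $F_{\max}$ near $0$ now differs from $e^{-r}$. That happens only on the event where all $E_j=\infty$, which has probability $\lim_re^{-r}=0$. So $F_{\max}(v)=0$ for $v$ in that region, and those values have rank $1$. I will verify that the redraw changes neither $F_{\max}$ on $\{v\ge\text{redraw region}\}$ nor the event $\{I^*=i\}$ up to null sets. The reason is that the maximum is almost surely achieved by some finite $E_j$, since $\min_jE_j$ is standard exponential. Atomlessness then holds, and nonnegativity is immediate.

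For the two-item case, apply the construction with $(\lambda,1-\lambda)$. This family sums to one and consists of measurable $[0,1]$-valued functions, so the general statement gives the claim.
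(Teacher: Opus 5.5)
\textbf{The gap is at the step you flagged yourself.} With $\psi(r)=e^{-r}$ there is no interval below all the genuine values. Since $\prod_j e^{-H_j(\infty)}=\lim_r e^{-r}=0$, some item $j$ has $H_j(\infty)=\infty$. Its $E_j$ is then finite almost surely but unbounded, so the support of $e^{-E_j}$ reaches down to $0$. (The intervals $(0,\eta_i)$ also cannot be pairwise disjoint, though disjointness is not needed.) A redrawn value in $(0,\eta_i)$ therefore beats genuine values of other items with positive probability, and this changes the profiles. For example, take $\lambda_1=\ind_{[0,1/2]}$ and $\lambda_2=1-\lambda_1$:
\begin{itemize}
\item item~1 is defective with probability $1/2$;
\item $E_2\ge\log2$ almost surely, with $\Pp(e^{-E_2}\le v)=2v$ on $[0,1/2]$;
\item a redrawn $V_1\sim\mathrm{Unif}(0,\eta)$ beats $V_2$ with probability $\eta$, so $\Pp(I^*=1)=(1+\eta)/2\ne\int_0^1\lambda_1$.
\end{itemize}
Your claim that the change in $F_{\max}$ near $0$ ``happens only on the event where all $E_j=\infty$'' holds only if the redraw region lies strictly below every genuine value. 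If $\psi$ is continuous at $\infty$ with $\psi(\infty)=0$, that is impossible.

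\textbf{The repair is to leave room.} Take $\psi(r)=1+e^{-r}$ for finite $r$, so genuine values lie in $(1,2)$. Spread the defective mass $c_i=e^{-H_i(\infty)}$ by any atomless law on $[0,1]$, say uniformly. Then $F_i(v)=e^{-H_i(-\log(v-1))}$ on $(1,2]$, so $F_{\max}(v)=v-1$ there. On $[0,1]$ we have $F_{\max}=0$ because $\prod_ic_i=0$. The maximum is almost surely a genuine value, since $\min_jE_j$ is finite almost surely, and your competing-risks computation $\Pp(I^*=i,\ \min_jE_j\in\dd r)=\lambda_i(1-e^{-r})e^{-r}\dd r$ then goes through unchanged.

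This is exactly the paper's construction: $F_i(v)=c_iv$ on $[0,1]$ and $F_i(v)=e^{-h_i(2-v)}$ on $(1,2]$, that is, $V_i=2-R_i$ on $\{R_i<1\}$. The only remaining difference is how the profile is verified. The paper reads off $\Pp(R_i>u)=F_i(2-u)=e^{-h_{\lambda_i}(u)}$ and invokes Lemma~\ref{lem:rank-representation}(a) to conclude $h_{\widetilde\lambda_i}=h_{\lambda_i}$. You instead compute $\Pp(I^*=i,\,U\le u)$ directly from the competing-risks density. Both are valid, and yours is more self-contained.
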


\begin{proof}
Write $h_i:=h_{\lambda_i}$ and $c_i:=\lim_{u\uparrow1}e^{-h_i(u)}$.
By assumption,
\[
 \sum_ih_i(u)=\int_0^u\frac{\dd a}{1-a}=\log\frac1{1-u},
 \qquad\text{so}\qquad
 \prod_ie^{-h_i(u)}=1-u
 \quad\text{and}\quad
 \prod_ic_i=0.
\]
Let $V_1,\ldots,V_n$ be independent with distribution functions
\begin{equation}\label{eq:profile-realization}
 F_i(v):=
 \begin{cases}
  0,&v<0,\\
  c_iv,&0\le v\le1,\\
  e^{-h_i(2-v)},&1<v\le2,\\
  1,&v>2.
 \end{cases}
\end{equation}
Each $F_i$ is continuous and nondecreasing: $c_iv=c_i$ at $v=1$, $e^{-h_i(2-v)}\to c_i$ as $v\downarrow1$, and $e^{-h_i(2-v)}=1$ at $v=2$.
Hence each $V_i$ is atomless and nonnegative.
On $[1,2]$ we have $F_{\max}(v)=\prod_ie^{-h_i(2-v)}=v-1$, and on $[0,1]$ we have $F_{\max}(v)=v^n\prod_ic_i=0$.
Thus, in the notation of the proof of Lemma~\ref{lem:rank-representation}, $v_u=2-u$ for $u\in(0,1)$, and the paragraph ``The law of $R_i$'' in that proof, together with part~(a) of the lemma, gives
\[
 e^{-h_{\widetilde\lambda_i}(u)}=\Pp(R_i>u)=F_i(2-u)=e^{-h_i(u)},
 \qquad u\in[0,1),
\]
where $\widetilde\lambda_i$ denotes the actual profile of item $i$.
Hence $h_{\widetilde\lambda_i}=h_{\lambda_i}$ on $[0,1)$, and differentiation gives $\widetilde\lambda_i=\lambda_i$ almost everywhere.
\end{proof}

In this construction, $V_i=2-R_i$ whenever $R_i<1$: each value is placed at the location dictated by its rank, and the remaining probability $c_i$, on which item $i$ never wins, is spread over $[0,1]$ to keep the distribution atomless.

\subsection{Proof of Lemma~\ref{lem:profile-bijection}}

\begin{proof}[Proof of Lemma~\ref{lem:profile-bijection}]
For $\lambda\in\Lambda$, define $A_\lambda$ by~\eqref{eq:rank-area-main}.
By Lemma~\ref{lem:profile-realization}, some input contains an item with profile $\lambda$, and by Lemma~\ref{lem:rank-representation} its area curve is $A_\lambda$; hence $A_\lambda\in\mathcal A$.
Conversely, every $A\in\mathcal A$ is the area curve $A_i$ of some item $i$ of some input, and $A_i=A_{\lambda_i}$ by Lemma~\ref{lem:rank-representation}; hence $\lambda\mapsto A_\lambda$ maps $\Lambda$ onto $\mathcal A$.
The integral~\eqref{eq:rank-area-main} depends only on the equivalence class of $\lambda$, and Lemma~\ref{lem:profile-recovery} recovers this class from $A_\lambda$; hence the map is a bijection.
Lemma~\ref{lem:rank-representation} also shows that $\lambda_i$ corresponds to $A_i$ for every input and item.
Finally, Lemma~\ref{lem:profile-area} gives $A_\lambda(1)=\mu_\lambda$, the formulas~\eqref{eq:profile-area-main} and~\eqref{eq:profile-extension-main}, and the fact that they determine $A_\lambda$; the properties of $x_\lambda$ were noted at the beginning of this appendix.
\end{proof}

\section{Proof of the Characterization of Local Constructions}\label{app:converse}

Sufficiency in Theorem~\ref{thm:local} was proved in Section~\ref{sec:characterization}; here we prove necessity and uniqueness.
Let $\mathcal B$ be a local construction.
By Lemma~\ref{lem:profile-bijection}, we may write $B_f:=\mathcal B[A_f]$ for $f\in\Lambda$, and $B_f$ depends only on the equivalence class of $f$.
By Lemma~\ref{lem:profile-realization}, every finite family of profiles summing to one almost everywhere is the family of profiles of some input.
The total-allocation identity~\eqref{eq:total-allocation} therefore yields
\begin{equation}\label{eq:partition-identity}
 \sum_{i=1}^nB_{f_i}(t)=t
 \qquad\text{whenever }f_1,\ldots,f_n\in\Lambda\text{ and }\sum_if_i=1\text{ almost everywhere}.
\end{equation}
The proof has four steps.
For fixed $t$, the map $f\mapsto B_f(t)$ is additive and positive (Step~1), so it is integration against a density $k_t$ (Step~2).
These densities can be chosen consistently in $t$, which produces the kernel (Step~3).
Step~4 proves uniqueness.

\begin{proof}[Proof of Theorem~\ref{thm:local}]
\emph{Step 1: additivity.}
Fix $t\in[0,1]$ and write $\mathcal T(f):=B_f(t)$ for $f\in\Lambda$.
Since $\mathcal B[A]$ takes values in $[0,A(1)]$ and $A_f(1)=\int_0^1f$, we have
\begin{equation}\label{eq:local-bounds}
 0\le \mathcal T(f)\le\int_0^1f(u)\dd u,
 \qquad\text{in particular}\qquad
 \mathcal T(0)=0.
\end{equation}
Let $f,g\in\Lambda$ with $f+g\le1$.
Applying~\eqref{eq:partition-identity} to the families $(f,g,1-f-g)$ and $(f+g,0,1-f-g)$ gives
\[
 \mathcal T(f)+\mathcal T(g)+\mathcal T(1-f-g)=t=\mathcal T(f+g)+\mathcal T(0)+\mathcal T(1-f-g),
\]
so $\mathcal T(f+g)=\mathcal T(f)+\mathcal T(g)$.
Three consequences follow.
First, $\mathcal T$ is monotone: if $f\le g$, then $\mathcal T(g)=\mathcal T(f)+\mathcal T(g-f)\ge\mathcal T(f)$.
Second, $\mathcal T$ is positively homogeneous: additivity gives $\mathcal T(f)=m\,\mathcal T(f/m)$ for every integer $m\ge1$, hence $\mathcal T(cf)=c\,\mathcal T(f)$ for rational $c\in[0,1]$, and monotonicity extends this to real $c\in[0,1]$.
Third, $\mathcal T$ is Lipschitz in $L^1$: writing $f=f\wedge g+(f-g)_+$ and $g=f\wedge g+(g-f)_+$, additivity and~\eqref{eq:local-bounds} give
\begin{equation}\label{eq:local-lipschitz}
 |\mathcal T(f)-\mathcal T(g)|
 =\bigl|\mathcal T\bigl((f-g)_+\bigr)-\mathcal T\bigl((g-f)_+\bigr)\bigr|
 \le\int_0^1|f(u)-g(u)|\dd u.
\end{equation}

\emph{Step 2: a density at each time.}
For a Borel set $E\subseteq[0,1]$, put $\nu_t(E):=\mathcal T(\ind_E)$.
By additivity, $\nu_t$ is finitely additive, and $0\le\nu_t(E)\le|E|$ by~\eqref{eq:local-bounds}, where $|E|$ denotes Lebesgue measure.
If $E_k\downarrow\emptyset$, then $\nu_t(E_k)\le|E_k|\to0$; hence $\nu_t$ is countably additive and absolutely continuous with respect to Lebesgue measure.
By the Radon--Nikodym theorem, $\nu_t(E)=\int_Ek_t(u)\dd u$ for a Borel function $k_t$ with $0\le k_t\le1$ almost everywhere.
By additivity and homogeneity, $\mathcal T(f)=\int_0^1fk_t$ for every simple function $f=\sum_jc_j\ind_{E_j}$ with disjoint $E_j$ and $c_j\in[0,1]$.
Every $f\in\Lambda$ is a uniform limit of such functions, so~\eqref{eq:local-lipschitz} gives
\begin{equation}\label{eq:local-density}
 B_f(t)=\int_0^1f(u)k_t(u)\dd u
 \qquad(f\in\Lambda).
\end{equation}
Applying~\eqref{eq:partition-identity} to the family $(1,0)$ gives $\int_0^1k_t(u)\dd u=B_1(t)=t$.

\emph{Step 3: a consistent choice in $t$.}
If $s<t$, then $\int_Ek_s=B_{\ind_E}(s)\le B_{\ind_E}(t)=\int_Ek_t$ for every Borel set $E$, so $k_s\le k_t$ almost everywhere.
Moreover, $k_0=0$ and $k_1=1$ almost everywhere, because $\int_0^1k_0=0$, $\int_0^1k_1=1$, and $0\le k_t\le1$.
Fix versions of $k_s$ for rational $s\in[0,1]$, and choose a null set $N$ such that, for every $u\notin N$, the map $s\mapsto k_s(u)$ is $[0,1]$-valued and nondecreasing on $\mathbb Q\cap[0,1]$, with $k_0(u)=0$ and $k_1(u)=1$.
For $u\notin N$, define
\[
 K(t,u):=\inf\bigl\{k_s(u):s\in\mathbb Q\cap(t,1]\bigr\}\quad(t<1),
 \qquad
 K(1,u):=1,
\]
and set $K(t,u):=t$ for $u\in N$.
Then $K$ is jointly measurable, being a countable infimum of jointly measurable functions, and $K(\cdot,u)$ is nondecreasing and right-continuous for every $u$.

Each $B_f$ is continuous: applying~\eqref{eq:partition-identity} to $(f,1-f)$ gives, for $s<t$,
\[
 0\le B_f(t)-B_f(s)=(t-s)-\bigl(B_{1-f}(t)-B_{1-f}(s)\bigr)\le t-s.
\]
For $t<1$, let $s_m\downarrow t$ be rational.
For $u\notin N$, $k_{s_m}(u)\downarrow K(t,u)$, so~\eqref{eq:local-density} and monotone convergence give
\[
 B_f(t)=\lim_{m\to\infty}B_f(s_m)=\lim_{m\to\infty}\int_0^1f(u)k_{s_m}(u)\dd u=\int_0^1f(u)K(t,u)\dd u.
\]
For $t=1$, both sides equal $\int_0^1f$.
This is the representation~\eqref{eq:local-representation}.
Taking $f\equiv1$ gives $\int_0^1K(t,u)\dd u=t$ for every $t$, which is~\eqref{eq:kernel-marginal}.
In particular, $\int_0^1K(0,u)\dd u=0$, so $K(0,u)=0$ outside a null set $N'$.
Redefining $K(t,u):=t$ for $u\in N'$ changes no integral, and the resulting function is an allocation kernel.

\emph{Step 4: uniqueness.}
Suppose that $K$ and $\widetilde K$ both represent $\mathcal B$.
For rational $t$ and every Borel set $E$, $\int_EK(t,u)\dd u=B_{\ind_E}(t)=\int_E\widetilde K(t,u)\dd u$, so $K(t,\cdot)=\widetilde K(t,\cdot)$ almost everywhere.
Let $N''$ be the union of these null sets over rational $t$, together with the null sets off which $K(\cdot,u)$ and $\widetilde K(\cdot,u)$ are right-continuous.
For $u\notin N''$, the two right-continuous functions $K(\cdot,u)$ and $\widetilde K(\cdot,u)$ agree on the rationals and hence everywhere.
\end{proof}

\section{Constant Profiles and Load Monotonicity}\label{app:kernel-analysis}

\subsection{Constant Profiles and the I.I.D.\ Benchmark}
\label{app:iid-benchmark}

\begin{proof}[Proof of Lemma~\ref{lem:iid-kernel-blind}]
For $\lambda\equiv c$, the profile--area formulas in
Lemma~\ref{lem:profile-bijection} give
\[
 h_c(u)=-c\log(1-u),\qquad C_c(u)=cu,\qquad
 x_c(u)=1-(1-u)^{1-c}.
\]
Substituting $u=1-(1-x)^{1/(1-c)}$ into
$A_c(x_c(u))=C_c(u)+x_c(u)-u$ proves
\eqref{eq:constant-profile-area}.
Since $A_c(0)=0$, $A_c(1)=c$, and
$A_c'(x)=1-(1-x)^{c/(1-c)}>0$ for $x\in(0,1)$,
$\bar A_c=A_c/c$ is a strictly increasing bijection of $[0,1]$.
The allocation $B_{K,c}(t)=ct$ therefore gives
$y_{K,c}=\bar A_c^{-1}$.
Also,~\eqref{eq:scalar-paths} yields
$S_{K,c}(t)=(1-\Gamma t)^c$.
Substituting $t=\bar A_c(x)$ into~\eqref{eq:scalar-load} gives
\[
 \ell_\Gamma(K,c)
 =\Gamma\int_{(0,1]}(1-\Gamma t)^{c-1}\dd y_{K,c}(t)
 =\Gamma\int_0^1
   \frac{\dd x}{(1-\Gamma\bar A_c(x))^{1-c}}.
\]

For fixed $x<1$, the formula for $A_c(x)$ extends differentiably to
$c=0$, with $A_0(x)=0$ and
\[
 \left.\frac{\partial}{\partial c}A_c(x)\right|_{c=0}
 =x+(1-x)\log(1-x)=\phi(x).
\]
Thus $\bar A_c(x)\to\phi(x)$ as $c\downarrow0$.
At $x=1$, both functions equal one.
Since the functions $\bar A_c$ are nondecreasing and the limit $\phi$
is continuous, convergence is uniform on $[0,1]$.
Strict monotonicity of $\phi$ then gives
$y_{K,c}(t)\to q(t)$ for every $t\in[0,1]$.
The integrand $(1-\Gamma\bar A_c(x))^{-(1-c)}$ is bounded by
$(1-\Gamma)^{-1}$, so dominated convergence gives
$\ell_\Gamma(K,c)\to J(\Gamma)$.

To compare each finite profile with the limit, we show that
\begin{equation}\label{eq:const-profile-comparison}
 (1-\bar A_c(x))^{1-c}\ge1-\phi(x),
 \qquad x\in[0,1].
\end{equation}
Both sides vanish at $x=1$.
For $x<1$, write $1-x=e^{-r}$ with $r\ge0$.
Then $1-\phi(x)=(1+r)e^{-r}$ and
\[
 1-\bar A_c(x)
 =\frac{e^{-r}-(1-c)e^{-r/(1-c)}}{c}.
\]
Consequently,~\eqref{eq:const-profile-comparison} is equivalent to
\[
 e^{cr/(1-c)}\ge(1-c)+c(1+r)^{1/(1-c)}.
\]
Both sides equal one at $r=0$.
The derivative inequality follows from $e^r\ge1+r$:
\[
 \frac{c}{1-c}e^{cr/(1-c)}
 \ge\frac{c}{1-c}(1+r)^{c/(1-c)}.
\]
This proves~\eqref{eq:const-profile-comparison}.
Since $v\mapsto v^{1-c}$ is concave and $0\le\bar A_c\le1$,
\[
\begin{aligned}
 (1-\Gamma\bar A_c(x))^{1-c}
 &=\bigl((1-\Gamma)+\Gamma(1-\bar A_c(x))\bigr)^{1-c}\\
 &\ge1-\Gamma+\Gamma(1-\bar A_c(x))^{1-c}\\
 &\ge1-\Gamma\phi(x).
\end{aligned}
\]
Substituting this into the integral formula for the load proves
$\ell_\Gamma(K,c)\le J(\Gamma)$.
\end{proof}

For completeness, $J$ is continuous and strictly increasing because
its integrand $\Gamma/(1-\Gamma\phi(x))$ has these properties and is
locally uniformly bounded for $\Gamma\in(0,1)$.
Also, $J(\Gamma)\to0$ as $\Gamma\downarrow0$, whereas monotone
convergence and the substitution $y=1-x$ give
\[
 \lim_{\Gamma\uparrow1}J(\Gamma)
 =\int_0^1\frac{\dd x}{1-\phi(x)}
 =\int_0^1\frac{\dd y}{y(1-\log y)}
 =\infty.
\]
Thus~\eqref{eq:iid-constant} has a unique solution.
Writing $J(\Gamma)=\int_0^1(1/\Gamma-\phi(x))^{-1}\dd x$ and
again substituting $y=1-x$ yields the classical characterization
\[
 \int_0^1
 \frac{\dd y}{y(1-\log y)+1/\Gamma_\star-1}=1
\]
of the optimal i.i.d.\ ratio~\citep{Kertz86,CFHOV21}.

\subsection{Monotonicity in the Target Ratio}\label{app:gamma-monotone}

\begin{proof}[Proof of Lemma~\ref{lem:gamma-monotone}]
For fixed $K$ and $\lambda$, the allocation $B=B_{K,\lambda}$ and the threshold path $y=y_{K,\lambda}$ do not depend on $\Gamma$.
Since $-\log(1-\Gamma t)=\int_0^t\Gamma(1-\Gamma s)^{-1}\dd s$, definition~\eqref{eq:scalar-paths} gives
\[
 \frac{S_{K,\lambda}(t)}{1-\Gamma t}
 =\exp\!\left(\int_0^t\frac{\Gamma\,(1-B'(s))}{1-\Gamma s}\dd s\right).
\]
Because $B$ is $1$-Lipschitz, $1-B'\ge0$ almost everywhere, and $\Gamma\mapsto\Gamma/(1-\Gamma s)$ is increasing.
Hence the ratio above is nondecreasing in $\Gamma$ for every $t$, and so is $\ell_\Gamma(K,\lambda)=\Gamma\int_{(0,1]}S_{K,\lambda}(t)(1-\Gamma t)^{-1}\dd y(t)$.
The second claim follows from Proposition~\ref{prop:scalar-feasibility}.
\end{proof}

\section{Deferred Proofs from Section~\ref{sec:kzero-analysis}}\label{app:kzero-proofs}

Throughout this appendix, $\Gamma\in(0,1)$ is fixed unless stated otherwise, and we abbreviate $D=D_\Gamma$, $H=H_\Gamma$, and $p=p_\Gamma$.
We use the formula~\eqref{eq:kzero-hamiltonian} to define $H(r,v)$ for every $v\in\R$.
With this convention, $H$ is continuous, and continuously differentiable in $v$ with
\begin{equation}\label{eq:hamiltonian-derivative}
 \partial_vH(r,v)=-\frac{\pos{r-1-D(r)v}}{2D(r)}\le0.
\end{equation}
Moreover, $r/D(r)^2\le H(r,v)\le M(r)/D(r)^2$ for $v\ge0$, by~\eqref{eq:kzero-hamiltonian} and the bound $p\le M/D^2$.

\begin{proof}[Proof of Lemma~\ref{lem:kzero-bellman}]
\emph{Finite horizon.}
For $\bar r>0$, consider the terminal value problem
\begin{equation}\label{eq:finite-horizon-bellman}
 -V'(r)=H\bigl(r,V(r)\bigr)\quad(0\le r\le\bar r),
 \qquad V(\bar r)=0.
\end{equation}
By~\eqref{eq:hamiltonian-derivative}, $H$ is locally Lipschitz in $v$, so~\eqref{eq:finite-horizon-bellman} has a unique maximal solution, integrated backward from $\bar r$.
Since $H\ge0$ everywhere, this solution is nonincreasing and hence nonnegative; then $0\le-V'\le M/D^2$ prevents blowup.
Hence~\eqref{eq:finite-horizon-bellman} has a unique solution $V^{\bar r}$ on $[0,\bar r]$, and $V^{\bar r}\ge0$.

Let $a\in[0,\bar r]$ and let $k$ be an admissible control.
The function $r\mapsto\sigma_{a,k}(r)V^{\bar r}(r)$ is absolutely continuous on $[a,\bar r]$, and for almost every $r$,
\[
 \frac{\dd}{\dd r}\bigl(\sigma_{a,k}(r)V^{\bar r}(r)\bigr)
 =-\sigma_{a,k}(r)\left(\frac{k(r)}{D(r)}V^{\bar r}(r)+H\bigl(r,V^{\bar r}(r)\bigr)\right)
 \le-\sigma_{a,k}(r)\,p\bigl(r,k(r)\bigr),
\]
where the inequality is the definition of $H$ as a maximum.
Integrating over $[a,\bar r]$ and using $V^{\bar r}(\bar r)=0$ and $\sigma_{a,k}(a)=1$ gives
\[
 \int_a^{\bar r}\sigma_{a,k}(r)\,p\bigl(r,k(r)\bigr)\dd r\le V^{\bar r}(a).
\]
Equality holds for the continuous feedback control $k(r)=k_*(r,V^{\bar r}(r))$.
Hence $V^{\bar r}(a)$ is the value of the control problem truncated at horizon $\bar r$.

\emph{Infinite horizon.}
Since $p\ge0$ and $\sigma_{a,k}\le1$, the truncated values are nondecreasing in $\bar r$, and for $a\le\bar r<\bar r'$,
\[
 0\le V^{\bar r'}(a)-V^{\bar r}(a)\le\int_{\bar r}^\infty\frac{M(r)}{D(r)^2}\dd r.
\]
The right side tends to zero as $\bar r\to\infty$, uniformly in $a$, so $V^{\bar r}$ converges locally uniformly to a limit $V_\infty$.
Exchanging the two suprema gives $V_\infty(a)=\sup_{\bar r}\sup_k\int_a^{\bar r}\sigma_{a,k}p=\sup_k\mathcal P_a(k)$, which is~\eqref{eq:kzero-control-value}.
Passing to the limit in $V^{\bar r}(a)=\int_a^{\bar r}H(r,V^{\bar r}(r))\dd r$, by dominated convergence with the integrable majorant $M/D^2$, gives $V_\infty(a)=\int_a^\infty H(r,V_\infty(r))\dd r$.
The bounds~\eqref{eq:kzero-tail-majorant} for $V_\infty$ follow from $0\le H\le M/D^2$ and $D(r)\ge c_\Gamma e^r$; in particular, $V_\infty(r)\to0$ as $r\to\infty$.
Hence $V_\infty$ is a nonnegative $C^1$ solution of~\eqref{eq:kzero-bellman}.

\emph{Uniqueness.}
Let $V$ and $\widetilde V$ be nonnegative $C^1$ solutions of~\eqref{eq:kzero-bellman}, and put $\Delta:=V-\widetilde V$.
By the mean value theorem, $\Delta'=\beta\Delta$ with
\[
 \beta(r):=-\int_0^1\partial_vH\bigl(r,\widetilde V(r)+\theta\Delta(r)\bigr)\dd\theta\ge0,
\]
which is continuous by~\eqref{eq:hamiltonian-derivative}.
Hence, for $r<r'$, $|\Delta(r)|=|\Delta(r')|\exp(-\int_r^{r'}\beta)\le|\Delta(r')|$.
Letting $r'\to\infty$ gives $\Delta(r)=0$.
\end{proof}

The proof also justifies the numerical procedure of Section~\ref{sec:kzero-analysis}: the truncated value $V^{\bar r}(0)$ is computed by solving~\eqref{eq:finite-horizon-bellman}, and $0\le V_\Gamma(0)-V^{\bar r}(0)\le\int_{\bar r}^\infty M/D^2$, which yields~\eqref{eq:kzero-truncation}.

\begin{proof}[Proof of Lemma~\ref{lem:kzero-root}]
\emph{Monotonicity and continuity.}
Fix an admissible control $k$, which does not depend on $\Gamma$, and abbreviate $\sigma=\sigma_{0,k}$.
The payoff density is $\sigma(r)(r-k)(1+k)/D_\Gamma(r)^2$, and only the factor $\sigma(r)/D_\Gamma(r)^2$ depends on $\Gamma$.
Since $\partial_\Gamma c_\Gamma=-\Gamma^{-2}$, we have $\partial_\Gamma D_\Gamma(r)=-e^r/\Gamma^2$ and hence
\begin{equation}\label{eq:kzero-parameter-derivative}
 \partial_\Gamma\log\frac{\sigma(r)}{D_\Gamma(r)^2}
 =\frac1{\Gamma^2}
 \left[\frac{2e^r}{D_\Gamma(r)}-\int_0^r\frac{k(s)e^s}{D_\Gamma(s)^2}\dd s\right].
\end{equation}
The identities $\bigl(e^r/D_\Gamma(r)\bigr)'=re^r/D_\Gamma(r)^2$ and $1/D_\Gamma(0)=\Gamma$, together with $k(s)\le s$, show that the bracket is at least
\[
 \frac{2e^r}{D_\Gamma(r)}-\int_0^r\frac{se^s}{D_\Gamma(s)^2}\dd s
 =\frac{e^r}{D_\Gamma(r)}+\Gamma>0.
\]
The bracket is at most $2e^r/D_\Gamma(r)\le2/c_\Gamma$.
Hence
\[
 0<\partial_\Gamma\log\frac{\sigma(r)}{D_\Gamma(r)^2}\le\frac{2}{\Gamma(1-\Gamma)},
\]
so the payoff of every fixed control is nondecreasing in $\Gamma$.
On a compact interval $[\underline\Gamma,\overline\Gamma]\subset(0,1)$, all payoff densities are bounded by $c_{\overline\Gamma}^{-2}M(r)e^{-2r}$, and their derivatives in $\Gamma$ by $2\underline\Gamma^{-1}(1-\overline\Gamma)^{-1}c_{\overline\Gamma}^{-2}M(r)e^{-2r}$.
Thus all payoffs are Lipschitz on $[\underline\Gamma,\overline\Gamma]$ with a common constant.
By~\eqref{eq:kzero-control-value}, $V_\Gamma(0)$ is the supremum of these payoffs, so it is nondecreasing and locally Lipschitz in $\Gamma$.
Consequently, $\Psi(\Gamma)=\Gamma+V_\Gamma(0)$ is continuous and strictly increasing.

\emph{Limits.}
As $\Gamma\downarrow0$, \eqref{eq:kzero-tail-majorant} gives $V_\Gamma(0)\le c_\Gamma^{-2}\int_0^\infty M(r)e^{-2r}\dd r=O(\Gamma^2)$, so $\Psi(\Gamma)\to0$.
As $\Gamma\uparrow1$, the zero control gives
\[
 V_\Gamma(0)\ge\int_0^\infty\frac{r}{D_\Gamma(r)^2}\dd r
 \longrightarrow\int_0^\infty\frac{r}{(1+r)^2}\dd r=\infty
\]
by monotone convergence, since $D_\Gamma(r)\downarrow1+r$.
\end{proof}

\section{Proof of the Guarantee for \texorpdfstring{$K_1$}{K1}}
\label{app:cert-proof}

We prove Theorem~\ref{thm:certified-kernel} for the fixed pair
$(K_1,\Gamma_1)$ from Section~\ref{sec:final-kernel}.
It is enough to verify that $K_1$ is an allocation kernel and that
\begin{equation}\label{eq:cert-acceptance-bound}
 \ell_{\Gamma_1}(K_1,\lambda)\le U_{\mathrm{cert}}
 :=\frac{999999999886}{10^{12}}<1
 \qquad(\lambda\in\Lambda).
\end{equation}
Proposition~\ref{prop:scalar-feasibility} then gives the algorithmic guarantee.

After checking the kernel in Appendix~\ref{app:cert-validity}, we write the load in the
exponential coordinates used for $K_0$ (Appendix~\ref{app:cert-load}). The perturbation
changes the allocation of a profile only through four block masses, while the profile itself
determines a nondecreasing, $1$-Lipschitz path $h$. For each region of block masses, we bound the threshold
complement in terms of the current state $h(r)$ (Appendices~\ref{sec:cert-finite}
and~\ref{sec:cert-comparisons}) and then bound the load along every admissible path
(Appendix~\ref{app:cert-regional-comparison}). A finite cover of the parameter space turns
these regional bounds into~\eqref{eq:cert-acceptance-bound} (Appendix~\ref{sec:cert-results}).
The finite computation does not restrict the profiles to a finite family, nor does it require
them to be constant on the blocks. Figure~\ref{fig:certificate} shows the structure of the
argument, and Table~\ref{tab:cert-notation} collects the notation. The detailed inverse
estimates, quadrature bounds, and regional comparisons are given in
Appendix~\ref{app:cert-details}.

\begin{figure}[htb]
\centering
\begin{tikzpicture}[font=\footnotesize]
  \node[align=flush center] at (34mm,5mm) {\small\textbf{Cover of the mass vectors $\mathbf x\in[0,1]^4$}};
  \node[align=flush center] at (121mm,5mm) {\small\textbf{Bound for one region $\mathcal R$}};
  \foreach \i/\y in {0/-5, 1/-16, 2/-27, 3/-38} { \coordinate (C\i) at (34mm,\y mm); }
  \node[flowbox, text width=64mm, minimum height=8mm] (c0) at (C0)
    {$\mathbf x=\mathbf 0$: analytic bound~\eqref{eq:cert-zero-scale}};
  \node[flowbox, text width=64mm, minimum height=8mm] (c1) at (C1)
    {$0<x_{\max}\le2^{-14}$: boundary certificate ($32{,}529$ regions)};
  \node[flowbox, text width=64mm, minimum height=8mm] (c2) at (C2)
    {$2^{-14}\le x_{\max}\le2^{-10}$: $16$ scale charts ($377{,}458$ regions)};
  \node[flowbox, text width=64mm, minimum height=8mm] (c3) at (C3)
    {$x_{\max}\ge2^{-10}$: direct boxes $Q\in\mathcal Q$ ($222{,}958$ regions)};
  \node[flowbox, text width=70mm, minimum height=8mm] (t1) at (121mm,-5mm)
    {$r\ge10$: the perturbation vanishes; analytic tail bound~\eqref{eq:cert-tail}};
  \node[flowbox, text width=70mm, minimum height=8mm] (t2) at (121mm,-16mm)
    {$b_4\le r\le10$: backward comparison gives $\overline V$};
  \node[flowbox, text width=70mm, minimum height=8mm] (t3) at (121mm,-27mm)
    {$0\le r\le b_4$: threshold bounds, correction term, comparison};
  \node[flowbox, text width=70mm, minimum height=8mm] (t4) at (121mm,-38mm)
    {regional bound $U(\mathcal R)<U_{\mathrm{cert}}$};
  \foreach \i/\j in {1/2,2/3,3/4} { \draw[flowarrow] (t\i) -- (t\j); }
  \begin{scope}[on background layer]
    \node[draw=black!45, dashed, rounded corners=3pt, fit=(c0) (c3), inner sep=1.5mm] (cover) {};
    \node[draw=black!45, dashed, rounded corners=3pt, fit=(t1) (t4), inner sep=1.5mm] (region) {};
  \end{scope}
  \foreach \i in {1,2,3} { \draw[flowarrow, dashed] (c\i.east) -- (c\i.east -| region.west); }
  \node[flowbox, line width=0.9pt, fill=black!6, text width=150mm, minimum height=8mm] (fin) at (80mm,-54mm)
    {$\sup_{\lambda}\ell_{\Gamma_1}(K_1,\lambda)\le U_{\max}<U_{\mathrm{cert}}<1$
     by~\eqref{eq:cert-zero-scale} and Lemma~\ref{lem:cert-finite-data}; hence
     Theorem~\ref{thm:certified-kernel}};
  \draw[flowarrow] (cover.south) -- (cover.south |- fin.north);
  \draw[flowarrow] (region.south) -- (region.south |- fin.north);
\end{tikzpicture}
\caption{Structure of the computer-assisted proof of Theorem~\ref{thm:certified-kernel}.
Left: the cover of the set of mass vectors (Appendix~\ref{app:cert-cover}). Right: the
steps that produce the bound for a single region (Appendices~\ref{app:cert-regional-comparison}
and~\ref{app:cert-details}); dashed arrows indicate that every region of the three numerical
covers is treated in this way.}
\label{fig:certificate}
\end{figure}

\begin{table}[htb]
\centering
\small
\begin{tabular}{@{}l p{9.4cm} l@{}}
\toprule
Symbol & Meaning & Reference\\
\midrule
$a$, $r$ & exponential rank and time: $u_a=1-e^{-a}$, $t_r=1-(1+r)e^{-r}$ & \eqref{eq:kzero-r-coordinates}\\
$h(r)$, $C(r)$ & cumulative hazard $h_\lambda(u_r)$ and winning probability $C_\lambda(u_r)$ & \eqref{eq:kzero-h}\\
$b_g$, $L_g$ & block endpoints $(0,\frac25,2,\frac{16}5,\frac{32}5)$ and lengths & \eqref{eq:cert-blocks}\\
$M_g$, $x_g$, $\mathbf x$ & block mass $h(b_{g+1})-h(b_g)$, block average $M_g/L_g$, and mass vector $(x_g)_g$ & \eqref{eq:cert-mass-map}\\
$M_{<p}$ & cumulative block mass $\sum_{g<p}M_g=h(b_p)$ & \eqref{eq:cert-mass-ranges}\\
$w_g$, $c_g$ & correction coefficients of block $g$ & \eqref{eq:cert-c-coeff}\\
$d$, $\nu$ & corrections of the allocation and of its rate & \eqref{eq:cert-dzk}, \eqref{eq:cert-zk}\\
$k$, $\sigma$ & rescaled allocation rate $h+\nu$ and survival factor & \eqref{eq:cert-kSR}, \eqref{eq:cert-abbreviations}\\
$R$ & rescaled threshold complement $e^r(1-Y(r))$ & \eqref{eq:cert-kSR}\\
$\chi$, $\Delta_\lambda$, $\Xi$, $\rho_*$ & gap $r-h(r)$, area gap, inverse correction, inverse coordinate & \S\ref{sec:cert-comparisons}\\
$\mathcal U_{\mathcal R}$, $\mathcal G$ & regional bound on $R+\Xi_+$ and correction term & \eqref{eq:cert-U-condition}, \eqref{eq:cert-credit-integrated}\\
$V_\lambda$, $\overline V$ & continuation integral and its bound at $b_4$ & \S\ref{app:cert-regional-comparison}\\
$T_0$, $W$, $P$ & i.i.d.\ continuation, scalar potential, state potential & \S\ref{app:cert-regional-comparison}\\
$x_{\max}$, $\hat x_g$, $m_g$, $\hat d$ & scale $\max_gx_g$, ratios $x_g/x_{\max}$, and normalized quantities $M_g/x_{\max}$ and $d/x_{\max}$ & \eqref{eq:cert-normalized-masses}, \S\ref{app:cert-inverse-bounds}\\
\bottomrule
\end{tabular}
\caption{Notation for the proof of Theorem~\ref{thm:certified-kernel}.}
\label{tab:cert-notation}
\end{table}

\subsection{Validity of the Kernel}
\label{app:cert-validity}\label{app:cert-coefficients}

\begin{lemma}[Kernel validity]\label{lem:cert-kernel-valid}
The density $p_1$ is nonnegative and has marginals
\begin{equation}\label{eq:cert-marginals}
 \int_0^\infty p_1(a,r)\dd r=e^{-a},\qquad
 \int_0^\infty p_1(a,r)\dd a=re^{-r}.
\end{equation}
Hence $K_1$ is an allocation kernel.  For every fixed $a$, both
$p_1(a,r)-p_0(a,r)$ and
$\int_0^r(p_1(a,v)-p_0(a,v))\dd v$ vanish for $r\ge10$.
\end{lemma}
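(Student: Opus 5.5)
The plan is to verify everything directly from the cell structure of $\delta$. Nonnegativity of $p_1$ reduces to a finite sign check on the $16\times25$ matrix $w$, the marginals reduce to the zero row and column sums of $\delta$, and the kernel axioms then follow by changing variables back to $(t,u)$.

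\emph{Nonnegativity.} On the cell $I_i\times J_j$, the bracket in~\eqref{eq:cert-p} equals $\ind\{a<r\}+w_{ij}$. There are three cases.
\begin{itemize}
\item If $i<j$, the whole cell lies in $\{a<r\}$, so we need $w_{ij}\ge-1$.
\item If $i>j$, the cell lies in $\{a\ge r\}$, so we need $w_{ij}\ge0$.
\item If $i=j$, the cell is split by the diagonal and contains points with $a\ge r$, so we need $w_{ii}\ge0$.
\end{itemize}
From~\eqref{eq:cert-delta}, the only entries that can receive negative contributions are $(0,j)$ with $1\le j\le23$ and $(i,24)$ with $i\ge1$. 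All of these have $i<j$, so for them only $w_{ij}\ge-1$, that is $\delta_{ij}\ge-\Delta\omega_j$, must be checked. The entries $(i,j)$ with $i\ge1$ and $1\le j\le23$ equal $\vartheta 10^{-21}N_{g,j}$. For those with $i\ge j$ we need $N_{g,j}\ge0$, and for those with $i<j$ we need $\vartheta 10^{-21}N_{g,j}\ge-\Delta\omega_j$.

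All of these are finitely many inequalities between explicit numbers, involving the integers of Table~\ref{tab:cert-kernel} and the quantities $\omega_j=e^{-j\Delta}(1-e^{-\Delta})$. I would verify them with exact rational arithmetic, using rigorous rational enclosures of $e^{-j\Delta}$. I expect this finite verification to be the only genuinely non-structural step, and hence the main obstacle. It depends entirely on the tabulated data, so if some $N_{g,j}$ were negative in a cell with $i\ge j$, the lemma would fail as stated. The $\omega_j$ are of order $e^{-10}\approx4.5\times10^{-5}$ at worst, while $10^{-21}|N_{g,j}|$ should be far smaller, so I expect ample margin.

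\emph{Marginals.} The baseline density $p_0(a,r)=e^{-r}\ind\{a<r\}$ has marginals $e^{-a}$ and $re^{-r}$, as already noted. For the perturbation, fix $a\in I_i$. Integrating $e^{-r}w_{ij}$ over $r\in J_j$ gives $w_{ij}\omega_j=\delta_{ij}/\Delta$. Summing over $j$ gives $\Delta^{-1}\sum_j\delta_{ij}=0$, because each summand in~\eqref{eq:cert-delta} preserves row sums. Similarly, fix $r\in J_j$ and integrate over $a\in I_i$. This gives $e^{-r}\delta_{ij}/\omega_j$, and summing over $i$ gives zero because column sums vanish. This proves~\eqref{eq:cert-marginals}.

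\emph{Kernel properties.} For fixed $a$, the map $r\mapsto K_1(t_r,u_a)=e^a\int_0^rp_1(a,v)\dd v$ is continuous, nondecreasing by nonnegativity, vanishes at $r=0$, and tends to $e^ae^{-a}=1$ as $r\to\infty$ by the first marginal. Transporting through the increasing bijection $r\mapsto t_r$, and setting $K_1(1,u)=1$, makes $K_1(\cdot,u)$ nondecreasing and right-continuous with $K_1(0,u)=0$ and $K_1(1,u)=1$. Joint measurability is clear from the explicit formula.

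For the marginal identity~\eqref{eq:kernel-marginal}, substitute $u=u_a$, so that $\dd u=e^{-a}\dd a$, and apply Fubini together with the second marginal:
\[
\int_0^1K_1(t_r,u)\dd u=\int_0^\infty\int_0^rp_1(a,v)\dd v\dd a=\int_0^rve^{-v}\dd v=t_r .
\]
The boundary conventions at $t=1$ and $u=1$ affect only null sets or endpoints.

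\emph{Vanishing for $r\ge10$.} The perturbation is supported on $r<25\Delta=10$, which gives the first claim. For the second, when $r\ge10$ the integral $\int_0^r(p_1-p_0)(a,v)\dd v$ equals the full row integral, which is zero by the row-sum computation above. The same support argument, with $a<16\Delta=32/5$, also gives $K_1=K_0$ for $u\ge1-e^{-32/5}$.
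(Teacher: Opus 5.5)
Your proposal is correct and follows the paper's proof essentially step for step. It uses the same cellwise reduction of nonnegativity to $\Delta\omega_j+\delta_{ij}\ge0$ for $i<j$ and $\delta_{ij}\ge0$ for $i\ge j$, the marginals from the zero row and column sums of $\delta$, the kernel marginal via $\dd u_a=e^{-a}\dd a$ and Fubini, and the vanishing for $r\ge10$ from the support together with the zero row integrals.

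Two small corrections to the bookkeeping:
\begin{itemize}
\item The entry $(0,24)$ is missing from your list of entries to check. It equals $\vartheta10^{-21}\sum_g|\mathcal I_g|\sum_jN_{g,j}$, which is not sign-definite a priori since some $N_{g,j}<0$. It happens to be positive, but this must be verified.
\item Your expectation of ample margin is wrong. The constraints at $(i,22)$, $(i,23)$ and $(i,24)$ for $i\in\mathcal I_1$ hold with slack only of order $10^{-11}$. This is why the paper needs high-order rational bounds for $e^{-x}$.
\end{itemize}
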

\begin{proof}
The marginal identities follow from the zero row and column sums of
\eqref{eq:cert-delta}.  Nonnegativity is equivalent to the finite tests
\[
 \Delta\omega_j+\delta_{ij}\ge0\quad(i<j),\qquad
 \delta_{ij}\ge0\quad(i\ge j).
\]
The conditions with $i\ge j$ are rational comparisons.  For $i<j$,
the coefficient table and rational exponential bounds below verify
the displayed inequalities.
On a diagonal cell the nonnegative rectangular correction is added to
the baseline triangular density.
The first marginal makes $K_1(\cdot,u_a)$ a conditional distribution function.
Using $\dd u_a=e^{-a}\dd a$ and the second marginal gives
\[
 \int_0^1K_1(t_r,u)\dd u
 =\int_0^r ve^{-v}\dd v=t_r,
\]
which is~\eqref{eq:kernel-marginal}.
Finally, the correction is supported on
$r<10$ and its integral over every input row is zero.
\end{proof}

\begin{table}[htbp]
\centering
\small
\setlength{\tabcolsep}{4pt}
\begin{tabular}{r|r r r}
\toprule
$j$ & $N_{1,j}$ & $N_{2,j}$ & $N_{3,j}$\\
\midrule
1 & 139368429085091008 & 253843311272096320 & 32389009049152552 \\
2 & 5953487440217 & 1791759853683 & 71119674367 \\
3 & 3990741975321 & 1201052547606 & 47672943393 \\
4 & 2675074343691 & 805089599003 & 31956129611 \\
5 & -1179057110689020928 & 2698338485333 & 107104171361 \\
6 & 543811243036226816 & 1808750378235 & 71794073076 \\
7 & 187787420768742656 & 1212441636912 & 48125006371 \\
8 & 596185967112761600 & -1832980005978172416 & 51614650583 \\
9 & 423259029123334464 & -313737654730343936 & 34598334963 \\
10 & -691811398605077888 & 898120583974625280 & 23191957399 \\
11 & -184413906041231008 & 390602254327415360 & 15546034017 \\
12 & 34902452349492964 & 306734431836344000 & 10420818254 \\
13 & -37556789677386504 & 263005582702171680 & 6985283324 \\
14 & -47401672754419120 & 173208685273138400 & 11457312846336316 \\
15 & -96227740933766672 & 168402329898501888 & 22884320211898852 \\
16 & 275400707757524384 & 10295857615769126 & -114815326321236128 \\
17 & 130717989035219808 & -98670230297316768 & -10429295523610740 \\
18 & 77567271414955840 & -71065873407713880 & 172273975475231 \\
19 & -56519617666625056 & -50315851097015144 & 54605577509679488 \\
20 & -44238050328599312 & -38428024022805248 & 42002973082326256 \\
21 & -29653651932793608 & -25759074832025444 & -12472760155322594 \\
22 & -19877437328715028 & -16129817091862610 & -12759564154888524 \\
23 & -13324244705254784 & -11156299818298360 & -7869831686836923 \\
\bottomrule
\end{tabular}
\caption{Exact integer specification of the certified kernel.}
\label{tab:cert-kernel}
\end{table}

To complete the nonnegativity check in
Lemma~\ref{lem:cert-kernel-valid} we use rational bounds for the exponentials.  Define
\[
 \mathcal E_m(x):=\sum_{n=0}^m\frac{(-x)^n}{n!}.
\]
For $0\le x\le10$, the alternating-series remainder gives
$\mathcal E_{161}(x)\le e^{-x}\le \mathcal E_{160}(x)$; in particular, $\mathcal E_{161}(j\Delta)-\mathcal E_{160}((j+1)\Delta)\le\omega_j$.
Substituting the integers in Table~\ref{tab:cert-kernel} gives, for
every $0\le i<16$, $0\le j<25$ with $i<j$,
\[
 \delta_{ij}+\Delta\bigl(\mathcal E_{161}(j\Delta)
                       -\mathcal E_{160}((j+1)\Delta)\bigr)>10^{-11},
\]
and hence $\Delta\omega_j+\delta_{ij}>10^{-11}$.
All entries with $i\ge j$ satisfy $\delta_{ij}\ge0$ by exact integer
comparison.  The zero row and column sums follow directly from
\eqref{eq:cert-delta}.  These finite rational inequalities prove every
nonnegativity condition used in the kernel construction.

\subsection{Load Representation}\label{app:cert-load}

Fix a profile $\lambda\in\Lambda$.  We use the exponential coordinates
of Section~\ref{sec:kzero-exp} throughout the proof.
Recall from~\eqref{eq:kzero-h} that
\[
 h(r)=\int_0^r\lambda(u_a)\dd a,
 \qquad C(r)=\int_0^r\lambda(u_a)e^{-a}\dd a.
\]
Thus $h'(r)=\lambda(u_r)\in[0,1]$ almost everywhere, whereas
$C(r)$ is the winning probability accumulated up to maximum rank $u_r$.
We abbreviate
\begin{gather}
 D(r):=D_{\Gamma_1}(r)=c_{\Gamma_1}e^r+1+r,
 \qquad c_{\Gamma_1}=\frac{1-\Gamma_1}{\Gamma_1},
 \label{eq:cert-D}\\
 B:=B_{K_1,\lambda},\qquad
 Y(r):=y_{K_1,\lambda}(t_r),\qquad
 \sigma(r):=S_{K_1,\lambda}(t_r).
 \label{eq:cert-abbreviations}
\end{gather}
As in Section~\ref{sec:kzero-exp}, $B$ is a function of normalized
algorithmic time, while $Y$ and $\sigma$ are expressed in the exponential
coordinate.  Define the rescaled allocation rate and the rescaled threshold complement (as in Section~\ref{sec:kzero-lag}) by
\begin{equation}\label{eq:cert-kSR}
 k(r):=e^r\frac{\dd}{\dd r}B(t_r)=rB'(t_r),
 \qquad R(r):=e^r(1-Y(r)).
\end{equation}
The rate identities hold almost everywhere for $r>0$.
Since $B$ is nondecreasing and $1$-Lipschitz, $0\le k(r)\le r$.
The survival factor in~\eqref{eq:scalar-paths} therefore becomes
\[
 \sigma(r)=\exp\!\left(-\int_0^r\frac{k(v)}{D(v)}\dd v\right).
\]
For the baseline kernel, $k=h$; the perturbation changes this rate.

Define the auxiliary upper bound
\begin{equation}\label{eq:cert-augmented-load}
 \overline\ell_\lambda
 :=\ell_{\Gamma_1}(K_1,\lambda)+\Gamma_1Y(0).
\end{equation}
The added term is nonnegative, so a uniform upper bound below $1$ for
$\overline\ell_\lambda$ proves feasibility.  It also makes the boundary
term below equal to $\Gamma_1$ for every profile.  This modification is
used only in the analysis; the arrival construction still assigns no
mass to time zero.

\begin{lemma}[Load representation]\label{lem:cert-load-identity}
For every $\lambda\in\Lambda$,
\begin{equation}\label{eq:cert-load-identity}
 \ell_{\Gamma_1}(K_1,\lambda)\le\overline\ell_\lambda
 =\Gamma_1+\int_0^\infty
 \sigma(r)\frac{(r-k(r))R(r)}{D(r)^2}\dd r.
\end{equation}
\end{lemma}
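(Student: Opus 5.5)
The inequality $\ell_{\Gamma_1}(K_1,\lambda)\le\overline\ell_\lambda$ is immediate from~\eqref{eq:cert-augmented-load}, since $Y(0)=y_{K_1,\lambda}(0)\ge0$. The content of the lemma is therefore the identity. We obtain it by repeating the change of variables in the proof of Lemma~\ref{lem:kzero-coordinates}, now for a general kernel. We start from the integration-by-parts formula of Lemma~\ref{lem:load-parts}, applied with $K=K_1$ and $\Gamma=\Gamma_1$:
\[
 \ell_{\Gamma_1}(K_1,\lambda)=\Gamma_1\bigl(1-y(0)\bigr)+\Gamma_1^2\int_0^1\frac{S_\lambda(t)(1-y(t))(1-B'(t))}{S(t)^2}\dd t .
\]
Adding $\Gamma_1 Y(0)=\Gamma_1y(0)$ turns the boundary term into $\Gamma_1$. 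Only the integral then needs to be rewritten.

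For the integral, we substitute $t=t_r$. The map $r\mapsto t_r$ is an increasing $C^1$ bijection $[0,\infty)\to[0,1)$ with $\dd t=re^{-r}\dd r$, so the substitution is legitimate for the Lebesgue integral of a nonnegative function; the single point $t=1$ is null. We then use the following four facts.
\begin{enumerate}
\item By~\eqref{eq:kzero-S}, $S(t_r)^2=\Gamma_1^2e^{-2r}D(r)^2$.
\item By definition, $1-y(t_r)=e^{-r}R(r)$ and $S_\lambda(t_r)=\sigma(r)$.
\item Since $B$ is $1$-Lipschitz, the chain rule holds almost everywhere. Hence $k(r)=rB'(t_r)$ for a.e.\ $r>0$, which gives $1-B'(t_r)=(r-k(r))/r$.
\item Substituting $v=t_s$ in~\eqref{eq:scalar-paths} gives $\sigma(r)=\exp(-\int_0^rk(v)/D(v)\dd v)$, which is the formula stated above.
\end{enumerate}
Multiplying these together, the factor $\Gamma_1^2$ cancels, the powers of $e^{\pm r}$ cancel ($e^{-r}\cdot e^{-r}\cdot e^{2r}=1$), and the factor $r$ cancels against $re^{-r}$ once we also use $e^{-r}$ from the threshold complement. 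The result is exactly the integrand $\sigma(r)(r-k(r))R(r)/D(r)^2$.

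The step needing the most care is the almost-everywhere justification of the chain rule in item 3, since $B_{K_1,\lambda}$ need not be $C^1$, unlike for $K_0$. We handle it by noting that $B$ is Lipschitz and $t_r$ is $C^1$ with positive derivative for $r>0$. The composition $r\mapsto B(t_r)$ is therefore locally absolutely continuous, and the set of $r$ at which $t_r$ hits the null set where $B'$ fails to exist is null, because $t_r$ is a $C^1$ diffeomorphism on $(0,\infty)$. Every integrand involved is nonnegative, so no integrability issue arises.
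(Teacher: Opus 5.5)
Your proposal is correct and matches the paper's proof: substitute $t=t_r$ in the integration-by-parts formula of Lemma~\ref{lem:load-parts}, use $\dd t=re^{-r}\dd r$, $B'(t_r)=k(r)/r$ almost everywhere, and $1-\Gamma_1t_r=\Gamma_1e^{-r}D(r)$, then turn the boundary term into $\Gamma_1$ by adding $\Gamma_1Y(0)$. Your extra remarks on the almost-everywhere chain rule and on changing variables for nonnegative integrands just make explicit what the paper leaves implicit.
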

\begin{proof}
Substitute $t=t_r$ into~\eqref{eq:load-parts}.
Since $\dd t=re^{-r}\dd r$,
$B'(t_r)=k(r)/r$ for almost every $r>0$, and
$1-\Gamma_1t_r=\Gamma_1e^{-r}D(r)$, the integral has the stated form.
The boundary term in~\eqref{eq:load-parts} is
$\Gamma_1(1-Y(0))$; adding $\Gamma_1Y(0)$ proves the identity.
\end{proof}

The rest of the proof bounds the integral in
\eqref{eq:cert-load-identity} uniformly over $\lambda$.
Its dependence on the allocation is captured by $k$ and $\sigma$;
the remaining difficulty is the threshold term $R$, which involves
the inverse of the profile's area curve.

\subsection{Block Parameters and Admissible States}\label{sec:cert-finite}

The correction rows in~\eqref{eq:cert-p} are constant on four blocks in $a$.
Write
\begin{equation}\label{eq:cert-blocks}
 (b_0,b_1,b_2,b_3,b_4)
 =\left(0,\frac25,2,\frac{16}5,\frac{32}5\right),
\end{equation}
and define
\begin{equation}\label{eq:cert-mass-map}
 L_g=b_{g+1}-b_g,\quad
 M_g=\int_{b_g}^{b_{g+1}}\lambda(u_a)\dd a,\quad
 x_g=M_g/L_g,\quad \mathbf x(\lambda)=(x_0,x_1,x_2,x_3)\in[0,1]^4.
\end{equation}
Here $g\in\{0,1,2,3\}$.  The block mass $M_g$ is the increment
$h(b_{g+1})-h(b_g)$ of the cumulative hazard; it is not the winning
probability $\mu_\lambda=\int_0^1\lambda(u)\dd u$.
Thus $x_g$ is the average of $\lambda(u_a)$ on that block, and we call $\mathbf x(\lambda)$ the \emph{mass vector} of $\lambda$.
Let $w_g(r)$ be the common value of the correction coefficient on block $g$;
that is,
$e^r(p_1(a,r)-p_0(a,r))=w_g(r)$ for $b_g\le a<b_{g+1}$.
Set
\begin{equation}\label{eq:cert-c-coeff}
 c_g(r)=e^r\int_0^r e^{-v}w_g(v)\dd v.
\end{equation}
On each output interval $J_j$, $w_g$ is constant and
$c_g'(r)=c_g(r)+w_g(r)$.  Direct integration therefore gives
\begin{align}
 B(t_r)&=C(r)-h(r)e^{-r}+d(r)e^{-r},&
 d(r)&=\sum_g c_g(r)M_g,\label{eq:cert-dzk}\\
 k(r)&=h(r)+\nu(r),& \nu(r)&=\sum_g w_g(r)M_g.
 \label{eq:cert-zk}
\end{align}
Thus the allocation and rate corrections depend only on the four
block masses; $k$ additionally depends on the current state $h(r)$.
The threshold term $R$ still depends on the profile within each block.
We bound it uniformly over these arbitrary measurable profiles using
the same masses and state.

For a closed parameter box
\begin{equation}\label{eq:cert-mass-box}
 \mathcal R=\prod_{g=0}^3[\underline x_g,\overline x_g],
 \qquad 0\le\underline x_g\le\overline x_g\le1,
\end{equation}
write
\[
 \Lambda(\mathcal R)=\{\lambda\in\Lambda:
                        \mathbf x(\lambda)\in\mathcal R\}.
\]
Its block-mass bounds are
\begin{equation}\label{eq:cert-mass-ranges}
 \underline M_g=L_g\underline x_g,\quad
 \overline M_g=L_g\overline x_g,\qquad
 \underline M_{<p}=\sum_{g<p}\underline M_g,\quad
 \overline M_{<p}=\sum_{g<p}\overline M_g.
\end{equation}
For a particular profile, put $M_{<p}=\sum_{g<p}M_g$, so that $M_{<p}=h(b_p)$.
On $b_p\le r\le b_{p+1}$,
\begin{equation}\label{eq:cert-path-cone-exact}
 \max\{M_{<p},M_{<p+1}+r-b_{p+1}\}
 \le h(r)\le\min\{M_{<p+1},M_{<p}+r-b_p\}.
\end{equation}
Indeed, $h(r)-M_{<p}$ lies in $[0,r-b_p]$, and
$M_{<p+1}-h(r)$ lies in $[0,b_{p+1}-r]$.
Consequently every $\lambda\in\Lambda(\mathcal R)$ satisfies
\begin{equation}\label{eq:cert-path-cone}
 h^-_{\mathcal R}(r)\le h(r)\le h^+_{\mathcal R}(r),
\end{equation}
where
\begin{align*}
 h^-_{\mathcal R}(r)&=\max\{\underline M_{<p},
                              \underline M_{<p+1}+r-b_{p+1}\},\\
 h^+_{\mathcal R}(r)&=\min\{\overline M_{<p+1},
                              \overline M_{<p}+r-b_p\},
\end{align*}
with $p$ chosen so that $b_p\le r\le b_{p+1}$.
For $r\ge b_4$, extend these functions by
\begin{equation}\label{eq:cert-tail-state}
 h^-_{\mathcal R}(r)=\underline M_{<4},\qquad
 h^+_{\mathcal R}(r)=\overline M_{<4}+r-b_4,
\end{equation}
which again enclose every actual state.

At each fixed $r$, we use the admissible set
\begin{equation}\label{eq:cert-state-domain}
 \mathcal D_{\mathcal R}(r)=
 \{(\mathbf x,h):\mathbf x\in\mathcal R,\
 h^-_{\mathcal R}(r)\le h\le h^+_{\mathcal R}(r),\
 0\le h+\nu(r;\mathbf x)\le r\}.
\end{equation}
Here $\nu(r;\mathbf x)=\sum_g w_g(r)L_gx_g$, and similarly for $d$.
For a parameter region given by a change of coordinates, the same notation
is used with its image in place of $\mathcal R$; mass bounds follow from a
containing box, and the defining coordinate constraints are retained in
$\mathcal D_{\mathcal R}(r)$.
Every profile with parameters in the region satisfies these constraints.
We may therefore obtain a uniform upper bound by maximizing over these
sets, even if they also contain states that no such profile realizes.

Near the zero mass vector, it is useful to keep the total scale
separate from the relative block masses.
Let
\begin{equation}\label{eq:cert-scale}
 x_{\max}:=\|\mathbf x(\lambda)\|_\infty
 =\max_{0\le g\le3}x_g.
\end{equation}
For $x_{\max}>0$, choose a pivot
$
 g^*\in\arg\max_{0\le g\le3}x_g$
and separate the overall scale from the relative shape by writing
\begin{equation}\label{eq:cert-normalized-masses}
 \hat x_g:=\frac{x_g}{x_{\max}},
 \qquad
 \hat x_{g^*}=1,
 \qquad
 m_g:=L_g\hat x_g,
 \qquad
 M_g=x_{\max}m_g=x_{\max}L_g\hat x_g.
\end{equation}
The ratios $\hat x_g$ lie in $[0,1]$, whereas $m_g$ are the normalized block
masses used in the inverse estimates.
These definitions do not impose a positive lower bound on
$x_{\max}$.  They apply to every positive scale, however small.
The case $x_{\max}=0$ is treated directly when the regions are assembled
in Appendix~\ref{app:cert-cover}.

\subsection{Bounding the Threshold Complement}\label{sec:cert-comparisons}

The load formula involves $R$, which is defined through the inverse
of $A_\lambda$.  We first bound this inverse and then use a comparison
inequality to control all admissible paths $h$ at once.

Put $\chi(r):=r-h(r)$, so that
$x_\lambda(u_r)=1-e^{-\chi(r)}$.
Equation~\eqref{eq:profile-area-main} becomes
\begin{equation}\label{eq:cert-source-identity}
 A_\lambda(1-e^{-\chi(a)})
 =C(a)+e^{-a}-e^{-\chi(a)}.
\end{equation}
For $0\le x<1$, define
\[
 \rho_\lambda(x):=\inf\{a\ge0:1-e^{-\chi(a)}>x\},
 \qquad \rho_\lambda(1):=\infty,
 \qquad \rho_*(r):=\rho_\lambda(Y(r)),
\]
with $\inf\varnothing=\infty$.
The coordinate $\rho_*(r)$ identifies the point on the rank
representation that corresponds to the probability threshold $Y(r)$.
The strict inequality in the definition implements the same
right-endpoint convention as $A_\lambda^\leftarrow$.
Thus $\rho_\lambda$ is a generalized inverse of the loss-score map $a\mapsto1-e^{-\chi(a)}$; in the notation of Lemma~\ref{lem:profile-recovery}, $\rho_\lambda(x)=-\log\bigl(1-\rho_{A_\lambda}(x)\bigr)$ (see the proof of that lemma).
If $Y(r)$ lies in the terminal slope-one extension, then
$\rho_*(r)=\infty$.  In the formulas below, we use
$C(\infty)=\mu_\lambda$ and $e^{-\infty}=0$.
By the upper-inverse definition, $A_\lambda(Y(r))=B(t_r)$.
If $\rho_*(r)<\infty$, apply~\eqref{eq:cert-source-identity} at
$a=\rho_*(r)$; if $\rho_*(r)=\infty$, use the terminal extension
\eqref{eq:profile-extension-main} (or $A_\lambda(1)=\mu_\lambda$ when $Y(r)=1$).
Together with~\eqref{eq:cert-dzk}, this gives
\begin{align}
 R(r)
 &=e^r\bigl(C(\rho_*(r))-C(r)+e^{-\rho_*(r)}\bigr)
   +h(r)-d(r) \notag\\
 &=1+h(r)-d(r)-\Xi(r), \label{eq:cert-inverse-identity}
\end{align}
where
\[
 \Xi(r):=
 e^r\int_r^{\rho_*(r)}
 (1-\lambda(u_a))e^{-a}\dd a.
\]
To determine on which side of $r$ the inverse coordinate
$\rho_*(r)$ lies, define
\begin{equation}\label{eq:cert-area-gap}
 \Delta_\lambda(r)
 :=e^r\{B(t_r)-A_\lambda(1-e^{-\chi(r)})\}
 =e^{h(r)}-1-h(r)+d(r).
\end{equation}
The upper-inverse convention gives $\rho_*(r)\ge r$ when
$\Delta_\lambda(r)\ge0$, and $\rho_*(r)<r$ when
$\Delta_\lambda(r)<0$.

Write $\Xi_+(r)=\max\{\Xi(r),0\}$.  On the nonnegative-gap branch,
\begin{equation}\label{eq:cert-positive-basic}
 R(r)+\Xi_+(r)=1+h(r)-d(r);
\end{equation}
on the negative-gap branch, $\Xi_+=0$ and
\begin{equation}\label{eq:cert-negative-basic}
 R(r)\le e^r.
\end{equation}
Thus the branch-independent bound
\begin{equation}\label{eq:cert-U-fallback}
 R(r)+\Xi_+(r)\le\max\{1+h(r)-d(r),e^r\}
\end{equation}
is always available.  The sharper estimates in
Appendix~\ref{app:cert-inverse-bounds} use the block masses and the
bounds on $h$.  We write their regional upper bound in the form
\begin{equation}\label{eq:cert-U-condition}
 R(r)+\Xi_+(r)\le
 \mathcal U_{\mathcal R}(r,\mathbf x(\lambda),h(r)),
 \qquad \lambda\in\Lambda(\mathcal R).
\end{equation}

The bound~\eqref{eq:cert-U-condition} gives
\[
 R(r)\le
 \mathcal U_{\mathcal R}(r,\mathbf x(\lambda),h(r))-\Xi_+(r).
\]
Using the representation of $\Xi_+$ in~\eqref{eq:cert-inverse-identity}
and changing the order of integration, Appendix~\ref{app:cert-credit}
constructs $\mathcal G\ge0$ such that
\begin{equation}\label{eq:cert-credit-integrated}
 \int_0^{b_4}\sigma(r)\frac{(r-k(r))\Xi_+(r)}{D(r)^2}\dd r
 \ge\int_0^{b_4}\sigma(r)(1-\lambda(u_r))
 \mathcal G(r,\mathbf x(\lambda),h(r))\dd r
\end{equation}
for every admissible profile.  We call $\mathcal G$ the
\emph{correction term}, since it provides a nonnegative subtraction
from the upper bound on the integral.  If no positive bound is certified,
we take $\mathcal G=0$.  Along a fixed profile path, we suppress the arguments
$\mathbf x(\lambda)$ and $h(r)$ in $\mathcal U_{\mathcal R}$ and $\mathcal G$.

\subsection{Regional Comparison and Tail Bounds}
\label{app:cert-regional-comparison}\label{app:cert-bellman}

We now convert bounds on the threshold complement into a bound on the
load of every profile in a fixed parameter region $\mathcal R$.
We first control the integral beyond $b_4$, then the integral on $[0,b_4]$.

\subsubsection{Continuation integrals and backward comparison}

To separate the finite interval from the tail, define the continuation
term for a fixed profile by
\[
 V_\lambda(a):=
 \int_a^\infty
 \frac{\sigma(r)}{\sigma(a)}
 \frac{(r-k(r))R(r)}{D(r)^2}\dd r.
\]
This is the remaining integral contribution per unit of survival at
$a$, for a fixed profile; unlike $V_\Gamma$ in
Section~\ref{sec:kzero-control}, it takes no supremum.
The load splits at the last input-block endpoint $b_4=32/5$ as
\[
 \overline\ell_\lambda
 =\Gamma_1+
 \int_0^{b_4}\sigma(r)\frac{(r-k(r))R(r)}{D(r)^2}\dd r
 +\sigma(b_4)V_\lambda(b_4).
\]
The perturbation persists until $r=10$, so we first bound the tail
beyond $10$, then propagate that bound back to $b_4$.

For $0\le a<b$, write $\sigma_a(r):=\sigma(r)/\sigma(a)$.  Hence
$\sigma_a'=-k\sigma_a/D$, $\sigma_a(a)=1$, and
\[
 V_\lambda(a)
 =\int_a^b \sigma_a(r)\frac{(r-k(r))R(r)}{D(r)^2}\dd r
 +\sigma_a(b)V_\lambda(b).
\]
If $V_\lambda(b)\le v$ and $R\le\overline R$ on $[a,b]$, then
\begin{equation}\label{eq:cert-direct-one-step}
 V_\lambda(a)
 \le v+\int_a^b \sigma_a(r)
 \left[
 \frac{(r-k(r))\overline R(r)}{D(r)^2}
 -\frac{k(r)v}{D(r)}
 \right]\dd r.
\end{equation}
This comparison leads to sufficient conditions for propagating a
continuation bound backward.  The following proposition gives the basic finite update.
The state-band refinement is given in Lemma~\ref{lem:cert-dp-chord}.

\begin{proposition}[Direct regional comparison]
\label{prop:cert-direct}
Let $0\le r_0<r_1<\cdots<r_N<\infty$, and suppose that
$v_N\ge0$ satisfies $V_\lambda(r_N)\le v_N$ for every
$\lambda\in\Lambda(\mathcal R)$.  Suppose that, for
$j=N-1,\ldots,0$, there is a uniform upper bound
$\overline R_j(r)\ge R(r)$ on $[r_j,r_{j+1}]$ and numbers
$H_j$ and $\kappa_j\ge0$ such that
\begin{equation}\label{eq:cert-direct-residual}
 \frac{(r-k(r))\overline R_j(r)}{D(r)^2}
 -\frac{k(r)v_{j+1}}{D(r)}
 \le H_j,
 \qquad
 \frac{k(r)}{D(r)}\le \kappa_j
\end{equation}
throughout the admissible mass and state ranges on that interval.
Choose recursively
\begin{equation}\label{eq:cert-direct-update}
 v_j\ge
 \max\left\{0,\,
 v_{j+1}+
 \begin{cases}
 (r_{j+1}-r_j)H_j,
     &H_j\ge0,\\[2mm]
 \displaystyle
 \frac{(r_{j+1}-r_j)H_j}
 {1+(r_{j+1}-r_j)\kappa_j},
     &H_j<0.
 \end{cases}
 \right\}.
\end{equation}
Then
\[
 V_\lambda(r_j)\le v_j
 \qquad
 (j=0,\ldots,N)
\]
for every $\lambda\in\Lambda(\mathcal R)$.  If $r_0=0$,
\begin{equation}\label{eq:cert-direct-result}
 \overline\ell_\lambda
 \le \Gamma_1+v_0.
\end{equation}
\end{proposition}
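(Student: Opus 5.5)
The argument is a backward induction on $j$ from $N$ down to $0$, with the one-step comparison~\eqref{eq:cert-direct-one-step} as the main tool. The base case $V_\lambda(r_N)\le v_N$ is the hypothesis. Assume $V_\lambda(r_{j+1})\le v_{j+1}$ for every $\lambda\in\Lambda(\mathcal R)$, and fix such a profile. On $[r_j,r_{j+1}]$ we have $R\le\overline R_j$, so~\eqref{eq:cert-direct-one-step} applies with $a=r_j$, $b=r_{j+1}$, $v=v_{j+1}$, giving
\[
 V_\lambda(r_j)\le v_{j+1}+\int_{r_j}^{r_{j+1}}\sigma_{r_j}(r)\,\phi(r)\dd r,
 \qquad
 \phi(r):=\frac{(r-k(r))\overline R_j(r)}{D(r)^2}-\frac{k(r)v_{j+1}}{D(r)}.
\]
To be careful I would rederive~\eqref{eq:cert-direct-one-step}. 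Since $V_\lambda(r_{j+1})\le v_{j+1}$, $r-k\ge0$ and $\sigma_a\ge0$, we get $V_\lambda(a)\le\int_a^b\sigma_a(r-k)\overline R/D^2+\sigma_a(b)v$. Then write $\sigma_a(b)v=v-\int_a^b\sigma_a kv/D$, which uses $\sigma_a'=-k\sigma_a/D$. The actual state $(\mathbf x(\lambda),h(r))$ lies in $\mathcal D_{\mathcal R}(r)$, so the hypotheses~\eqref{eq:cert-direct-residual} give $\phi(r)\le H_j$ and $k(r)/D(r)\le\kappa_j$ pointwise.

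The next step splits on the sign of $H_j$. If $H_j\ge0$, use $\sigma_{r_j}\le1$ (since $k\ge0$) to get $\int\sigma_{r_j}\phi\le(r_{j+1}-r_j)H_j$. If $H_j<0$, the integral is at most $H_j\int_{r_j}^{r_{j+1}}\sigma_{r_j}$, so we need a lower bound on $\int\sigma_{r_j}$. The bound $k/D\le\kappa_j$ gives $\sigma_{r_j}(r)\ge e^{-\kappa_j(r-r_j)}$. Writing $\ell:=r_{j+1}-r_j$, this yields $\int\sigma_{r_j}\ge(1-e^{-\kappa_j\ell})/\kappa_j$, read as $\ell$ when $\kappa_j=0$. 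The elementary inequality $1-e^{-x}\ge x/(1+x)$ then gives $\int\sigma_{r_j}\ge\ell/(1+\ell\kappa_j)$. Multiplying by $H_j<0$ reproduces the second case of~\eqref{eq:cert-direct-update}.

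Combining the two cases gives $V_\lambda(r_j)\le v_{j+1}+(\text{the case expression})\le v_j$. The outer maximum with $0$ in~\eqref{eq:cert-direct-update} is harmless. It is needed only so that $v_j\ge0$, which is what makes the dropped term nonnegative and keeps the induction valid at the next step. This completes the induction.

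For~\eqref{eq:cert-direct-result}, take $r_0=0$. Then $\sigma(0)=1$ gives $\sigma_0=\sigma$, so by Lemma~\ref{lem:cert-load-identity} we have $\overline\ell_\lambda=\Gamma_1+V_\lambda(0)\le\Gamma_1+v_0$. A few technical points need checking: $V_\lambda(a)$ is finite (bounded $R$ and $D\ge c_{\Gamma_1}e^r$), and $\sigma_a$ is absolutely continuous so the integration by parts is valid. The only step needing real care is the negative-residual case, and specifically the lower bound on $\int\sigma_{r_j}$. This is where the uniform bound $\kappa_j$ enters: pulling $H_j$ out of the integral requires a lower bound on $\sigma$, not an upper one.
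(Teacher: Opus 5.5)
Your proof is correct and follows the paper's argument essentially step for step: backward induction through the one-step comparison~\eqref{eq:cert-direct-one-step}, then $\sigma_{r_j}\le1$ when $H_j\ge0$, and the lower bound $\sigma_{r_j}(r)\ge e^{-\kappa_j(r-r_j)}$ when $H_j<0$, which you integrate and weaken via $1-e^{-x}\ge x/(1+x)$ (a step the paper leaves implicit). Two minor slips do not affect the argument. First, $R$ is not bounded, but $R\le e^r$ together with $D\ge c_{\Gamma_1}e^r$ still makes $V_\lambda$ finite. Second, the nonnegativity of $v_{j+1}$ is not actually used here, since $\sigma_a(b)\ge0$ alone gives $\sigma_a(b)V_\lambda(b)\le\sigma_a(b)v$.
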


\begin{proof}
We argue backward from the assumed terminal inequality
$V_\lambda(r_N)\le v_N$.  Suppose that
$V_\lambda(r_{j+1})\le v_{j+1}$.  By~\eqref{eq:cert-direct-one-step},
\[
 V_\lambda(r_j)\le v_{j+1}
 +\int_{r_j}^{r_{j+1}}\sigma_{r_j}(r)
 \left[
 \frac{(r-k(r))\overline R_j(r)}{D(r)^2}
 -\frac{k(r)v_{j+1}}{D(r)}
 \right]\dd r.
\]
If $H_j\ge0$, using $\sigma_{r_j}(r)\le1$ gives an increment at most
$(r_{j+1}-r_j)H_j$. If $H_j<0$, the bound $k/D\le \kappa_j$ implies $\sigma_{r_j}(r)\ge e^{-\kappa_j(r-r_j)}$ using $\sigma_a'(r)=-\frac{k(r)}{D(r)}\sigma_a(r)$
and hence
\[
 \int_{r_j}^{r_{j+1}}\sigma_{r_j}(r)\dd r
 \ge
 \frac{r_{j+1}-r_j}
 {1+(r_{j+1}-r_j)\kappa_j}.
\]
Multiplying by $H_j<0$ gives the second line of
\eqref{eq:cert-direct-update}.  Thus
$V_\lambda(r_j)\le v_j$, completing the backward induction. If $r_0=0$, then $\sigma(0)=1$, so
$
 \overline\ell_\lambda
 =\Gamma_1+V_\lambda(0)
 \le\Gamma_1+v_0$.
\end{proof}

\subsubsection{The integral beyond the last block}
\label{app:cert-tail-bound}

We apply the comparison to the part of the integral beyond $b_4$.
For each region, we need a bound
\[
 \overline V\ge
 \sup_{\lambda\in\Lambda(\mathcal R)}V_\lambda(b_4).
\]
We construct this bound by combining an analytic estimate on
$[10,\infty)$ with a finite backward comparison on $[b_4,10]$.

For $r\ge10$, the perturbation has vanished, so
$d(r)=\nu(r)=0$.  Hence $k(r)=h(r)$ and $B(t_r)=B_{K_0,\lambda}(t_r)$.
Lemma~\ref{lem:kzero-inverse} gives $R(r)\le1+h(r)$.  Since
$
 (r-h(r))(1+h(r))\le\frac{(r+1)^2}{4}$
 and $
 D(r)\ge c_{\Gamma_1} e^r$,
we obtain the terminal estimate
\begin{equation}\label{eq:cert-tail}
 V_\lambda(10)
 \le
 \int_{10}^\infty
 \frac{(r+1)^2e^{-2r}}{4c_{\Gamma_1}^2}\dd r
 =
 \frac{265e^{-20}}{16c_{\Gamma_1}^2}.
\end{equation}
Starting from~\eqref{eq:cert-tail}, partition $[b_4,10]$ into finitely
many intervals of the exponential coordinate and apply
Proposition~\ref{prop:cert-direct} backward.  The finite-tail construction in
Appendix~\ref{app:cert-direct-updates} thereby yields the required
certified bound $\overline V$ at $b_4$.

\subsubsection{The prefix and a state-dependent comparison}

The bound at $b_4$ leaves the integral over $[0,b_4]$.
For every profile in the region, use
$R\le\mathcal U_{\mathcal R}-\Xi_+$ from~\eqref{eq:cert-U-condition}.
The integrated correction~\eqref{eq:cert-credit-integrated} bounds
the contribution of $\Xi_+$ from below, so subtracting it gives an
upper bound on the load.  Together with
$V_\lambda(b_4)\le\overline V$, this yields
\begin{equation}\label{eq:cert-prefix-main}
 \overline\ell_\lambda
 \le\Gamma_1+
 \int_0^{b_4}\sigma(r)
 \left[
 \frac{(r-k(r))\mathcal U_{\mathcal R}(r)}{D(r)^2}
 -(1-\lambda(u_r))\mathcal G(r)
 \right]\dd r
 +\sigma(b_4)\overline V.
\end{equation}
Taking $\mathcal G=0$ is valid; a positive correction improves the bound.

Section~\ref{sec:kzero-control} maximized over $0\le k\le r$
independently at each coordinate.  Here a comparison depending on the
state $h$ can also use $h'(r)=\lambda(u_r)\in[0,1]$.
Fix a continuous piecewise-bilinear function $P=P_{\mathcal R}$ on a
finite grid, with vertex values independent of the particular profile,
and set
\[
 T_0(r):=\int_r^{10}\frac{v}{D(v)^2}\dd v,
 \qquad
 \mathcal V(r,h):=T_0(r)+W(r)-P(r,h),
\]
where $W$ is absolutely continuous.  Once $P$ is fixed, only the scalar
function $W(r)$ remains to be found.  We choose it so that the decrease
of $\sigma\mathcal V$ bounds the integrand in~\eqref{eq:cert-prefix-main}.
Appendix~\ref{app:cert-potential-construction} specifies the fixed
potential used in the computation.

\begin{proposition}[Regional Bellman certificate]
\label{prop:cert-bellman}
Suppose that~\eqref{eq:cert-U-condition} and~\eqref{eq:cert-credit-integrated}
hold, that $V_\lambda(b_4)\le\overline V$ uniformly on the region,
and that, for almost every $r\in[0,b_4]$,
\begin{equation}\label{eq:cert-HJB}
 -W'(r)\ge
 \sup_{(\mathbf x,h)\in\mathcal D_{\mathcal R}(r)}
 \left\{
 \frac{(r-k)\mathcal U_{\mathcal R}-r}{D^2}
 -\frac{k(T_0+W-P)}{D}
 -P_r-\min\{\mathcal G,P_h\}
 \right\}.
\end{equation}
At interpolation boundaries use the corresponding one-sided derivatives.
If $\mathcal V(b_4,h)\ge\overline V$ for every admissible terminal state,
then, for every $\lambda\in\Lambda(\mathcal R)$,
\begin{equation}\label{eq:cert-bellman-result}
 \overline\ell_\lambda
 \le\Gamma_1+T_0(0)+W(0)-P(0,0).
\end{equation}
\end{proposition}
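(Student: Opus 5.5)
The plan is a verification argument: show that $r\mapsto\sigma(r)\mathcal V(r,h(r))$ decreases at least as fast as the integrand in~\eqref{eq:cert-prefix-main}, then integrate over $[0,b_4]$. Fix $\lambda\in\Lambda(\mathcal R)$ and put $\Phi(r):=\sigma(r)\,\mathcal V(r,h(r))=\sigma(r)\bigl(T_0(r)+W(r)-P(r,h(r))\bigr)$. Along this path, $(\mathbf x(\lambda),h(r))\in\mathcal D_{\mathcal R}(r)$ for every $r$, by~\eqref{eq:cert-path-cone}, \eqref{eq:cert-zk} and $0\le k\le r$. Since $h$ is $1$-Lipschitz, $P$ is piecewise bilinear and $W$ is absolutely continuous, $\Phi$ is absolutely continuous. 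For almost every $r$ the chain rule gives
\[
 \Phi'(r)=\sigma\Bigl[-\frac{k}{D}\,\mathcal V-\frac{r}{D^2}+W'-P_r-P_h\,h'\Bigr],
\]
using $\sigma'=-k\sigma/D$ and $T_0'=-r/D^2$. At kinks of $P$, take the one-sided derivative in the direction of motion of $(r,h(r))$.

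The key pointwise step is to handle the term $P_h h'$ together with the correction. Because $h'=\lambda(u_r)\in[0,1]$, the sum $P_h\lambda+(1-\lambda)\mathcal G$ is a convex combination of $P_h$ and $\mathcal G$, so
\[
 -P_h h'-(1-\lambda)\mathcal G\le-\min\{\mathcal G,P_h\}.
\]
Substitute this and the hypothesis~\eqref{eq:cert-HJB}, evaluated at the admissible point $(\mathbf x(\lambda),h(r))$, into $-\Phi'$. This yields, almost everywhere on $[0,b_4]$,
\[
 -\Phi'(r)\ge\sigma(r)\Bigl[\frac{(r-k)\,\mathcal U_{\mathcal R}}{D^2}-(1-\lambda(u_r))\,\mathcal G\Bigr].
\]
The $-r/D^2$ terms cancel and the $-k\mathcal V/D$ terms match, which is exactly why $T_0$ is included in $\mathcal V$.

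Integrating this inequality over $[0,b_4]$ bounds the prefix integral in~\eqref{eq:cert-prefix-main} by $\Phi(0)-\Phi(b_4)$. Here $\Phi(0)=T_0(0)+W(0)-P(0,0)$, since $\sigma(0)=1$ and $h(0)=0$. For the other endpoint, the terminal hypothesis gives $\mathcal V(b_4,h(b_4))\ge\overline V$, and since $h(b_4)$ is an admissible terminal state, $\Phi(b_4)\ge\sigma(b_4)\overline V$. The $\sigma(b_4)\overline V$ terms in~\eqref{eq:cert-prefix-main} then cancel, which gives~\eqref{eq:cert-bellman-result}.

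The main obstacle is justifying the chain rule at points where $(r,h(r))$ runs along, or repeatedly crosses, grid lines of $P$. On a set of positive measure where $h(r)$ stays on a horizontal grid line, $h'=0$ almost everywhere there, so only $P_r$ appears and the min-bound still holds. On vertical lines, which are finite in number, the measure is zero. Crossings happen on a null set. The plan is therefore to argue almost everywhere with one-sided derivatives consistent with the direction of motion, which is the convention stated in the proposition.
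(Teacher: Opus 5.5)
Your proposal is correct and follows the same verification argument as the paper. Both differentiate $\sigma\mathcal V(r,h(r))$ using $\sigma'=-k\sigma/D$ and $T_0'=-r/D^2$, absorb the hazard-rate term through $\lambda P_h+(1-\lambda)\mathcal G\ge\min\{\mathcal G,P_h\}$, integrate over $[0,b_4]$, and close with the terminal condition and~\eqref{eq:cert-prefix-main}. Your density-point handling of paths that run along horizontal grid lines (where $h'=0$ almost everywhere, so any one-sided $P_h$ suffices) spells out what the paper covers only by its remark that one-sided checks cover grid boundaries.
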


\begin{proof}
Using $\sigma'=-k\sigma/D$ and $h'=\lambda(u_r)$ gives
\begin{equation}\label{eq:cert-bellman-derivative}
 \frac{(\sigma\mathcal V(r,h(r)))'}{\sigma}
 =-\frac{k\mathcal V}{D}-\frac r{D^2}
   +W'-P_r-\lambda(u_r)P_h.
\end{equation}
The current hazard rate enters affinely, and
\begin{equation}\label{eq:cert-control-max}
 \max_{0\le\alpha\le1}
 \{-\mathcal G+\alpha(\mathcal G-P_h)\}
 =-\min\{\mathcal G,P_h\}.
\end{equation}
Consequently~\eqref{eq:cert-HJB} implies
\[
 \frac{\dd}{\dd r}[\sigma(r)\mathcal V(r,h(r))]
 \le-\sigma(r)\left[
 \frac{(r-k(r))\mathcal U_{\mathcal R}(r)}{D(r)^2}
 -(1-\lambda(u_r))\mathcal G(r)\right]
\]
almost everywhere.  The composition is absolutely continuous because
$W,h$ are absolutely continuous and $P$ is continuous and piecewise
bilinear on a finite grid; the one-sided checks cover grid boundaries.
Integrating from $0$ to $b_4$, using $h(0)=0$, $\sigma(0)=1$ and the
terminal condition, proves the claim by~\eqref{eq:cert-prefix-main}.
\end{proof}

We call the expression maximized in~\eqref{eq:cert-HJB} the
\emph{Bellman residual}.  Once $P$ and the value of $W(r)$ are fixed,
it depends only on $(r,\mathbf x,h)$; the finite verification bounds
it on whole parameter--state cells.

\subsection{Finite Verification and Completion of the Proof}
\label{sec:cert-results}\label{sec:cert-scheme-overview}

The preceding comparisons are valid for every profile in a parameter
region.  The computation chooses sufficient bounds on each region and
checks that the accepted regions cover every possible block-mass vector.

For each parameter region $\mathcal R$, the certificate specifies a
comparison method and finite grids that give a rigorous bound
\begin{equation}\label{eq:cert-regional-verifier-goal}
 \sup_{\lambda\in\Lambda(\mathcal R)}
 \overline\ell_\lambda\le U(\mathcal R).
\end{equation}
Three comparisons are used.  The state-band comparison uses
Lemma~\ref{lem:cert-dp-chord} to retain the dependence of $R$ and $k$
on the running state $h$.  The Bellman comparison additionally uses
the correction $\mathcal G$ and a state-dependent potential, through
Proposition~\ref{prop:cert-bellman} and
Lemma~\ref{lem:cert-bellman-update}.  On some small-scale boundary
regions, the forward comparison~\eqref{eq:cert-forward-bound} suffices.
All three use the same continuation bound~$\overline V$.

Appendix~\ref{sec:cert-schemes} specifies the inverse and residual
bounds available to these comparisons.  A regional record fixes which
bounds are used; verification does not repeat the search for that choice.
Whenever an estimate assumes that $w_g$ is constant, its coordinate
interval is intersected with the output intervals $J_j$.
Every $b_p$ is an output-cell boundary, so this also fixes the input
block containing $r$ on $[0,b_4]$.  Endpoint bounds use the one-sided
values from the relevant interval.  The acceptance condition is
\begin{equation}\label{eq:cert-scheme-accept}
 U(\mathcal R)<U_{\mathrm{cert}}.
\end{equation}

\subsubsection{Covering all profiles}\label{app:cert-cover}

The master certificate separates the small-scale cube and partitions
the remaining region into closed boxes with disjoint interiors,
\begin{equation}\label{eq:cert-global-partition}
 [0,1]^4=[0,2^{-10}]^4\cup\bigcup_{Q\in\mathcal Q} Q,
 \qquad Q=\prod_{g=0}^3[\underline x_g,\overline x_g].
\end{equation}
The small-scale cube uses the cover below.  Every $Q\in\mathcal Q$
satisfies $\max_g\underline x_g\ge2^{-10}$ and is verified directly.
Coverage is checked exactly by embedding the dyadic endpoints in a
common integer grid and checking that every internal node is the union
of its two complementary children.  The same check applies to each
auxiliary chart.

Fix $\lambda\in\Lambda$ and its mass vector $\mathbf x(\lambda)$.
If $x_{\max}=0$, then $\lambda(u_r)=0$ for almost every
$r\in[0,b_4]$.  The area curve $A_\lambda$ therefore vanishes on
$[0,1-e^{-b_4}]$.  Since $B(0)=0$, the generalized inverse defining
$Y(0)$ gives $Y(0)\ge1-e^{-b_4}$.
Using~\eqref{eq:scalar-load} and $S_{K_1,\lambda}\le1$ gives
\begin{equation}\label{eq:cert-zero-scale}
 \begin{aligned}
 \overline\ell_\lambda
 &\le\Gamma_1Y(0)+\frac{\Gamma_1}{1-\Gamma_1}(1-Y(0))\\
 &\le\Gamma_1+\frac{\Gamma_1^2e^{-b_4}}{1-\Gamma_1}
 <\Gamma_1+\frac{\Gamma_1^2}{600(1-\Gamma_1)}
 <0.749<U_{\mathrm{cert}}.
 \end{aligned}
\end{equation}

For positive scale, use the normalized parameters
from~\eqref{eq:cert-normalized-masses}.
A finite boundary certificate covers
$
0<x_{\max}\le2^{-14}$; its estimates are uniform in the scale, so they remain valid on pieces whose lower scale endpoint is zero (Appendix~\ref{app:cert-inverse-bounds}).
Boundary pieces also use the finite-scale forward comparison
\eqref{eq:cert-forward-bound}, with the applicable inverse estimates,
the certified survival factors, and the same continuation bound
$\overline V$.
For $2^{-14}<x_{\max}\le2^{-10}$, there are sixteen parameter charts, one for each pivot and each dyadic scale band:
\begin{equation}\label{eq:cert-scale-charts}
 g^*\in\{0,1,2,3\},\qquad
 (\hat x_g)_{g\ne g^*}\in[0,1]^3,\qquad
 x_{\max}\in[2^{-14+s},2^{-13+s}],\quad s\in\{0,1,2,3\}.
\end{equation}
Each chart is affinely rescaled to a unit
four-dimensional box before adaptive subdivision.  Ties between pivots
and common scale endpoints cause no ambiguity, since all enclosures are
closed and it is enough that one valid chart contain the point.
Equation~\eqref{eq:cert-normalized-masses} maps every chart box back to
rigorous bounds on the original masses $M_g$, with outward enclosure of
the full image.
For $x_{\max}>2^{-10}$, the original coordinates $\mathbf x$ are
handled directly by the boxes in $\mathcal Q$.
The two covers may share the boundary $x_{\max}=2^{-10}$; both use
closed enclosures.

These cases cover every mass vector, including all positive scales
arbitrarily close to zero.  Closed chart images include ties and shared
boundaries.  Thus every profile is either treated analytically or belongs
to a certified piece of the master cover.

\subsubsection{The numerical certificate}

Each accepted piece has a uniform bound below $U_{\mathrm{cert}}$;
the bound for a region covered by such pieces is their maximum.

The certificate contains $32{,}529$ boundary regions, $377{,}458$
scale-chart regions, and $222{,}958$ directly verified boxes in $\mathcal Q$.
Thus a complete replay performs $632{,}945$ regional interval calculations.

\begin{lemma}[Finite numerical certificate]\label{lem:cert-finite-data}
The supplementary regional certificate for $(K_1,\Gamma_1)$ covers
$[0,1]^4\setminus\{\mathbf0\}$ and gives
\begin{equation}\label{eq:cert-regional-upper}
 \sup_{\lambda\in\Lambda(\mathcal R)}\overline\ell_\lambda
 \le U(\mathcal R)
\end{equation}
on every region in the cover.  The largest assembled endpoint is
\begin{equation}\label{eq:cert-max}
 U_{\max}:=\frac{9007199253707871}{9007199254740992}<U_{\mathrm{cert}}<1.
\end{equation}
\end{lemma}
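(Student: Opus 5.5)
The lemma is a statement about a finite data object, so I would prove it by replaying the certificate under rigorous arithmetic. I would not add any new analytic estimate. The proof has three parts: coverage of $[0,1]^4\setminus\{\mathbf 0\}$, a rigorous regional bound $U(\mathcal R)$ for every region, and the final comparison of the largest endpoint with $U_{\mathrm{cert}}$.

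\textbf{Coverage.} I would check coverage exactly in integer arithmetic. The three families are the boundary certificate for $0<x_{\max}\le2^{-14}$, the sixteen scale charts~\eqref{eq:cert-scale-charts} for $2^{-14}\le x_{\max}\le2^{-10}$, and the direct boxes $\mathcal Q$ in~\eqref{eq:cert-global-partition}. All box endpoints are dyadic, so I would embed them in a common integer grid. For each chart and for $\mathcal Q$, I would verify that the recorded subdivision tree is a genuine partition: every internal node is the union of its two children, split along one coordinate at one dyadic point.

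Next I would check that the roots fit together. The sixteen charts, mapped back through~\eqref{eq:cert-normalized-masses} with outward enclosure, contain the shell $2^{-14}\le\|\mathbf x\|_\infty\le2^{-10}$. This follows because every such point has some pivot $g^*$ and some band index $s$. The union of $\mathcal Q$ together with $[0,2^{-10}]^4$ equals $[0,1]^4$. The boundary pieces cover $0<x_{\max}\le2^{-14}$, including pieces whose lower scale endpoint is zero. Since all enclosures are closed, shared boundaries and pivot ties need no separate treatment. The point $\mathbf 0$ is excluded here and handled by~\eqref{eq:cert-zero-scale}.

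\textbf{Regional bounds.} For each region I would read its record, which fixes the comparison method and the grids, and then evaluate the corresponding sufficient conditions in outward-rounded interval arithmetic with rational inputs.

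The first ingredient, common to all regions, is the continuation bound $\overline V$. I would start from the analytic tail~\eqref{eq:cert-tail} at $r=10$ and apply Proposition~\ref{prop:cert-direct} backward on $[b_4,10]$. At each step I would bound the residual~\eqref{eq:cert-direct-residual} over the admissible set $\mathcal D_{\mathcal R}(r)$ of~\eqref{eq:cert-state-domain}.

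On the prefix $[0,b_4]$ I would use whichever method the record specifies:
\begin{itemize}
\item the state-band update of Lemma~\ref{lem:cert-dp-chord};
\item the Bellman certificate of Proposition~\ref{prop:cert-bellman}, verifying~\eqref{eq:cert-HJB} cellwise with one-sided derivatives at grid lines and the terminal inequality $\mathcal V(b_4,h)\ge\overline V$;
\item the forward bound~\eqref{eq:cert-forward-bound}.
\end{itemize}
In every case the threshold term is controlled by $\mathcal U_{\mathcal R}$ from~\eqref{eq:cert-U-condition} and the correction term $\mathcal G$ from~\eqref{eq:cert-credit-integrated}.

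Every transcendental quantity needs a certified enclosure. The weights $\omega_j$ and $e^{\pm r}$ can use the alternating bounds $\mathcal E_m$. The coefficients $c_g$ and $w_g$ are built from the exact integers in Table~\ref{tab:cert-kernel}. The integrals $T_0$ and $\int k/D$ need enclosures that include their quadrature remainders, since the Bellman residual is supremized over parameter--state cells rather than evaluated at points. The region is accepted if the resulting $U(\mathcal R)$ satisfies~\eqref{eq:cert-scheme-accept}.

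\textbf{Assembly.} Let $U_{\max}$ be the maximum of $U(\mathcal R)$ over all regions; it is produced as a binary64 upper endpoint, which is why its denominator is $2^{53}$. I would compare it with $U_{\mathrm{cert}}=999999999886/10^{12}$ by an exact integer cross-multiplication.

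\textbf{Main obstacle.} I expect the boundary certificate near $x_{\max}\to0$ to be hardest. There the load approaches the zero-scale load of the perturbed kernel, and the margin below one is tiny. Moreover, the estimates must hold uniformly down to scale zero on pieces that are closed at zero. I would first try writing $\mathcal U_{\mathcal R}$ and the residual in terms of the normalized masses $m_g$ and $\hat d=d/x_{\max}$, with remainder terms that are explicitly $O(x_{\max})$ or $O(x_{\max}^2)$ coming from $e^{h}-1-h$, and then bound these remainders by their values at the top scale $2^{-14}$. The second difficulty is purely operational: replaying all $632{,}945$ regions reproducibly, with every rounding directed outward.
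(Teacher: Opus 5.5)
Your proposal is correct and matches the paper's proof, which is exactly such a replay: \texttt{verify.py} checks the dyadic cover of the boundary pieces, the sixteen scale charts and the boxes in $\mathcal Q$ by the parent--children test. It then recomputes every regional bound $U(\mathcal R)$ (first the continuation bound $\overline V$ from~\eqref{eq:cert-tail}, then the state-band, Bellman, or forward comparison fixed by the record) in outward-rounded binary64 interval arithmetic, and compares the largest endpoint $U_{\max}$, a binary64 number with denominator $2^{53}$, exactly with $U_{\mathrm{cert}}$. Two small details differ from your description: the certified exponentials come from the positive Taylor series with a geometric tail bound (using $e^{-x}=1/e^x$) rather than from the alternating sums $\mathcal E_m$, which the paper uses only for the kernel-validity check; and the correction term $\mathcal G$ enters only the Bellman comparison, not the state-band or forward comparisons.
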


This lemma is the computer-assisted step; its finite inputs and
verification source are supplied in
\nolinkurl{order_selection_074500001_certificate_final.zip}.
The package contains one set of regional results.  The program
\texttt{verify.py} checks the exact kernel, complete parameter cover,
input identities, and exact rational comparison in~\eqref{eq:cert-max}.
With \texttt{--recompute}, it rebuilds the interval engines and evaluates
every numerical region from the included inputs before assembling the
bounds.  No stored endpoint is an input to a regional calculation.
The estimates account for quadrature, interpolation, derivative, tail,
and rounding errors on whole cells; Appendix~\ref{app:cert-arithmetic}
specifies the arithmetic.  Both the complete check and the full replay
are independent of earlier certificates and search logs.

\begin{proof}[Proof of Theorem~\ref{thm:certified-kernel}]
Lemma~\ref{lem:cert-kernel-valid} proves that $K_1$ is an allocation
kernel.  Fix $\lambda\in\Lambda$.
If $\mathbf x(\lambda)=\mathbf0$, then~\eqref{eq:cert-zero-scale}
gives $\overline\ell_\lambda<0.749<U_{\mathrm{cert}}$.
Otherwise, its mass vector lies in a region covered by
Lemma~\ref{lem:cert-finite-data}, so
$\overline\ell_\lambda\le U_{\max}<U_{\mathrm{cert}}$.
Equation~\eqref{eq:cert-load-identity} gives
$\ell_{\Gamma_1}(K_1,\lambda)\le\overline\ell_\lambda$ in both cases.
Proposition~\ref{prop:scalar-feasibility} now gives universal
feasibility, the exact item-wise guarantee for atomless inputs,
and the competitive guarantee for arbitrary independent
nonnegative distributions.
\end{proof}

\section{Estimates and Finite Updates for the Certificate}
\label{app:cert-details}

This appendix supplies the estimates used in Appendix~\ref{app:cert-proof}.
We retain its notation, with profiles on $[0,1]$ and hazard rate
$h'(r)=\lambda(u_r)$ in exponential coordinates.
The first two subsections bound the threshold complement $R$ and the
correction term $\mathcal G$.
The remaining subsections give the finite comparison updates, their
regional instantiations, the potential construction, and the arithmetic
used to certify every bound.

All estimates hold uniformly over their stated parameter and state
ranges.  Their target is the auxiliary quantity $\overline\ell_\lambda$
defined in~\eqref{eq:cert-augmented-load}, which is an upper bound on
the arrival load $\ell_{\Gamma_1}(K_1,\lambda)$.

\subsection{Bounds for the Inverse of \texorpdfstring{$A_\lambda$}{A lambda}}
\label{app:cert-inverse-bounds}

\subsubsection{A comparison with the baseline}
Write $B_0:=B_{K_0,\lambda}$ for the allocation induced by the
baseline kernel.  By~\eqref{eq:kzero-r-allocation},
$B_0(t_r)=C(r)-h(r)e^{-r}$.

\begin{lemma}[Inverse bound by a shift]
\label{lem:cert-shift}
For $r\ge0$, define
\[
 \upsilon_0(r):=1+\max_g(-c_g(r))_+,
 \qquad
 \overline R_0(r,h):=\upsilon_0(r)(1+h).
\]
Then
\begin{equation}\label{eq:cert-shift-bound}
 R(r)\le\overline R_0(r,h(r)).
\end{equation}
More generally, if $1\le\upsilon\le e^r$ and
\begin{equation}\label{eq:cert-prefix-budget}
 \sum_{g:\,b_{g+1}\le r}\underline M_g
 \min\{\upsilon-1,e^{r-b_{g+1}}-1\}
 \ge-\underline d(r),
\end{equation}
where $\underline d(r)\le d(r)$, then
$
 R(r)\le\upsilon(1+h(r))$.
Either comparison also gives
$
 \rho_*(r)\ge(r-\log\upsilon)_+$.
\end{lemma}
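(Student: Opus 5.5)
The plan is to reduce both claims to Lemma~\ref{lem:kzero-inverse} applied at an earlier exponential time. Put $r':=r-\log\upsilon$. The hypothesis $1\le\upsilon\le e^r$ gives $0\le r'\le r$ and $1-u_{r'}=\upsilon e^{-r}$. Let $Y_0(r')$ denote the threshold of the same profile under the baseline allocation $B_0$.

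The key claim is the allocation comparison
\[
 B(t_r)\ \ge\ B_0(t_{r'}).
\]
Once this holds, the argument closes quickly.
\begin{enumerate}
\item The upper generalized inverse $A_\lambda^{\leftarrow}$ is nondecreasing, so $Y(r)\ge Y_0(r')$.
\item Lemma~\ref{lem:kzero-inverse} at $r'$ gives $1-Y_0(r')\le(1+h(r'))e^{-r'}$.
\item Since $h$ is nondecreasing, $(1+h(r'))e^{-r'}\le\upsilon(1+h(r))e^{-r}$, which is $R(r)\le\upsilon(1+h(r))$.
\item The proof of Lemma~\ref{lem:kzero-inverse} also shows $Y_0(r')\ge x_\lambda(u_{r'})=1-e^{-\chi(r')}$. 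Hence $Y(r)\ge 1-e^{-\chi(r')}$, and the definition of $\rho_\lambda$ gives $\rho_*(r)\ge r'$. When $\upsilon$ is truncated (the case $r'=0$), the bound $\rho_*(r)\ge 0$ is trivial, which gives $\rho_*(r)\ge(r-\log\upsilon)_+$.
\end{enumerate}

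To prove the comparison, use $B_0(t_s)=a_\lambda(u_s)$ and the explicit form of the linearized area in exponential coordinates:
\[
 B_0(t_s)=\int_0^\infty\lambda(u_a)\bigl(e^{-a}-e^{-s}\bigr)_+\dd a .
\]
Subtracting the values at $s=r$ and $s=r'$ gives
\[
 B_0(t_r)-B_0(t_{r'})=e^{-r}\int_0^r\lambda(u_a)\min\{\upsilon-1,\,e^{r-a}-1\}\dd a .
\]
By~\eqref{eq:cert-dzk}, $B(t_r)=B_0(t_r)+d(r)e^{-r}$, so it suffices to show
\[
 \int_0^r\lambda(u_a)\min\{\upsilon-1,e^{r-a}-1\}\dd a\ \ge\ -d(r).
\]
For a completed block ($b_{g+1}\le r$), the integrand factor is at least $\min\{\upsilon-1,e^{r-b_{g+1}}-1\}$. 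The block mass is at least $\underline M_g$. Dropping the nonnegative contribution of any partial block, this yields the left side of~\eqref{eq:cert-prefix-budget}. Since $-\underline d(r)\ge -d(r)$, condition~\eqref{eq:cert-prefix-budget} then gives the general claim.

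For the first claim, take $\upsilon=\upsilon_0$ and write $d(r)=\sum_g c_g(r)M_g$. This is the step I expect to be the main obstacle. The aim is to show that each negative coefficient $c_g(r)$ is charged only to mass at ranks $a$ with $e^{r-a}-1\ge(-c_g(r))_+$, so that $\lambda(u_a)\min\{\upsilon_0-1,e^{r-a}-1\}$ dominates $(-c_g(r))_+\lambda(u_a)$ on the relevant part of block~$g$. I would first try to verify this from the structure of the perturbation. The cells with $i\ge j$ have $\delta_{ij}\ge0$, so $w_g(v)\ge0$ whenever $v$ lies below block~$g$. Consequently $c_g(r)=e^r\int_0^re^{-v}w_g(v)\dd v$ can become negative only after $r$ has passed the block. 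In that situation, $e^{r-a}-1$ for $a$ in the block can be compared with $-c_g(r)$ through the defining integral: use $c_g'=c_g+w_g$ on each output cell, together with the row-sum-zero property.

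If this pointwise comparison fails for some cell, I would fall back on checking~\eqref{eq:cert-prefix-budget} directly with $\upsilon=\upsilon_0$. That check is again a finite rational verification of the kind used in Lemma~\ref{lem:cert-kernel-valid}.
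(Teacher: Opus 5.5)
You have a genuine gap, though most of the proposal is sound. Your reduction is the paper's: once $B(t_r)\ge B_0(t_{r'})$ holds, monotonicity of the upper inverse, Lemma~\ref{lem:kzero-inverse} at $r'$, and $h(r')\le h(r)$ give $R(r)\le\upsilon(1+h(r))$ and $\rho_*(r)\ge r'$. Your proof of the budget version is correct; your identity for $e^r\{B_0(t_r)-B_0(t_{r'})\}$ is the Fubini form of the paper's computation via $\frac{\dd}{\dd v}B_0(t_v)=h(v)e^{-v}$.

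The gap is the unconditional bound~\eqref{eq:cert-shift-bound}, the main assertion, which you leave open. Routed through the budget with $\upsilon=\upsilon_0$, you need, for every $a$ in block $g$, that $e^{r-a}-1\ge-c_g(r)$ when $a<r$, and $c_g(r)\ge0$ when $a\ge r$. The missing observation is that both cases are one identity plus nonnegativity of the kernel:
\[
 e^{-r}\bigl((e^{r-a}-1)_++c_g(r)\bigr)=(e^{-a}-e^{-r})_++e^{-r}c_g(r)=\int_0^rp_1(a,v)\dd v\ge0,
\]
by Lemma~\ref{lem:cert-kernel-valid}. Your sketch uses only half of that lemma: $\delta_{ij}\ge0$ for $i\ge j$, which gives $c_g(r)\ge0$ before the block is passed. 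The comparison after the block is passed needs the other half, $1+w_{ij}\ge0$ for $i<j$ (that is, $\Delta\omega_j+\delta_{ij}\ge0$). The row-sum-zero property plays no role.

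The paper does not go through the budget at all. Pointwise in $a$, it shows $\int_0^rp_1(a,v)\dd v\ge(e^{-a}-e^{-r'})_+$ with $r'=(r-\log\upsilon_0)_+$. It uses $c_g(r)\ge-(\upsilon_0-1)$ where $a<r'$ and $p_1\ge0$ elsewhere. Integrating against $\lambda(u_a)$ then gives $B(t_r)\ge B_0(t_{r'})$ directly.

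Your fallback does not close the gap. Condition~\eqref{eq:cert-prefix-budget} is stated in terms of a region's lower bounds $\underline M_g,\underline d$. Checking it therefore yields the shift bound only on regions where it happens to hold, whereas~\eqref{eq:cert-shift-bound} is asserted for every profile. To make the check profile-independent you would need it coefficientwise for all $r$: $\min\{\upsilon_0-1,e^{r-b_{g+1}}-1\}+c_g(r)\ge0$ for completed blocks and $c_g(r)\ge0$ otherwise. Via the identity above, that is again just $p_1\ge0$.

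Finally, applying the general claim with $\upsilon=\upsilon_0$ requires $\upsilon_0\le e^r$. You can derive this from the identity at $a=b_g$, which gives $-c_g(r)\le(e^{r-b_g}-1)_+\le e^r-1$. Alternatively, as the paper does, use the trivial bound $R(r)\le e^r\le\upsilon_0$ whenever the shift would go below zero.
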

\begin{proof}
At input coordinate $a$, the cumulative baseline density is
$(e^{-a}-e^{-r})_+$.  On block $g$, the perturbed cumulative density is
this quantity plus $e^{-r}c_g(r)$; beyond $b_4$ the correction
is zero.  If $a<r-\log\upsilon_0$, it is at least
$e^{-a}-\upsilon_0e^{-r}$.
At all other input coordinates, nonnegativity of the kernel supplies the
lower bound zero.  Integrating against $\lambda(u_a)$ gives
\[
 B(t_r)\ge B_0(t_{(r-\log\upsilon_0)_+}).
\]
For a nonnegative shifted coordinate $r'$, monotonicity of the upper inverse
and Lemma~\ref{lem:kzero-inverse} imply
\[
 R(r)\le e^{r-r'}(1+h(r'))\le e^{r-r'}(1+h(r)).
\]
If the shift extends to the left of zero, $R(r)\le e^r\le\upsilon_0$.
The baseline inverse is at least $1-e^{-\chi(r')}$, which also proves the
lower bound on $\rho_*$.

For the second assertion,
$\frac{\dd}{\dd v}B_0(t_v)=h(v)e^{-v}$.  A completed block of mass
$M_g$ contributes to $h(v)$ for all $v\ge b_{g+1}$.  Hence
\[
 e^r\{B_0(t_r)-B_0(t_{r-\log\upsilon})\}
 \ge\sum_{g:\,b_{g+1}\le r}M_g
       \min\{\upsilon-1,e^{r-b_{g+1}}-1\}.
\]
Condition~\eqref{eq:cert-prefix-budget}, together with
$B(t_r)=B_0(t_r)+e^{-r}d(r)$, gives $B(t_r)\ge B_0(t_{r-\log\upsilon})$.
The same inverse comparison applies.
\end{proof}

On a normalized scale chart, $M_g=x_{\max}m_g$ and
$d=x_{\max}\hat d$.  Use lower bounds $\underline m_g\le m_g$ and
$\underline{\hat d}(r)\le\hat d(r)$.  For every $x_{\max}>0$, the common scale can be
cancelled from the completed-prefix comparison.  Thus its sufficient
budget condition becomes
\[
 \sum_{g:\,b_{g+1}\le r}\underline m_g
 \min\{\upsilon-1,e^{r-b_{g+1}}-1\}
 \ge-\underline{\hat d}(r).
\]
This form is uniform over positive scales, including those in a chart
whose lower scale endpoint is zero.

\subsubsection{Finite-scale capacity bound}
Use $M_g=x_{\max}m_g$ and $d=x_{\max}\hat d$, with
$0\le\underline m_g\le m_g\le\overline m_g$ and
$\underline{\hat d}(r)\le\hat d(r)=\sum_g c_g(r)m_g$.  Let
$0<x_{\max}\le\overline{x}_{\max}$ and
$0\le\underline{x}_{\max}\le x_{\max}$, and let
$0\le h_L\le h(r)\le h_U$, where
$r\in[b_p,b_{p+1}]$ and $p\in\{0,1,2,3\}$.
The lower enclosure endpoint $\underline{x}_{\max}$ may be zero,
but the actual scale $x_{\max}$ is positive.  In these chart coordinates, the upper
cumulative masses of~\eqref{eq:cert-mass-ranges} are replaced by the enclosures
\[
 \overline M_{<g}=\overline{x}_{\max}\sum_{j<g}\overline m_j,
\]
and define
\begin{align*}
 \bar\pi_g&=
 \begin{cases}
 e^{-\underline{x}_{\max}\sum_{j<g}\underline m_j}
 \dfrac{1-e^{-\underline{x}_{\max}\overline m_g}}{\underline{x}_{\max}},
                   &\underline{x}_{\max}>0,\\[2mm]
 \overline m_g,&\underline{x}_{\max}=0,
 \end{cases}\\
 U_g(r)&=\begin{cases}
 e^{h_U},&g\le p,\\e^{r-b_g+\overline M_{<g}},&g>p,
 \end{cases}
 \qquad
 \Theta(r)=\underline{\hat d}(r)
                   +\frac{e^{h_L}-1-h_L}{\overline{x}_{\max}}.
\end{align*}
When $\Theta(r)\ge0$,
\begin{equation}\label{eq:cert-capacity}
 R(r)\le\max\left\{e^{r-b_4+\overline M_{<4}},
 \max_{\substack{\varnothing\ne \mathcal S\subseteq\{0,1,2,3\}\\
                  \sum_{g\in \mathcal S}\bar\pi_g>0}}
 \frac{\sum_{g\in \mathcal S}\bar\pi_gU_g(r)-\Theta(r)}{\sum_{g\in \mathcal S}\bar\pi_g}
 \right\}.
\end{equation}
To prove~\eqref{eq:cert-capacity}, note first that $\Theta(r)\le\Delta_\lambda(r)/x_{\max}$, so
the nonnegative-gap branch applies and $\rho_*(r)\ge r$. If $\rho_*(r)\ge b_4$, then
$Y(r)\ge1-e^{-\chi(b_4)}$, which gives the first term:
$R(r)\le e^{r-b_4+h(b_4)}\le e^{r-b_4+\overline M_{<4}}$. Suppose
now that $\rho_*(r)<b_4$. Parametrize threshold levels by their rescaled complement
$z=e^r(1-x)$. Since $A_\lambda(Y(r))=B(t_r)$, $1-Y(r)=e^{-r}R(r)$, and
$1-x_\lambda(u_r)=e^{-r}e^{h(r)}$,
\[
 \Delta_\lambda(r)=e^r\bigl\{A_\lambda(Y(r))-A_\lambda(x_\lambda(u_r))\bigr\}
 =\int_{R(r)}^{e^{h(r)}}A'_{\lambda,+}\bigl(1-e^{-r}z\bigr)\dd z.
\]
By~\eqref{eq:rank-area-expectation}, $A'_{\lambda,+}(x)=\Pp(X_\lambda\le x)=1-e^{-h(\rho_\lambda(x))}$
whenever $0\le x<1$ and $\rho_\lambda(x)<\infty$. Put
\[
 \pi_g:=e^{-h(b_g)}-e^{-h(b_{g+1})}=e^{-x_{\max}\sum_{j<g}m_j}\bigl(1-e^{-x_{\max}m_g}\bigr),
 \qquad \pi_g/x_{\max}\le\bar\pi_g,
\]
where the inequality follows from
$(1-e^{-x_{\max} m})/x_{\max}=\int_0^m e^{-x_{\max} v}\dd v$, which is nonincreasing in
$x_{\max}$ and nondecreasing in $m$. We also write $\pi_{<g}:=\sum_{j<g}\pi_j=1-e^{-h(b_g)}$.
If $b_g<\rho_\lambda(x)\le b_{g+1}$, then $1-e^{-h(\rho_\lambda(x))}\le1-e^{-h(b_{g+1})}
=\pi_{<g+1}$; hence $A'_{\lambda,+}(x)\le\sum_g\pi_g\ind\{\rho_\lambda(x)>b_g\}$
for the levels in the integral. Moreover, $\rho_\lambda(x)>b_g$ forces
$x\ge1-e^{-\chi(b_g)}$, that is, $z\le e^{r-b_g+h(b_g)}\le U_g(r)$ when $g>p$; when $g\le p$,
every $z$ in the range of integration satisfies $z\le e^{h(r)}\le U_g(r)$. Consequently,
\[
 \Theta(r)\le\frac{\Delta_\lambda(r)}{x_{\max}}
 \le\sum_g\frac{\pi_g}{x_{\max}}\bigl(U_g(r)-R(r)\bigr)_+
 \le\sum_g\bar\pi_g\bigl(U_g(r)-R(r)\bigr)_+.
\]
The right side is the maximum, over all subsets $\mathcal S\subseteq\{0,1,2,3\}$, of the affine
expressions $\sum_{g\in\mathcal S}\bar\pi_g(U_g(r)-R(r))$. For a positive target $\Theta(r)>0$,
a nonempty subset with $\sum_{g\in\mathcal S}\bar\pi_g>0$ attains the maximum, and solving for
$R(r)$ gives~\eqref{eq:cert-capacity}. All subsets are retained, so no ordering of the
enlarged $U_g$ is needed.
For a zero target with a positive actual area gap, keep that gap when solving the
inequalities and then weaken the resulting bound to the zero-target value. If the actual gap
is zero and there is positive past mass, the root for $\mathcal S=\{0,\ldots,p\}$ equals
$e^{h_U}\ge e^{h(r)}=R(r)$. Otherwise $h(r)=0$ and $B(t_r)=0$; the upper inverse skips the
initial flat interval of $A_\lambda$ and reaches at least the first block that can carry
positive mass, which gives the same bound. Finally, replacing $e^{h_L}-1-h_L$ by
$h_L^2/2+h_L^3/6$ weakens the bound in the valid direction.

\subsubsection{Future-endpoint bounds}
Throughout this paragraph, choose state endpoints satisfying
$0\le h_L\le h(r)\le h_U$.
Fix $g\in\{1,2,3,4\}$ with $r\le b_g$ and use the cumulative
mass bound $h(b_g)\le\overline M_{<g}$.
On the nonnegative-gap branch, the two possibilities
$\rho_*\ge b_g$ and $r\le\rho_*\le b_g$ give, respectively,
\[
 R\le e^{r-b_g+\overline M_{<g}},\qquad
 \Delta_\lambda\le(1-e^{-\overline M_{<g}})(e^{h(r)}-R);
\]
the second inequality follows from the integral representation of $\Delta_\lambda$ in the
proof of~\eqref{eq:cert-capacity}, because every level in the range of integration has
$\rho_\lambda\le\rho_*\le b_g$ and hence slope at most $1-e^{-h(b_g)}$.
Thus, whenever $\underline\Delta\ge0$ is a lower bound on
$\Delta_\lambda$ and $\overline M_{<g}>0$,
\begin{equation}\label{eq:cert-future-endpoint}
 R\le\max\left\{e^{h_U}-\frac{\underline\Delta}{1-e^{-\overline M_{<g}}},
                       e^{r-b_g+\overline M_{<g}}\right\}.
\end{equation}
One may use
$\underline\Delta=e^{h_L}-1-h_L+\underline d$, provided it is nonnegative.
A scale-uniform form replaces the first term by $e^{h_U}-\Theta/\bar\pi_{<g}$,
where the upper bound $\bar\pi_{<g}\ge\pi_{<g}/x_{\max}$ is
\[
 \bar\pi_{<g}=\begin{cases}
 \dfrac{1-e^{-\underline{x}_{\max}\sum_{j<g}\overline m_j}}
       {\underline{x}_{\max}},&\underline{x}_{\max}>0,\\[2mm]
 \sum_{j<g}\overline m_j,&\underline{x}_{\max}=0.
 \end{cases}
\]
This version requires $\Theta\ge0$ and $\bar\pi_{<g}>0$.
It follows by dividing the area inequality by $x_{\max}$ and using the
same integral monotonicity as in the capacity bound: $\pi_{<g}/x_{\max}=(1-e^{-h(b_g)})/x_{\max}\le\bar\pi_{<g}$.

\subsubsection{The negative-gap branch}
When $\rho_*(r)<r$, put $h_\rho:=h(\rho_*(r))$.
Every level $x\in[Y(r),x_\lambda(u_r))$ satisfies $\rho_*(r)\le\rho_\lambda(x)\le r$
and hence $A'_{\lambda,+}(x)\ge1-e^{-h_\rho}$. Integrating as in the proof
of~\eqref{eq:cert-capacity} gives
$-\Delta_\lambda(r)\ge(1-e^{-h_\rho})\bigl(R(r)-e^{h(r)}\bigr)$, and $e^{h(r)-h_\rho}\ge1+h(r)-h_\rho$
then yields
\[
 -d(r)\ge R(r)(1-e^{-h_\rho})-h_\rho.
\]
If $0<\underline h_\rho\le h_\rho$ and $\underline d\le d(r)$, then
\begin{equation}\label{eq:cert-negative-inverse}
 R(r)\le\frac{\underline h_\rho-\underline d}{1-e^{-\underline h_\rho}}.
\end{equation}
For fixed $\underline d$, the derivative of $s\mapsto(s-\underline d)/(1-e^{-s})$ has the sign of $e^s-1-s+\underline d$.
It is negative for $0<s\le h_\rho$, because
$h_\rho\le h(r)$ and $\Delta_\lambda(r)<0$.
Replacing $h_\rho$ by its positive lower bound therefore increases the
quotient.

An explicit bound $\underline h_\rho$ follows from Lemma~\ref{lem:cert-shift}.
On a cell $r\in[a,b]$, $h\in[h_L,h_U]$, let $\upsilon$ be a uniform valid
shift factor and put
\begin{equation}\label{eq:cert-Htau-lower}
 a_0=(a-(\upsilon-1))_+,\qquad
 \underline h_\rho=\max\left\{0,h_L-(\upsilon-1),
       \max_{0\le g\le4}[\underline M_{<g}-(b_g-a_0)_+]\right\}.
\end{equation}
Indeed $\rho_*\ge(r-\log\upsilon)_+{}\ge a_0$,
$\log\upsilon\le\upsilon-1$, and $h$ is nondecreasing and $1$-Lipschitz.
If $\underline h_\rho=0$, the shift estimate is used without division.

\subsubsection{A continuous negative-gap bound}
Suppose $\rho_*(r)<r$, and let $\upsilon\ge1$ be a valid shift factor
with $\rho_*(r)\ge(r-\log\upsilon)_+$.
If $e^{h(r)}/\upsilon-1>0$, then
\begin{equation}\label{eq:cert-negative-continuous}
 R(r)\le1+h(r)-d(r)
 +\frac{-\Delta_\lambda(r)}{e^{h(r)}/\upsilon-1}.
\end{equation}
Indeed, the identity~\eqref{eq:cert-source-identity} gives
\[
 -\Delta_\lambda(r)
 =e^r\int_{\rho_*(r)}^r
 (1-\lambda(u_a))e^{-a}(e^{h(a)}-1)\dd a.
\]
Monotonicity and the $1$-Lipschitz property of $h$ imply
$h(a)\ge h(\rho_*(r))\ge h(r)-\log\upsilon$ on this interval.
Consequently,
\[
 -\Delta_\lambda(r)
 \ge(e^{h(r)}/\upsilon-1)(-\Xi(r)).
\]
Combining this inequality with
$R=1+h-d-\Xi$ proves~\eqref{eq:cert-negative-continuous}.

For a whole cell, choose fixed bounds
$0\le h_L\le h(r)\le h_U$, $\underline d\le d(r)$, and
$\underline E\le e^{h_L}-1-h_L$, together with one uniform valid shift
factor $\upsilon$.  If
\[
 \beta_{\mathrm{neg}}:=\frac{\underline E+1+h_L}{\upsilon}-1>0,
\]
then the negative-gap branch satisfies the constant upper bound
\begin{equation}\label{eq:cert-negative-continuous-cell}
 R(r)\le
 R_{\mathrm{neg}}:=1+h_U-\underline d
 +\frac{[-(\underline E+\underline d)]_+}{\beta_{\mathrm{neg}}}.
\end{equation}
Indeed, $h_L\ge0$ gives
$\Delta_\lambda(r)\ge\underline E+\underline d$, while
$e^{h(r)}/\upsilon-1\ge\beta_{\mathrm{neg}}>0$.
All bounds and the positive-denominator test are certified over the
same whole cell.  If the denominator cannot be certified positive,
this candidate is omitted.

For direct integration, bounds for the two possible branches are combined
by a maximum.  For the correction term, the required quantity is
\begin{equation}\label{eq:cert-credit-compatible}
 R+\Xi_+=\max\{R,1+h-d\}.
\end{equation}
It equals $1+h-d$ on the nonnegative-gap branch and $R$ on the negative
branch.  This gives precisely the envelope in~\eqref{eq:cert-scheme-IV-envelope}.

\subsubsection{Convexity on an output cell}
\label{app:cert-convex-bounds}

On a fixed output cell use $X=e^r$.
Each $c_g$ is affine in $X$, by~\eqref{eq:cert-c-coeff}.
For constant mass bounds,
\begin{equation}\label{eq:cert-concave-d}
 \underline{\hat d}(X)=
 \sum_g\min\{\underline m_g c_g(X),\overline m_g c_g(X)\}
\end{equation}
is concave.  Affine lower bounds on the coefficients may be used in the
same formula.
Nonnegativity at both endpoints therefore proves nonnegativity throughout
the cell.  The roots in~\eqref{eq:cert-capacity} are convex in $X$: their denominators are positive
constants, their node terms are affine in $X$, and the negative of
\eqref{eq:cert-concave-d} is convex.  Their maxima remain convex.
The unconditional shift bound of Lemma~\ref{lem:cert-shift}, with $h$
replaced by a fixed upper state endpoint, is convex in $X$ as well,
since each $c_g$ is affine in $X$ and
$\upsilon_0=1+\max_g(-c_g)_+$.
The future-endpoint and quotient bounds have the same property when their
state endpoints, denominator bounds, and branch conditions are fixed on
the cell.

The prefix budget provides another convex bound when sufficient completed
mass is available.  In~\eqref{eq:cert-prefix-budget}, its left side is
concave jointly in $(\upsilon,X)$ and nondecreasing in $\upsilon$.
Adding the concave lower bound on $d$ preserves concavity.
The superlevel set is consequently convex, and its lower boundary, the
least sufficient shift factor, is convex in $X$.
The condition that the total budget suffices is checked throughout the
cell; the root is evaluated by the finitely many linear pieces of the
budget.  If this condition fails, the global shift bound remains available.

\subsection{Lower Bounds for the Correction Term}
\label{app:cert-credit}

\subsubsection{Changing the order of integration}

For an input coordinate $a\ge0$, define
$
 \varrho_a:=
 \inf\{r\ge0:
 B(t_r)\ge A_\lambda(1-e^{-\chi(a)})\}$,
with $\inf\varnothing=\infty$.
By the upper-inverse conventions and the continuity of $B$, the condition
$a\le\rho_*(r)$ is equivalent to $r\ge \varrho_a$.
Using the representation of $\Xi_+$ in
\eqref{eq:cert-inverse-identity} and changing the order of integration,
\begin{align}
 &\int_0^\infty
 \sigma(r)\frac{(r-k(r))\Xi_+(r)}{D(r)^2}\dd r =
 \int_{\varrho_a\le a}(1-\lambda(u_a))e^{-a}
 \int_{\varrho_a}^{a}
 \sigma(r)\frac{(r-k(r))e^r}{D(r)^2}\dd r\,\dd a.
 \label{eq:cert-tonelli-main}
\end{align}
Consequently, restricting the prefix to $r\le b_4$ and retaining only
parameters $a\le b_4$ gives
\[
 \int_0^{b_4}
 \sigma(r)\frac{(r-k(r))\Xi_+(r)}{D(r)^2}\dd r
 \ge
 \int_0^{b_4}
 (1-\lambda(u_a))\,\mathcal C(a)\dd a,
\]
where
\[
 \mathcal C(a):=
 \begin{cases}
 e^{-a}\displaystyle\int_{\varrho_a}^{a}
 \sigma(r)\frac{(r-k(r))e^r}{D(r)^2}\dd r,
 &\varrho_a\le a,\\[2mm]
 0,&\varrho_a>a.
 \end{cases}
\]
To make this correction term compatible with the regional Bellman comparison,
we lower-bound $\mathcal C(a)$ by a function of the current regional
state.  Choose $\mathcal G\ge0$ so that
\begin{equation}\label{eq:cert-credit-condition}
 \sigma(a)\mathcal G(a,\mathbf x(\lambda),h(a))
 \le \mathcal C(a),
\end{equation}
for every admissible profile.  The window estimate below
constructs such a bound explicitly; if no positive bound is certified,
we take $\mathcal G=0$.

Multiplying~\eqref{eq:cert-credit-condition} by $1-\lambda(u_a)$
and integrating gives~\eqref{eq:cert-credit-integrated}.

\subsubsection{A lower bound from a past window}

Fix $a\le b_4$.  Suppose that
\[
 \underline d\le d(a),\qquad
 \zeta_+\ge \max\{0,\sup_{0\le r\le a}\nu(r)\},\qquad
 \zeta_-\ge \max\{0,\sup_{r\ge0}(-\nu(r))\}.
\]
Set
\[
 \underline\Delta
 :=e^{h(a)}-1-h(a)+\underline d,
 \qquad
 \bar k:=h(a)+\zeta_+,
 \qquad
 \chi_*:=(a-\bar k)_+.
\]
Thus $\bar k$ bounds the rate $k$ on $[0,a]$, and $\chi_*=(\chi(a)-\zeta_+)_+$.
If $\underline\Delta,\bar k,\chi_*$ are all positive, choose a window length
\begin{equation}\label{eq:cert-credit-window}
 0\le\eta\le\min\left\{\log\left(1+\frac{\underline\Delta}{\bar k}\right),\chi_*\right\};
\end{equation}
otherwise set $\eta=0$ and use a zero correction term.
For $v\le a$, one has $k(v)\le \bar k$, and hence
\[
 B(t_a)-B(t_{a-\eta})\le \bar ke^{-a}(e^{\eta}-1)
 \le e^{-a}\underline\Delta
 \le B(t_a)-A_\lambda(1-e^{-\chi(a)}).
\]
Therefore $\varrho_a\le a-\eta$.  Moreover, for $0\le v\le\eta$,
$(a-v)-k(a-v)\ge \chi_*-v\ge0$.

Let $\psi(r)=e^r/D(r)^2$.  Direct differentiation gives
\begin{equation}\label{eq:cert-phi-derivative}
 -\frac{\dd}{\dd r}\log\psi(r)=1-\frac{2r}{D(r)}.
\end{equation}
Choose a nonnegative lower bound $\beta$ for the right-hand side of
\eqref{eq:cert-phi-derivative} on the whole interval
$[a-\eta,a]$.  Let $h_L\le h(a)$, $D_U\ge D(a)$, and use
$h^-_{\mathcal R}$ as a nondecreasing lower bound on the past state.  Set
\[
 \beta_{\mathrm{surv}}=\frac1{D_U}\max\{0,h_L-\eta/2-\zeta_-,
                         h^-_{\mathcal R}(a-\eta)-\zeta_-\},
 \qquad \beta_{\mathrm{eff}}=\beta+\beta_{\mathrm{surv}}.
\]
For $0\le v\le\eta$, integrating $h(a-w)\ge h_L-w$ and the state lower
bound gives
\[
 \int_{a-v}^{a}\frac{k(w)}{D(w)}\dd w\ge v\beta_{\mathrm{surv}}.
\]
Together with~\eqref{eq:cert-phi-derivative}, this proves
\[
 \sigma(a-v)\psi(a-v)\ge \sigma(a)\psi(a)e^{\beta_{\mathrm{eff}}v}.
\]
Since $\mathcal C(a)\ge e^{-a}\int_0^{\eta}\sigma(a-v)\psi(a-v)\bigl((a-v)-k(a-v)\bigr)\dd v$,
keeping the first seven nonnegative terms of the exponential series yields
\begin{equation}\label{eq:cert-tonelli-credit}
 \mathcal G_L=\frac1{D_U^2}\sum_{j=0}^6\frac{\beta_{\mathrm{eff}}^j}{j!}
 \left[\frac{\chi_*\eta^{j+1}}{j+1}-\frac{\eta^{j+2}}{j+2}\right].
\end{equation}
Every bracket is nonnegative since $\eta\le\chi_*$.
Integration over $[a-\eta,a]$ therefore proves
\eqref{eq:cert-credit-condition} with $\mathcal G=\mathcal G_L$.  Positivity of the window guarantees $\varrho_a\le a$; otherwise
the correction term is zero.  Taking lower bounds for this expression throughout
a parameter--coordinate--state cell therefore gives a valid $\mathcal G$.
The dependence of $\beta_{\mathrm{surv}}$ on $\eta$ is retained in the interval evaluation.

\subsubsection{Cellwise instantiation}
On a regional mass box
$\underline M_g\le M_g\le\overline M_g$, we use
\begin{equation}\label{eq:cert-credit-d-lower}
\underline d(a)
:=
\sum_{g=0}^3
\min\{c_g(a)\underline M_g,
c_g(a)\overline M_g\}
\le d(a).
\end{equation}
Together with the corresponding interval bounds for $\nu$ and $-\nu$,
this supplies the quantities required in
\eqref{eq:cert-tonelli-credit}. When the window conditions above are not certified on the whole cell,
we use $\mathcal G=0$.

The implementation uses, as a uniform starting value,
\[
 \beta_0=\frac{4584664420663165}{18014398509481984},
\]
the exact binary64 value represented by the decimal $0.2545$.
Here is a rational check that it is below~\eqref{eq:cert-phi-derivative} for every $r\ge0$.
Set $\theta_0=2/(1-\beta_0)$ and $r^\sharp=\theta_0/(\theta_0-1)$.
The desired inequality is equivalent to
$f(r):=c_{\Gamma_1} e^r+1-(\theta_0-1)r\ge0$, and $f$ is convex.
The exact comparison
\[
 c_{\Gamma_1}\sum_{j=0}^{14}\frac{(r^\sharp)^j}{j!}>\theta_0-1
\]
places the possible interior minimizer
$r_{\min}=\log((\theta_0-1)/c_{\Gamma_1})$ below $r^\sharp$.
Its value is $f(r_{\min})=\theta_0-(\theta_0-1)r_{\min}>0$, while $f(0)>0$.
Local bounds over a shorter window may improve $\beta_0$.
For example, the derivative of $r/D(r)$ has the sign of
$1+c_{\Gamma_1} e^r(1-r)$, with its unique zero between $1.5$ and $1.69$.
The endpoint signs follow from rational exponential bounds.

\subsection{Quadrature and Finite Comparison Updates}
\label{app:cert-direct-updates}

\subsubsection{One-dimensional integration}
\label{app:cert-quadrature}

\begin{lemma}[Integration of a convex chord]\label{lem:cert-chord}
Let $0\le a<b$.  Suppose $f$ is convex on $[e^a,e^b]$ and
$R(r)\le f(e^r)$ for $r\in[a,b]$, uniformly over the stated profile range,
and $U_a\ge f(e^a)$, $U_b\ge f(e^b)$.
Let $\overline R(r)$ be the chord in $e^r$ with these endpoints.
Then $R(r)\le\overline R(r)$ and
\begin{equation}\label{eq:cert-chord-integral}
 \int_a^b\frac{r\overline R(r)}{D(r)^2}\dd r
 = U_a\mathcal M_0+(U_b-U_a)\frac{\mathcal M_1-e^a\mathcal M_0}{e^b-e^a},
\end{equation}
where
\begin{equation}\label{eq:cert-chord-moments}
 \mathcal M_0=\int_a^b\frac r{D(r)^2}\dd r,\qquad
 \mathcal M_1=\int_a^b\frac{re^r}{D(r)^2}\dd r
 =\frac{(a+1)/D(a)-(b+1)/D(b)}{c_{\Gamma_1}}.
\end{equation}
\end{lemma}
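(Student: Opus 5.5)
The argument splits into three independent parts: the pointwise domination $R\le\overline R$, which comes from convexity; the integral identity~\eqref{eq:cert-chord-integral}, which follows because the chord is affine in $X=e^r$; and the closed form for $\mathcal M_1$, which comes from an exact antiderivative.

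\emph{Domination.} Substitute $X=e^r$, which maps $[a,b]$ onto $[e^a,e^b]$. For $X$ in this interval write $X=(1-\theta)e^a+\theta e^b$ with $\theta=(X-e^a)/(e^b-e^a)\in[0,1]$. Convexity of $f$ gives
\[
 f(X)\le(1-\theta)f(e^a)+\theta f(e^b)\le(1-\theta)U_a+\theta U_b .
\]
The right side is exactly the chord $\overline R(r)$ evaluated at $X=e^r$. Combined with the hypothesis $R(r)\le f(e^r)$, this yields $R\le\overline R$ on $[a,b]$, uniformly over the profile range, because $f$, $U_a$ and $U_b$ do not depend on the profile.

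\emph{Integral identity.} Write the chord explicitly as
\[
 \overline R(r)=U_a+(U_b-U_a)\frac{e^r-e^a}{e^b-e^a}.
\]
Multiply by $r/D(r)^2$ and integrate over $[a,b]$; by linearity, and by the definitions of $\mathcal M_0$ and $\mathcal M_1$, the result is
\[
 U_a\mathcal M_0+(U_b-U_a)\frac{\mathcal M_1-e^a\mathcal M_0}{e^b-e^a}.
\]
This is~\eqref{eq:cert-chord-integral}.

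\emph{Closed form for $\mathcal M_1$.} This is the only step with any content. Recall $D(r)=c_{\Gamma_1}e^r+1+r$, so $D'(r)=c_{\Gamma_1}e^r+1$. I would verify that
\[
 F(r):=-\frac{r+1}{c_{\Gamma_1}D(r)}
\]
is an antiderivative of $re^r/D^2$. Differentiating,
\[
 F'(r)=-\frac{D-(r+1)D'}{c_{\Gamma_1}D^2}.
\]
The numerator simplifies as
\[
 D-(r+1)D'=c_{\Gamma_1}e^r+1+r-(r+1)c_{\Gamma_1}e^r-(r+1)=-c_{\Gamma_1}re^r,
\]
so $F'=re^r/D^2$. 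Evaluating $F(b)-F(a)$ gives the stated formula for $\mathcal M_1$. This is the same identity $(e^r/D)'=re^r/D^2$ already used in the proof of Lemma~\ref{lem:kzero-root}, up to the rewriting $(r+1)/D=(1-c_{\Gamma_1}e^r/D)$.

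I do not expect a genuine obstacle. The only care needed is to note that $D>0$, so every integrand is finite, and that the endpoint bounds $U_a,U_b$ are only required to dominate $f$ at the two endpoints; convexity then propagates the bound to the interior.
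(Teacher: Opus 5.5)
Your proposal is correct and follows essentially the same route as the paper's proof: convexity in $X=e^r$ gives the chord bound, substituting the explicit chord and integrating gives~\eqref{eq:cert-chord-integral}, and the closed form for $\mathcal M_1$ comes from the antiderivative of $re^r/D^2$ obtained by differentiating $(r+1)/D(r)$, whose derivative is $-c_{\Gamma_1}re^r/D(r)^2$. Your computation of that derivative is right, and your remark linking it to $(e^r/D)'=re^r/D^2$ via $(r+1)/D=1-c_{\Gamma_1}e^r/D$ is a correct observation.
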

\begin{proof}
Convexity gives the chord bound.  Substitute
$\overline R(r)=U_a+(U_b-U_a)(e^r-e^a)/(e^b-e^a)$ and integrate.
The expression for $\mathcal M_1$ follows by differentiating $(r+1)/D(r)$, whose derivative is $-c_{\Gamma_1}re^r/D(r)^2$.
\end{proof}

Different bounds may be integrated separately and their integral upper
endpoints minimized.  When a single chord is used, its two endpoints must
bound the same convex function.  This also applies after combining possible
inverse branches by a maximum.

For $f_0(r)=r/D(r)^2$,
\begin{equation}\label{eq:cert-quadrature-error}
 |f_0''(r)|\le\frac{4+8r}{D(r)^2}
              \le\frac{4+8b}{D(a)^2}\qquad(a\le r\le b).
\end{equation}
Indeed,
$f_0''=-4D'/D^3-2rD''/D^3+6r(D')^2/D^4$ and
$0\le D'/D,D''/D\le1$.
A composite trapezoidal sum with maximal substep $\delta_r$ therefore has
error at most
\[
 \frac{(b-a)\delta_r^2(4+8b)}{12D(a)^2}.
\]
Subtracting and adding this quantity supplies lower and upper endpoints
for $\mathcal M_0$ and, by summation, for $T_0$.

\subsubsection{State-band chord bound}

\begin{lemma}[State-band chord bound]
\label{lem:cert-dp-chord}
Fix $j$ with $0\le r_j<r_{j+1}$, and let $v_{j+1}\ge0$
satisfy $V_\lambda(r_{j+1})\le v_{j+1}$ uniformly over the
admissible profiles.  Partition the admissible state range on
$[r_j,r_{j+1}]$ into finitely many bands indexed by $i$.
For each retained band, let $0\le k_{j,i}\le r_{j+1}$.
Given endpoint values $\overline R_{j,i}(r_j)$ and
$\overline R_{j,i}(r_{j+1})$, define
\[
 \overline R_{j,i}(r)
 =\frac{e^{r_{j+1}}-e^r}{e^{r_{j+1}}-e^{r_j}}
   \overline R_{j,i}(r_j)
 +\frac{e^r-e^{r_j}}{e^{r_{j+1}}-e^{r_j}}
   \overline R_{j,i}(r_{j+1}),
 \qquad r\in[r_j,r_{j+1}].
\]
Assume that
\[
 k_{j,i}\le k(r),\qquad R(r)\le\overline R_{j,i}(r)
\]
whenever an admissible profile lies in band $i$ at coordinate $r$.

Define
\begin{equation}\label{eq:cert-generic-dp-chord}
 Z_{j,i}(r)
 =
 \left(1-\frac{k_{j,i}}{r_{j+1}}\right)
 \overline R_{j,i}(r)
 -
 \frac{k_{j,i}v_{j+1}D(r_j)}{r_{j+1}}.
\end{equation}
Set
\[
 Z_j(r_j)=\max_iZ_{j,i}(r_j),
 \qquad
 Z_j(r_{j+1})=\max_iZ_{j,i}(r_{j+1}),
\]
and, for $r\in[r_j,r_{j+1}]$, define
\[
 Z_j(r)
 =
 Z_j(r_j)
 +
 \frac{Z_j(r_{j+1})-Z_j(r_j)}
      {e^{r_{j+1}}-e^{r_j}}
 \bigl(e^r-e^{r_j}\bigr).
\]
Then any $v_j$ satisfying
\begin{align}
 v_j\ge\max\Biggl\{0,\;&
 v_{j+1}
 +\int_{r_j}^{r_{j+1}}
   \frac{rZ_j(r)}{D(r)^2}\dd r \notag\\
 &+
 \min\left\{
 1,\frac{(r_{j+1}-r_j)r_{j+1}}{D(r_j)}
 \right\}
 \bigl(-\min\{Z_j(r_j),Z_j(r_{j+1})\}\bigr)_+
 \int_{r_j}^{r_{j+1}}\frac r{D(r)^2}\dd r
 \Biggr\}
\label{eq:cert-generic-dp-update}
\end{align}
satisfies
\[
 V_\lambda(r_j)\le v_j
\]
uniformly over the admissible profiles.
\end{lemma}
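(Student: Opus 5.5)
The argument will repeat the backward step behind Proposition~\ref{prop:cert-direct}. The new ingredients are a pointwise domination of the integrand by the single chord $Z_j$, valid whichever band the profile occupies at each $r$, and a separate treatment of the negative part of $Z_j$, where the survival weight cannot simply be dropped.

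\emph{Step 1: one backward step.} Fix an admissible profile and abbreviate $\sigma_{r_j}$ as in Appendix~\ref{app:cert-regional-comparison}. Start from the splitting
\[
 V_\lambda(r_j)=\int_{r_j}^{r_{j+1}}\sigma_{r_j}(r)\frac{(r-k(r))R(r)}{D(r)^2}\dd r+\sigma_{r_j}(r_{j+1})V_\lambda(r_{j+1}).
\]
Insert the bound $V_\lambda(r_{j+1})\le v_{j+1}$, which is legitimate because $\sigma_{r_j}(r_{j+1})\ge0$. Then use
\[
 \sigma_{r_j}(r_{j+1})=1-\int_{r_j}^{r_{j+1}}\sigma_{r_j}\,\frac{k}{D}\dd r.
\]
This yields~\eqref{eq:cert-direct-one-step}:
\[
 V_\lambda(r_j)\le v_{j+1}+\int_{r_j}^{r_{j+1}}\sigma_{r_j}(r)\Bigl[\frac{(r-k)R}{D^2}-\frac{kv_{j+1}}{D}\Bigr]\dd r.
\]

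\emph{Step 2: pointwise domination by $Z_j$.} At a coordinate $r$ where the profile lies in band $i$, use $k\ge k_{j,i}$ and $r\le r_{j+1}$. These give
\[
 r-k\le r-k_{j,i}\le r\bigl(1-k_{j,i}/r_{j+1}\bigr),
\]
and the right side is nonnegative because $k_{j,i}\le r_{j+1}$; hence we may also replace $R$ by $\overline R_{j,i}$. For the second term, $D(r)\ge D(r_j)$ and $r\le r_{j+1}$ give
\[
 \frac{kv_{j+1}}{D}\ge\frac{k_{j,i}v_{j+1}D(r)}{D(r)^2}\ge\frac{r\,k_{j,i}v_{j+1}D(r_j)}{r_{j+1}D(r)^2}.
\]
Together these show that the bracket is at most $rZ_{j,i}(r)/D(r)^2$. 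Both $Z_{j,i}$ and $Z_j$ are affine in $e^r$ on $[r_j,r_{j+1}]$, and the endpoint values of $Z_j$ dominate those of every $Z_{j,i}$. Hence $Z_{j,i}\le Z_j$ on the whole interval. Consequently the bracket is at most $rZ_j(r)/D(r)^2$ regardless of which band the profile occupies, even if it changes bands within the interval.

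\emph{Step 3: the survival weight (main obstacle).} Because $Z_j$ may be negative, the factor $\sigma_{r_j}\le1$ cannot just be removed. I would write $Z_j=Z_j^+-Z_j^-$ and use $0\le\sigma_{r_j}\le1$ to get
\[
 \int\sigma_{r_j}\frac{rZ_j}{D^2}\le\int\frac{rZ_j}{D^2}+\int(1-\sigma_{r_j})\frac{rZ_j^-}{D^2}.
\]
Next, $1-\sigma_{r_j}(r)\le\int_{r_j}^{r}k/D$, and with $k\le r_{j+1}$ and $D\ge D(r_j)$ this gives
\[
 1-\sigma_{r_j}(r)\le\min\Bigl\{1,\frac{(r_{j+1}-r_j)r_{j+1}}{D(r_j)}\Bigr\}.
\]
Since $Z_j$ is monotone in $e^r$, its negative part is bounded by $\bigl(-\min\{Z_j(r_j),Z_j(r_{j+1})\}\bigr)_+$. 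Substituting these bounds gives exactly the right side of~\eqref{eq:cert-generic-dp-update}. Finally, $V_\lambda\ge0$ permits taking the maximum with $0$, and since every bound used is uniform over admissible profiles, the resulting inequality holds uniformly.
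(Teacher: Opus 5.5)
Your proposal is correct and follows the paper's proof essentially step for step. You use the same pointwise domination of the residual by $rZ_{j,i}(r)/D(r)^2\le rZ_j(r)/D(r)^2$, obtained from the endpoint comparison of functions affine in $e^r$. You also treat the survival weight the same way, applying $0\le1-\sigma_{r_j}\le\min\{1,(r_{j+1}-r_j)r_{j+1}/D(r_j)\}$ to the negative part of $Z_j$ before substituting into \eqref{eq:cert-direct-one-step}. The differences are cosmetic: you rederive \eqref{eq:cert-direct-one-step} from $\sigma_{r_j}(r_{j+1})=1-\int\sigma_{r_j}k/D$, and you split $Z_j$ into positive and negative parts explicitly.
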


\begin{proof}
On state band $i$,
\[
 \frac{(r-k)R}{D^2}
 -\frac{kv_{j+1}}{D}
 \le
 \frac r{D^2}Z_{j,i}(r),
\]
because $k_{j,i}\le k(r)\le r\le r_{j+1}$,
$R(r)\le\overline R_{j,i}(r)$, and
$D(r)\ge D(r_j)$.

By~\eqref{eq:cert-generic-dp-chord}, for every
$r\in[r_j,r_{j+1}]$,
\begin{align*}
 Z_{j,i}(r)
 &=\frac{e^{r_{j+1}}-e^r}{e^{r_{j+1}}-e^{r_j}}Z_{j,i}(r_j)
 +\frac{e^r-e^{r_j}}{e^{r_{j+1}}-e^{r_j}}Z_{j,i}(r_{j+1})\\
 &\le\frac{e^{r_{j+1}}-e^r}{e^{r_{j+1}}-e^{r_j}}Z_j(r_j)
 +\frac{e^r-e^{r_j}}{e^{r_{j+1}}-e^{r_j}}Z_j(r_{j+1})\\
 &=Z_j(r).
\end{align*}
The inequality follows from the endpoint maxima and the nonnegative
interpolation weights.
Hence
\[
 \frac{(r-k)R}{D^2}
 -\frac{kv_{j+1}}{D}
 \le
 \frac r{D^2}Z_j(r).
\]
Moreover,
\[
 0\le1-\sigma_{r_j}(r)
 \le
 \min\left\{
 1,\frac{(r_{j+1}-r_j)r_{j+1}}{D(r_j)}
 \right\}.
\]
Since $Z_j$ is linear in $e^r$, its minimum on
$[r_j,r_{j+1}]$ is attained at an endpoint.  Therefore
\begin{align*}
 \sigma_{r_j}(r)Z_j(r)
 \le{}&
 Z_j(r)\\
 &+
 \min\left\{
 1,\frac{(r_{j+1}-r_j)r_{j+1}}{D(r_j)}
 \right\}
 \bigl(-\min\{Z_j(r_j),Z_j(r_{j+1})\}\bigr)_+ .
\end{align*}
Substitution into~\eqref{eq:cert-direct-one-step} proves
\eqref{eq:cert-generic-dp-update}.
\end{proof}

\subsubsection{A finite-scale forward comparison}

For some regions the full backward state-band comparison is unnecessary.
On a partition $0=r_0<r_1<\cdots<r_N=b_4$,
let $k_j\ge0$ be a lower bound for $k$ on $[r_j,r_{j+1}]$, and let
$\mathcal Z_j$ bound $\int_{r_j}^{r_{j+1}}rR(r)/D(r)^2\,\dd r$.  Define
$\overline \sigma_0=1$ and
\[
 \overline \sigma_{j+1}
 =
 \frac{\overline \sigma_j}
 {1+(r_{j+1}-r_j)k_j/D(r_{j+1})}.
\]
Since $e^{-x}\le1/(1+x)$, we have
$\sigma(r_j)\le\overline \sigma_j$ and
$(r-k(r))/r\le1-k_j/r_{j+1}$ on $[r_j,r_{j+1}]$; moreover $\sigma(r)\le\sigma(r_j)$ there.  Consequently,
\begin{equation}\label{eq:cert-forward-bound}
 \overline\ell_\lambda\le\Gamma_1+
 \sum_{j:r_{j+1}\le b_4}
 \overline \sigma_j
 \left(1-\frac{k_j}{r_{j+1}}\right)\mathcal Z_j
 +\overline \sigma_N\overline V.
\end{equation}

\subsubsection{Finite Bellman update}

\begin{lemma}[Finite Bellman update]
\label{lem:cert-bellman-update}
Let
\[
 0=r_0<r_1<\cdots<r_N=b_4,
 \qquad
 \delta_j:=r_{j+1}-r_j,
 \qquad
 \kappa_j:=\frac{r_{j+1}}{D(r_j)}.
\]
Suppose that, at a fixed right-end value $W_{j+1}$,
$H_j$ is a uniform upper bound for the right-hand side of
\eqref{eq:cert-HJB} throughout
$[r_j,r_{j+1}]$ and over all admissible parameter--state values.

Choose
\begin{equation}\label{eq:cert-potential-update}
 W_j\ge W_{j+1}+
 \begin{cases}
  \delta_j H_j,
      &H_j\ge0,\\[2mm]
  \displaystyle
  \frac{\delta_jH_j}{1+\delta_j\kappa_j},
      &H_j<0.
 \end{cases}
\end{equation}
Then the affine interpolation of $W$ between $W_j$ and $W_{j+1}$
satisfies~\eqref{eq:cert-HJB} throughout the interval.

If, in addition,
\begin{equation}\label{eq:cert-W-terminal}
 W_N\ge
 \overline V-T_0(b_4)+\sup_hP(b_4,h),
\end{equation}
where the supremum is over
$h\in[h^-_{\mathcal R}(b_4),h^+_{\mathcal R}(b_4)]$, then repeated
application of~\eqref{eq:cert-potential-update} gives
\[
 \overline\ell_\lambda
 \le
 \Gamma_1+T_0(0)+W_0-P(0,0)
\]
for every profile $\lambda\in\Lambda(\mathcal R)$.
\end{lemma}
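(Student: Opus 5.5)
The plan has two parts. First we show that the affine interpolation of $W$ satisfies the differential inequality~\eqref{eq:cert-HJB} on each interval $[r_j,r_{j+1}]$. Then we chain the intervals together and invoke Proposition~\ref{prop:cert-bellman}.

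On $[r_j,r_{j+1}]$ let $W$ be affine with endpoint values $W_j$ and $W_{j+1}$, so that $-W'=(W_j-W_{j+1})/\delta_j$ is constant. The right-hand side of~\eqref{eq:cert-HJB} depends on the current value $W(r)$, and does so only through the term $-k(T_0+W-P)/D$. Since $k\ge0$, this term is nonincreasing in $W$.

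\emph{Case $H_j\ge0$.} We have $W(r)\ge W_{j+1}$ on the interval, because $W_j\ge W_{j+1}$ and $W$ is affine. Hence the right-hand side at $W(r)$ is at most its value at $W_{j+1}$, which is at most $H_j$. On the other hand, $-W'\ge H_j$ by~\eqref{eq:cert-potential-update}.

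\emph{Case $H_j<0$.} Now $W$ may lie below $W_{j+1}$, and the term $-kW/D$ can increase the residual. Its derivative in $W$ has absolute value $k/D\le r_{j+1}/D(r_j)=\kappa_j$, using $k\le r$ and the monotonicity of $D$. Write $W(r)=W_{j+1}+(r_{j+1}-r)s$ with $s=-W'$. Then the residual is at most $H_j+\kappa_j(r_{j+1}-r)(-s)_+\le H_j+\kappa_j\delta_j(-s)_+$. It therefore suffices that
\[
 s\ge H_j-\kappa_j\delta_j s \quad\text{when } s<0,
\]
that is, $s(1+\delta_j\kappa_j)\ge H_j$. This is exactly the second line of~\eqref{eq:cert-potential-update}. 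If instead $s\ge0$, we are back in the first argument, since then $W\ge W_{j+1}$ and $s\ge0>H_j$.

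At the grid points, where $W'$ jumps, one-sided derivatives are used, and each one-sided derivative comes from its own interval's check. This establishes~\eqref{eq:cert-HJB} almost everywhere on $[0,b_4]$.

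For the second claim, we check the terminal hypothesis of Proposition~\ref{prop:cert-bellman}. By~\eqref{eq:cert-W-terminal}, for every admissible terminal state $h$,
\[
 \mathcal V(b_4,h)=T_0(b_4)+W_N-P(b_4,h)\ge \overline V .
\]
Proposition~\ref{prop:cert-bellman}, together with $h(0)=0$, then gives the stated bound with $W(0)=W_0$.

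The main obstacle is the case $H_j<0$, where $W$ decreases toward the left and the feedback through $-kW/D$ must be absorbed. This is why the comparison uses $\kappa_j$ with $D(r_j)$ and $r_{j+1}$ as worst-case values, and why the denominator $1+\delta_j\kappa_j$ appears in the update.
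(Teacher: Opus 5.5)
Your proposal is correct and follows the paper's own argument. The residual is affine in $W$ with slope $-k/D\in[-\kappa_j,0]$, so $H_j$ applies directly whenever $W\ge W_{j+1}$; when $W_j<W_{j+1}$, which can happen only if $H_j<0$, the extra $\kappa_j(W_{j+1}-W_j)$ is absorbed by the second line of the update, and the terminal condition then feeds Proposition~\ref{prop:cert-bellman}. The only difference is that you split cases by the sign of $H_j$ rather than by the sign of $W_j-W_{j+1}$, which makes no difference to the argument.
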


\begin{proof}
For each fixed admissible parameter--state value, the expression
inside the supremum in~\eqref{eq:cert-HJB} is affine in $W$,
with slope $-k/D$.  Throughout the interval,
\[
 0\le\frac{k(r)}{D(r)}\le\frac{r_{j+1}}{D(r_j)}=\kappa_j,
 \qquad -W'(r)=\frac{W_j-W_{j+1}}{\delta_j}.
\]
First suppose that $W_j\ge W_{j+1}$.
Then $W(r)\ge W_{j+1}$, so the residual is at most $H_j$.
If $H_j\ge0$, the update condition gives $-W'(r)\ge H_j$.
If $H_j<0$, we have $-W'(r)\ge0>H_j$.
Thus~\eqref{eq:cert-HJB} holds in either case.

Now suppose that $W_j<W_{j+1}$.  The update condition then
forces $H_j<0$.  Since $W_j\le W(r)\le W_{j+1}$, the residual
is at most $H_j+\kappa_j(W_{j+1}-W_j)$.
Equation~\eqref{eq:cert-potential-update} implies
\[
 (1+\delta_j\kappa_j)(W_j-W_{j+1})\ge\delta_jH_j,
\]
and hence
\[
 H_j+\kappa_j(W_{j+1}-W_j)
 \le\frac{W_j-W_{j+1}}{\delta_j}=-W'(r).
\]
Thus~\eqref{eq:cert-HJB} holds throughout the interval.

Finally,~\eqref{eq:cert-W-terminal} implies
\[
 T_0(b_4)+W_N-P(b_4,h)\ge\overline V
\]
for every admissible terminal state.
Proposition~\ref{prop:cert-bellman} gives the stated load bound.
\end{proof}

\subsubsection{Finite-tail construction of \texorpdfstring{$\overline V$}{V}}

We now complete the construction of the continuation bound at $b_4$
used in Appendix~\ref{app:cert-tail-bound}.  Equation~\eqref{eq:cert-tail}
provides the terminal bound at $r=10$.  Partition $[b_4,10]$ into
finitely many intervals of the exponential coordinate and process them from right to left.

Consider one such interval $[a,b]$ and suppose that $v\ge0$ is a uniform
upper bound for the continuation at its right endpoint $b$.  Using
\eqref{eq:cert-tail-state}, restrict $h$ to its admissible state interval
and let $\underline d\le d(r)$ and $\underline \nu\le \nu(r)$ be certified
lower bounds throughout $[a,b]$.  Set
$\underline k(h):=(h+\underline \nu)_+$, so that $k(r)\ge \underline k(h)$.

On the nonnegative-gap branch, $R(r)\le1+h-\underline d$, and hence
\begin{equation}\label{eq:cert-tail-positive-residual}
 \frac{(r-k)R}{D(r)^2}-\frac{kv}{D(r)}
 \le
 \frac{(r-\underline k(h))(1+h-\underline d)}{D(r)^2}
 -\frac{\underline k(h)v}{D(r)}.
\end{equation}
On the negative-gap branch,
\[
 \frac{h^2}{2}\le e^h-1-h<-d(r)\le-\underline d,
\]
so $h\le\sqrt{-2\underline d}$.  If $\upsilon$ is a uniform upper bound
for $\upsilon_0(r)$ on $[a,b]$, then
$R(r)\le\overline R_0(r,h)\le\upsilon(1+h)$, and therefore
\begin{equation}\label{eq:cert-tail-negative-residual}
 \frac{(r-k)R}{D(r)^2}-\frac{kv}{D(r)}
 \le
 \frac{(r-\underline k(h))\upsilon(1+h)}{D(r)^2}
 -\frac{\underline k(h)v}{D(r)}.
\end{equation}

For fixed interval coefficients, the two right-hand sides are affine in
$h$ for $h\le-\underline \nu$ and concave quadratic for
$h\ge-\underline \nu$.  Split the admissible state interval at
$h=-\underline \nu$ whenever this point lies in the interval.
Their maxima are controlled by the endpoints of the resulting affine and
quadratic pieces, including that splitting point, and by the quadratic
vertex whenever it lies in the clipped quadratic piece.  Together with interval bounds for the
$r$-dependent coefficients, this gives a rigorous whole-interval
residual bound $H$.

Applying~\eqref{eq:cert-direct-update} with
$[r_j,r_{j+1}]=[a,b]$ and $\kappa_j=b/D(a)$ propagates the continuation bound from $b$ to $a$.
Starting from the analytic bound
$
 V_\lambda(10)\le\frac{265e^{-20}}{16c_{\Gamma_1}^2}$
in~\eqref{eq:cert-tail} and repeating this update over all intervals from $10$ back to $b_4$
produces the required bound
\[
 \overline V\ge
 \sup_{\lambda\in\Lambda(\mathcal R)}V_\lambda(b_4).
\]

\subsection{Regional Comparison Bounds}
\label{sec:cert-schemes}

We now specify the bounds used in the comparisons of
Appendix~\ref{sec:cert-scheme-overview}.  Fix a parameter region and a
partition $0=r_0<\cdots<r_N=b_4$.  Each comparison starts with the
continuation bound $\overline V$ from
Appendix~\ref{app:cert-direct-updates}.
The shift bound of Lemma~\ref{lem:cert-shift} is always available;
sharper bounds are used only after their conditions hold throughout
the relevant coordinate--state cell.

\subsubsection{State-band comparison}

The record specifies integers $N_r,N_b$: the coordinate step is $1/N_r$,
and each admissible state interval is divided into $N_b$ equal bands.
On a coordinate interval $[r_j,r_{j+1}]$, index these bands by $i$.
On a band
$0\le h_L\le h\le h_U$, a lower bound
$\underline\nu_j\le\nu(r)$ gives
\[
 k_{j,i}:=(h_L+\underline\nu_j)_+\le k(r).
\]
The nonnegative-gap branch is certified by checking
\[
 \underline{\hat d}(r)
 +\frac{h_L^2/2+h_L^3/6}{\overline{x}_{\max}}\ge0
\]
at both endpoints.  The left-hand side is concave in $e^r$, so the
condition holds throughout the cell.  Here $\underline{\hat d}$ bounds
$d/x_{\max}$, whereas $\underline d$ bounds $d$.

The basic set of inverse bounds consists of the unconditional shift bound, the completed-prefix
bound whenever~\eqref{eq:cert-prefix-budget} holds throughout the cell,
and, on certified nonnegative-gap cells, the bounds
$1+h_U-\underline d$ and~\eqref{eq:cert-capacity}.
The refined set also includes the absolute and normalized future-endpoint
bounds~\eqref{eq:cert-future-endpoint} on those cells,
provided their respective nonnegative area-gap lower bounds and
positive denominators are certified throughout the cell.
The refined set further includes a candidate valid on both branches.  From a uniform
shift factor, obtain $\underline h_\rho\le h(\rho_*(r))$ by~\eqref{eq:cert-Htau-lower}.
If $\underline h_\rho>0$, put
\[
 s_3(\underline h_\rho)=\underline h_\rho+\frac{\underline h_\rho^2}{2}+\frac{\underline h_\rho^3}{6},\qquad
 Q_{j,i}^{\mathrm{inv}}(r)
 =(\underline h_\rho-\underline d(r))\left(1+\frac1{s_3(\underline h_\rho)}\right).
\]
On the negative-gap branch, $\underline h_\rho-\underline d>0$ and
$e^{\underline h_\rho}-1\ge s_3(\underline h_\rho)>0$, so~\eqref{eq:cert-negative-inverse}
bounds $R$ by $Q_{j,i}^{\mathrm{inv}}$.
Thus $\max\{1+h_U-\underline d,Q_{j,i}^{\mathrm{inv}}\}$
covers both branches; omit it unless the denominator is certified positive.

Each candidate is convex in $e^r$ for fixed state endpoints and
denominator bounds, by Appendix~\ref{app:cert-convex-bounds}.
For each applicable candidate separately, choose certified endpoint
bounds and form
\begin{equation}\label{eq:cert-scheme-II-R-chord}
 \overline R_{j,i}(r)
 :=\overline R_{j,i}(r_j)
 +\frac{\overline R_{j,i}(r_{j+1})-\overline R_{j,i}(r_j)}
 {e^{r_{j+1}}-e^{r_j}}(e^r-e^{r_j}).
\end{equation}
Select the candidate with the smallest certified chord integral and
use its two endpoints in Lemma~\ref{lem:cert-dp-chord}; endpoints from
different candidates are never combined.
Starting from $v_N=\overline V$ at $r_N=b_4$ and applying that lemma backward gives
\begin{equation}\label{eq:cert-state-band-upper}
 U(\mathcal R)=\Gamma_1+v_0.
\end{equation}
The record specifies the basic or refined set and the grid resolutions.
Scale charts use the same comparison in the coordinates
of~\eqref{eq:cert-normalized-masses}.

\subsubsection{Bellman comparison}

For each region and resolution, fix the potential from
Appendix~\ref{app:cert-potential-construction}.
Apply Lemma~\ref{lem:cert-bellman-update} with terminal condition
\eqref{eq:cert-W-terminal}.  We give a basic whole-cell residual bound,
then two optional refinements that preserve dependence on the block
masses and on the coordinate $r$.

Consider a state cell $i$ over the coordinate interval $[r_j,r_{j+1}]$.
Choose fixed bounds $0\le h_L\le h\le h_U$, $\underline d\le d$,
and $\underline E\le e^{h_L}-1-h_L$ that are valid throughout this cell.
Let $\upsilon\ge1$ be a uniform valid shift factor from
Lemma~\ref{lem:cert-shift}; the bound $\upsilon=e^{r_{j+1}}$ is also
available from $R(r)\le e^r$ and $\rho_*(r)\ge0$.
On the negative-gap branch, form whole-cell upper bounds from
$\upsilon(1+h_U)$, the quotient in~\eqref{eq:cert-negative-inverse}
with a certified positive lower bound $\underline h_\rho\le h(\rho_*(r))$,
and the continuous bound~\eqref{eq:cert-negative-continuous-cell}.
Denote the last two upper bounds by $Q_{j,i}^{\mathrm{neg}}$ and $R_{j,i}^{\mathrm{cont}}$, respectively,
and set
\begin{equation}\label{eq:cert-scheme-IV-negative}
 \overline R_{j,i,-}
 :=\min\{\upsilon(1+h_U),Q_{j,i}^{\mathrm{neg}},R_{j,i}^{\mathrm{cont}}\}.
\end{equation}
The quotient candidate is omitted if a strictly positive denominator
cannot be certified, and $R_{j,i}^{\mathrm{cont}}$ is omitted when its denominator
condition is not certified.  These are whole-cell bounds formed
directly from the inverse inequalities, independently of the chords selected by the state-band comparison.

Using
\[
 R+\Xi_+=
 \begin{cases}
  1+h-d,&\Delta_\lambda\ge0,\\
  R,&\Delta_\lambda<0,
 \end{cases}
\]
we take
\begin{equation}\label{eq:cert-scheme-IV-envelope}
 \mathcal U=
 \begin{cases}
  1+h-d,
     &\text{if $\underline E+\underline d\ge0$ on the cell},\\[1mm]
  \max\{1+h-d,\overline R_{j,i,-}\},
     &\text{otherwise}.
 \end{cases}
\end{equation}
The second case covers every possible negative-gap state and also
retains the nonnegative-gap bound when the branch is not determined.
These bounds are used directly in whole-cell residual enclosures;
no convexity claim or endpoint-chord interpolation is made for the
minimum in~\eqref{eq:cert-scheme-IV-negative}.

Let $\mathcal G$ be the certified correction term obtained from
Appendix~\ref{app:cert-credit}; if its window conditions cannot be
certified on the cell, set $\mathcal G=0$.

For a fixed right-end value $w$, let $H_{j,i}^{\mathrm{base}}(w)$
be the resulting whole-cell upper bound for the residual
in~\eqref{eq:cert-HJB}.
A second bound retains the common dependence
of $d$ and $\nu$ on the masses.  Bounding them separately can combine
extreme values that no single mass vector attains.

Suppose
\[
 [r_j,r_{j+1}]\subseteq[b_p,b_{p+1}]
\]
and the current state band is
$h\in[\underline h_{j,i},\overline h_{j,i}]$.
The path constraints imply
\begin{equation}\label{eq:cert-scheme-V-mass}
 \underline h_{j,i}-(r_{j+1}-b_p)
 \le\sum_{g<p}M_g
 \le\overline h_{j,i},
 \qquad
 \underline h_{j,i}
 \le\sum_{g\le p}M_g
 \le\overline h_{j,i}+b_{p+1}-r_j.
\end{equation}
These constraints couple the ranges of $d$, $\nu$, and $-\nu$.

When the nonnegative-gap branch is certified throughout the cell,
the common-mass expansion in
Appendix~\ref{app:cert-local-tightening} gives a second upper bound
$\widetilde H_{j,i}(w)$ for the same Bellman residual.
This candidate is omitted when the nonnegative-gap condition is not
certified throughout the cell.
The other refinement bounds variation in $r$: for the center term $F_0$
of that expansion,~\eqref{eq:cert-time-derivative} propagates a certified
value at an interior point over the entire coordinate cell.
Combining this enclosure with the other residual terms gives
$\widehat H_{j,i}(w)$, as detailed below.  The package calls this refinement
\emph{time tightening}, with time referring to $r$.

At each step, evaluate all selected candidates at $w=W_{j+1}$.
The record specifies the basic bound alone, the basic and common-mass
bounds, or all three bounds.  Let $\mathcal H_{j,i}(w)$ be that set,
omitting any refinement whose conditions have not been certified.
It always contains $H_{j,i}^{\mathrm{base}}(w)$.  Set
\begin{equation}\label{eq:cert-scheme-residual}
 H_j=\max_i\min\mathcal H_{j,i}(W_{j+1}).
\end{equation}
Every candidate bounds the same residual on the whole cell.  The minimum
therefore selects a valid bound, and the maximum covers every admissible
state.

Any certified upper bound on the residual over the whole state range
may be used in~\eqref{eq:cert-potential-update}.
During adaptive subdivision, let $H$ be the largest retained upper
bound and let $\tau$ be the stored binary64 value of $2\cdot10^{-7}$.
A state interval need not be subdivided further if its certified bound
is at most the upward-rounded value of $H+\tau$.
Since $H$ never decreases, the final upward-rounded value of $H+\tau$
bounds every retained or unrefined interval.  Thus the tolerance is
included in the residual bound used to obtain $W_j$.
The resulting regional bound is
\begin{equation}\label{eq:cert-bellman-scheme-upper}
 U(\mathcal R)=\Gamma_1+T_0(0)+W_0-P(0,0).
\end{equation}

\subsubsection{Joint residual bounds}
\label{app:cert-local-tightening}

On
$[r_j,r_{j+1}]\subseteq[b_p,b_{p+1}]$ and
$h\in[\underline h_{j,i},\overline h_{j,i}]$,
rearranging~\eqref{eq:cert-path-cone-exact} gives
\eqref{eq:cert-scheme-V-mass}.
To bound a linear mass expression, intersect its mass box with either
prefix-sum constraint.  After writing $M_g=\underline M_g+v_g$, the
remaining minimization has individual capacities and an interval constraint
on $\sum v_g$.  Ordering the coefficients and filling lower-cost coordinates
first solves this fractional knapsack problem.  Negative coefficients use
beneficial mass up to the upper total limit; nonnegative coefficients use
only the mass needed by the lower total limit.
This follows by exchanging mass from a higher-cost to a lower-cost
coordinate.  Enlarging the feasible mass ranges preserves a lower bound.
Applying the calculation separately to $d,\nu,-\nu$ strengthens their bounds.

\subsubsection{Common-mass quadratic expansion}
On a cell with $\Delta_\lambda\ge0$, equation~\eqref{eq:cert-positive-basic}
gives $R+\Xi_+=1+h-d$.
For a fixed lower bound $V$ on the continuation term $T_0+W-P$, define
\[
 F(h,d,\nu)=\frac{r(h-d)-(h+\nu)(1+h-d)}{D^2}-\frac{(h+\nu)V}{D}.
\]
The replacement by $V$ is valid because $k=h+\nu\ge0$.
Choose centers $h_0,M_g^0$, and put
\[
 d_0=\sum_gc_gM_g^0,\quad \nu_0=\sum_gw_gM_g^0,\quad
 k_0=h_0+\nu_0,\quad U_0=1+h_0-d_0,
\]
\[
 \delta_d=\sum_gc_g(M_g-M_g^0),\quad
 \delta_\nu=\sum_gw_g(M_g-M_g^0).
\]
The exact expansion is
\begin{align}\label{eq:cert-quadratic}
 F(h,d,\nu)={}&F(h_0,d_0,\nu_0)+\gamma_h(h-h_0)
                         +\sum_g\gamma_g(M_g-M_g^0)\\
 &+\frac{-(h-h_0)^2+(h-h_0)(\delta_d-\delta_\nu)
                       +\delta_d\delta_\nu}{D^2},\notag
\end{align}
where
\begin{equation}\label{eq:cert-quadratic-coefficients}
 \gamma_h=\frac{r-1-2h_0+d_0-\nu_0}{D^2}-\frac VD,\qquad
 \gamma_g=-\frac{(r-k_0)c_g+U_0w_g}{D^2}-\frac{w_gV}{D}.
\end{equation}
Substitution and expansion of the numerator prove the identity.
Both $\delta_d$ and $\delta_\nu$ use the same mass increments.
The cross terms are bounded on the whole cell, and only the nonpositive
term $-(h-h_0)^2/D^2$ is discarded.
For a fixed right-end continuation value $w$, combining this expansion
with the corresponding lower bound on $T_0+W-P$ and with the terms
$-P_r-\min\{\mathcal G,P_h\}$ gives a whole-cell residual bound
$\widetilde H_{j,i}(w)$.
This bound is used only when $\Delta_\lambda\ge0$ is certified throughout
the cell; otherwise it is omitted.

\subsubsection{Variation of the center term in the exponential coordinate}
Inside one output cell, $\nu_0$ is constant, $d_0'=d_0+\nu_0$, and $D'=D-r$.
With $h_0,M^0,V$ fixed, write
\[
 N_0=r(h_0-d_0)-k_0(1+h_0-d_0),\qquad F_0=N_0/D^2-k_0V/D.
\]
Differentiating gives
\begin{equation}\label{eq:cert-time-derivative}
 F_0'=\frac{h_0-d_0+(k_0-r)(d_0+\nu_0)
                   -2N_0(D-r)/D+k_0V(D-r)}{D^2}.
\end{equation}
If an interval $\mathcal J$ encloses this derivative throughout $[a,b]$, then for any
$r_c\in[a,b]$,
\begin{equation}\label{eq:cert-midpoint}
 F_0(r)\in F_0(r_c)+\mathcal J\cdot[a-r_c,b-r_c]\qquad(a\le r\le b).
\end{equation}
The mean value theorem proves the inclusion.
For a fixed right-end continuation value $w$, combine the enclosure
\eqref{eq:cert-midpoint} with the retained terms of
\eqref{eq:cert-quadratic}, the potential derivatives, and the correction term.  This gives the whole-cell residual bound
$\widehat H_{j,i}(w)$ used in the Bellman comparison.
The quadratic expansion and derivative in $r$ can each be checked by clearing
powers of $D$ and comparing integer polynomial coefficients.

\subsection{Construction of the State Potential}
\label{app:cert-potential-construction}

The residual in~\eqref{eq:cert-HJB} contains
$-\min\{\mathcal G,P_h\}$.  This motivates choosing $P_h$ close to the
favorable correction $\mathcal G$.  For each region and resolution,
the construction below integrates a candidate derivative on a grid and
interpolates the resulting vertices.  These vertices define the actual
potential checked in the residual; no accuracy assumption about the
candidate is needed.

Fix a parameter region $\mathcal R$ and the mass and state bounds
from~\eqref{eq:cert-mass-ranges}--\eqref{eq:cert-path-cone}.
Choose positive integers $N_r,N_h$, with $N_r$ divisible by $5$, and put
\[
 \delta_r=1/N_r,\qquad \delta_h=1/N_h,\qquad
 r_j=j\delta_r,\qquad h_i=i\delta_h.
\]
The grid contains all output-cell boundaries, $b_4=32/5$, and the
terminal coordinate $10$; the potential is needed only on $[0,b_4]$.

\subsubsection{Choosing the state derivative}

To choose one potential for the whole region, use the midpoint masses
\[
 M_g^{\mathrm{mid}}:=\frac{\underline M_g+\overline M_g}{2},
 \qquad M_{<p}^{\mathrm{mid}}:=\sum_{g<p}M_g^{\mathrm{mid}}.
\]
For $b_p\le r\le b_{p+1}$, define the reference path
\begin{equation}\label{eq:cert-potential-anchor}
 h^{\mathrm{mid}}(r)
 :=\max\{M_{<p}^{\mathrm{mid}},M_{<p+1}^{\mathrm{mid}}+r-b_{p+1}\}.
\end{equation}
This is the lower feasible path for the midpoint masses, by
\eqref{eq:cert-path-cone-exact}, and $h^{\mathrm{mid}}(b_4)=M_{<4}^{\mathrm{mid}}$.
Superscripts $\mathrm{mid}$ denote midpoint quantities, not certified
upper or lower bounds.

Let $w_{gm}$ be the value of $w_g$ on $J_m$, and set
\[
 \nu_m^{\mathrm{mid}}:=\sum_{g=0}^3w_{gm}M_g^{\mathrm{mid}},\qquad
 \zeta_m^{\mathrm{mid}}:=\max\{0,\nu_0^{\mathrm{mid}},\ldots,
                                     \nu_m^{\mathrm{mid}}\}.
\]
Set $\zeta_{25}^{\mathrm{mid}}=\zeta_{24}^{\mathrm{mid}}$.
For $r=m\Delta+\theta\Delta$, $0\le\theta\le1$, use
\begin{equation}\label{eq:cert-potential-zeta}
 \zeta^{\mathrm{mid}}(r)
 :=(1-\theta)\zeta_m^{\mathrm{mid}}+\theta\zeta_{m+1}^{\mathrm{mid}}.
\end{equation}
Here $\Delta=2/5$ as in Section~\ref{sec:final-kernel}.
The other inputs to the candidate derivative are
\begin{align}
 d^{\mathrm{mid}}(r)&:=\sum_{g=0}^3c_g(r)M_g^{\mathrm{mid}},
 \label{eq:cert-potential-midpoint-d}\\
 \zeta_{\mathcal R}^-&:=\max_{0\le m<25}
 \left[-\min_{M_g\in[\underline M_g,\overline M_g]}
                     \sum_{g=0}^3w_{gm}M_g\right]_+.
 \label{eq:cert-potential-negative-z}
\end{align}
The minimum is a linear expression over a box and is evaluated by
choosing the appropriate endpoint of each mass interval.

The candidate follows the past-window construction in
Appendix~\ref{app:cert-credit}, with the midpoint quantities in place of certified bounds.  To evaluate $e^h-1-h$ at small $h$, use
\begin{equation}\label{eq:cert-potential-Ehat}
 \widehat E(h):=
 \begin{cases}
 \displaystyle\sum_{j=2}^{6}\frac{h^j}{j!},&0\le h<10^{-2},\\[2mm]
 e^h-1-h,&h\ge10^{-2}.
 \end{cases}
\end{equation}
For $(r,h)$, put
\[
 \widehat\Delta:=\widehat E(h)+d^{\mathrm{mid}}(r),\qquad
 \bar k:=h+\zeta^{\mathrm{mid}}(r),\qquad \chi_*:=(r-\bar k)_+.
\]
If any of $\widehat\Delta,\bar k,\chi_*$ is nonpositive, set
$\widehat{\mathcal G}_{\mathcal R}(r,h)=0$.
Otherwise choose the window length
\begin{equation}\label{eq:cert-potential-length}
 \eta:=\min\left\{\log\left(1+\frac{\widehat\Delta}{\bar k}\right),\chi_*\right\},
\end{equation}
and define
\begin{equation}\label{eq:cert-potential-beta0}
 \beta_0(r,h):=
 \begin{cases}
 \displaystyle\max\left\{\frac{509}{2000},
                1-\frac{2(r-\eta)}{D(r-\eta)}\right\},&r-\eta\ge1.7,\\[3mm]
 \displaystyle\max\left\{\frac{509}{2000},
                1-\frac{2r}{D(r)}\right\},&r-\eta<1.7,\ r\le1.5,\\[3mm]
 \displaystyle\frac{509}{2000},&\text{otherwise}.
 \end{cases}
\end{equation}
Here $509/2000=0.2545$ lies slightly below the global lower bound $\beta_0$ of Appendix~\ref{app:cert-credit}, so it is also a lower bound for $1-2r/D(r)$; the other cases use that $r/D(r)$ increases on $[0,1.5]$ and decreases on $[1.7,\infty)$.
The derivative also incorporates the survival factor through
\begin{equation}\label{eq:cert-potential-beta}
 \beta:=\beta_0(r,h)
 +\frac{\left[\max\{h-\eta/2,h^-_{\mathcal R}(r-\eta)\}
                       -\zeta_{\mathcal R}^-\right]_+}{D(r)}.
\end{equation}
The candidate derivative is then
\begin{equation}\label{eq:cert-potential-Ghat}
 \widehat{\mathcal G}_{\mathcal R}(r,h)
 :=\frac1{D(r)^2}\sum_{j=0}^{6}\frac{\beta^j}{j!}
 \left[\frac{\chi_*\eta^{j+1}}{j+1}-\frac{\eta^{j+2}}{j+2}\right].
\end{equation}
This midpoint construction chooses the potential.  The correction
$\mathcal G$ is bounded independently on the entire cell by
Appendix~\ref{app:cert-credit}.

\subsubsection{From the derivative to the potential}

At each node $r_j$, retain a state-grid interval containing the
admissible ranges of the adjacent coordinate cells and the reference
value $h^{\mathrm{mid}}(r_j)$.  The generator uses two additional grid
points on each side, with indices
\begin{align*}
 i_j^-&=\max\left\{0,
 \left\lfloor N_h\min\{h^-_{\mathcal R}((r_j-\delta_r)_+),
                          h^{\mathrm{mid}}(r_j)\}\right\rfloor-2\right\},\\
 i_j^+&=\left\lceil N_h\max\{h^+_{\mathcal R}(r_j+\delta_r),
                              h^{\mathrm{mid}}(r_j)\}\right\rceil+2.
\end{align*}
For $r_j+\delta_r>b_4$, use the continuation
$h^+_{\mathcal R}(r)=\overline M_{<4}+r-b_4$
from~\eqref{eq:cert-tail-state}.  This padding ensures that all vertices
needed for interpolation are present.

Let $g_{j,i}:=\widehat{\mathcal G}_{\mathcal R}(r_j,h_i)$.
Integrating the candidate derivative in $h$ by a trapezoidal sum gives
\begin{equation}\label{eq:cert-potential-trapezoid}
 Q_{j,i_j^-}:=0,\qquad
 Q_{j,i}:=Q_{j,i-1}
       +\frac{\delta_h}{2}(g_{j,i-1}+g_{j,i})
 \quad(i_j^-<i\le i_j^+).
\end{equation}
Next choose the additive constant so that the interpolated primitive
vanishes at the reference path.  Write
\[
 h^{\mathrm{mid}}(r_j)
 =(1-\alpha_j)h_{\iota_j}+\alpha_jh_{\iota_j+1},
 \qquad 0\le\alpha_j<1,
\]
and set
\begin{equation}\label{eq:cert-potential-offset}
 Q_j^{\mathrm{mid}}:=(1-\alpha_j)Q_{j,\iota_j}
                    +\alpha_jQ_{j,\iota_j+1},\qquad
 \widetilde P_{j,i}:=Q_{j,i}-Q_j^{\mathrm{mid}}.
\end{equation}
Finally enforce monotonicity in the state index by
\begin{equation}\label{eq:cert-potential-vertices}
 P_{j,i}:=\max_{i_j^-\le\iota\le i}\widetilde P_{j,\iota}.
\end{equation}
Since the reference path starts at zero, $P_{0,0}=0$.

On $[r_j,r_{j+1}]\times[h_i,h_{i+1}]$, let
$\theta_r=(r-r_j)/\delta_r$ and $\theta_h=(h-h_i)/\delta_h$.
Define $P$ by bilinear interpolation,
\begin{equation}\label{eq:cert-potential-bilinear}
\begin{aligned}
 P(r,h)={}&(1-\theta_r)(1-\theta_h)P_{j,i}
 +(1-\theta_r)\theta_h P_{j,i+1}\\
 &+\theta_r(1-\theta_h)P_{j+1,i}
 +\theta_r\theta_h P_{j+1,i+1}.
\end{aligned}
\end{equation}
In particular,
\begin{equation}\label{eq:cert-potential-origin}
 P(0,0)=0.
\end{equation}
Inside each cell the derivatives are
\begin{align}
 P_h(r,h)&=(1-\theta_r)\frac{P_{j,i+1}-P_{j,i}}{\delta_h}
          +\theta_r\frac{P_{j+1,i+1}-P_{j+1,i}}{\delta_h},
 \label{eq:cert-potential-Ph}\\
 P_r(r,h)&=(1-\theta_h)\frac{P_{j+1,i}-P_{j,i}}{\delta_r}
          +\theta_h\frac{P_{j+1,i+1}-P_{j,i+1}}{\delta_r}.
 \label{eq:cert-potential-Pr}
\end{align}
Thus the vertex values explicitly determine $P,P_h,P_r$ wherever
the Bellman comparison needs them, with one-sided derivatives at
cell boundaries.

The numerical generator evaluates this construction in binary64 arithmetic,
using outward-rounded preliminary mass and state bounds.  The resulting
finite vertex values are treated as exact dyadic rationals during
verification.  Outward interval bounds for
\eqref{eq:cert-potential-bilinear}--\eqref{eq:cert-potential-Pr}, together
with the independently certified $\mathcal G$, verify the Bellman
inequality on every admissible cell.  This checks the actual generated
potential, including the effects of its construction and rounding.

The package supplies the deterministic vertex generator with every
regional input and resolution.  It regenerates the vertices during
verification, so separate potential tables are unnecessary.  Floating-point
evaluation can change a vertex, but acceptance always depends on rigorous
checks of the actual interpolant and its derivatives.

\subsection{Certified Arithmetic}
\label{app:cert-arithmetic}

The regional inequalities above must hold for real parameters throughout
each cell.  The programs therefore evaluate their bounds by interval
arithmetic, including the quadrature errors in
Appendix~\ref{app:cert-quadrature} and the finite-tail estimate in
\eqref{eq:cert-tail}.  This subsection specifies the arithmetic used to
turn those inequalities into finite numerical certificates.
All numerical coordinate grids have step $1/N_r$, where $N_r$ is a
positive integer divisible by five, so they include every output-cell
and source-block boundary.

\subsubsection{Outward interval evaluation}

All interval endpoints are finite IEEE-754 binary64 numbers, interpreted
as exact dyadic rationals.  The programs use round-to-nearest arithmetic
with gradual underflow, without excess precision, reassociation, or
fused multiply-add contraction.  After each basic arithmetic operation,
the lower endpoint is moved to the adjacent representable number below
the computed value and the upper endpoint to the adjacent number above.
Operations known to be exact need no enlargement.  This encloses the exact
result even when the rounded result is subnormal.  Products use all four
endpoint products; division is allowed only after the denominator has
been certified strictly positive.  Every interval is checked to have
finite, ordered endpoints.  Failure of any check rejects the computation.

Rational constants and grid coordinates are enclosed by the same
operations.  The loaded kernel intervals are checked against the exact
integer specification in Table~\ref{tab:cert-kernel}; every interval
used for the target ratio encloses the exact rational
$\Gamma_1$.  The execution wrapper checks the rounding environment
before and after every regional call, and verifies that the numerical
engines loaded the certified coefficient endpoints without alteration.

\subsubsection{Elementary functions}

For $0\le x<98$, the positive Taylor series gives
\begin{equation}\label{eq:cert-exp-arithmetic}
 \sum_{j=0}^{96}\frac{x^j}{j!}
 \le e^x
 \le \sum_{j=0}^{96}\frac{x^j}{j!}
       +\frac{x^{97}}{97!}\frac1{1-x/98}.
\end{equation}
Indeed, after the first omitted term, the ratio of consecutive terms
is at most $x/98$.  The programs evaluate these expressions with
outward rounding and check the stated domain.  Monotonicity gives the
bounds for an interval of arguments.  Negative arguments are handled by
$e^{-x}=1/e^x$, and an interval crossing zero is bounded at its endpoints.
Thus the certified exponential bounds do not rely on a library
approximation to $e^x$.

The past-window estimate requires only a lower bound on $\log(1+x)$
for $x\ge0$.  First round $1+x$ downward to $v$.  If $v\le1$, use zero.
Otherwise write $v=2^m y$, where $m\ge0$ is an integer and $1\le y<2$,
and put $z=(y-1)/(y+1)$.  The inequality
\begin{equation}\label{eq:cert-log-arithmetic}
 2\sum_{j=0}^{19}\frac{z^{2j+1}}{2j+1}
 \le\log\frac{1+z}{1-z},\qquad 0\le z\le\frac13,
\end{equation}
follows by dropping the remaining positive terms.
Applied to $z$ and to $1/3$, it gives lower bounds on $\log y$ and
$\log2$.  Their downward-rounded combination bounds
$\log v=m\log2+\log y\le\log(1+x)$ from below.

When an upper bound on $\sqrt{x}$ is needed, a library evaluation is
used only to choose a starting value.  The program increases that value
until the lower interval endpoint of its square is at least $x$.
The resulting nonnegative value is therefore a certified upper bound;
zero is handled exactly.  Invalid inputs or failure of this check
reject the computation.  Ordinary floating-point evaluations used to
generate the candidate potential do not need such enclosures, since
the resulting vertices are subsequently treated as exact values and
their inequalities are checked independently, as described in
Appendix~\ref{app:cert-potential-construction}.

\section{Extension to Distributions with Atoms}\label{app:cert-atoms}

We prove the extension stated in
Proposition~\ref{prop:scalar-feasibility}.
Fix any target ratio $\Gamma\in(0,1)$ and allocation kernel $K$
satisfying~\eqref{eq:robust}.
We show that the corresponding competitive guarantee extends to
arbitrary independent nonnegative distributions, including those
with atoms, via randomized tie-breaking.

Let $F_{\mathrm{mix}}(v):=n^{-1}\sum_iF_i(v)$ and let $Z_i$ be
independent uniform random variables, independent of the values.
Define scores
\begin{equation}\label{eq:cert-distributional-transform}
 X_i:=F_{\mathrm{mix}}(V_i^-)
 +Z_i\bigl(F_{\mathrm{mix}}(V_i)-F_{\mathrm{mix}}(V_i^-)\bigr).
\end{equation}
The scores are independent and atomless.  Indeed, if $\widetilde I$ is a uniformly
chosen independent index, then $V_{\widetilde I}$ has distribution function $F_{\mathrm{mix}}$, and the randomized probability integral transform
gives $X_{\widetilde I}\sim\mathrm{Unif}[0,1]$.  Hence, for every Borel set $E$,
\[
 \frac1n\sum_i\Pp(X_i\in E)=|E|,
\]
where $|E|$ denotes its Lebesgue measure in $[0,1]$.
Each score law is therefore absolutely continuous, with density at most $n$.

The transformation is nondecreasing in $V_i$ and orders tied values by
$Z_i$.  Because independent atomless scores are distinct almost surely,
their ordering agrees almost surely with this tie-broken ordering of
the original values.  Moreover,
\[
 V_i=F_{\mathrm{mix}}^{-1}(X_i)\qquad\text{almost surely},
\]
where $F_{\mathrm{mix}}^{-1}$ is the usual lower quantile.
Apply the arrival construction to the scores, using their known
atomless distributions and the kernel $K$.
This is a valid order-selection algorithm for the original problem:
its inspection order is chosen independently of the values, and the
score of an inspected item can be computed from its value and its
independent tie-breaking variable.

Let $I^*$ denote the index of the largest score, so that $V_{I^*}=\OPT$
almost surely.  The exact item-wise guarantee for the scores implies,
for every Borel set $E$,
\[
 \Pp(I=i,\,X_i\in E)
 =\Gamma\Pp(I^*=i,\,X_i\in E).
\]
Taking $E=\{x:F_{\mathrm{mix}}^{-1}(x)>v\}$ and summing over $i$ gives
\[
 \Pp(\ALG>v)=\Gamma \Pp(\OPT>v)\qquad(v\ge0).
\]
Integration over $v$ proves
$\E[\ALG]=\Gamma\E[\OPT]$ whenever the latter expectation is finite,
and in particular the claimed competitive guarantee.

\end{document}